\pdfoutput=1

\documentclass[a4paper,UKenglish,cleveref, autoref, thm-restate]{lipics-v2021}
%This is a template for producing LIPIcs articles. 
%See  for further information.
%for A4 paper format use option "a4paper", for US-letter use option "letterpaper"
%for british hyphenation rules use option "UKenglish", for american hyphenation rules use option "USenglish"
%for section-numbered lemmas etc., use "numberwithinsect"
%for enabling cleveref support, use "cleveref"
%for enabling autoref support, use "autoref"
%for anonymousing the authors (e.g. for double-blind review), add "anonymous"
%for enabling thm-restate support, use "thm-restate"
%for enabling a two-column layout for the author/affilation part (only applicable for > 6 authors), use "authorcolumns"
%for producing a PDF according the PDF/A standard, add "pdfa"

%\graphicspath{{./graphics/}}%helpful if your graphic files are in another directory

\nolinenumbers

\bibliographystyle{plainurl}% the mandatory bibstyle

\title{From Local to Global Determinacy\\in Concurrent Graph Games}

\titlerunning{From Local to Global Determinacy}

\author{Benjamin Bordais, Patricia Bouyer and Stéphane Le Roux}{Université Paris-Saclay, ENS Paris-Saclay CNRS, LMF,%œ
91190 Gif-sur-Yvette,France}{}{}{}

%{Dummy University Computing Laboratory, [optional: Address], Country \and My second affiliation, Country \and \url{http://www.myhomepage.edu} }{johnqpublic@dummyuni.org}{https://orcid.org/0000-0002-1825-0097}{(Optional) author-specific funding acknowledgements}

%\author{Joan R. Public\footnote{Optional footnote, e.g. to mark corresponding author}}{Department of Informatics, Dummy College, [optional: Address], Country}{joanrpublic@dummycollege.org}{[orcid]}{[funding]}

\authorrunning{B. Bordais, P. Bouyer and S. Le Roux}

\Copyright{Benjamin Bordais, Patricia Bouyer and Stéphane Le Roux}
 
\begin{CCSXML}
	<ccs2012>
	<concept>
	<concept_id>10003752</concept_id>
	<concept_desc>Theory of computation</concept_desc>
	<concept_significance>500</concept_significance>
	</concept>
	</ccs2012>
\end{CCSXML}

\ccsdesc[500]{Theory of computation}

%\ccsdesc[100]{%\textcolor{red}{Replace ccsdesc macro with valid one}} %TODO mandatory: Please choose ACM 2012 classifications from https://dl.acm.org/ccs/ccs_flat.cfm 

\keywords{Concurrent games, Game forms, Local interaction} %TODO mandatory; please add comma-separated list of keywords

\category{} %optional, e.g. invited paper

\relatedversion{} %optional, e.g. full version hosted on arXiv, HAL, or other respository/website
\usepackage{stmaryrd}
\usepackage{graphicx,tikz,enumerate}
\newtheoremstyle{slant}
{\topsep}   % ABOVESPACE
{\topsep}   % BELOWSPACE
{\slshape}  % BODYFONT
{}       % INDENT (empty value is the same as 0pt)
{\bfseries} % HEADFONT
{}         % HEADPUNCT
{} % HEADSPACE
{%\thmnumber{#2}\ifx\relax#3\relax\thmname{ #1}\else\thmnote{ #3}\fi
}          % CUSTOM-HEAD-SPEC

%\theoremstyle{slant}
%\newtheorem{definition}{\textbf{Definition}}
%\AtEndEnvironment{definition}{$\diamond$}
%\newtheorem{definitiond}{\textbf{Definition}}

%\newtheorem{theorem}[definition]{\textbf{Theorem}}
%\AtEndEnvironment{theorem}{$\diamond$}
%\newtheorem{theoremd}[definition]{\textbf{Theorem}}

%\newtheorem{lemma}[definition]{\textbf{Lemma}}
%\AtEndEnvironment{lemma}{$\diamond$}
%\newtheorem{lemmad}[definition]{\textbf{Lemma}}

%\newtheorem{corollary}[definition]{\textbf{Corollary}}
%\AtEndEnvironment{corollary}{$\diamond$}
%\newtheorem{corollaryd}[definition]{\textbf{Corollary}}

%\newtheorem{remark}[definition]{\textbf{Remark}}
%\AtEndEnvironment{remark}{$\diamond$}
%\newtheorem{remarkd}[definition]{\textbf{Remark}}

%\newtheorem{proposition}[definition]{\textbf{Proposition}}
%\AtEndEnvironment{proposition}{$\diamond$}
%\newtheorem{propositiond}[definition]{\textbf{Proposition}}

%\newtheorem{example}[definition]{\textbf{Example}}
%\AtEndEnvironment{example}{$\diamond$}
%\newtheorem{exampled}[definition]{\textbf{Example}}

%\newtheorem{property}[definition]{\textbf{Property}}
%\AtEndEnvironment{property}{$\diamond$}
%\newtheorem{propertyd}[definition]{\textbf{Property}}

%\newtheorem{observation}[definition]{\textbf{Observation}}
%\AtEndEnvironment{observation}{$\diamond$}
%\newtheorem{observationd}[definition]{\textbf{Observation}}

%\theoremstyle{plain}
%\newtheorem*{proof}{\textbf{Proof}}
%\AtEndEnvironment{proof}{$\square$}

%\newtheorem{proof}{\textbf{Proof}}
%\AtEndEnvironment{proof-n}{$\square$}
%\newtheorem*{pr}{A}

\newcommand{\outCome}[2]{\Gamma^{#1}_{#2}}

\newcommand{\subVal}{\ensuremath{\mathcal{V}}}

\newcommand{\tr}[1]{\mathsf{tr}_{#1}}

\newcommand{\formNF}{\mathcal{F}}
\newcommand{\allDet}{\mathsf{Det}_{\mathsf{GF}}}
\newcommand{\gameNF}{\mathcal{G}}
\newcommand{\outComeNF}{\mathsf{O}}
\newcommand{\outCNF}{\varrho}

\newcommand{\OutStrat}[1]{\ensuremath{\Pi_{#1}}}

\newcommand{\A}{\mathsf{A} }
\newcommand{\B}{\mathsf{B} }
\newcommand{\reachStrat}[1]{\mathcal{W}_{#1} }
\newcommand{\setA}{A}
\newcommand{\setB}{B}
\newcommand{\wa}{\mathsf{w_\A} }
\newcommand{\wb}{\mathsf{w_\B} }

\newcommand{\Sq}[2]{\distribSet_{#1}^{#2}}

\newcommand{\Seq}{\mathsf{Seq}}
\newcommand{\Par}{\mathsf{Par}}

\newcommand{\colSet}{\mathsf{K} }
\newcommand{\colSetSeq}{\mathsf{K}_\Aconc }
\newcommand{\rec}{\mathcal{R}}

\newcommand{\s}{\ensuremath{\mathsf{s}}}
\newcommand{\St}{\ensuremath{\mathsf{S}}}
\newcommand{\colFunc}{\mathsf{col} }
\newcommand{\Dist}{\mathsf{Dist} }
\newcommand{\Supp}{\mathsf{Supp} }
\newcommand{\distribSet}{\mathsf{D} }
\newcommand{\distribFunc}{\mathsf{dist} }
\newcommand{\prob}[3]{\mathbb{P}^{#1}_{#2,#3} }

\newcommand{\val}[2]{\chi^{#1}_{#2} }
\newcommand{\stratMod}[1]{\widetilde{#1} }
\newcommand{\deltaDistrib}{\Delta}
\newcommand{\deltaDirac}{\underline{\delta} }

\newcommand{\Aconc}{\mathcal{C} }

\newcommand{\ATurnConc}{\Seq(\Aconc) }
\newcommand{\Games}[2]{\langle #1,#2 \rangle }
\newcommand{\Acolorless}{\ensuremath{\langle 
		\setA,\setB,Q,q_0,\deltaDirac \rangle}}
\newcommand{\AdetConc}{\ensuremath{\langle 
\setA,\setB,Q,q_0,\deltaDirac,\colSet,\colFunc \rangle}}

\newcommand{\AdetTurn}{\ensuremath{\langle 
		\setA,\setB,Q_\A \cup Q_\B,q_0,\delta,\colSet,\colFunc \rangle}}
\newcommand{\AcoloredConc}{\ensuremath{\langle 
\setA,\setB,Q,q_0,\distribSet,\delta,\distribFunc,\colSet,\colFunc \rangle}}
\newcommand{\AcoloredSeq}{\ensuremath{\langle \setA,\setB,V,q_0,\distribSet_\A 
\uplus \distribSet_\B, \delta_\Aconc, \distribFunc_\Aconc, \colSet_\Aconc, \colFunc_\Aconc \rangle}}

\newcommand{\SetColStrat}[2]{\ensuremath{\mathsf{ColSt}_{#1}^{#2} }}
\newcommand{\SetStrat}[2]{\ensuremath{\mathsf{StaSt}_{#1}^{#2} }}
\newcommand{\SetSt}[2]{\ensuremath{\mathcal{S}^{#1}_{#2}}}

\newcommand{\minit}{m_{\text{\upshape init}}}
\newcommand{\Rech}[4]{\mathsf{Rch}^{#1,#2}_{#3,#4}}
\newcommand{\rech}[4]{\mathsf{rch}^{#1,#2}_{#3,#4}}

\newcommand{\extendFunc}[1]{\overline{#1}}

\newcommand{\projec}[2]{\ensuremath{\proj_{#1,#2}}}
\newcommand{\proj}{\ensuremath{\phi}}

\newcommand{\head}{\ensuremath{\mathsf{lt}}}
\newcommand{\tail}{\ensuremath{\mathsf{tl}}}
\newcommand{\cyl}{\ensuremath{\mathsf{Cyl}}}
\newcommand{\starom}[1]{\ensuremath{#1}^\uparrow}
\newcommand{\inv}[1]{\ensuremath{{#1}^{-1}}}
\newcommand{\Borel}{\mathsf{Borel}}

\newcommand{\Nat}{\ensuremath{\mathbb{N}}}

\begin{document}

\maketitle

\begin{abstract}
	In general, finite concurrent two-player reachability games are only determined in a weak sense: the supremum probability to win can be approached via stochastic strategies, but cannot be realized.
	
	We introduce a class of concurrent games that are determined in a much stronger sense, and in a way, it is the larger class with this property. To this end, we introduce the notion of \emph{local interaction} at a state of a graph game: it is a \emph{game form} whose outcomes (i.e. a table whose entries) are the next states, which depend on the concurrent actions of the players. By definition, a game form is \emph{determined} iff it always yields games that are determined via deterministic strategies when used as a local interaction in a Nature-free, one-shot reachability game. We show that if all the local interactions of a graph game with Borel objective are determined game forms, the game itself is determined: if Nature does not play, one player has a winning strategy; if Nature plays, both players have deterministic strategies that maximize the probability to win. This constitutes a clear-cut separation: either a game form behaves poorly already when used alone with basic objectives, or it behaves well even when used together with other well-behaved game forms and complex objectives. 
	
	Existing results for positional and finite-memory determinacy in turn-based games are extended this way to concurrent games with determined local interactions (CG-DLI). 
\end{abstract}

	\section{Introduction}
	\label{sec:introduction}
		
%Win/lose two-player games have been widely studied in theoretical computer science (TCS). Indeed, many TCS problems can be rephrased in a game-theoretic language, where the existence of good strategies corresponds to the existence of solutions to the TCS problem. Many fundamental results have been proved for games where the two players play in turn, among them \emph{determinacy} results, as exemplified below.

In this paper we consider games that involve at most two players and that are played on infinite (unless otherwise stated) graphs. Consider the turn-based game in Figure~\ref{fig:turn_based_surely}. It starts in state $q_0$. There, Player $\A$ chooses either the self-loop or the edge to $q_1$; a symbol $x$, called a color, is seen in either case; then the game proceeds to state $q_0$ or $q_1$. In $q_1$ Player $\B$ chooses either the $y$-labeled self-loop or the $x$-labeled edge to $q_0$. This generates an infinite sequence over $\{x,y\}$. In this game and the next three examples, the objective of Player $\A$ is that $y$ occurs at some point, while Player $\B$ wins if $y$ never occurs. Player $\B$ has a winning strategy, which consists in never using the self-loop in $q_1$: however Player $\A$ may play, the generated sequence is $x^\omega$. Thus, the game is said to be determined, in a very strong sense, and many sorts of objectives enjoy similar properties on such turn-based games. More generally, Martin~\cite{martin1975borel,martin1985purely} proved that turn-based games with Borel objective are deterministically determined.
\cite{DBLP:journals/geb/Kukushkin02}
\begin{figure*}[htb]
	\begin{minipage}[t]{0.45\linewidth}
		\centering
		\includegraphics[scale=.9]{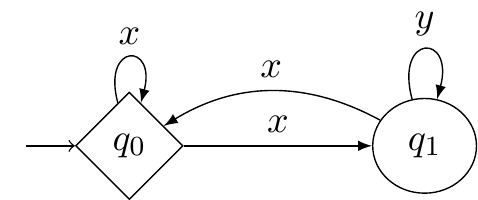}
		\caption{A turn-based game without Nature% starting in state $q_0$
			with diamond-shaped nodes for Player $\A$, circle-shaped nodes for Player $\B$ and color labels on edges.%are controlled by Player $\A$, and circle-shaped nodes by Player $\B$. Directed edges are labelled with colors $x$ or $y$. There is a deterministic winning strategy for player $\B$
		}
		\label{fig:turn_based_surely}
	\end{minipage} \hfill
	\begin{minipage}[t]{0.45\linewidth}
		\centering
		\includegraphics[scale=1]{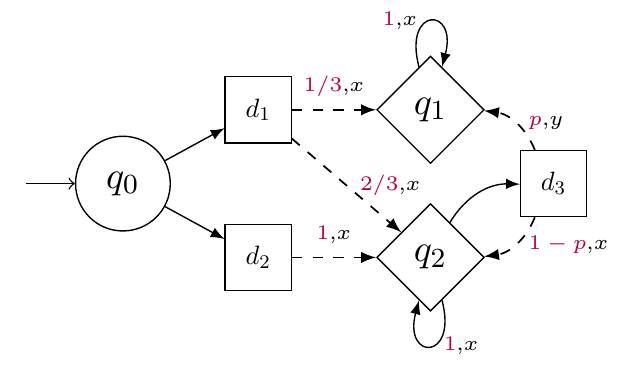}
		\caption{A turn-based game with Nature %starting in $q_0$. Rectangle-shaped nodesare Nature states. 
			with probabilities %are
			displayed in purple on Nature-to-player
			edges, and colors in black on the same edges for convenience. %There are optimal deterministic strategies for both players.
		}
		\label{fig:turn_based_almost_surely}
	\end{minipage}
\end{figure*}

\begin{figure*}[htb]
	\begin{minipage}[t]{0.45\linewidth}
		\centering
		\includegraphics[scale=0.75]{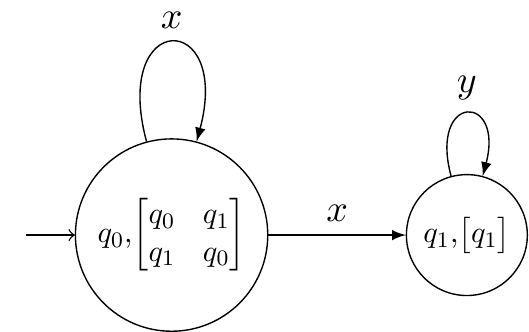}
		\caption{A concurrent game without Nature, %the	game starts in $q_0$ 
			with two actions for each player:
			Player $\A$ chooses a row, % of the table 
			Player $\B$ chooses a column. %There are stochastic optimal strategies for both players, but no deterministic ones.
		}
		\label{fig:repeated-matching-pennies}
	\end{minipage} \hfill
	\begin{minipage}[t]{0.45\linewidth}
		\centering
		\includegraphics[scale=0.75]{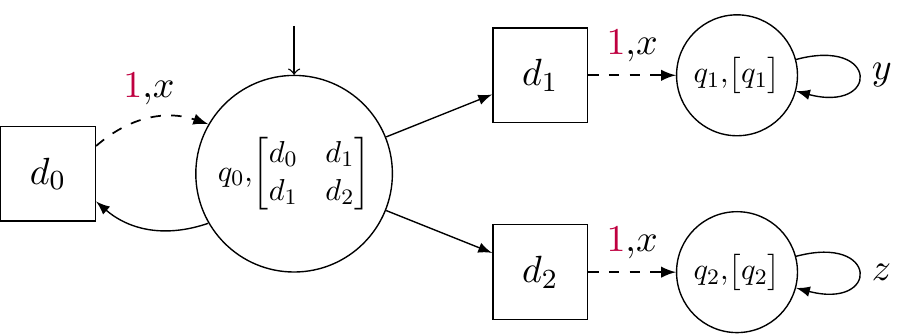}
		\caption{A concurrent game with Nature (albeit deterministic) with probabilities %are
			displayed in purple on Nature-to-player edges%starting in $q_0$
			. %Player $\A$ has stochastic strategies to win with probability as close to $1$ as she wants, but no almost surely winning strategies
		}
		\label{fig:arbitrarilyClose}
	\end{minipage}
\end{figure*}

Now consider the turn-based game with (stochastic) Nature in
Figure~\ref{fig:turn_based_almost_surely}. In $q_0$ Player
$\B$ moves to Nature state $d_1$ or $d_2$. In $d_1$ Nature
goes to $q_1$ and $q_2$ with probability $\frac{1}{3}$ and
$\frac{2}{3}$, respectively; in $d_2$ with probability $1$ to
$q_2$. In $q_1$ there is only a self-loop, which is a
shorthand for an edge towards a Nature state that goes back to
$q_1$ with probability $1$. In $q_2$ Player $\A$ stays in
$q_2$ or moves to $d_3$. In $d_3$ Nature goes to $q_1$ and
$q_2$ with probabilities $p, 1-p \in {]}0,1{[}$. The edge
$(q_2,q_1)$ is labeled with $y$, and the other edges between
the $q_i$ are labeled with $x$. From $q_0$, Player $\B$ has a
strategy that minimizes the probability to see $y$ (to
$\frac{2}{3}$), namely to play towards $d_1$; and Player $\A$
maximizes this probability (to the same value $\frac{2}{3}$)
by playing $d_3$ when in $q_2$. Note that from $q_2$, this
Player $\A$'s strategy wins almost surely but not
surely. These optimal strategies of the Players are
deterministic.
% nonetheless. 
The game is said to be determined, in a sense that is rather strong but weaker than above without Nature, and several objectives (though fewer than above) enjoy similar properties on turn-based games with Nature. More generally, it was proved~\cite{DBLP:conf/soda/ChatterjeeJH04,DBLP:conf/fossacs/Zielonka04} that turn-based parity games played on finite graphs with stochastic Nature have deterministic optimal strategies.

%As computer systems became more complex, they also became concurrent. So did the corresponding games~\cite{DBLP:conf/focs/AlfaroHK98}. Fewer results have been proved for them, and fewer determinacy results actually hold: whereas deterministic strategies are sufficient to win or maximize the probability to win in the above-mentioned turn-based games, stochastic (i.e. complex) strategies are often necessary yet not sufficient to maximize this probability in concurrent games, as exemplified below.

Consider the game from \cite{DBLP:conf/focs/AlfaroHK98} in
Figure~\ref{fig:repeated-matching-pennies}. The table depicted within state $q_0$ records the concurrent
interaction between the two players at $q_0$: Player $\A$ chooses a
row of the table while Player $\B$ independently chooses a column
of the table; depending on the two choices, the game proceeds either to state $q_0$ again (first row first column, or second row second column)
or to state $q_1$. % (first row second column, or second row first column)
%: this is a so-called ``matching-pennies'' interaction between
%the two players.
In the two cases $x$ is seen.
% and either the play stays in $q_0$ or moves to $q_1$.
In $q_1$ the interaction is trivial, i.e. each player has only one option,
%, i.e. each player has only one possible action, 
and $y$ is seen. It is
easy to see that Player $\A$ has no deterministic winning strategy, but a stochastic strategy that wins almost surely: in $q_0$, she picks
each row with probability one half.

In the game in Figure~\ref{fig:arbitrarilyClose}, Player $\A$ has no stochastic strategy that wins almost surely, but for all $\epsilon \in {]}0,1{]}$, she has a
stochastic strategy that wins with a probability at least $1 -
\epsilon$: in $q_0$, she chooses the second row with probability
$\epsilon$.	More generally, Martin~\cite{martin1998determinacy} proved that such a weak determinacy holds in games with Borel objective if the local interactions involve finitely many rows and columns.

The above examples and existing results suggest that what prevents the existence of optimal strategies is more the structure of the local interaction rather than the presence of a stochastic Nature. This article substantiates this impression.

% In Figures 3 and 4, the difficulty to find winning strategies even in a weak sense arises from the structure of the local interactions.
% 
% 	
% 	, which we deem unsafe. In this paper we
% 	characterize the local interactions that are safe, i.e. such
% 	that, roughly, every determinacy result holding for turn-based games (with Nature) also holds 
% 	in concurrent games (with Nature) that are built only using these safe local interactions. Conversely, any game involving unsafe local interactions may not be determined, depending on the objective.

\begin{figure*}[htb]
	\begin{minipage}[t]{0.45\linewidth}
		\centering
		\includegraphics[scale=0.75]{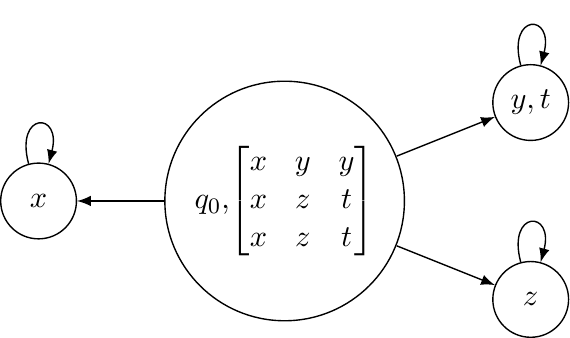}
		\caption{A concurrent game reachability game that is determined for every set of states as target set for either of the players.}
		\label{fig:one_shot_good}
	\end{minipage} \hfill
	\begin{minipage}[t]{0.45\linewidth}
		\centering
		\includegraphics[scale=0.75]{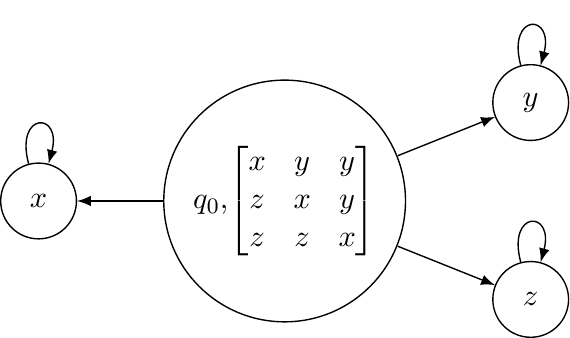}
		\caption{A concurrent reachability game that is not determined if 
			a player tries to reach the set of states $T = \{ x \}$.}
		\label{fig:one_shot_bad}
	\end{minipage}
\end{figure*}

\subparagraph{Our contribution} A game form is a table whose
entries are called outcomes, see
e.g. Figures~\ref{fig:one_shot_good},\ref{fig:one_shot_bad},\ref{fig:examples_game_forms}. By
definition, it is determined if replacing each outcome with
$1$ or $0$ yields a table with a row full of $1$ (Player $\A$
wins) or a column full of $0$ (Player $\B$ wins). It is easy to show that it is determined iff every ``one-shot'' reachability game using it as local interaction is deterministically determined. E.g. consider the one-shot reachability arena in Figure~\ref{fig:one_shot_good}, involving a determined game form. Setting any subset of $\{x,y,z,t\}$ as target for either of the players yields a deterministically determined game. However, the game form in Figure~\ref{fig:one_shot_bad} is non-determined, e.g. by setting $x:=1$ and $y,z:=0$. Equivalenty, setting the target of Player $\A$ to $\{x\}$ yields a game with no winning strategies. 
% 	One can see that what happens in the game with a specific target set of states $T$ is similar to what happens in the game form if the outcome in $T$ are set to 1 (if Player $\A$ tries to reach the set $T$) and the others are assigned 0. Then, if there is a row full of 1, Player $\A$ wins, if there is a column full of 0, Player $\B$ wins and otherwise neither of the players have a winning strategy. In Figures~\ref{fig:one_shot_good}, there is always a winning strategy in the reachability game, whatever the target set of states is, since the local interaction in $q_0$ is determined, whereas in Figure~\ref{fig:one_shot_bad}, if Player $\A$ tries to reach the state $x$, there is no winning strategy. We also can see that in the local interaction in $q_0$, if we have the map $x \mapsto 1; y,z \mapsto 0$, there is neither of row full 1 nor a column full of 0. 
Thus, the determinacy of a game form amounts to its good behavior when used individually as local interaction in very simple games. We will show that individually well-behaved game forms are collectively well-behaved. More specifically, we extend various determinacy results from turn-based \cite{martin1975borel,DBLP:conf/concur/Bouyer0ORV20,DBLP:conf/soda/ChatterjeeJH04,DBLP:conf/fossacs/Zielonka04,DBLP:conf/soda/GimbertH10} to concurrent games with determined local interactions (CG-DLI). Fix a set $\colSet$ of colors. Each edge of our games is labeled with some color, and the winning objective is expressed as a subset of $\colSet^\omega$.
\begin{enumerate}
	\item In all CG-DLI with Borel (parity) objective, one player has a (positional) winning strategy. 
	
	\item In all CG-DLI with Borel (parity) objective and stochastic Nature, both players have (positional) optimal strategies.
	
	\item Let $\mathcal{M}$ be a memory skeleton (DFA on $\colSet$, explained later).  The following are equivalent. 
	\begin{itemize}
		\item $W$ and $\colSet^\omega \setminus W$ are
		$\mathcal{M}$-monotone and
		$\mathcal{M}$-selective
		(notions recalled
		later).
		
		\item All CG-DLI with finitely many states and actions, and objective $W$ has a finite-memory winning strategy implemented via $\mathcal{M}$.
	\end{itemize}
\end{enumerate}
Moreover in the three statements above, the winning/optimal
strategies can be chosen both deterministic and dependent only
on the history of observed colors, rather than visited
states.

Conversely, let $G$ be any non-determined game form. As hinted at above, one can show that for all Borel objectives $\emptyset \subsetneq W \subsetneq \colSet^\omega$, there is a Nature-free game with one single non-trivial state whose local interaction is $G$, and no deterministic optimal (or winning) strategy, even one that would depend on the history of visited states. A similar resut holds for finite-memory strategies. Hence, these results provide a clear-cut separation: determined game forms are well-behaved basic bricks that collectively build well-behaved-only concurrent games, while non-determined game forms are ill-behaved already when used alone.

A large part of the proofs of the above extensions is factored out by our following theorem: a CG-DLI is (finite-memory, positionnaly, ``plainly'') determined (via deterministic or optimal strategies) if and only if its sequential version is. The sequential version of the game is obtained by letting one player (whichever, but keep the convention) act first at each state of the game, and the opponent act second. Although most of the extensions are straightforward applications of this theorem, the finite-memory case is different: the result in \cite{DBLP:conf/concur/Bouyer0ORV20} requires the objective to satisfy specific properties, and it is rather long to prove that these properties satisfy the assumptions of the theorem. 

%Consider the concurrent game in Figure TODO. It is made of four states. TODO

\subparagraph{Outline}
Section~\ref{sec:preliminaries} contains notations;
Section~\ref{sec:gameForm} recalls the notion of game form;
Section~\ref{sec:arenas} presents the game-theoretic formalism;
Section~\ref{sec:SeqPar} defines the sequentialization and parallelization, and proves related preservation results;
Section~\ref{sec:applications} presents determinacy extensions

% 	\subparagraph{Outline of the paper}
% 	Section~\ref{sec:preliminaries} contains useful notations;
% 	Section~\ref{sec:gameForm} recalls the notion of game form;
% 	Section~\ref{sec:arenas} presents the game-theoretic formalism;
% 	Section~\ref{sec:SeqPar} defines the sequentialization and its
% 	subsequent parallelization, and proves related preservation results;
% 	Section~\ref{sec:applications} presents the various determinacy extensions%; Section~\ref{sec:conclusion} describes future works
	
	\section{Preliminaries}
	\label{sec:preliminaries}
	%In this paper, we consider two-player games with Player $\A$ and Player $\B$. Hence, we omit the \emph{two-player} adjective before every game denomination. We assume that the sets of actions $A$ and $B$ for both players are provided with well-founded order $<_A$ and $<_B$ that can implement a choice function on these sets.

Consider a non-empty set $D$. We denote by $\starom{D} := D^* \cup D^\omega$ the set of finite or infinite sequences in $D$. %The notations $D^*$, $D^+$, and $D^\omega$ respectively refer to the set of finite sequences, of non-empty finite sequences and of infinite sequences of elements of $D$. Furthermore, $\starom{D} := D^* \cup D^\omega$. 
For a sequence $\pi = \pi_0 \pi_1 \ldots \pi_n \in D^*$, we denote by $\head(\pi)$ the last element of the sequence: $\head(\pi) = \pi_n$. 
%the length of $\pi$ is denoted $|\pi|$, and is equal to $\infty$ if $\pi \in D^\omega$. For $i < |\pi|$, $\pi_{\leq i}$ refers to the finite sequence $\pi_0 \ldots \pi_i$ and if $\pi \in D^+$, $\head(\pi) = \pi_{|\pi| - 1}$ and $\tail(\pi) = \pi_{\leq |\pi|-2}$. For instance, $\head(a \cdot b \cdot c) = c$ and $\tail(a \cdot b \cdot c) = a \cdot b$.

For a function $\mathsf{f}: E \rightarrow F$ %between two arbitrary sets $E$ and 
$F$%, and a subset $E' \subseteq E$, the notation $\mathsf{f}[E']$ refers to the set $\{ \mathsf{f}(e) \mid e \in E' \}$% of $E'$ by the function $\mathsf{f}$. In addition, 
%for 
and $F' \subseteq F$, the notation $\mathsf{f}^{-1}[F']$ refers to the preimage $\{ e \in E \mid \mathsf{f}(e) \in F' \}$ of $F'$ by the 
function $\mathsf{f}$. %For a function $\mathsf{f}: E_1 \times E_2 \rightarrow F$ for two arbitrary sets $E_1$ and $E_2$, for $e \in E_1$, $\mathsf{f}(e,E_2)$ 
%refers to the set $\{ \mathsf{f}(e,e') \mid e' \in E_2 \}$. 
Furthermore, a function 
$\mathsf{f}: E \rightarrow F$ can be lifted into a function $\extendFunc{\mathsf{f}}: \starom{E} \rightarrow \starom{F}$ defined by: 
$\extendFunc{\mathsf{f}}(\varepsilon) = \varepsilon$, $\extendFunc{\mathsf{f}}(e) = \mathsf{f}(e)$ for all $e \in E$, and 
$\extendFunc{\mathsf{f}}(\pi \cdot \pi') = \extendFunc{\mathsf{f}}(\pi) \cdot 
\extendFunc{\mathsf{f}}(\pi')$ for all $\pi 
\in E^*$ and $\pi' \in \starom{E}$. For a set $E' \subseteq E$, we define the 
projection function $\projec{E}{E'}: \starom{E} \rightarrow \starom{E'}$ such 
that $\projec{E}{E'}(e) = e$ if $e \in E'$, $\projec{E}{E'}(e) = \epsilon$ 
otherwise and $\projec{E}{E'}(\pi \cdot \pi') = \projec{E}{E'}(\pi) \cdot 
\projec{E}{E'}(\pi')$ for all $\pi \in E^*$ and $\pi' \in \starom{E}$.
%Consider 
For a set $Q$ and a function %set $Q$ and a subset $E \subseteq Q \times 
%Q$ of pairs of elements of $Q$. Then, for some distinguished element $q_0 \in 
%Q$ and $\vartriangle \in \{ +,\omega \}$, we denote by $q_0 \cdot 
%E^\vartriangle$ the set $\{ \pi \in Q^\vartriangle \mid \pi_0 = q_0 \wedge 
%\forall 0 < i < |\pi|,\; (\pi_{i-1},\pi_{i}) \in E \}$. Furthermore, for a 
%function 
$\mathsf{f}: Q \times Q \rightarrow T$, we denote by $\tr{\mathsf{f}}: Q^+ 
\rightarrow T^* \times Q$ the function that associates to a sequence $\pi \in 
Q^+$, its trace $\tr{\mathsf{f}}(\pi) = 
(\extendFunc{\mathsf{f}}(\pi),\head(\pi))$. For instance, $\tr{\mathsf{f}}(a \cdot b \cdot c) = (f(a,b) \cdot f(b,c),c)$.

%\paragraph{Preliminaries on probabilities}
%A \emph{discrete probabilistic distribution} over a non-empty set $D$ is a function $\mu: D \rightarrow [0,1]$ such that $\sum_{x \in D} \mu(x) = 1$. The set of all distribution over the set $D$ is denoted $\Dist(D)$. The \emph{support} $\Supp(\mu)$ of a probability distribution $\mu \in \Dist(D)$ is the set of value whose image is nonzero: $\Supp(\mu) = \mu^{-1}[\; (0,1] \;]$. By definition, the support $\Supp(\mu)$ is finite or countable. A distribution $\mu$ is \emph{Dirac} if $|\Supp(\mu)| = 1$. In the following, a element $d \in D$ will be seen a the Dirac distribution $\mu: D \rightarrow [0,1]$ such that $\mu(d) = 1$. 

%Let us consider now the set $D^\omega$. 
%A $\sigma$-algebra $\mathcal{D}$ on a set $D^\omega$ is such that $\mathcal{D} \subseteq \mathcal{P}(D^\omega)$, $D \in \mathcal{D}$, and is closed under complementation and countable union. A probability measure on a $\sigma$-algebra $\mathcal{D}$ is a function $\upsilon: \mathcal{D} \rightarrow [0,1]$ such that $\upsilon(\emptyset) = 0$, $\upsilon(D) = 1$, and $\upsilon(\cup_{i \in \Nat} A_i) = \sum_{i \in \Nat} A_i$ for all $(A_i)_{i \in \Nat} \in \mathcal{D}^{\Nat}$ pairwise disjoint. %When the set $D$ considered is equal to $D = Q^\omega$ for some non-empty set $Q$, 
%A useful way to define a measure on $D = Q^\omega$ is by giving the probability of to \emph{cylinder} sets. 
Let us now recall the definition of cylinder sets. For a non-empty set $Q$, for all $\pi \in Q^*$, the cylinder set $\cyl(\pi)$ generated by $\pi$ is the set $\cyl(\pi) = \{ \pi \cdot \rho \in Q^\omega \mid \rho \in Q^\omega \}$. We denote by $\cyl_Q$ the set of all cylinder sets on $Q^\omega$. The open sets of $Q^\omega$ are %usually all 
the sets %that are 
equal to an arbitrary union of cylinder sets. The set of Borel sets on $Q^\omega$, denoted $\Borel(Q)$, is then equal to the smallest set containing all open sets that is closed under complementation and countable union. %Alternatively, it is the smallest $\sigma$-algebra containing all open sets. In fact, 
%two probability measures that coincide on cylinder sets are equal, as stated below. 
Recall that, considering two probability measures $\nu,\nu': \Borel(Q) \rightarrow [0,1]$ such that, for all $C \in \cyl_Q$, we have $\nu(C) = \nu'(C)$, we have $\nu = \nu$.
%We can then get a probability measure on all Borel sets by extending a probability measure on cylinder sets by using the following theorem 
%:
%\begin{theorem}
	%Consider 
	%and $\nu: \Borel(Q) \rightarrow [0,1]$on the cylinder sets of $Q^\omega$. There exists a unique probability measure $\Upsilon: \Borel(Q) \rightarrow [0,1]$ such that $\Upsilon(\cyl(\pi)) = \upsilon(\cyl(\pi))$ for all $\pi \in Q^+$.
	%\label{thm:probability_unique_Borel}
%\end{theorem}
%v We can then get the probability of any Borel set.

%By Carathéodory theorem [ref?], we have a probability measure over $Q^\omega$ Then, if a probability measure 

	\section{Game Forms and Win/Lose Games}
	\label{sec:gameForm}
	%\slcomment{
% In this section we recall %especially 
% the notions of game form and determined 
% game form. These notions will allow us to introduce the notions of (determined) local interaction %and determined local interaction 
% of concurrent graph games in 
% Section~\ref{sec:arenas}. %}
% \slcomment{
Informally, game forms (used in \cite{10.2307/1914083,GURVICH197574}) are games without objectives, see Definition~\ref{def:arena_game_nf} and examples in Figure~\ref{fig:examples_game_forms}. They are 
similar to what is sometimes called arena, but % in this article they 
are presented in normal form, i.e. by ignoring their possible underlying 
graph or tree structure.
%We provide a more formal definition below.%}

\begin{definition}[Game form and win/lose game]
  A \emph{game form} is a tuple
  $\formNF = \langle \St_\A,\St_\B,\outComeNF,\outCNF \rangle$ where
  $\St_\A$ (resp. $\St_\B$) is the non-empty set of strategies
  available to Player $\A$ (resp. $\B$), $\outComeNF$ is a non-empty
  set of possible outcomes, and
  $\outCNF: \St_\A \times \St_\B \rightarrow \outComeNF$ is a function
  that associates an outcome to each pair of strategies. A
  \emph{win/lose game} is a pair
  $\gameNF = \langle \formNF,\subVal \rangle$ where $\formNF$ is a
  game form and $\subVal \subseteq \outComeNF$ is the objective, i.e. a subset of outcomes informally corresponding to the set of winning outcomes for Player
  $\A$ (also called winning set of $\A$) whereas $\outComeNF \setminus \subVal$ is the set of winning outcomes for Player $\B$.
  \label{def:arena_game_nf}
\end{definition}

\begin{figure*}
	\centering
	\includegraphics[scale=1]{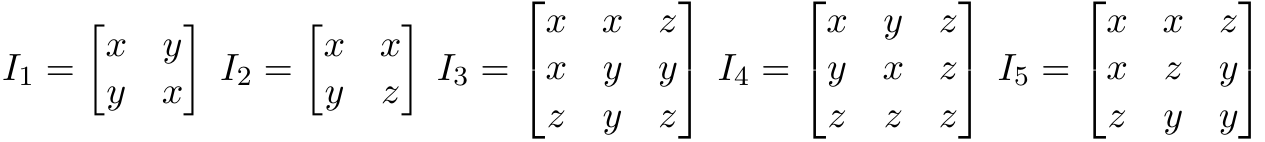}
	\caption{Five % examples of
		game forms: $I_1$ and $I_5$ are not determined, whereas $I_2, I_3$, and 
		$I_4$ are.}
	\label{fig:examples_game_forms}
\end{figure*}

In such a game, a player wins if she obtains an outcome that makes her
win, hence winning for Player $\A$ means reaching an outcome in
$\subVal$, whereas winning for Player $\B$ means reaching an outcome
in $\outComeNF \setminus \subVal$. So, one player wins if and only if
the other player loses, hence the terminology. In the concurrent graph
games that we will consider in Section~\ref{sec:arenas}, the local
interactions that determine what the next visited state will be are
just game forms whose strategies are the available actions of the
players in the "global" game and whose possible outcomes are the
states of the graph. In the context of win/lose games, we can define the notion of winning strategy, 
that is, a strategy for a player that ensures winning regardless of his 
opponent's strategy. The definition of determinacy follows.

\begin{definition}[Winning Strategies and Determinacy]
	Consider a game form $\formNF = \langle \St_\A,\St_\B,\outComeNF,\outCNF 
	\rangle$ and a subset of outcomes $\subVal \subseteq \outComeNF$. In the win/lose game $\gameNF = \langle \formNF,\subVal \rangle$, a
	\emph{winning strategy} $\s_\A \in \St_\A$ (resp. $\s_\B \in \St_\B$) for 
	Player $\A$ (resp. $\B$) is a strategy such that, for all $\s_\B \in \St_\B$ (resp. $\s_\A \in \St_\A$), we have $\outCNF(\s_\A,\s_\B) \in \subVal$ (resp.	$\outComeNF \setminus \subVal$). We write $\reachStrat{\A}(\formNF,\subVal)$ (resp. $\reachStrat{\B}(\formNF,\outComeNF \setminus \subVal)$) the set of winning strategies for Player $\A$ (resp. Player $\B$) with objective $\subVal$.
	%By definition, for every $\subVal \subseteq \outComeNF$,$\reachStrat{\A}(\formNF,\subVal) = \emptyset$ or$\reachStrat{\B}(\formNF,\outComeNF \setminus \subVal) =\emptyset$. So, by definition of the determinacy of a win/lose
	%game, Proposition~\ref{prop:winning_set_determined} follows.
	The win/lose game $\gameNF$ is \emph{determined} if either of the players has a winning strategy. That is, if $\reachStrat{\A}(\formNF,\subVal) \cup
	\reachStrat{\B}(\formNF,\outComeNF \setminus \subVal) \neq
	\emptyset$. Finally, the game form $\formNF$ is said to be \emph{determined} if, for all $\subVal \subseteq \outComeNF$, the win/lose game $\gameNF = \langle \formNF, \subVal \rangle$ is determined. We denote by $\allDet$ the set of determined game forms.
\end{definition}

%Note that at most one player has a winning strategy in a given two-player win/lose game. The notion of determinacy is extended to game forms as follows.

%\begin{definition}[Game form determinacy]
	%A game form $\formNF = \langle \St_\A,\St_\B,\outComeNF,\outCNF \rangle$ is said to be %. We say that the game form $\formNF$ \emph{determined} if, for all $\subVal \subseteq \outComeNF$, the win/lose game $\gameNF = \langle \formNF, \subVal \rangle$ is determined. We denote by $\allDet$ the set of determined game forms.
%\end{definition}

%We illustrate the notion of %(determined) game forms on examples below. 
%\begin{example}
For example, consider the game forms represented in Figure~\ref{fig:examples_game_forms}. We argue below that $I_2$, $I_3$, and 
	$I_4$ are determined, while $I_1$ and $I_5$ are not. Consider any subset 
	$\subVal$ of the outcomes and, in $I_2$, $I_3$, and $I_4$, replace each 
	occurence of outcome in $\subVal$ with $\wa$ (indicating
        winning outcomes for Player $\A$) and the others with $\wb$
        (indicating winning outcomes for Player $\B$). There 
	is always a row of $\wa$ or a column of $\wb$, so these game forms are 
	determined. However, rewriting $x$ with $\wa$ and $y$ with $\wb$ in $I_1$ 
	yields the well-known matching-pennies game, which clearly has no winning 
	strategies. Similarly, rewriting $z$ with $\wa$ and $x,y$ with $\wb$ in 
	$I_5$ leads to no row full of $\wa$ and no column full of $\wb$.
%\end{example}

        As we shall see, determined game forms are exactly the game
        forms that share enough similarities with ``two-step tree game
        forms'' (roughly, tree game forms are finite
          turn-based games played on a tree, with outcomes at the
          leaves), so that our determinacy transfer may hold. Hence,
        we may ask whether the determined game forms are nothing but
        two-step tree game forms in disguise. Of course, the answer
        depends on what we mean by ``in disguise''. For a natural
        notion of being similar to a (two-step) tree game form, and
        even for a more generous notion, the answer is negative. Thus,
        determined game forms are more than tree game forms, see
        Appendix~\ref{sec:determinedVStree}.

In addition to the toy examples in
Figure~\ref{fig:examples_game_forms}, let us exemplify that determined
game forms arise naturally in computer science. A parity game
(\cite{DBLP:conf/focs/EmersonJ91,mostowski1991games,zielonka1998infinite})
is defined on a priority arena, i.e. a graph where each vertex is
controlled by one player and every edge is labeled with a natural
number less than a fixed bound. The outcome of an infinite run in such
an arena is the maximum of all the
numbers %, called priorities for this very reason,
that occur infinitely often during the run. If the priorities are seen
not as concrete numbers but as abstract outcomes, the priority arena
can be seen as a game form. %Moreover, b
By a slight generalization of
\cite{DBLP:conf/focs/EmersonJ91,mostowski1991games,zielonka1998infinite}
described, e.g., in \cite[Corollary 3.8]{DBLP:conf/mfcs/Roux18}, it is
moreover a determined game form. So, as we shall see, choosing the
next state following a local interaction given by a parity game will
be a \emph{well-behaved} interaction.

	\section{Colored Arenas, Games, and Strategies}
	\label{sec:arenas}
	%Section~\ref{sec:game-formalism} presents the general game model; Section~\ref{sec:turn-based} describes the turn-based games as special cases of concurrent games; and Section~\ref{sec:strat-impl} presents the classical automaton-based implementation of strategies. 
%Let us now introduce the formalism that we use for colored stochastic 
%concurrent %graph 
%games. %Note that, in both cases, a path refers to a sequence of states, finite 
%or infinite.

\subsection{Colored Stochastic Win/Lose Concurrent Graph Games}\label{sec:game-formalism}
Informally, a stochastic concurrent game is played on a graph as
follows: from a given state, both players simultaneously choose an
action, and the next state is set according to a probability distribution that depends on the two actions.
% the two players choose an the next state is chosen according to a
% probability distribution that depends on the current state of the game
% and two actions simultaneously chosen by both players.
%
%However, w
We want to consider the ways the two players interact at each state
(which we call the local interactions of the game) as game forms. To
facilitate this, we decouple the concurrent interaction of the players from the stochastic choice of Nature; we therefore add intermediate
states belonging to Nature, and ensure that they do not impact
winning conditions by assigning colors to ordered pairs of player states, thus hiding the 
Nature states that are visited. To sum up, the outcome of an
interaction of the players is a Nature state from which the next
(relevant) state of the game is chosen via a probability distribution.

% Hence, we need that, at each state of game, the pair of actions
% of the players are mapped to a specific outcome. Hence, we introduce
% Nature states that are not relevant to express a winning condition but
% that can be seen as an equivalence relation over the different
% probability distributions over the states of the game. In turn, the
% outcome of an interaction of the players is a Nature state from which
% the next (relevant) state of the game is chosen via a probability
% distribution.

\begin{definition}[Stochastic concurrent games]
	A colored \emph{stochastic concurrent graph arena} $\Aconc$ is a tuple 
	$\AcoloredConc$ where $\setA$ (resp. $\setB$) is the non-empty set of 
	actions available to Player $\A$ (resp. $\B$), $Q$ is the (non-empty) set 
	of states, $q_0 \in Q$ 
	is the initial state, $\distribSet$ is the set of Nature states, 	
	$\delta: Q \times \setA \times \setB \rightarrow \distribSet$ is the 
	transition function, $\distribFunc: \distribSet \rightarrow \Dist(Q)$ is 
	the distribution function, $\colSet$ 
	is a non-empty set of colors, and $\colFunc: Q \times Q \rightarrow 
	\colSet$ is a coloring function%with $E = \{ (q,q') \in Q \times Q \mid \exists a \in 	
	%\setA,\; b \in \setB,\; \distribFunc(\delta(q,a,b),q') > 0 \}$. %It is 
	%assumed that, for all $q \in Q$, there exists $q' 
	%\in Q$ such that $(q,q') \in E$
	. The composition of the transition and distribution functions $\distribFunc \circ \delta: Q \times A \times B \rightarrow \Dist(Q)$ will be denoted $\deltaDistrib$.	A \emph{win/lose concurrent graph game} is a pair $\Games{\Aconc}{W}$ where $W \in \Borel(\colSet)$ is the set of winning sequences of colors (for Player $\A$).
\end{definition}
In the following, the arena $\Aconc$ will always refer to the tuple $\AcoloredConc$ unless otherwise stated. In section~\ref{sec:applications}, we will be able to apply some of our results to a special kind of arenas: the finite ones, defined below.
\begin{definition}[Finite arenas]
	An arena $\Aconc = \AcoloredConc$ is \emph{finite} if the set of deterministic and Nature states $Q \cup \distribSet$ is finite.
	\label{def:finite_arenas}
\end{definition}

%Note that, in this formalism, the colors are on the states of the game. 
%Another formalism exists where colors are on the edges. These two formalisms 
%are equivalent.

%The coloring function is naturally extended into the function 
%$\extendFunc{\colFunc}: \starom{E} \rightarrow \starom{\colSet}$% such that 
%%$\extendFunc{\colFunc}(q) = \epsilon$ and $\extendFunc{\colFunc}(\pi \cdot q) 
%%= 
%%\extendFunc{\colFunc}(\pi) \cdot \colFunc(\head(\pi) \cdot q)$ for all $\pi 
%%\in 
%%Q^*$ and $q \in Q$. 
%We define the set of (finite or infinite) sequences of states (or paths) 
%$\OutStrat{\Aconc}^\vartriangle$ that can occur in the concurrent graph arena 
%$\Aconc$ as $\OutStrat{\Aconc}^{\vartriangle} = %\{ \pi \in Q^\vartriangle 
%%\mid 
%%%\pi_0 = q_0,\; \forall 0 < i < |\pi|,\; (\pi_{i-1},\pi_{i}) \in E \}q_0 
%%%\cdot E^\vartriangle$ for $\mathord{\vartriangle} \in \{ +,\omega \}$. 
%Note that the function $\extendFunc{\colFunc}$ is also defined on paths in  
%$\OutStrat{\Aconc}^\vartriangle$. 

%In the following, we will 
We consider two kinds of strategies: strategies that only depends on the sequence of colors seen (and the current state) and that outputs a specific action -- called color strategies -- and strategies that may depend on the whole sequence of states seen and that outputs a distribution over the available actions -- called %state-dependent stochastic strategies, or arbitrary 
state strategies. In the following, we will show that the concurrent games we consider are determined and that color strategies are sufficient to play optimally, however since the games considered are stochastic, for a strategy to be optimal, it has to achieve the optimal value against all strategy -- that is, state strategies -- of the antagonist player.
\begin{definition}[State and color strategies]
	Let $\Aconc% = \AcoloredConc
	$ be an arena. 
	\begin{itemize}
		\item A state strategy, for Player $\A$% (resp. Player $\B$) 
		is a function $\s_\A: Q^+ \rightarrow \Dist(\setA)$% (resp. $\s_\B: Q^+ \rightarrow \Dist(\setB)$) 
		and the set of all such strategies in arena $\Aconc$ for that player is denoted $\SetStrat{\Aconc}{\A}$% (resp. $\SetStrat{\Aconc}{\B}$)
		.
		\item A color strategy for Player $\A$ %(resp. Player $\B$) 
		is a function $\s_\A: \colSet^* \times Q \rightarrow \setA$ % (resp. $\s_\B: \colSet^* \times Q \rightarrow \setB$) 
		and the set of all such strategies in arena $\Aconc$ for that player is denoted $\SetColStrat{\Aconc}{\A}$% (resp. $\SetColStrat{\Aconc}{\B}$)
		. From a color strategy $\s_\A \in \SetColStrat{\Aconc}{\A}$% (resp. $\s_\B \in \SetColStrat{\Aconc}{\B}$)
		, we can extract the color strategy $\stratMod{\s_\A}: Q^+ \rightarrow \Dist(\setA)$ %(resp. $\stratMod{\s_\B}: Q^+ \rightarrow \Dist(\setB)$)
		defined by $\stratMod{\s_\A} = \s_\A \circ \tr{\colFunc}$% (resp. $\stratMod{\s_\B} = \s_\B \circ \tr{\colFunc}$)
		.
	\end{itemize}
	The definitions are likewise for Player $\B$.
\end{definition}

%In the following, the function $\distribFunc \circ \delta: Q \times A \times B \rightarrow \Dist(Q)$ will be denoted $\deltaDistrib$. %A strategy for Player $\A$ (resp. Player $\B$) is a function $\s_\A: \colSet^* \times Q \rightarrow \setA$ (resp. $\s_\B: \colSet^* \times Q \rightarrow \setB$) and the set of all available strategies in arena $\Aconc$ for that player is denoted $\SetStrat{\Aconc}{\A}$ (resp. $\SetStrat{\Aconc}{\B}$). Note that the strategies considered are deterministic. 
%For a color strategy $\s_\A \in \SetColStrat{\Aconc}{\A}$ (resp. $\s_\B \in \SetColStrat{\Aconc}{\B}$), let us denote by $\stratMod{\s_\A}: Q^+ \rightarrow \setA$ (resp. $\stratMod{\s_\B}: Q^+ \rightarrow \setB$) the function $\s_\A \circ \tr{\colFunc}$ (resp. $\s_\B \circ \tr{\colFunc}$). 

{\label{ref:pair_strat_proba}}
Two state strategies then induce a probability of occurrence of finite paths and, following of cylinder sets. This, in turn, induce a probability distribution over all Borel sets. This is formally defined in Appendix~\ref{subsec:strategies_probability_distribution}.

%cylinder sets $\cyl(\pi) = \{ \pi \cdot \rho \in Q^\omega \mid \rho \in Q^\omega \}$, defined for all finite paths $\pi \in Q^+$. The probability of a cylinder is equal to the probability of the corresponding finite path: $\prob{\Aconc}{\s_\A}{\s_\B}[\cyl(\pi)] = \prob{\Aconc}{\s_\A}{\s_\B}(\pi)$. 
%The probability of the complement of a set, of the disjoint union of sets is defined as usual. The probability of arbitrary Borel sets follows\footnote{The set of all Borel sets if the smallest set containing all open sets (arbitrary union of cylinder sets) stable by complement and countable union (which can be rewritten as an equivalent disjoint union).}. 

%is \emph{compatible} with the strategy $\s_\A$ (resp. $\s_\B$) if for all $0 < 
%i < |\pi|$, there exists $b \in \setB$ such that 
%$\distribFunc(\delta(\pi_{i-1},\s_\A \circ \tr{\colFunc}(\pi_{\leq 
%i-1}),b),\pi_{i}) > 0$ (resp. $\pi_{i} \in \delta(\pi_{i-1},A,\s_\B \circ 
%\tr{\colFunc}(\pi_{\leq i-1}))$). The set $\OutStrat{\Aconc}^\omega(\s_\A) 
%\subseteq \OutStrat{\Aconc}^\omega$ (resp. $\OutStrat{\Aconc}^\omega(\s_\B) 
%\subseteq \OutStrat{\Aconc}^\omega$) refers to the infinite paths compatible 
%with the strategy $\s_\A$ (resp. $\s_\B$). Then, a winning strategy for a 
%player is a strategy whose compatible paths are winning for that player. The 
%definition of determinacy follows.

In these games, informally, Player $\A$ tries to maximize the probability to be in the set $W$ whereas Player $\B$ tries to minimize this probability. For both players, this induces the definitions of the value of a strategy and of the game below.
\begin{definition}[Value of strategies and color value of the game]
	Let $\Aconc% = \AcoloredConc
	$ be an arena. The corresponding winning set (for Player $\A$) to a Borel set $W \subseteq \colSet^\omega$ is equal to $U_W = \inv{\extendFunc{\colFunc}}[W] \subseteq Q^\omega$. Note that $U_W$ is also a Borel set\footnote{As the preimage of a Borel set by the continuous function $\extendFunc{\colFunc}$.}. Consider now a color strategy $\s_\A \in \SetColStrat{\Aconc}{\A}$% (resp. $\s_\B \in \SetColStrat{\Aconc}{\B}$)
	for Player $\A$% (resp. $\B$)
	. Then, the value $\val{\Aconc}{\s_\A}[W]$% (resp. $\val{\Aconc}{\s_\B}[W]$) 
	of the strategy $\s_\A$% (resp. $\s_\B$) 
	is equal to $\val{\Aconc}{\s_\A}[W] = \inf_{\s_\B \in \SetStrat{\Aconc}{\B}} \prob{\Aconc}{\stratMod{\s_\A}}{\s_\B}[U_W]$ %(resp. $\val{\Aconc}{\s_\B}[W] = \sup_{\s_\A \in \SetStrat{\Aconc}{\A}} \prob{\Aconc}{\s_\A}{\stratMod{\s_\B}}[U_W]$)
	. The \emph{color value} $\val{\Aconc}{\A}$ of the game for Player $\A$: 
	$\val{\Aconc}{\A}[W] := \sup_{\s_\A \in \SetColStrat{\Aconc}{\A}} \val{\Aconc}{\s_\A}[W]$. %Analogously, the value $\val{\Aconc}{\B}$ of the game for Player $\B$ is defined as: $\val{\Aconc}{\B}[W] := \inf_{\s_\B \in \SetColStrat{\Aconc}{\B}} \val{\Aconc}{\s_\B}[W]$. 
	The definitions are likewise for Player $\B$, by reversing the supremum and infimum.
	
	%A \emph{winning strategy} for Player $\A$ (resp. $\B$) in a win/lose 
	%concurrent graph game $\langle \Aconc,W \rangle$ is a strategy $\s_\A \in 
	%\SetStrat{\Aconc}{\A}$ (resp. $\s_\B \in \SetStrat{\Aconc}{\B}$) such that 
	%$\OutStrat{\Aconc}^\omega(\s_\A) \subseteq U_Q$ (resp. 
	%$\OutStrat{\Aconc}^\omega(\s_\B) \subseteq \OutStrat{\Aconc}^\omega 
	%\setminus U_Q$) where $U_Q = \inv{\extendFunc{\colFunc}}[W]$.
	
	A win/lose stochastic concurrent graph game $\langle \Aconc,W \rangle$ is \emph{limit-determined} if we have $\val{\Aconc}{\A}[W] = \val{\Aconc}{\B}[W]$. If in addition there are strategies $\s_\A \in \SetColStrat{\Aconc}{\A}$ and $\s_\B \in \SetColStrat{\Aconc}{\B}$ such that $\val{\Aconc}{\s_\A}[W] = \val{\Aconc}{\A}[W]$ and $\val{\Aconc}{\s_\B}[W] = \val{\Aconc}{\B}[W]$, we say that the game is \emph{determined}. In this case, such strategies are called \emph{optimal} strategies.
	\label{def:determinacy}
\end{definition}

%\begin{remark}[Proof~\ref{proof:remarkIneqVal}]
	%\label{rmq:inequality_value_game}
	%The result of determinacy of Blackwell games by Martin \cite{martin1998determinacy} gives that, if state strategies are allowed and actions are finitely many, we have $\val{\Aconc}{\A}[W] = \val{\Aconc}{\B}[W]$. We can not conclude this in the general case here, as we consider color strategies, however we still have the inequality: $\val{\Aconc}{\A}[W] \leq \val{\Aconc}{\B}[W]$.
%\end{remark}

Let us look at what the \emph{local determinacy} of a concurrent game refers 
to, which will yield the definition of locally determined stochastic concurrent 
games. 
%For $q \in Q$, let $\delta(q,A,B)$ is the set of possible successors of the 
%state $q$ in the arena $\Aconc = \UniConcGame$. Then:

\begin{definition}[Local interactions]
	The \emph{local interaction} in a stochastic concurrent graph arena $\Aconc%= \AcoloredConc
	$ at state $q \in Q$ is the game form $\formNF_q = \langle 
	\setA,\setB,\delta(q,\cdot,\cdot),\distribSet \rangle$ where the strategies 
	available for Player $\A$ (resp. $\B$) are the actions in $A$ (resp. $B$) 
	and the outcomes are the Nature states reachable from $q$ in the arena 
	$\Aconc$. For a set of game forms $\mathcal{I}$, we say that a concurrent 
	arena $\Aconc = \AcoloredConc$ is \emph{built on} $\mathcal{I}$ if, for all 
	$q \in Q$, we have $\formNF_q \in \mathcal{I}$ (up to a renaming of the 
	outcomes). A stochastic concurrent graph arena/game is \emph{locally 
	determined} if it is built on $\allDet$.%, for all $q \in Q$, the game form 
	%$\formNF_q$ is %determined. 
\end{definition}

{\label{ref:turn_based}}
\textbf{Turn-based games} Usually, turn-based games and concurrent games are described in two different formalisms. Indeed, in a turn-based game, a player plays only in the states that she controls, whereas in a concurrent game, in each state both players play an action and subsequently the next (Nature) state is reached. However, turn-based games can be seen as a special case of concurrent games, where at each state, the next (Nature) state is chosen regardless of one of the player's action. We choose the second option (see~\ref{subsec:turn_based}).% It is defined formally in Appendix~\ref{subsec:turn_based}.

Section~\ref{sec:SeqPar} will translate locally determined concurrent
games into turn-based games, then transfer existing determinacy
results on turn-based games back into extension results for the more
general locally determined concurrent games.

\subsection{Colored strategy Implementations}\label{sec:strat-impl}
We recall the notion of memory skeleton that was introduced in \cite{DBLP:conf/concur/Bouyer0ORV20} and we see how it can implement the color strategies that appear in the stochastic concurrent games we consider. For a set of colors $\colSet$ and a set of states $Q$, a memory skeleton on $\colSet$ is a triple $\mathcal{M}  = \langle M,\minit,\mu 
\rangle$, where $M$ is a non-empty set called the memory, $\minit \in 
M$ is the initial state of the memory and $\mu: M \times \colSet 
\rightarrow M$ is the update function. An action map with memory $M$ is a function $\lambda: M \times Q \rightarrow T$ for a non-empty set $T$. Note that $T$ is a set of possible decisions that can be made. Here, $T$ will be instantiated with the set of actions of either of the player. Also, we are only interested in color strategies since we will consider in which case optimal strategies -- that we search among color strategies -- can be chosen finite-memory. In fact, a memory skeleton and an action map implement a color strategy.
%\begin{definition}[Memory skeletons and action functions]
	%Consider a set of colors $\colSet$ and a set of states $Q$. A \emph{memory skeleton} on $\colSet$ is a triple $\mathcal{M}  = \langle M,\minit,\mu \rangle$, where $M$ is a non-empty set called the memory, $\minit \in M$ is the initial state of the memory and $\mu: M \times \colSet \rightarrow M$ is the update function. An \emph{action function} with memory $M$ is a function $\lambda: M \times Q \rightarrow T$ for a non-empty set $T$. 
%\end{definition}
%In the definition above, %set $D$ is some extra information (typically, the 
%current state of the game) that can be used to make decisions, while 
%$T$ is a set of possible decisions that can be made. Here, $T$ will be instantiated with the set of actions of either of the player. In fact, a memory skeleton and an action function implement a strategy.

\begin{definition}[Implementation of strategies]
	Consider a concurrent colored arena $\Aconc% = \AcoloredConc
	$, %$\langle 
	%\mathcal{G},\colSet,\colFunc \rangle$ (whether it is concurrent or 
	%turn-based) played on a set of states $D$, 
	a player $p \in \{ \A,\B \}$ and the corresponding set of actions $T 
	\in \{ A,B \}$% who has to choose an element in $T$ when 
	%the current state of the game is in %a set $D_p \subseteq D$ (note that in 
	%%a concurrent game, $D = D_p$). 
	. A memory skeleton $\mathcal{M}  = \langle M,\minit,\mu \rangle$ on 
	$\colSet$ and an action map $\lambda: M 
	\times Q \rightarrow T$ implement the color strategy $\s: \colSet^* \times 
	Q \rightarrow T$ that is defined by $\s(\rho,q) = 
	\lambda(\extendFunc{\mu}(\minit,\rho),q) \in T$ for all $(\rho,q) \in 
	\colSet^* \times Q$.
	
	A strategy $\s$ is \emph{finite memory} if there exists a memory 
	skeleton 
	$\mathcal{M}  = \langle M,\minit,\mu \rangle$, with $M$ finite, and an action map $\lambda$ implementing $\s$. If $M$ is reduced to a singleton, $\s$ is %said to be 
	\emph{positional}, aka memoryless. The amount of memory used to implement the strategy $\s$ is $|M|$.
\end{definition}

Note that any color strategy $\s: \colSet^* \times Q \rightarrow T$ can be implemented with a memory skeleton and an action map: consider the memory skeleton $\mathcal{M} = \langle \colSet^*,\epsilon,\mu \rangle$ where $\mu: \colSet^* \times \colSet \rightarrow \colSet^*$ is defined by $\mu(\rho,k) = \rho \cdot k$ for all $\rho \in \colSet^*$ and $k \in \colSet$. If the color strategy $\s$ is seen as an action map $\lambda = \s: \colSet^* \times Q \rightarrow T$, then $\mathcal{M}$ and $\lambda$ implement the strategy $\s$.

\begin{definition}[Finite-memory determinacy]
	A game %(whether it is concurrent or turn-based) 
	is said to be \emph{finite-memory} (resp. \emph{positionally}) determined if it is determined and optimal strategies can be found among finite-memory (resp. positional) strategies. %$\s_\A \in \SetSt{\Aconc}{\A}$ and $\s_\B \in \SetSt{\Aconc}{\B}$ such that $\val{\Aconc}{\s_\A}[W] = \val{\Aconc}{\A}[W]$ and	$\val{\Aconc}{\s_\B}[W] = \val{\Aconc}{\B}[W]$.
\end{definition}

	\section{Sequentialization of Games and Strategies, Parallelization of Strategies}
	\label{sec:SeqPar}
	In this section, we describe operators to sequentialize a concurrent 
graph game and its strategies. We also describe an operator to parallelize 
the strategies of the \emph{first} player in the sequential version of a 
concurrent game. These operators on strategies are rather simple and do not 
worsen the value of the game. Then we introduce an operator to parallelize the 
strategies of the \emph{second} player in the sequential version of a 
concurrent game with determined local interactions. This second parallelization does not worsen the value of the game either, but definition and proof 
are not as simple as before: both highly rely on the determinacy of the local 
interactions in the original concurrent game. First, we define the sequentialization of a concurrent game, and we state the theorem we want to prove in this section.

%we examine how strategies can be translated from a concurrent game to a 
%turn-based game (that is a \emph{sequentialization}) and from a turn-based 
%game 
%to a concurrent game (that is a \emph{parallelization}). However, we do not 
%make these transformations for an arbitrary concurrent game and an arbitrary 
%turn-based game. In fact, we consider a concurrent game and a turn-based game 
%that is its \emph{sequential version}.

\subsection{Sequential Version of a Concurrent Graph Game}
The sequential version of an arbitrary colored stochastic concurrent graph 
arena consists of a turn-based graph arena where Player $\A$ plays first and 
then Player $\B$ responds.

\begin{definition}[Sequentialization of a concurrent arena and game]
  Consider a concurrent arena $\Aconc = \AcoloredConc$ and an objective $W \in \Borel(\colSet)$. 
  \begin{itemize}
	\item The \emph{sequential version} of $\Aconc$ 
	% that arena
	is the turn-based arena $\ATurnConc = \AcoloredSeq$ where 
	$V = V_\A \uplus V_\B$ with $V_\A = Q$ and $V_\B = Q \times A$, % where
	% $\Sq{q}{a} = \delta(q,a,B) \subseteq
	% \distribSet_\B$,
	%$v_0 = q_0 \in V_\A$,
	$\distribSet_\A = V_\B$ and $\distribSet_\B =
	\distribSet$. Furthermore, for all $q \in V_\A$, $a \in A$ and
	$b \in B$, we have 
	$\delta_\Aconc(q,a,b) = (q,a) \in V_\B = \distribSet_\A$ and
	$\distribFunc_\Aconc((q,a))[(q,a)] = 1$. In addition,
	for all $d \in \distribSet$, we have 
	$\distribFunc_\Aconc(d) = \distribFunc(d)$ and for all $a' \in A$,
	$b \in B$, and $(q,a) \in V_\B$ we have 
	$\delta_\Aconc((q,a),a',b) = \delta(q,a,b) \in \distribSet =
	\distribSet_\B$. %\footnote{Note that there could be another $\alpha \in A$ such that $\Sq{q}{a} = \Sq{q}{\alpha}$. In that case, we consider the $\bar{a} \in A$ ensuring $\Sq{q}{a} = \Sq{q}{\bar{a}}$ that is minimal with regard to the order $<_\A$.} %and $\distribFunc_\Aconc(d) = \distribFunc(d)$ for all $d \in \distribSet$.
	Finally, we have $\colSetSeq = \colSet \cup \{ k_\Aconc \}$ for some fresh color $k_\Aconc \not \in \colSet$ and $\colFunc_\Aconc(q,(q,a)) = k_\Aconc$ if $q \in V_\A$ and $(q,a) \in V_\B$ and $\colFunc_\Aconc((q,a),q') = \colFunc(q,q')$ if $(q,a) \in V_\B$ and $q' \in V_\A$. The function $\colFunc_\Aconc$ is defined arbitrarily on other pairs of states.
	\item The \emph{sequential version} of the concurrent game $\langle 
	\Aconc, W \rangle$ is the turn-based game 
	$\Games{\ATurnConc}{\Seq(W)}$, where $\ATurnConc% = \AcoloredSeq
	$ is the sequential version of the concurrent arena $\Aconc$ and $\Seq(W) = 
	(\projec{\colSetSeq}{\colSet})^{-1}[W]$ is the preimage of the winning set 
	$W$ by the projection function $\projec{\colSetSeq}{\colSet}: 
	\starom{\colSetSeq} \rightarrow \starom{\colSet}$.
  \end{itemize}
\end{definition}

%Note that finiteness is preserved by sequentialization.
%\begin{observation}
	%If a concurrent arena $\Aconc = \AcoloredConc$ is finite, then its sequential version is also finite. In %deed, since a Nature state correspond to a state in $Q$, in 
	%that case, we have $|V| \leq |Q| + |Q| \cdot 2^{|\distribSet|}$ and $|\distribSet_\A \cup \distribSet_\B| \leq |\distribSet| + |Q| \cdot 2^{|\distribSet|}$.
	%\label{obs:finite_preserved}
%\end{observation}

In the above definition, one can notice that the states in $V_\A$ belong to 
Player $\A$ whereas states in $V_\B$ belong to Player $\B$.

\begin{example}
	%\pat{\fbox{check}
	Sequentialization of an arena is a rather simple operation that we
	illustrate in Figure~\ref{fig:seq}. Note that the initial concurrent arena has deterministic Nature (all probabilities that appear equal 1), and the sequential version also does. From $q_0$, Player $\A$ selects
	either the first row (top choice in the figure) or the second row
	(bottom choice in the figure), and then Player $\B$ selects one of
	the options, that is, one of the next states offered in the subset
	-- this corresponds to choosing a column in the game form. The fresh
	color $k_\Aconc$ appears after the choice of Player $\A$, while the original
	colors appear after the choice of Player $\B$. 
	
	One can notice here that in the original concurrent game and its
	sequential version, the value of the game for the players are different: in the turn-based game, from $q_0$, Player $\B$ has a strategy to ensure never seeing the color $y$ (which induces a value of 0 for Player $\B$)
	whereas it is not the case in the original game. As we will see along that 
	paper, this is due to the fact that the local interaction at $q_0$ is not
	determined. %\fbox{OK?}}
	\begin{figure*}[htb]
		\centering
		\begin{tikzpicture}
		\path (0,0) node
		{\includegraphics[scale=1]{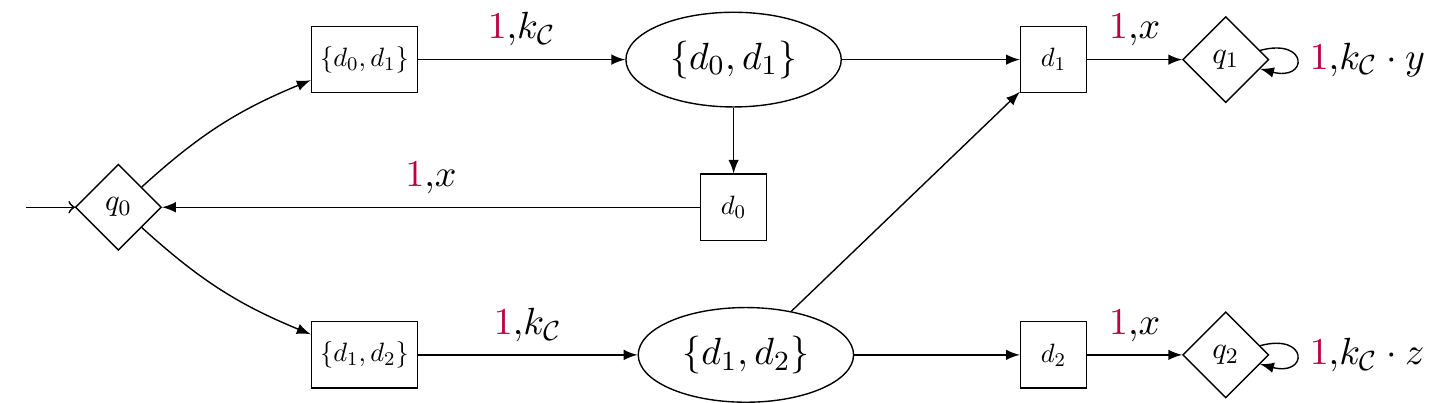}};
		\end{tikzpicture}
		\caption{\normalsize Sequentialization of the concurrent arena from Figure~\ref{fig:arbitrarilyClose}. Diamond-shaped nodes belong to Player $\A$, ellipse-shaped ones belong to Player $\B$ and the rectangle-shaped are Nature states. On the edges, probabilities appear in purple and colors in black. The pairs in $Q \times A$ are represented as the corresponding set of states $\delta(q,a,B) \subseteq \mathcal{P}(\distribSet)$.}
		\label{fig:seq}
	\end{figure*}
\end{example}
%As expected and stated in 
%{\label{ref:prop_projection_colors_colors_projection}}%Proposition in the appendix, 
 %This proposition pertains for arenas. The sequential version of a concurrent game $\langle \Aconc,W \rangle$ is now obtained from the sequential version of the concurrent arena $\Aconc$ by translating the winning condition $W$. 
%The sequential version of a concurrent game $\langle \Aconc,W \rangle$ is then defined below as the combination of the sequential version of its underlying arena $\Aconc$ and the sequentialization of its the winning condition $W$.
%\begin{definition}[Sequentialization of a concurrent game]
%	The \emph{sequential version} of the stochastic concurrent game $\langle 	\Aconc, W \rangle$ is the turn-based game $\Games{\ATurnConc}{\Seq(W)}$, where $\ATurnConc% = \AcoloredSeq$ is the sequential version of the concurrent arena $\Aconc$ and $\Seq(W) = (\projec{\colSetSeq}{\colSet})^{-1}[W]$ is the preimage of the winning set $W$ by the projection function $\projec{\colSetSeq}{\colSet}: \starom{\colSetSeq} \rightarrow \starom{\colSet}$.
%\end{definition}
{\label{ref:prop_projection_colors_colors_projection}}
We make several remarks. First, paths in a concurrent arena and in its
sequential version relate via a projection (see Proposition~\ref{prop:projection_colors_colors_projection}). Second, if $W$ is Borel, so is $\Seq(W)$ as the continuous preimage %(by a projection) 
of a Borel set.
%\begin{remark}
%	Note that if $W$ is Borel, so is $\Seq(W)$ as the continuous preimage %(by a projection) of a Borel set.
%	\label{rmq:borel_projec}
%\end{remark}
{\label{ref:path_extension}}Also, note that if the probabilities of
finite paths in the concurrent arena are equal to probability of their
preimage in the sequential version, then it follows that the
probability of all Borel sets in the concurrent arena is the
probability of their preimage in the sequentialized version (see Subsection~\ref{subsec:path_extension}).
%Interestingly, relating the probabilities of finite paths in the concurrent arena with the corresponding probability of the 
%Folding a path into a concurrent arena from its sequential version is straightforward, one has just to consider the projection of the path on the states of that concurrent arena. However, unfolding a path from a concurrent arena into its sequential version is not as direct since there are several ways to unfold a path $\pi$ while ensuring that its projection is equal to $\pi$. We define a way to extend paths in Definition~\ref{def:path_extension}, in 

%Finally, we define the sequential version $\transSeq$ of the transition 
%function $\delta$ that will be used in the next subsection to translate 
%strategies from a concurrent arena to its sequential version.
%\begin{definition}[Sequentialization of the transition function]
	%Let $\transSeq: V_A \times A \cup V_B \times B \rightarrow V_B \cup V_A$ 
	%be a function that mimics the transition function $\delta$ from the 
	%concurrent arena $\Aconc$ in its sequential version $\ATurnConc$: 
	%$\transSeq(q,a) = (q,\Sq{q}{a}) \in V_B$ for all $q \in V_A = Q$ and $a 
	%\in 
	%A$, and for all $(q,\Succq{q}) \in V_B$ and $b \in B$, we have 
	%$\transSeq((q,\Succq{q}),b) = \delta(q,a,b) \in V_A$ where $a = \min_{<_A} 
	%\{ a' \in A \mid \Succq{q} = \Sq{q}{a'} \}$%$\neq \emptyset$
	%.
%\end{definition}
%\pbcomment{Quel but ? Pas clair à ce niveau}

We can now state the theorem we want to prove in this Section 5.
\begin{theorem}
	\label{thm:main}
	Consider a concurrent game $\langle \Aconc,W \rangle$ and assume that it is locally determined.	%A locally determined concurrent game $\langle \Aconc,W \rangle$ %, the game $\langle \Aconc,W \rangle$ 
	Then, it is (resp. finite-memory, resp. positionnaly) determined if and only if its sequential version $\langle \Seq(\Aconc),\Seq(W) \rangle$ is (resp. finite-memory, resp. positionnaly) determined.
\end{theorem}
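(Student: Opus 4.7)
The plan is to prove both directions of the equivalence by exhibiting strategy translations between $\Aconc$ and $\Seq(\Aconc)$ that preserve or (weakly) improve values. Four operators do the work: a sequentialization $\Seq$ (for either player) sending $\SetColStrat{\Aconc}{p}$ into $\SetColStrat{\Seq(\Aconc)}{p}$, a parallelization $\Par_\A$ sending $\SetColStrat{\Seq(\Aconc)}{\A}$ back into $\SetColStrat{\Aconc}{\A}$, and a parallelization $\Par_\B$ sending $\SetColStrat{\Seq(\Aconc)}{\B}$ back into $\SetColStrat{\Aconc}{\B}$. Three of these four operators merely re-index arguments via the projection $\projec{\colSetSeq}{\colSet}$ and the natural identification of $V_\A$ with $Q$; the fourth, $\Par_\B$, is where local determinacy is actually used. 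Together with the path-projection property, these operators yield $\val{\Aconc}{\A}[W] = \val{\Seq(\Aconc)}{\A}[\Seq(W)]$ and the symmetric identity for Player~$\B$, and transport optimal strategies in both directions.

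For the three easy operators, I would define $\Seq(\s_\A)$ to play $\s_\A$ on $V_\A$-states after projecting histories through $\projec{\colSetSeq}{\colSet}$ (and anything on $V_\B$-states), and $\Seq(\s_\B)$ analogously, also forgetting the $\setA$-component of the current $V_\B$-state; $\Par_\A(\s_\A)$ is defined by restricting $\s_\A$ to $V_\A$-histories along the canonical lift of concurrent histories. A direct computation using the path-projection property gives $\prob{\Seq(\Aconc)}{\Seq(\s_\A)}{\Seq(\s_\B)}[\Seq(U_W)] = \prob{\Aconc}{\s_\A}{\s_\B}[U_W]$ for every $(\s_\A,\s_\B)$. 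Since the image of $\Seq$ on Player~$\B$'s side is restrictive (it ignores the $\setA$-component of $V_\B$-states), this yields $\val{\Seq(\Aconc)}{\Seq(\s_\A)}[\Seq(W)] \le \val{\Aconc}{\s_\A}[W]$ and dually $\val{\Seq(\Aconc)}{\Seq(\s_\B)}[\Seq(W)] \ge \val{\Aconc}{\s_\B}[W]$; and Player~$\A$ moving first in $\Seq(\Aconc)$ symmetrically gives $\val{\Aconc}{\Par_\A(\s_\A)}[W] \ge \val{\Seq(\Aconc)}{\s_\A}[\Seq(W)]$.

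The main obstacle is the construction of $\Par_\B$ and its value bound, which is where local determinacy is invoked. Fix $\s_\B \in \SetColStrat{\Seq(\Aconc)}{\B}$ and a concurrent history $\eta$ ending at $q$; let $\rho$ be the canonical lift of $\eta$ and, for each $a \in \setA$, set $b_a := \s_\B(\rho \cdot k_\Aconc,(q,a))$ and $d_a := \delta(q,a,b_a) \in \distribSet$. Write $v^\star$ for the supremum over $a \in \setA$ of the ``sub-value'' of $d_a$ in $\Seq(\Aconc)$, that is, the value from $d_a$ of the continuation of $\s_\B$. Now consider the win/lose game on the local interaction $\formNF_q = \langle \setA,\setB,\delta(q,\cdot,\cdot),\distribSet\rangle$ with Player~$\A$'s objective $\subVal := \{d \in \distribSet \mid \text{sub-value of } d > v^\star\}$. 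The family $(b_a)_a$ certifies that $\reachStrat{\A}(\formNF_q,\subVal) = \emptyset$, so by determinacy of $\formNF_q$ there exists $b^\star \in \reachStrat{\B}(\formNF_q,\distribSet \setminus \subVal)$; set $\Par_\B(\s_\B)(\eta,q) := b^\star$. A Bellman-style argument, using the sub-value function as a potential that never increases under $\Par_\B(\s_\B)$, then yields $\val{\Aconc}{\Par_\B(\s_\B)}[W] \le \val{\Seq(\Aconc)}{\s_\B}[\Seq(W)]$.

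Chaining the four inequalities yields the value identities, and optimal strategies transport across $\Seq,\Par_\A,\Par_\B$. For the finite-memory and positional refinements, each operator is skeleton-preserving: $\Seq$ and $\Par_\A$ leave the memory skeleton $\mathcal{M}$ of the source strategy intact, merely adjusting the action map's state component; for $\Par_\B$, the action $b^\star$ chosen at $(\eta,q)$ depends only on the family $(b_a)_a$, which depends only on the memory state of $\s_\B$ after reading $\rho \cdot k_\Aconc$, and this is a deterministic function of the memory state after reading $\rho$ (since $k_\Aconc$ is a fixed symbol). Hence positional strategies map to positional and finite-memory to finite-memory, closing all three versions of the theorem simultaneously.
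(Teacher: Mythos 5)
Your overall architecture matches the paper's: four translation operators, the three ``easy'' ones handled by re-indexing through $\projec{\colSetSeq}{\colSet}$, local determinacy invoked only for $\Par_\B$, and the theorem obtained by chaining value inequalities plus memory-preservation observations. The divergence, and the gap, is in how $\Par_\B$ is built and justified. You pick $b^\star$ as a winning column for the \emph{value-based} target $\{d \mid \text{sub-value}(d) \le v^\star\}$ and then argue by a Bellman/supermartingale potential. This does not work for general Borel $W$. The one-step inequality $\mathbb{E}[u(\eta_{n+1})\mid\eta_n]\le u(\eta_n)$, where $u$ is the value-to-go of the MDP ``Player $\A$ against the fixed $\s_\B$ in $\Seq(\Aconc)$'', only yields $\prob{\Aconc}{\s_\A}{\Par_\B(\s_\B)}[U_W]\le u(\eta_0)$ if $\liminf_n u(\eta_n)\ge \mathbf{1}_{U_W}$ holds almost surely \emph{along the concurrent play} — and nothing forces that, because your $b^\star$ may steer the play through Nature states that are unreachable under $\s_\B$ in $\Seq(\Aconc)$ (they merely have low sub-value), where $u$ is the value of an MDP rooted at a null history and satisfies no conservation law. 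Already for a safety objective the concurrent play can remain forever in histories with $u\equiv 1/2$ while being surely safe. The paper's construction instead takes the target set to be $\Rech{\mu}{\lambda}{m}{q}$, the set of Nature states \emph{actually reachable} under $\s_\B$; determinacy of $\formNF_q$ then yields a column $b^\star$ whose every entry equals $\delta(q,f_\rho(a),\s_\B(\cdots,(q,f_\rho(a))))$ for some reindexing $f_\rho$ of Player $\A$'s actions, Player $\A$'s probability mass is pushed forward along $f_\rho$, and the two induced measures on plays coincide — which settles every Borel $W$ at once with no potential argument. Your construction also breaks the finite-memory and positional refinements: the sub-values depend on the residual objective $\{\rho'\mid\rho\cdot\rho'\in W\}$ and hence on the full color history, not only on the memory state of $\s_\B$ after $\rho$, so $b^\star$ is not a function of $(m,q)$; the set $\Rech{\mu}{\lambda}{m}{q}$, by contrast, depends only on $(m,q)$.

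A second, smaller problem: your sequentialization inequalities point the wrong way. From ``sequential Player $\B$ has more responses than the sequentialized ones'' you correctly get $\val{\Seq(\Aconc)}{\Seq(\s_\A)}[\Seq(W)]\le\val{\Aconc}{\s_\A}[W]$, but that is the trivial and useless direction. Closing the chain requires $\val{\Aconc}{\s_\A}[W]\le\val{\Seq(\Aconc)}{\Seq(\s_\A)}[\Seq(W)]$, i.e.\ that observing Player $\A$'s action gives Player $\B$ no advantage against the deterministic color strategy $\Seq(\s_\A)$; this is proved in the paper by matching every state strategy $\sigma_\B$ of $\Seq(\Aconc)$ with a concurrent $\s_\B$ via the unique lift $\rho^\pi$ of each concurrent history. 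As stated, your four inequalities all yield $\val{\Seq(\Aconc)}{\A}[\Seq(W)]\le\val{\Aconc}{\A}[W]$ and $\val{\Aconc}{\B}[W]\le\val{\Seq(\Aconc)}{\B}[\Seq(W)]$ and never the reverse, so the equalities do not follow.
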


In the following, we will be working on a concurrent graph 
game $\langle \Aconc,W \rangle$ with $\Aconc = \AcoloredConc$ and its 
sequential version $\langle \ATurnConc,\Seq(W) \rangle$ with $\ATurnConc = 
\AcoloredSeq$.

%\subsection{Sequentialization of strategies}
We consider the translation of strategies from the concurrent game to its sequential version. {\label{ref:sequentialization_strat}}
Let us first translate memory skeletons. From a memory skeleton
$\mathcal{M}$ on a set of colors $\colSet$, we obtain its sequential
version $\Seq(\mathcal{M})$ on the set of colors $\colSet_\Aconc$ by
mimicing $\mathcal{M}$ on colors in $\colSet$ and ignoring the color
$k_\Aconc$ (see
Definition~\ref{def:sms}). {\label{ref:projection_memory}} Note that
the state of the memory w.r.t. $\mathcal{M}$ and $\Seq(\mathcal{M})$
relate through the projection of the colors (see Proposition~\ref{prop:projection_memory}). 

Consider now action maps. The sequential version $\Seq(\lambda): M
\times Q \rightarrow A/B$ of an action map $\lambda: M \times Q
\rightarrow A/B$ for either of the players essentially mimics the map
$\lambda$ on relevant states -- for $m \in M$, for Player $\A$:
$\Seq(\lambda)(m,v) = \lambda(m,v)$ if $v \in V_\A$ and for $(v,a) \in
V_\B$, for Player $\B$: $\Seq(\lambda)(m,(v,a)) = \lambda(m,v)$ -- and
plays a dummy action on other states (see Definition~\ref{def:sequen_action_function}). 

%From an action function $\lambda: M \times Q \rightarrow A$, 
%Note that the state of the memory w.r.t. $\mu$ and $\Seq(\mu)$ relate through the projection of the sequence of colors, as it is stated in Proposition~\ref{prop:projection_memory} in the appendix. Let us now translate action functions.

Then, if a strategy $\s$ is implemented by memory skeleton
$\mathcal{M}$ and an action map $\lambda$, its sequential version will
be implemented with sequential versions of the memory skeleton
$\Seq(\mathcal{M})$ and of the action map $\Seq(\lambda)$ (see Definition~\ref{def:sequentialization_strat}).

%The sequential version of a strategy is then defined by combinining Definitions~\ref{def:sequentialization_memory_skeleton},~\ref{def:sequen_action_function} as below.

%See~\ref{subsec:def_sequen} for formal definitions.

The interest of this definition of sequentialization is stated below:%if stated below: the sequentialization operator preserves the amount of memory sufficient to implement strategies, and second that it does not worsen the value of the games.

%Let us now consider the interest of the sequentialization of strategies% we defined in this subsection. 
%First, we have the following proposition, relating the memory of a strategy and of its sequential version.
\begin{observation}
	If a strategy %$\s$ 
	can be implemented with some amount of memory, then so 
	can its sequential version. % $\Seq(\s)$
	\label{obs:memory_used_sequen}
\end{observation}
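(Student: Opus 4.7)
The plan is to note that this observation follows essentially by inspection of the sequentialization construction for memory skeletons. Indeed, by the description of $\Seq(\mathcal{M})$ (to be spelled out in Definition~\ref{def:sms}), the sequential version of $\mathcal{M} = \langle M, \minit, \mu \rangle$ keeps exactly the same memory set $M$ and the same initial state $\minit$; only the update function is modified, so as to mimic $\mu$ on colors in $\colSet$ and to act as the identity on the fresh color $k_\Aconc$. In particular, the sequentialization does not enlarge the set of memory states.

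First, I would recall from Definition~\ref{def:sequentialization_strat} that a strategy $\s$ implemented by $(\mathcal{M}, \lambda)$ with $\mathcal{M} = \langle M, \minit, \mu \rangle$ has sequentialization $\Seq(\s)$ implemented by $(\Seq(\mathcal{M}), \Seq(\lambda))$. Then I would observe that $\Seq(\mathcal{M})$ has memory set $M$ of the same cardinality as $\mathcal{M}$. Since the ``amount of memory used'' is by definition the cardinality $|M|$ of the memory set, the amount of memory used to implement $\Seq(\s)$ is the same as that used to implement $\s$.

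This immediately yields the three consequences packaged in the observation: if $\s$ is finite-memory (i.e., $M$ is finite), then so is $\Seq(\s)$; if $\s$ is positional (i.e., $|M| = 1$), then $\Seq(\s)$ is positional; and more generally any bound on $|M|$ transfers. I do not anticipate any real obstacle here, since the construction of $\Seq(\mathcal{M})$ was explicitly designed so as to avoid adding new memory states; the only thing it adds is a trivial loop on the fresh color $k_\Aconc$. The observation could therefore be stated as a direct corollary of Definitions~\ref{def:sms} and~\ref{def:sequentialization_strat}, with no further argument needed beyond pointing out that $|M_{\Seq(\mathcal{M})}| = |M_{\mathcal{M}}|$.
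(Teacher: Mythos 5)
Your proposal is correct and matches the paper's (implicit) justification exactly: the observation is a direct consequence of Definitions~\ref{def:sms} and~\ref{def:sequentialization_strat}, since $\Seq(\mathcal{M})$ keeps the same memory set $M$ and only alters the update function, so $|M|$ is preserved. The paper offers no further argument for this observation, and none is needed.
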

%\begin{proof}
	%This comes from Definitions~\ref{def:sequentialization_memory_skeleton} and~\ref{def:sequentialization_strat}.
%\end{proof}

%Second, it ensures that the value of the game does not worsen with sequentialization, as stated in the lemma below.
%if a strategy $\s$ is winning in $\langle \Aconc,W \rangle$ then so is its  
%sequential version $\Seq(\s)$ in $\ATurnConc$. 
%This is formally stated in the following lemma.
%We state the following lemma for Player $\A$, it holds similarly for Player $\B$ by reversing the inequalities.
\begin{lemma}[Proof page~\pageref{proof:lemma1}]
	Consider a color strategy $\s_\A \in \SetColStrat{\Aconc}{\A}$ %(resp.  $\s_\B \in \SetColStrat{\Aconc}{\B}$) 
	in the concurrent arena $\Aconc$ for Player $\A$ % (resp. Player $\B$) % 
	%implemented by the memory skeleton $\mathcal{M}$ and the action function 
	%$\lambda$ 
	and its sequential version $\Seq(\s_\A) \in 
	\SetColStrat{\ATurnConc}{\A}$ in the turn-based game $\Seq(\Aconc)$%implemented by $\Seq(\mathcal{M})$ and 
	%$\Seq(\lambda)$
	%(resp. $\Seq(\s_\B) \in \SetColStrat{\ATurnConc}{\B}$)
	. Then, for all Borel %winning 
	set $W$: $\val{\Aconc}{\s_\A}[W] \leq \val{\Seq(\Aconc)}{\Seq(\s_\A)}[\Seq(W)]$.
	%(resp. $\val{\Seq(\Aconc)}{\Seq(\s_\B)}[\Seq(W)] \leq \val{\Aconc}{\s_\B}[W]$). 
	It follows that $\val{\Aconc}{\A}[W] \leq \val{\Seq(\Aconc)}{\A}[\Seq(W)]$%(resp. $\val{\Seq(\Aconc)}{\B}[\Seq(W)] \leq \val{\Aconc}{\B}[W]$)
	.
	%If $\s$ is winning in the game $\langle \Aconc,W \rangle$, then so is the 
	%strategy $\Seq(\s)$ in $\langle \ATurnConc,W^\Aconc \rangle$.
	\label{lem:sequential_win}
\end{lemma}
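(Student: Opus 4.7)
The plan is to show the stronger statement that for every state strategy $\tau \in \SetStrat{\ATurnConc}{\B}$ in the sequential arena there is a state strategy $\s_\B \in \SetStrat{\Aconc}{\B}$ in the concurrent arena with $\prob{\Aconc}{\stratMod{\s_\A}}{\s_\B}[U_W] = \prob{\ATurnConc}{\stratMod{\Seq(\s_\A)}}{\tau}[U_{\Seq(W)}]$. Taking the infimum over all such B-strategies on both sides then yields $\val{\Aconc}{\s_\A}[W] \leq \val{\Seq(\Aconc)}{\Seq(\s_\A)}[\Seq(W)]$, and a subsequent supremum over $\s_\A$ (using that each $\Seq(\s_\A)$ is itself a valid color strategy in the sequential arena) gives the ``It follows'' part. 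Equality, rather than a mere inequality, is achievable because $\s_\A$ is a \emph{deterministic} color strategy: Player $\A$'s next action is fully predictable from the color history that Player $\B$ also observes, so the extra information $\B$ gains in the sequential version by seeing $\A$'s move is vacuous.

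Concretely, I would first define a lift of finite histories. Since $\s_\A$ is deterministic, any $\pi = q_0 \cdots q_n \in Q^+$ starting at $q_0$ uniquely determines $\A$'s sequence of actions $a_i = \s_\A(\colFunc(q_0,q_1) \cdots \colFunc(q_{i-1},q_i), q_i)$, and I set $\pi^\uparrow := q_0\,(q_0,a_0)\,q_1\,(q_1,a_1) \cdots q_n\,(q_n,a_n) \in V^+$, which ends in $V_\B$. Define $\s_\B(\pi) := \tau(\pi^\uparrow)$; this is a well-defined state strategy. An induction on $n$ then shows that the probability of $\cyl(\pi)$ under $(\stratMod{\s_\A}, \s_\B)$ in $\Aconc$ equals the probability of the cylinder generated by $\pi^\uparrow$ (with the intermediate deterministic Nature states inserted) under $(\stratMod{\Seq(\s_\A)}, \tau)$ in $\ATurnConc$: at each step, $\A$'s contribution is deterministic and identical on both sides, $\B$'s contribution matches by the definition of $\s_\B$, the Nature step following $\A$'s move is a Dirac since $\distribFunc_\Aconc((q,a))[(q,a)] = 1$, and the Nature step following $\B$'s move has distribution $\distribFunc \circ \delta = \deltaDistrib(q,a,\cdot)$, exactly as in $\Aconc$.

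The path-extension principle recalled in Subsection~\ref{subsec:path_extension} then upgrades this equality on all cylinders to equality on all Borel sets. It remains to match $U_W$ with $U_{\Seq(W)}$ under the lift: by Proposition~\ref{prop:projection_colors_colors_projection} and the definition $\Seq(W) = \inv{\projec{\colSetSeq}{\colSet}}[W]$, an infinite sequential path of the form $\pi^\uparrow$ lies in $U_{\Seq(W)}$ if and only if its underlying concurrent path $\pi$ lies in $U_W$, because the fresh color $k_\Aconc$ inserted along each $\A$-edge is precisely what $\projec{\colSetSeq}{\colSet}$ projects away. The main technical obstacle is essentially bookkeeping: setting up the path correspondence cleanly across two arenas with different state sets and different color alphabets and checking that cylinders, winning sets, and Nature transitions all align under the lift. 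Fortunately, the projection and path-extension machinery already developed in the paper handles the bulk of that work, so the remaining argument reduces to the induction above.
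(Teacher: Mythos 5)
Your proposal is correct and takes essentially the same route as the paper: there too, one fixes a state strategy $\sigma_\B$ for Player $\B$ in $\Seq(\Aconc)$, uses the determinism of $\Seq(\s_\A)$ to single out the unique positive-probability lift $\rho^\pi$ of each $\pi \in Q^+$, and defines $\s_\B(\pi) = \sigma_\B(\rho^\pi \cdot (\head(\rho^\pi),\stratMod{\Seq(\s_\A)}(\rho^\pi)))$, which is exactly your $\s_\B(\pi) := \tau(\pi^\uparrow)$. The induction on cylinders, the upgrade to Borel sets, and the identification of $U_W$ with $U_{\Seq(W)}$ via the color projection are precisely what the paper factors out into Proposition~\ref{prop:eq_nabla_eq_prob} (built on Propositions~\ref{lem:sufficient_cond_prob_borel} and~\ref{prop:projection_colors_colors_projection}), so your inlined version matches the published argument step for step.
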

It holds similarly for Player $\B$ by reversing the inequalities or by replacing $W$ by $\colSet^\omega \setminus W$.

\subsection{Parallelization of strategies}
This subsection defines the %second part of the back-and-forth process, 
%\bbcomment{in the beginning of previous subsection: mention first part of the 
%back-and-forth process?} \slcomment{peut-etre que je ne dirai plus back and 
%forth dans l'intro} i.e. the 
parallelization of the strategies: it is the translation of a strategy in the
turn-based game into a strategy in the concurrent game. We first consider the 
parallelization of the memory skeleton, of the action map for Player $\A$ and of strategies. The parallelization of the action map of Player $\B$ will come later, as it is more involved since %in the turn-based game, 
as Player $\B$ plays second, she knows what Player $\A$ has played when taking an action.
%which is common to both players. However, the parallelization of the action function differs from one player to the other: we first deal with the easier case of Player $\A$, then we proceed with the more involved case of Player $\B$. %As for the case of sequentialization, let us first consider the parallelization of memory skeletons.

%These strategies are strategies in the sequentialized version of the 
%concurrent game with determined local interaction. Especially, the process is 
%only defined for these specific turn-based games. The parallelization is 
%simpler for the player who plays first in the sequential game, as shown below.
{\label{ref:parallelization}}
{\label{ref:parallelization_update}}
\begin{definition}[Parallelization of %memory skeletons, action functions and 
strategies][Justification page~\pageref{proof:remark6}]
	\begin{itemize}
		\item Consider a memory skeleton $\mathcal{M} = \langle M,\minit,\mu \rangle$ on 
		a set of colors $\colSetSeq$. The \emph{parallel version} of that 
		memory skeleton is the memory skeleton $\Par(\mathcal{M}) = \langle 
		M,\minit,\Par(\mu) \rangle$ on $\colSet$ where, for all $m \in M$ and $k \in \colSet$: $\Par(\mu): M \times \colSet 
		\rightarrow M$ is such that $\Par(\mu)(m,k) = \mu(\mu(m,k_\Aconc),k)$
		\item Consider an action map $\lambda: M \times V \rightarrow A$  for a non-empty set $M$ for Player $\A$ in the turn-based arena $\ATurnConc$. Its \emph{parallel version} is the action map $\Par(\lambda): M \times Q \rightarrow A$ where $\Par(\lambda)(m,q) = \lambda(m,q)$.
		\item Consider a strategy $\sigma$ in $\ATurnConc$ implemented by a memory 
		skeleton $\mathcal{M}$ and an action function $\lambda$. Then, the parallel 
		version of $\sigma$ is the strategy $\Par(\sigma)$ implemented by the 
		parallel version of the memory skeleton $\Par(\mathcal{M})$ and of the 
		action map $\Par(\lambda)$. (This definition holds for both players.)
	\end{itemize}
	%Consider a memory skeleton $\mathcal{M} = \langle M,\minit,\mu \rangle$ on a set of colors $\colSetSeq$. The \emph{parallel version} of that memory skeleton is the memory skeleton $\Par(\mathcal{M}) = \langle M,\minit,\Par(\mu) \rangle$ on $\colSet$ where, for all $m \in M$ and $k \in \colSet$: $\Par(\mu): M \times \colSet \rightarrow M$ is such that $\Par(\mu)(m,k) = \mu(\mu(m,k_\Aconc),k)$ .
	\label{def:parallelization of memory skeletons}
\end{definition}

The two update functions $\Par(\mu)$ and $\mu$ relate through the extension of sequence of colors, as it is stated in Proposition~\ref{prop:projection_memory_parallel}. As stated below and like sequentialization, parallelization preserves the amount of memory.
\begin{observation}
	The amount of memory used by the parallel version of a memory skeleton $\Par(\mathcal{M})$ is the same as the amount of memory used by the memory skeleton $\mathcal{M}$.
	\label{obs:memory_used_para}
\end{observation}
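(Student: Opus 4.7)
\medskip

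The plan is essentially to unfold the definition. By Definition~\ref{def:parallelization of memory skeletons}, the parallel version of a memory skeleton $\mathcal{M} = \langle M, \minit, \mu \rangle$ is explicitly constructed as $\Par(\mathcal{M}) = \langle M, \minit, \Par(\mu) \rangle$, i.e. the underlying set of memory states $M$ and the initial memory state $\minit$ are kept verbatim. Only the update function is replaced: $\mu: M \times \colSetSeq \rightarrow M$ is turned into $\Par(\mu): M \times \colSet \rightarrow M$ defined by $\Par(\mu)(m,k) = \mu(\mu(m,k_\Aconc),k)$. The change of signature is harmless, as the domain $\colSet$ is essentially $\colSetSeq \setminus \{k_\Aconc\}$.

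Since the amount of memory used by a memory skeleton was defined to be simply $|M|$ (in the paragraph on finite-memory strategies in Section~\ref{sec:strat-impl}), and since the set $M$ is shared between $\mathcal{M}$ and $\Par(\mathcal{M})$, the two skeletons use the same amount of memory. In particular, $\Par(\mathcal{M})$ is finite if and only if $\mathcal{M}$ is, and it is a singleton (positional case) if and only if $\mathcal{M}$ is.

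There is no real obstacle: the observation is a direct syntactic consequence of how parallelization of memory skeletons is defined. The only conceptual point worth flagging is that the extra color $k_\Aconc$ in the sequential game, which forced the introduction of $\Par(\mu)$ from $\mu$ by pre-composing an update on $k_\Aconc$, never required adding new memory states, because the effect of the artificial $\A$-choice transition is absorbed into the update function rather than into the memory. This is precisely what makes the parallelization operation memory-preserving and explains why finite-memory and positional determinacy will transfer cleanly in Theorem~\ref{thm:main}.
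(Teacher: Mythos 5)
Your proof is correct and matches the paper's (implicit) justification: the paper states this as an unproved observation precisely because, as you note, $\Par(\mathcal{M})$ keeps the memory set $M$ and initial state $\minit$ verbatim and only replaces the update function, so the memory count $|M|$ is unchanged by definition.
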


With this definition, we obtain a lemma that is analogous to 
Lemma~\ref{lem:sequential_win}: the value of the game for Player $\A$ does not worsen with parallelization.
%a winning strategy is translated into a winning  strategy. %Note that the definition of parallelization of strategies is 
%analogous to the definition of sequentialization of strategies. \bbcomment{Is 
%there a need for a separate definition of parralelization of strategies (like 
%for sequentialization)?} \slcomment{Il faut faire comme pour Seq, mais dans 
%les deux cas on pourrait faire plus court que les quatre lignes de la def 17.}
\begin{lemma}[Proof given page~\pageref{proof:lemma2}]
	Consider a strategy $\sigma_\A \in \SetColStrat{\ATurnConc}{\A}$ for Player $\A$ in the sequential version $\ATurnConc$ and its parallel version $\Par(\sigma_\A)$ in the concurrent game $\Aconc$. %Then, 
	For all Borel set $W$: $\val{\Aconc}{\Par(\sigma_\A)}[W] \geq \val{\Seq(\Aconc)}{\sigma_\A}[\Seq(W)]$. It follows that $\val{\Aconc}{\A}[W] \geq \val{\Seq(\Aconc)}{\A}[\Seq(W)]$
	%If $\sigma_\A$ is winning in the game $\langle \ATurnConc,W^\Aconc \rangle$, then so is the strategy $\Par(\sigma)$ in $\langle \Aconc,W \rangle$.
	\label{lem:parallel_win_A}
\end{lemma}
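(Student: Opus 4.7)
The plan is to exhibit, for every state strategy $\s_\B \in \SetStrat{\Aconc}{\B}$ of Player~$\B$ in the concurrent arena, a matching state strategy $\sigma_\B \in \SetStrat{\ATurnConc}{\B}$ in the sequential arena such that
\[
\prob{\Aconc}{\stratMod{\Par(\sigma_\A)}}{\s_\B}[U_W] = \prob{\ATurnConc}{\stratMod{\sigma_\A}}{\sigma_\B}[U_{\Seq(W)}].
\]
Granted this identity, the infimum over $\s_\B$ of the left-hand side is an infimum over a subfamily of those on the right-hand side, hence at least $\val{\ATurnConc}{\sigma_\A}[\Seq(W)]$. The stated inequality on color values of the game then follows by taking supremum over $\sigma_\A$.

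To construct $\sigma_\B$, observe that any finite path $h \in V^+$ in $\ATurnConc$ projects, via $\projec{V}{V_\A}$, onto a finite path $\pi \in Q^+$ in $\Aconc$ by retaining only the $V_\A = Q$ components; on histories $h$ ending at some $(q,a) \in V_\B$, set $\sigma_\B(h) := \s_\B(\projec{V}{V_\A}(h))$ (which indeed ends at $q$), and let $\sigma_\B$ be arbitrary on histories ending in $V_\A$ since Player~$\B$ does not act there. The cylinder identity is then proved by induction on path length, using three alignments at each concurrent transition $q_i \to q_{i+1}$: Proposition~\ref{prop:projection_memory_parallel} ensures that the memory state reached by $\Par(\mathcal{M})$ after the color-trace of $q_0 \cdots q_i$ coincides with the memory state reached by $\mathcal{M}$ in $\ATurnConc$ after the corresponding alternating color-trace; the identity $\Par(\lambda)(m,q) = \lambda(m,q)$ then forces Player~$\A$ to choose the same action in both games; and by construction Player~$\B$ plays the same distribution. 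Since the sequential arena only interposes a deterministic Nature vertex $(q,a)$ of mass~$1$ between $q$ and $\delta(q,a,b)$, the per-step weights coincide.

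Extending from cylinders to arbitrary Borel sets is exactly the content of the path-extension remark of Subsection~\ref{subsec:path_extension}, applied through the projection $\projec{V}{V_\A}$, which sends $U_{\Seq(W)}$ onto $U_W$. The main obstacle is the bookkeeping inside the cylinder identity: the parallel and sequential memory states must stay synchronized while accounting for the fresh color $k_\Aconc$ that appears only in $\ATurnConc$, which is precisely the role of the clause $\Par(\mu)(m,k) = \mu(\mu(m,k_\Aconc),k)$ in the definition of parallelization. Once this correspondence is in place, the rest reduces to the shared transition kernel~$\delta$, and crucially no appeal to local determinacy of the game forms is needed here — that ingredient is only required for the reverse parallelization of Player~$\B$.
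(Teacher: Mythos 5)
Your proposal is correct and follows essentially the same route as the paper: for each state strategy $\s_\B$ of Player $\B$ in $\Aconc$ you build the sequential counterpart $\sigma_\B(\rho\cdot r)=\s_\B(\projec{V}{Q}(\rho))$, align Player $\A$'s choices via Proposition~\ref{prop:projection_memory_parallel} and the identity $\Par(\lambda)(m,q)=\lambda(m,q)$, match the per-step transition weights, and extend from cylinders to Borel sets via Subsection~\ref{subsec:path_extension}. The only cosmetic difference is that the paper packages the cylinder induction through the $\nabla$ functions and Corollary~\ref{coro:equal_nabla} rather than carrying it out directly, and your closing observation that local determinacy is not needed here matches the paper.
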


Now let us proceed to the more involved parallelization of action maps for Player $\B$. As mentioned earlier, this case is trickier
than the previous one since a strategy for Player $\B$ in the
turn-based arena has the information of the action previously taken by
Player $\A$ when choosing the next action. %\pat{
Since our goal is to ensure that the value of the game does not worsen,
  %preserve winning strategies, %} 
  we want the parallelization of action maps to ensure that the Nature states reachable in $\Aconc$ with the parallel version of the action maps are also reachable in $\ATurnConc$ with the original action maps: %\pat{
that way, every path that can be generated with some probability in the concurrent game could also be generated (up to projection) with the same probability in the turn-based game. %} 
Let us first define the set of states reachable in two steps from a specific state in $\ATurnConc$ given a strategy for Player $\B$.

\begin{definition}[Reachable states w.r.t. a strategy for Pl. $\B$]
	Let $\mathcal{M} = \langle M,\minit,\mu \rangle$ be 
	a memory skeleton, $\lambda: M \times V \rightarrow B$ be an action map for Player $\B$, $m \in M$ be a state of the memory, and $q \in Q$ 
	be a state of the game. Let $\rech{\mu}{\lambda}{m}{q}: A \rightarrow \distribSet$ be such that $\rech{\mu}{\lambda}{m}{q}(a) = \delta(q,a,\lambda(m',(q,a)))$ for $m' = \mu(m,k_\Aconc)$ for all $a \in A$. Then, let $\Rech{\mu}{\lambda}{m}{q} = \rech{\mu}{\lambda}{m}{q}[A] \subseteq \distribSet$ be the set of Nature states reachable in $\ATurnConc$ from $m$ and $q$ if $\mu$ and 
	$\lambda$ implement a strategy for Player $\B$%(note that $k_\Aconc = 
	%\colFunc_\Aconc(q,(q,a))$)
	.
\end{definition}

Our goal is to define the parallel action map $\Par(\lambda)$
such that the set of Nature states reachable from a state of the memory $m
\in M$ and of the game $q \in Q$ is included in
$\Rech{\mu}{\lambda}{m}{q}$. To do so, we use the local determinacy
assumption on $\Aconc$. In fact, this assumption gives the following
proposition, which is %crucial and 
central to our approach:
\begin{proposition}
	Let $\mathcal{M} = \langle M,\minit,\mu \rangle$ be 
	a memory skeleton, $\lambda: M \times V \rightarrow B$ be an action
	map for Player $\B$, $m \in M$ be a state of the memory, and $q \in Q$ 
	be a state of the game. Then, Player $\B$ has a winning strategy in the 
	win/lose game $\langle \formNF_q,Q \setminus \Rech{\mu}{\lambda}{m}{q} 
	\rangle$. That 
	is, $\reachStrat{\B}(\formNF_q,\Rech{\mu}{\lambda}{m}{q}) \neq \emptyset$.
	\label{prop:winning_strat_B_local_det}
\end{proposition}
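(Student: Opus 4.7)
The plan is to exploit the local determinacy assumption: since $\Aconc$ is locally determined, every local interaction $\formNF_q$ belongs to $\allDet$, so the win/lose game $\langle \formNF_q, \distribSet \setminus \Rech{\mu}{\lambda}{m}{q} \rangle$ is determined. Therefore, to establish $\reachStrat{\B}(\formNF_q,\Rech{\mu}{\lambda}{m}{q}) \neq \emptyset$ it suffices to rule out that Player $\A$ has a winning strategy in that game, i.e., that some row of $\formNF_q$ takes all its values outside $\Rech{\mu}{\lambda}{m}{q}$.

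Accordingly, I would argue by contradiction. Suppose Player $\A$ has a winning strategy $a \in \setA$, meaning that $\delta(q,a,b) \in \distribSet \setminus \Rech{\mu}{\lambda}{m}{q}$ for every $b \in \setB$. The strategy $\lambda$ of Player $\B$ in the sequential arena suggests an explicit $b$ to try against $a$: set $m' := \mu(m,k_\Aconc)$ and $b^\star := \lambda(m',(q,a))$. Unwinding the definition of $\rech{\mu}{\lambda}{m}{q}$, we obtain $\delta(q,a,b^\star) = \delta(q,a,\lambda(m',(q,a))) = \rech{\mu}{\lambda}{m}{q}(a)$, which lies in $\Rech{\mu}{\lambda}{m}{q} = \rech{\mu}{\lambda}{m}{q}[\setA]$ by definition. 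This contradicts the supposed winning property of $a$.

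Hence no action of Player $\A$ is winning in $\langle \formNF_q, \distribSet \setminus \Rech{\mu}{\lambda}{m}{q} \rangle$, and local determinacy of $\Aconc$ then yields a winning strategy for Player $\B$, as required. The only point requiring care is the bookkeeping with the sequential arena: the pair $(q,a)$ is the unique Nature state reached from $q$ after Player $\A$ plays $a$ in $\ATurnConc$, its image by $\distribFunc_\Aconc$ is the Dirac on itself, and then Player $\B$, whose action at $(q,a)$ is prescribed by $\lambda$ applied to the memory state $m'$ reached from $m$ via the fresh color $k_\Aconc$, selects $b^\star$; this matches exactly the two-step structure baked into $\rech{\mu}{\lambda}{m}{q}$. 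There is no substantial obstacle here — the proposition is essentially a one-line consequence of the determinacy of game forms, once the definition of $\Rech{\mu}{\lambda}{m}{q}$ has been unfolded.
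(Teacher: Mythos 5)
Your proof is correct and follows essentially the same route as the paper: for each row $a$, the column $b^\star=\lambda(\mu(m,k_\Aconc),(q,a))$ witnesses $\delta(q,a,b^\star)=\rech{\mu}{\lambda}{m}{q}(a)\in\Rech{\mu}{\lambda}{m}{q}$, so Player $\A$ has no winning strategy, and local determinacy of $\formNF_q$ hands the winning strategy to Player $\B$. The only differences are cosmetic (you phrase it as a contradiction and make the witness explicit), plus you silently fix the paper's typo by writing the complement over $\distribSet$ rather than $Q$.
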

\begin{proof}
	Consider an action $a \in A$. There exists $b \in B$ such that $\delta(q,a,b) = \rech{\mu}{\lambda}{m}{q}(a) \in \Rech{\mu}{\lambda}{m}{q}$. Since this is true for all 
	$a \in A$, it implies that Player $\A$ has no strategy to
        avoid the set $\Rech{\mu}{\lambda}{m}{q}$ in the game form
        $\formNF_q$, i.e. she has no winning strategies in the
        win/lose game $\langle \formNF_q,Q \setminus
        \Rech{\mu}{\lambda}{m}{q}\rangle$. In other words, we have
        $\reachStrat{\A}(\formNF_q,Q \setminus
        \Rech{\mu}{\lambda}{m}{q}) = \emptyset$. Since the game form
        $\formNF_q$ is determined (by local determinacy of $\Aconc$),
%, by Proposition~\ref{prop:winning_set_determined}
	we have that $\reachStrat{\B}(\formNF_q,\Rech{\mu}{\lambda}{m}{q}) \neq \emptyset$: Player $\B$ has a winning strategy in this game.
\end{proof}

%We can proceed to the definition of the parallelization of action maps for player $\B$.
\begin{definition}[Parallelization of action maps for Player $\B$]
Assume that the concurrent arena $\Aconc$ is locally determined. Consider an action map for Player $\B$ in the turn-based arena 
$\ATurnConc$: $\lambda: M \times V \rightarrow B$ on a non-empty set $M$ 
along with an update function $\mu: M \times \colSetSeq \rightarrow M$. Its \emph{parallel version} is the action map $\Par(\lambda): M \times Q 
\rightarrow B$ where, for all $m 
\in M$ and $q \in Q$: $\Par(\lambda)(m,q) = \min_{<_\B} 
\reachStrat{\B}(\formNF_q,\Rech{\mu}{\lambda}{m}{q})$ with 
$\reachStrat{\B}(\formNF_q,\Rech{\mu}{\lambda}{m}{q}) \neq \emptyset$ by 
Proposition~\ref{prop:winning_strat_B_local_det}  %\slcomment{$\min \neq \emptyset$ n'a pas de sens (aussi 
%plus haut). Invoquer Prop 4.}	
(recall that $<_\B$ is only used to implement an arbitrary choice).	
%Let us define a function that takes $\lambda,\mu$ as arguments and returns 
%$\lambda^T,\mu^T$, where $\lambda,\mu$ define a strategy implementation in 
%$T^C$, and $\lambda^T,\mu^T$ in $C$.
%
%Let $\mu^T(m,k) := \mu(\mu(m,k_C),k)$.
%
%Let $\lambda^T(m,q) := f(F_q,r_{\lambda,\mu}(m,q,A))$, where $f$ takes as 
%arguments a determined local interaction $F_q$ and a set of outcomes (i.e. 
%vertices in $Q$) that is enforcable by the second player, and returns the 
%least 
%(wrt some order) action enforcing it. Here $r_{\lambda,\mu}(m,q,a)$ is, when 
%starting from $q \in Q$ in $T_C$, the new state in $Q$ after two steps: first 
%step, the first player plays action $a$; second step, the second player plays 
%according to $\lambda,\mu$. And $r_{\lambda,\mu}(m,q,A)$ is the set of all 
%possible new states depending on $a \in A$.
\end{definition}
%\begin{remark}
  Note that, contrary to the three other cases of the parallelization for Player $\A$ and the sequentialization of strategies for both players, the parallelization of action map for Player $\B$ depends on an update function.
%\end{remark}

\begin{figure*}[htb]
	\centering
    \begin{tikzpicture}
      \path (0,0.5) node
      {\includegraphics[scale=1]{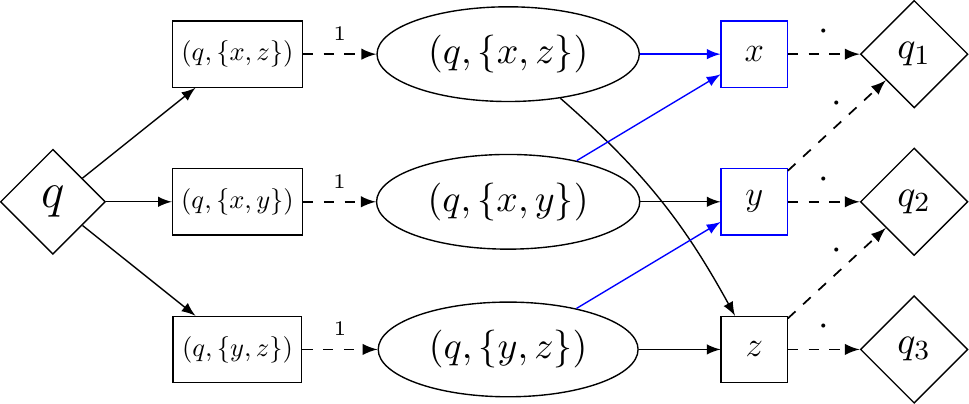}};
      
      \path (6.75,2.7) node {\includegraphics[scale=1]{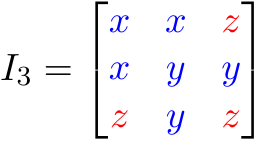}};
      
      \path (8.5,0) node {\includegraphics[scale=1]{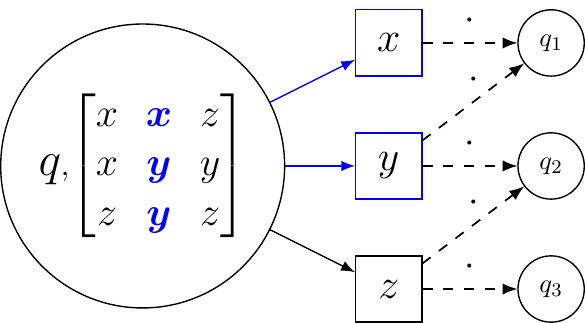}};
    \end{tikzpicture}
	\caption{On the left-hand side, we have a portion of a turn-based graph arena, 
    % of Figure~\ref{fig:unfold_concurrent}
	with all the states reachable in at most two steps from the state $q$. This turn-based arena corresponds to the sequentialization of the portion of the concurrent arena on the right-hand side
    % (the edge between $q$ and $z$ is omitted as it is not available if Player $\B$ chooses the second column)
    with the local interaction in the state $q$ being $I_3$ from Figure~\ref{fig:examples_game_forms}. 
    % represented in the middle: it is the game form $I_3$ from Figure~\ref{fig:examples_game_forms}. 
    Out of $q$, Player $\A$ has three choices (corresponding to the three rows), hence the three outgoing edges; leading to three Nature states from which a specific state belonging to Player $\B$ is reached with probability 1. From each of these three states, Player $\B$ has two choices, leading to two out of the three $x$, $y$ and $z$ Nature states.
    % Here, we have $A = \{ a_1,a_2,a_3 \}$ and $B = \{ b_1,b_2,b_3 \}$, $a_i$ (resp. $b_i$) being the $i$-th row (resp. column) of $I_3$.
    A %(portion of a)
    strategy for Player $\B$ is represented in blue arrows in the turn-based arena, with the Nature states reachable with that strategy represented in blue. It is done similarly in the local interaction $I_3$, with the state that is not reachable, i.e. $z$, in red. Finally, in the concurrent arena, the blue states are the Nature states reachable if Player $\B$ opts for the second column, which is the winning strategy for Player $\B$ in the win/lose game obtained from the game form $I_3$ if she has $\{ x,y \}$ as winning set.}
	\label{fig:transfo_strat_det}
\end{figure*}
\begin{example}
  Let us illustrate that definition on an example on Figure~\ref{fig:transfo_strat_det}. We want to translate an action map $\Par(\lambda)$ into a concurrent arena from its
  sequential version while ensuring that every Nature state reachable with
  that new action map is also reachable in the sequential version
  with the original action map $\lambda$. Consider the example of
  the strategy depicted in the left-hand side of
  Figure~\ref{fig:transfo_strat_det} from the state $q$ and an
  arbitrary state of the memory $m$ omitted on the figure. For each
  possible choice of Player $\A$ (which corresponds to the rows of the local 
  interaction $I_3$), Player $\B$ reacts with his strategy and either $x$ or 
  $y$ is reached. Specifically, we have $\rech{\lambda}{\mu}{m}{q}(a_1) = x$, 
  $\rech{\lambda}{\mu}{m}{q}(a_2) = x$, and $\rech{\lambda}{\mu}{m}{q}(a_3) = 
  y$, where $a_i$ represents the action for player $\A$ for the $i$-th row 
  (similarly, $b_i$ represents the action for Player $\B$ for the $i$-th 
  column). Then, we must 
  define the action for Player $\B$ to play in the concurrent game in state 
  $q$, that is $\Par(\lambda)(m,q) \in B$, so that only the states $x$ and $y$ 
  can be reached.  To choose $\Par(\lambda)(m,q)$ we consider the local 
  interaction $\formNF_q = I_3$. We know that for each action of Player $\A$, 
  there is one for Player $\B$ to reach the set $\{ x,y \}$ (it is given by the 
  strategy depicted in the turn-based
  arena). It follows that Player $\A$ has no winning strategy in the
  win/lose game $I_3$ with $\{ x,y \}$ as winning set for Player
  $\B$. Since the local interaction $I_3$ is determined, Player $\B$
  has a winning strategy in that win/lose game which is a strategy
  that ensures reaching a state in
  % $\{z \}$, that is reaching 
  $\{ x,y \}$. By opting for this strategy, which corresponds to
  choosing the second column in the local interaction, it follows that
  the states reachable in the concurrent arena from $q$ are
  $\{ x,y \}$ (depicted in blue). Hence, we set
  %\pat{$\Par(\lambda)(m,q) = b_2$.}  
  $\Par(\lambda)(m,q) = b_2$.
\end{example}

Just like for Player $\A$, the value of the game for Player $\B$ does not worsen with parallelization, as long as the arena is locally determined.
\begin{lemma}[Proof given page~\pageref{proof:lemma3}]
	\label{lem:parallel_win_B}
	Assume that the concurrent arena $\Aconc$ is locally determined. Consider a strategy $\sigma_\B \in \SetColStrat{\ATurnConc}{\B}$ in $\ATurnConc$ for Player $\B$ and its parallel version $\Par(\sigma_\B) \in \SetColStrat{\Aconc}{\B}$ in $\Aconc$. Then, $\val{\Aconc}{\Par(\sigma_\B)}[W] \leq \val{\Seq(\Aconc)}{\sigma_\B}[\Seq(W)]$. It follows that, $\val{\Aconc}{\B}[W] \leq \val{\Seq(\Aconc)}{\B}[\Seq(W)]$.
\end{lemma}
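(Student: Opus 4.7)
The plan is to reduce the lemma to a strategy-wise coupling between individual Player-$\A$ responses in the two arenas: for every $\s_\A \in \SetStrat{\Aconc}{\A}$ I will exhibit a strategy $\tilde{\s}_\A \in \SetStrat{\ATurnConc}{\A}$ satisfying $\prob{\Aconc}{\s_\A}{\stratMod{\Par(\sigma_\B)}}[U_W] = \prob{\ATurnConc}{\tilde{\s}_\A}{\stratMod{\sigma_\B}}[U_{\Seq(W)}]$. Taking the supremum over $\s_\A$ on both sides then gives $\val{\Aconc}{\Par(\sigma_\B)}[W] \leq \val{\Seq(\Aconc)}{\sigma_\B}[\Seq(W)]$. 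The ``it follows that'' part is obtained by taking the infimum over $\sigma_\B$ on both sides, using that $\Par$ sends color strategies of Player $\B$ in $\ATurnConc$ into color strategies of Player $\B$ in $\Aconc$.

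The essential ingredient is a coupling supplied by local determinacy. Fix an implementation $(\mathcal{M},\lambda)$ of $\sigma_\B$, with $\mathcal{M} = \langle M,\minit,\mu \rangle$. By the very definition of $\Par(\lambda)$, the action $\Par(\lambda)(m,q)$ is a winning strategy for Player $\B$ in the win/lose game $\langle \formNF_q,\Rech{\mu}{\lambda}{m}{q} \rangle$, hence for every $a \in A$ the Nature state $\delta(q,a,\Par(\lambda)(m,q))$ lies in $\Rech{\mu}{\lambda}{m}{q} = \rech{\mu}{\lambda}{m}{q}[A]$. Fix for each pair $(m,q)$ a section $\phi_{m,q}$ of $\rech{\mu}{\lambda}{m}{q}$ onto $\Rech{\mu}{\lambda}{m}{q}$; then the rule $a \mapsto \phi_{m,q}(\delta(q,a,\Par(\lambda)(m,q)))$ translates each concurrent Player-$\A$ action into a sequential Player-$\A$ action that reaches the very same Nature state once $\lambda$ responds.

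To build $\tilde{\s}_\A$, I associate to each history $h \in V^+$ ending in some $q \in V_\A$ its projection $h' \in Q^+$. The memory state reached by $\mathcal{M}$ along $h$ coincides with the one reached by $\Par(\mathcal{M})$ along $h'$ by Proposition~\ref{prop:projection_memory_parallel} and the relationship between $\mu$ and $\Par(\mu)$; call it $m$. I then let $\tilde{\s}_\A(h)$ be the pushforward of $\s_\A(h')$ under the translation above. An induction on path length shows that for every finite $\pi \in Q^+$, the probability of $\cyl(\pi)$ under $(\s_\A,\stratMod{\Par(\sigma_\B)})$ in $\Aconc$ equals the total probability under $(\tilde{\s}_\A,\stratMod{\sigma_\B})$ in $\ATurnConc$ of the set of finite sequential paths projecting onto $\pi$. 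The cylinder-uniqueness property recalled in Section~\ref{sec:preliminaries} then lifts this equality to all Borel events. In particular it applies to $U_W$ and $U_{\Seq(W)}$, because $\colFunc_\Aconc$ mirrors $\colFunc$ on non-$k_\Aconc$ edges so that the color projection $\projec{\colSetSeq}{\colSet}$ commutes with the state projection.

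The main obstacle lies in the bookkeeping between $V_\A$-, $V_\B$-, and Nature states: each concurrent round corresponds to two sequential rounds (one $k_\Aconc$-colored move of Player $\A$ followed by one original-color move of Player $\B$), and one must check carefully that $\tilde{\s}_\A$ is well-defined and consistent on histories that extend through $V_\B$-states, and that memory states track correctly through this doubling. The local determinacy hypothesis is used precisely and only in the existence of the translation $\phi_{m,q}$; without it, there would be no guarantee that every concurrent Player-$\A$ action is simulable by a sequential Player-$\A$ action against $\lambda$, and the coupling would break down.
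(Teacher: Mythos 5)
Your proposal is correct and follows essentially the same route as the paper: the section $\phi_{m,q}$ of $\rech{\mu}{\lambda}{m}{q}$ is exactly the paper's map $f_\rho(a) = \min_{<_A} \inv{\rech{\mu}{\lambda}{m}{t}}[\delta(t,a,\stratMod{\Par(\sigma_\B)}(\rho))]$, your pushforward of $\s_\A$ is the paper's $\sigma_\A(\rho)[a] = \sum_{\alpha \in \inv{f_\rho}[a]} \s_\A(\projec{V}{Q}(\rho))[\alpha]$, and your induction on cylinders followed by the uniqueness of Borel measures is what the paper packages as Corollary~\ref{coro:equal_nabla} via Proposition~\ref{prop:eq_nabla_eq_prob}. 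The use of local determinacy (only to ensure $\delta(q,a,\Par(\lambda)(m,q)) \in \Rech{\mu}{\lambda}{m}{q}$) and the final supremum/infimum bookkeeping also match the paper's argument.
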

%\begin{remark}
%	Contrary to the other cases, this result uses the local determinacy assumption (recall that it was used to establish 	Proposition~\ref{prop:winning_strat_B_local_det}).
%\end{remark}

Overall, we obtain that the values of the concurrent game $\langle \Aconc,W \rangle$ and its sequential version $\langle \Seq(\Aconc),\Seq(W) \rangle$ are equal for both players. The following theorem is a direct consequences of Lemmas~\ref{lem:sequential_win} (stated for both players),~\ref{lem:parallel_win_A} and~\ref{lem:parallel_win_B}. Theorem~\ref{thm:main} is then a consequence of this theorem along with Observations~\ref{obs:memory_used_sequen} and~\ref{obs:memory_used_para}.
\begin{theorem}
	\label{thm:all_prop}
	%\slcomment{
	If the game 
	%For all locally determined concurrent game 
	$\langle \Aconc,W \rangle$ is locally determined, we have the following:
	\begin{itemize}
		\item $\val{\Aconc}{\A}[W] = \val{\Seq(\Aconc)}{\A}[\Seq(W)]$ and $\val{\Aconc}{\B}[W] = \val{\Seq(\Aconc)}{\B}[\Seq(W)]$;
		\item the game $\langle \Aconc,W \rangle$ is limit-determined iff its sequential version $\langle \Seq(\Aconc),\Seq(W) \rangle$ is;
		\item if a color strategy $\s$ is optimal in $\langle \Aconc,W \rangle$, so is %its sequential version 
		$\Seq(s)$ in $\langle \Seq(\Aconc),\Seq(W) \rangle$;
		\item if a color strategy $\sigma$ is optimal in $\langle \Seq(\Aconc),\Seq(W) \rangle$, so is %then its parallel version 
		$\Par(\sigma)$ in $\langle \Aconc,W \rangle$.
	\end{itemize}
\end{theorem}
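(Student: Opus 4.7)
The plan is to derive all four items by short inequality chasing using Lemmas~\ref{lem:sequential_win},~\ref{lem:parallel_win_A}, and~\ref{lem:parallel_win_B}, noting that the local-determinacy hypothesis is used only through the two parallelization lemmas (the $\B$-case via Proposition~\ref{prop:winning_strat_B_local_det}). First I would establish the two value equalities of the first bullet: Lemma~\ref{lem:sequential_win} gives $\val{\Aconc}{\A}[W] \leq \val{\Seq(\Aconc)}{\A}[\Seq(W)]$ and (in its dual form, noted just after its statement) $\val{\Aconc}{\B}[W] \geq \val{\Seq(\Aconc)}{\B}[\Seq(W)]$; Lemmas~\ref{lem:parallel_win_A} and~\ref{lem:parallel_win_B} supply the reverse inequalities. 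Pairwise combination yields the two equalities.

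The second bullet is then immediate, since the equivalence
$\val{\Aconc}{\A}[W] = \val{\Aconc}{\B}[W] \iff \val{\Seq(\Aconc)}{\A}[\Seq(W)] = \val{\Seq(\Aconc)}{\B}[\Seq(W)]$
follows from the two value equalities. For the third bullet, I would assume that $\s$ is optimal for, say, Player $\A$ in $\langle \Aconc, W\rangle$, so that $\val{\Aconc}{\s}[W] = \val{\Aconc}{\A}[W]$. Combining this with the first bullet yields $\val{\Aconc}{\s}[W] = \val{\Seq(\Aconc)}{\A}[\Seq(W)]$, and Lemma~\ref{lem:sequential_win} then gives the chain
\[
\val{\Seq(\Aconc)}{\A}[\Seq(W)] \;=\; \val{\Aconc}{\s}[W] \;\leq\; \val{\Seq(\Aconc)}{\Seq(\s)}[\Seq(W)] \;\leq\; \val{\Seq(\Aconc)}{\A}[\Seq(W)],
\]
forcing equality throughout and hence optimality of $\Seq(\s)$. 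The argument for Player $\B$ is symmetric, using the $\B$-version of Lemma~\ref{lem:sequential_win}.

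The fourth bullet is the analogous squeezing argument using parallelization: if $\sigma$ is optimal for Player $\A$ in $\langle \Seq(\Aconc), \Seq(W)\rangle$, then Lemma~\ref{lem:parallel_win_A} together with the value equality gives
\[
\val{\Aconc}{\A}[W] \;=\; \val{\Seq(\Aconc)}{\A}[\Seq(W)] \;=\; \val{\Seq(\Aconc)}{\sigma}[\Seq(W)] \;\leq\; \val{\Aconc}{\Par(\sigma)}[W] \;\leq\; \val{\Aconc}{\A}[W],
\]
so $\Par(\sigma)$ is optimal; the Player $\B$ case uses Lemma~\ref{lem:parallel_win_B} identically.

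Since the three lemmas package all the substantive work, no genuine obstacle remains at this point. The only things to be careful about are keeping track of which inequality goes in which direction for each player (Player $\B$ minimizes, so $\leq$ and $\geq$ swap throughout) and recalling that ``optimal'' presupposes a strategy achieving the color value, so every squeeze terminates on a genuine equality rather than on an approximation. Theorem~\ref{thm:main} then follows by combining the value- and optimality-transfer above with the memory-preservation observations~\ref{obs:memory_used_sequen} and~\ref{obs:memory_used_para}, which ensure that ``finite-memory'' and ``positional'' qualifiers survive both $\Seq(\cdot)$ and $\Par(\cdot)$.
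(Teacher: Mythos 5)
Your proposal is correct and follows exactly the route the paper intends: the paper states Theorem~\ref{thm:all_prop} as a direct consequence of Lemmas~\ref{lem:sequential_win} (both players),~\ref{lem:parallel_win_A}, and~\ref{lem:parallel_win_B}, and your inequality chasing and squeezing arguments simply make that deduction explicit. The only cosmetic imprecision is attributing the local-determinacy hypothesis to both parallelization lemmas, whereas only the Player~$\B$ case (Lemma~\ref{lem:parallel_win_B}) actually needs it.
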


	\section{Applications}
	\label{sec:applications}
	%!TeX.root = main.tex

%In this section, we give four examples of how the results from Section~\ref{sec:SeqPar} are used to extend determinacy results to locally determined concurrent graph games. Three applications are rather straightforward but the application involving finite-memory requires further technical development. Let us emphasize that all proven determinacy results for locally determined concurrent games do not hold for concurrent games without that hypothesis on local interactions. 

%Theorem~\ref{thm:determinacy_equivalence_arbitrary_strat} can be usedto establish determinacy results on locally determined concurrentgraph games.

\subsection{Games with deterministic Nature (i.e. without Nature)}
{\label{ref:deterministic}}
We first consider the special case of games with a deterministic
Nature, since, on turn-based games, this setting enjoys more
determinacy results than the stochastic one. A concurrent arena
$\Aconc$ is \emph{deterministic} if, for all Nature state $d \in
\distribSet$, there exists state $q \in Q$ such that
$\distribFunc(d)[q] = 1$ (see Definition~\ref{def:deterministic_game}).	
%\begin{definition}[Deterministic Games]
	%A concurrent arena $\Aconc = \AcoloredConc$ is \emph{deterministic} if, for all $d \in \distribSet$, there exists $q \in Q$ such that $\distribFunc(d)[q] = 1$.	
	%the probability distribution $\distribFunc(d)$ is Dirac.
	%\label{def:deterministic_game}
%\end{definition}
%In the deterministic case of Definition~\ref{def:deterministic_game}, we define the function $\deltaDirac: Q \times A \times B \rightarrow Q$ defined by $\deltaDirac(q,a,b) = q'$ where $q' \in Q$ is the state ensuring $\deltaDistrib(q,a,b)(q') = 1$. Furthermore, in the specification of $\Aconc$, the components $\distribSet$ and $\distribFunc$ are omitted. 
{\label{ref:compatible}} In such a setting, it is relevant to
consider all infinite paths that are compatible with a chromatic strategy,
and not only their probability.
% that is paths for which there exists a state strategy of the opponent
% such that all prefixes have a non-zero probability to occur.
A winning
strategy is then a strategy ensuring that its set of compatible paths
is included in the %set her
winning set -- $U_W$ for Player $\A$ and $Q^\omega \setminus U_W$ for
Player $\B$ (see
Definition~\ref{def:compatible_paths}). 
% to define the notion of exact-determinacy -- which corresponds to the classical notion of determinacy on deterministic games on graphs. This is defined below.
%Contrary to the stochastic setting, once a (deterministic) strategy is set for both players, one and only one infinite path can occur in the game. This defines the state sequence induced by the pair of strategies, as below. 

%This induces the definition of winning strategy, i.e. strategies that win for sure.
%In this setting, the notion of interest for strategies is in fact the notion of winning strategy, i.e. that wins for sure.
%\begin{definition}[Winning strategy]
%	Consider a deterministic concurrent game $\Aconc = \AdetConc$ and a strategy $\s_\A \in \SetSt{\Aconc}{\A}$ (resp. $\s_\B \in \SetSt{\Aconc}{\B}$) for Player $\A$ (resp. Player $\B$). The strategy $\s_\A$ (resp. $\s_\B$) is \emph{winning} for Player $\A$ (resp. $\B$) if $\OutStrat{\Aconc}^{\s_\A} \subseteq U_W$ (resp. $\OutStrat{\Aconc}^{\s_\A} \subseteq Q^\omega \setminus U_W$).
	%for all strategies $\s_\B \in \SetSt{\Aconc}{\B}$ for Player $\B$ (resp. $\s_\A \in \SetSt{\Aconc}{\A}$ for Player $\A$), we have $\outCome{\Aconc}{\s_\A,\s_\B} \in U_W = \inv{\extendFunc{\colFunc}}[W]$ (resp. $\outCome{\Aconc}{\s_\A,\s_\B} \notin U_W$). 
%\end{definition}
In a deterministic setting, we consider the notion of exact-determinacy: a deterministic concurrent game $\langle \Aconc,W \rangle$ is \emph{exactly-determined} (resp. positionally, resp. finite-memory) if either of the player has a (resp. positional, resp. finite-memory) winning strategy. In the
  literature this notion is sometimes called ``sure winning'', while
  winning with probability $1$ is called ``almost-sure winning''. %In particular, in a stochastic concurrent game, a strategy for Player $\A$ with value 1 -- or a strategy for Player $\B$ with value 0 -- is not winning. 
{\label{ref:prop_det_value}}
However, in deterministic concurrent games %i.e. with deterministic Nature (recall that the strategies of the players are deterministic)%, as stated below
with chromatic strategies (recall that they are deterministic strategies),
we have an equivalence between the two notions (see Lemma~\ref{lem:winning_proba_one}).
%Interestingly, in a deterministic setting winning strategies exactly corresponds to strategies whose value is equal to 1.
This immediately gives us the following corollary:
\begin{corollary}
	A deterministic concurrent game %$\langle \Aconc,W \rangle$ 
	is (resp. positionally, resp. finite-memory) exactly-determined if and only if %there exists a (resp. positionally, resp. finite-memory) chromatic strategy for Player $\A$ (resp. Player $\B$) %such that $\val{\Aconc}{\s_\A}[W] = 1$
	%whose value is 1 (resp. 0).
	it is (resp. positionally, resp. finite-memory)
        determined.
	\label{thm:deterministic_determined}
\end{corollary}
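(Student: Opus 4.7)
The plan is to reduce the corollary to the following preliminary observation: in a deterministic concurrent arena, the value of any color strategy lies in $\{0,1\}$. To see this, note that against any pure state strategy $\s_\B$ of the opponent, the play induced by a color strategy $\s_\A$ is uniquely determined, so $\prob{\Aconc}{\stratMod{\s_\A}}{\s_\B}[U_W]\in\{0,1\}$. If $\val{\Aconc}{\s_\A}[W]>0$, then no pure $\s_\B$ can yield probability $0$, and $\s_\A$ therefore wins surely against every pure opponent; Lemma~\ref{lem:winning_proba_one} then upgrades this to winning with probability $1$ against every state strategy, hence $\val{\Aconc}{\s_\A}[W]=1$. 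The analogous statement holds for color strategies of Player $\B$.

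For the implication from exact-determinacy to determinacy, assume without loss of generality that Player $\A$ has a sure winning color strategy $\s_\A^*$ in the prescribed memory class. By Lemma~\ref{lem:winning_proba_one}, $\val{\Aconc}{\s_\A^*}[W]=1$, so $\val{\Aconc}{\A}[W]=1$. Any positional color strategy $\s_\B^*$ (for instance one that always plays a fixed action) then satisfies $\val{\Aconc}{\s_\B^*}[W]\le 1=\val{\Aconc}{\B}[W]$ and is therefore optimal. The pair $(\s_\A^*,\s_\B^*)$ witnesses determinacy, with memory no larger than what $\s_\A^*$ already uses; the case where Player $\B$ has the sure winning strategy is symmetric.

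Conversely, assume the game is determined via optimal color strategies $\s_\A^*,\s_\B^*$ of the prescribed memory class, with common value $v:=\val{\Aconc}{\A}[W]=\val{\Aconc}{\B}[W]$. The preliminary observation forces $v\in\{0,1\}$. If $v=1$, then $\val{\Aconc}{\s_\A^*}[W]=1$, and Lemma~\ref{lem:winning_proba_one} turns $\s_\A^*$ into a sure winning strategy for Player $\A$. If $v=0$, then $\val{\Aconc}{\s_\B^*}[W]=0$ entails $\prob{\Aconc}{\s_\A}{\stratMod{\s_\B^*}}[U_W]=0$ for every state strategy $\s_\A$, so $\s_\B^*$ wins with probability $1$ for Player $\B$ and Lemma~\ref{lem:winning_proba_one} promotes it to a sure winning strategy. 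In both subcases the sure winning strategy produced is literally the optimal strategy we started with, so the memory class is preserved.

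The main nontrivial ingredient is Lemma~\ref{lem:winning_proba_one}, which cleanly bridges probability-$1$ winning and sure winning for color strategies in a deterministic setting; once this $\{0,1\}$-dichotomy is in hand, the corollary is a short case analysis, and the memory-preservation claims are immediate since the winning strategy returned is always one of the strategies provided as input.
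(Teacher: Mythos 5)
Your proof is correct and follows essentially the same route as the paper: the paper derives the corollary as an immediate consequence of Proposition~\ref{lem:winning_proba_one} (the equivalence between a strategy being surely winning and its value being $1$, resp.\ $0$, resp.\ merely positive, resp.\ merely below $1$), which is exactly the $\{0,1\}$-dichotomy and case analysis you spell out. The memory-preservation bookkeeping you add is the routine part the paper leaves implicit.
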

%Note that it is important that the setting considered is deterministic (besides the fact that winning strategies are only defined on deterministic settings) since, in the general case, almost-sure does not imply sure.
%sure and almost-sure events do not coincide.
%Note that this equivalence does not hold in a stochastic setting. For instance, consider a two-state $\{ q_0,q_1 \}$ arena where the next state is uniformly decided by the Nature, and the winning set $U_W$ corresponds to the infinite path where $q_1$ is seen. Then, the value of any strategy for Player $\A$ is equal to 1 whereas it is not winning since the path $q_0^\omega$, although of probability 0 it compatible with it.

%\begin{remark}
%	Note that this theorem also holds when $W$ is not Borel. As we considered the stochastic case, we defined only concurrent games with a Borel winning set, however the proof would be analogous with a non-Borel winning set in the case of a deterministic game (that is, with deterministic Nature).
%\end{remark}

In the following, the determinacy of a deterministic game will refer to exact-determinacy. Let us now consider how to translate determinacy results from turn-based games to locally-determined concurrent game in two cases. 

\subparagraph{Borel determinacy} Let us apply Theorem~\ref{thm:main}, and Corollary~\ref{thm:deterministic_determined} to prove the Borel determinacy of locally determined concurrent games. {\label{ref:martin_borel}} By rephrasing the famous result of Borel determinacy in our formalism%, we have the following result
, we have that a deterministic turn-based graph arena $\Aconc% = \AdetTurn
$ is determined for all Borel winning set $W \subseteq \Borel(\colSet)$ % \in \Borel(V)$
(see Theorem~\ref{thm:turn_based_borel_determined}).
%\begin{theorem}[Martin~\cite{martin1975borel,martin1985purely,martin1998determinacy}, Proof~\ref{proof:theorem3}]
 %Consider a deterministic turn-based graph arena $\Aconc% = \AdetTurn$. For all Borel winning set $W \subseteq \Borel(\colSet)$, % \in \Borel(V)$the game $\Games{\Aconc}{W}$ is determined.\label{thm:turn_based_borel_determined}
%\end{theorem}
Note that this theorem is not directly given by the results proved by Martin in~\cite{martin1975borel,martin1985purely,martin1998determinacy}. To obtain this theorem, we additionally need to apply a result from~\cite{DBLP:conf/cie/Roux20} since a strategy depends on colors history instead of state history% so that the type of the 
  %strategies match
.
  %Martin proved the existence of winning strategies, however, the strategies 
  %he considers are more general than the ones we consider here. Indeed, in a 
  %turn-based arena $\Aturn = \AcoloredTurn$, a strategy -- in the sense of 
  %Martin -- for Player $\A$ (resp. $\B$) is a function $\s: (V^* \cdot V_\A) 
  %\cap (\OutStrat{\Aturn}^*) \rightarrow V$ (resp. $\s: (V^* \cdot V_\B) \cap 
  %(\OutStrat{\Aturn}^*) \rightarrow V$). The strategies we consider are in 
  %fact 
  %such functions $\s$ that, in addition, ensures that if $\tr{\colFunc}(\pi) = 
  %\tr{\colFunc}(\pi')$ for two paths $\pi,\pi' \in \OutStrat{\Aturn}^*$, then 
  %$\s(\pi) = \s(\pi')$. Fortunately, we can deduce 
  %Theorem~\ref{thm:turn_based_borel_determined} from the results of Martin and 
  %Corollary 4 in~\cite{DBLP:conf/cie/Roux20} which can be applied since the 
  %winner of the game only depends on the sequence of colors seen, 
  %independently 
  %of the states visited.
%\end{remark}
Let us now use %apply 
%Theorem~\ref{thm:turn_based_borel_determined} %(along with Theorem~\ref{thm:determinacy_equivalence_arbitrary_strat}) 
this result to prove the determinacy of locally determined concurrent games. (Which can be written as an equivalence, see Appendix~\ref{details:rmq64}).
\begin{theorem}[Proof in Appendix~\ref{proof:coro62}]
  %Consider a locally determined deterministic concurrent graph arena $\Aconc% = \AdetConc
  %$. %Then, f
  For all Borel winning set $W$, for all locally determined deterministic concurrent graph arena $\Aconc$, the concurrent game $\Games{\Aconc}{W}$ is determined. Conversely, for all non-trivial Borel winning set $\emptyset \subsetneq W \subsetneq \colSet^\omega$, for all non-determined game form $\formNF$, there exists a deterministic concurrent arena $\Aconc$ with only $\formNF$ as non-determined local interaction such that the game $\Games{\Aconc}{W}$ is not determined.
  \label{coro:conc_borel_determined}
\end{theorem}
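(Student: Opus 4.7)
For the forward direction, the plan is to lift Borel determinacy on turn-based arenas through the sequentialization operator and then apply Theorem~\ref{thm:main}. First, I would observe that if $\Aconc$ is deterministic, then so is its sequential version $\Seq(\Aconc)$: the only new Nature states added are of the form $(q,a)$ and by construction $\distribFunc_\Aconc((q,a))[(q,a)] = 1$, while the original Nature states keep their original (deterministic) distributions. Hence $\Seq(\Aconc)$ is a deterministic turn-based arena, and the preimage $\Seq(W)$ is Borel. By Theorem~\ref{thm:turn_based_borel_determined} (Borel determinacy for deterministic turn-based arenas with chromatic strategies), $\langle \Seq(\Aconc), \Seq(W) \rangle$ is exactly-determined, hence determined by Corollary~\ref{thm:deterministic_determined}. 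Since $\Aconc$ is locally determined, Theorem~\ref{thm:main} transfers determinacy back to $\langle \Aconc, W \rangle$, and a final application of Corollary~\ref{thm:deterministic_determined} yields exact-determinacy.

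For the converse direction, the plan is to reduce the global non-determinacy of a suitable one-shot concurrent game to the local non-determinacy of $\formNF = \langle \St_\A, \St_\B, \outComeNF, \outCNF \rangle$. Since $\formNF$ is non-determined, pick $\subVal \subseteq \outComeNF$ with $\reachStrat{\A}(\formNF, \subVal) = \reachStrat{\B}(\formNF, \outComeNF \setminus \subVal) = \emptyset$. Since $\emptyset \subsetneq W \subsetneq \colSet^\omega$, pick $w \in W$ and $w' \in \colSet^\omega \setminus W$. I would build $\Aconc$ with initial state $q_0$ whose local interaction is $\formNF$ (each outcome $o \in \outComeNF$ being realized as a Nature state $d_o$ with $\distribFunc(d_o)[q_o]=1$ for a fresh state $q_o$). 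From each $q_o$, I would chain trivial $1 \times 1$ local interactions (deterministic and trivially determined) that generate the infinite sequence $w$ if $o \in \subVal$ and $w'$ otherwise, using the coloring function $\colFunc$ to place the appropriate color on each successive edge.

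By construction, every play of $\langle \Aconc, W \rangle$ from $q_0$ is deterministically fixed once the outcome $o \in \outComeNF$ of the one-shot interaction is produced, and the color sequence it generates lies in $W$ iff $o \in \subVal$. Thus chromatic winning strategies for Player $\A$ (resp.\ $\B$) in $\langle \Aconc, W \rangle$ are in bijection with winning strategies in the one-shot game $\langle \formNF, \subVal \rangle$ (resp.\ $\langle \formNF, \outComeNF \setminus \subVal \rangle$), both of which are empty. By Corollary~\ref{thm:deterministic_determined} in its contrapositive form, this suffices to conclude that $\langle \Aconc, W \rangle$ is not determined. The only non-trivial local interaction is $\formNF$ at $q_0$, as required.

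\textbf{Main obstacle.} The forward direction is essentially assembling the theorems already proved. The only subtlety worth double-checking is that $\Seq(\Aconc)$ inherits determinism of Nature, which is immediate from the explicit form of $\distribFunc_\Aconc$. In the converse direction, the care lies in making sure the colors are arranged so that the generated sequence is \emph{exactly} $w$ or $w'$ on every branch beyond $q_0$ (not just one compatible with it), and that $q_0$'s local interaction is faithfully $\formNF$ up to the renaming of outcomes to Nature states. Neither point requires new ideas beyond a careful explicit construction.
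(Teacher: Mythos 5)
Your proposal is correct and follows essentially the same route as the paper: sequentialize, invoke Borel determinacy of deterministic turn-based games (Theorem~\ref{thm:turn_based_borel_determined}), transfer back via Theorem~\ref{thm:main}, and for the converse build a one-shot game on $\formNF$ whose two branches deterministically generate a word in $W$ and a word outside $W$. Your added checks (that $\Seq(\Aconc)$ inherits determinism and the explicit use of Corollary~\ref{thm:deterministic_determined}) are details the paper leaves implicit, not a different argument.
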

{\label{ref:equivalence_borel}}

\subparagraph{Finite-memory determinacy} The next application only applies to finite arenas. % (see Definition~\ref{def:finite_arenas})
In~\cite{DBLP:conf/concur/Bouyer0ORV20}, the authors
proved an equivalence between the shape of a winning set and the
existence of winning strategies that can be implemented with a given
memory skeleton $\mathcal{M}$\footnote{In fact, they looked at the
  existence of Nash equilibria with antagonistic preference relations
  instead of winning sets. However, a winning set $W \subseteq \colSet^\omega$ can be directly translated into an equivalent preference relation ${\prec_W} \subseteq \colSet^\omega \times \colSet^\omega$ by $\rho \prec_W \rho' \Leftrightarrow \rho \not \in W \wedge \rho' \in W$. In the following we will refer to the preference relation $\prec_W$ when mentionning the winning set
  $W$.}. They defined the properties of $\mathcal{M}$-selectivity and
$\mathcal{M}$-monotony (which we recall in the
appendix, see Definition~\ref{definition:monotony_selectivity}) and proved that for $\mathcal{M}$ a memory skeleton and $W \subseteq \colSet^\omega$, we have that $W$ and $\colSet^\omega \setminus W$ are
$\mathcal{M}$-monotone and $\mathcal{M}$-selective is equivalent to every finite deterministic turn-based game with $W$ as winning set is
determined with winning strategies for both players that can be
found among strategies implemented with memory skeleton
$\mathcal{M}$ (see Theorem~\ref{thm:turn_based_memory_winning}).{\label{ref:finite_mem}}

{\label{ref:thm_mono_sequen}}
For $\langle \Aconc,W \rangle$ be a deterministic concurrent game on
the concurrent arena $\Aconc$, $\langle \ATurnConc,\Seq(W) \rangle$
its sequential version, and $\mathcal{M}$ a memory skeleton on
$\colSet$, we have that $W$ is $\mathcal{M}$-monotone and
$\mathcal{M}$-selective if and only if $\Seq(W)$ is
$\Seq(\mathcal{M})$-monotone and $\Seq(\mathcal{M})$-selective (see Theorem~\ref{thm:winning_condition_monotony_selectivity}).
%Interestingly, when considering the sequentialization of a memory skeleton and of the winning set, we have an equivalence between monotony and selectivity in a concurrent arena and its sequential version.
%\begin{remark}
Note that the proof of this theorem, longer than the other applications %, although not that difficult, is not trivial and 
requires to establish some algebraic properties of the projection function $\projec{\colSetSeq}{\colSet}: \starom{\colSetSeq} \rightarrow \starom{\colSet}$.
%\end{remark}
We can now extend Theorem~\ref{thm:turn_based_memory_winning} to some concurrent games. %state and prove the following theorem:
\begin{theorem}[Proof in Appendix~\ref{proof:thm66}]
  Let $\mathcal{M}$ be a memory skeleton and $W \subseteq \colSet^\omega$. T%hen t
  he two following assertions are equivalent:
  \begin{enumerate}%[(i)]
  \item every finite deterministic locally determined concurrent game $\langle \Aconc,W \rangle$ %with $\Aconc = \AcoloredConc$ 
  with finite action sets is determined with winning strategies for both players that can be found among strategies implemented with memory skeleton
    $\mathcal{M}$;
  \item $W$ and $\colSet^\omega \setminus W$ are
    $\mathcal{M}$-monotone and $\mathcal{M}$-selective. %$\diamond$
  \end{enumerate}
% winning strategies for both players can be found 
% 	among finite-memory strategies implemented with memory skeleton 
% 	$\mathcal{M}$ in all finite locally determined concurrent games if and onl 
% 	if $W$ and $\colSet^\omega \setminus W$ are $\mathcal{M}$-monotone and 
% 	$\mathcal{M}$-selective.
	\label{thm:conc_equiv_memory_winning}
\end{theorem}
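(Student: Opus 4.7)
The plan is to reduce both directions to the turn-based finite-memory characterization recalled as Theorem~\ref{thm:turn_based_memory_winning}, by going back and forth between the concurrent game and its sequential version.

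For $(2)\Rightarrow(1)$, I would start from an arbitrary finite deterministic locally determined concurrent game $\langle \Aconc, W\rangle$ with finite action sets, and form its sequential version $\langle \Seq(\Aconc), \Seq(W)\rangle$. Since $V = Q \uplus (Q\times A)$ is finite, this is a finite deterministic turn-based game. I then invoke Theorem~\ref{thm:winning_condition_monotony_selectivity} to transfer $\mathcal{M}$-monotony and $\mathcal{M}$-selectivity of $W$ and its complement to $\Seq(\mathcal{M})$-monotony and $\Seq(\mathcal{M})$-selectivity of $\Seq(W)$ and its complement, so that Theorem~\ref{thm:turn_based_memory_winning} applies in the sequential game and yields winning strategies for both players implementable via $\Seq(\mathcal{M})$. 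Theorem~\ref{thm:main} together with Observation~\ref{obs:memory_used_para} then parallelizes these strategies into winning strategies in $\langle \Aconc, W\rangle$. To conclude that these parallel strategies are implementable by $\mathcal{M}$ itself rather than merely by some isomorphic skeleton, I would unfold Definitions~\ref{def:sms} and~\ref{def:parallelization of memory skeletons} to check the pointwise identity $\Par(\Seq(\mathcal{M})) = \mathcal{M}$: since $\Seq(\mu)$ ignores $k_\Aconc$ and mimics $\mu$ on $\colSet$, one obtains $\Par(\Seq(\mu))(m,k) = \Seq(\mu)(\Seq(\mu)(m,k_\Aconc),k) = \mu(m,k)$.

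For $(1)\Rightarrow(2)$, I would observe that finite deterministic turn-based games with finite action sets form a subclass of finite deterministic locally determined concurrent games with finite action sets, since their local interactions (with constant rows or constant columns) are trivially determined game forms. Restricting condition~(1) to this subclass gives exactly the turn-based premise of Theorem~\ref{thm:turn_based_memory_winning}, whose conclusion delivers $\mathcal{M}$-monotony and $\mathcal{M}$-selectivity of both $W$ and $\colSet^\omega \setminus W$.

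The main obstacle, I expect, is already encapsulated in Theorem~\ref{thm:winning_condition_monotony_selectivity}: its proof must handle the fresh color $k_\Aconc$ inserted by sequentialization into memory updates and verify that neither $\mathcal{M}$-monotony nor $\mathcal{M}$-selectivity is destroyed when moving along the projection $\projec{\colSetSeq}{\colSet}$, which, as noted in the excerpt, is precisely where the algebraic properties of that projection have to be established. Once this is granted, along with the identity $\Par\circ\Seq = \mathrm{id}$ on memory skeletons, the rest of the argument amounts to bookkeeping of strategies across Theorem~\ref{thm:main}.
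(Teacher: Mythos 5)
Your proposal is correct and follows essentially the same route as the paper's proof: both directions reduce to Theorem~\ref{thm:turn_based_memory_winning} via the sequentialization, using Theorem~\ref{thm:winning_condition_monotony_selectivity} to transfer $\mathcal{M}$-monotony and $\mathcal{M}$-selectivity, the determinacy-transfer theorem to parallelize the winning strategies, and the pointwise identity $\Par(\Seq(\mu)) = \mu$ to recover the original skeleton; the converse likewise rests on turn-based games being a special case of locally determined concurrent games. The only cosmetic difference is that the paper phrases $(1)\Rightarrow(2)$ contrapositively and cites Theorem~\ref{thm:all_prop} rather than Theorem~\ref{thm:main} for the transfer.
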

Note that, as in the case of Borel determinacy, we have that local
determinacy is somehow also a necessary condition since a one-shot
reachability game may not be determined as soon as the local
interaction at the initial state is not determined (see
Figures~\ref{fig:one_shot_good} and~\ref{fig:one_shot_bad} in the
introduction). {\label{ref:equiv_memory}} We can also rewrite this
theorem as a more involved equivalence (see Theorem~\ref{thm:memory_conc}).
%\begin{remark}
	%The restriction that the set of actions $A$ needs to be finite comes from the fact that the sequential version of the concurrent game we consider needs to be finite. Note that the results could be obtained with a different definition of sequentialization so that this restriction is not necessary, however it would slightly complicate the notations.
%\end{remark}
%\begin{remark}
	%Like in Theorem~\ref{thm:turn_based_memory_winning}, the statement of Theorem~\ref{thm:conc_equiv_memory_winning} deals with \emph{finite} games. The proof technique used to establish Theorem~\ref{thm:turn_based_memory_winning} in \cite{DBLP:conf/concur/Bouyer0ORV20} cannot be extended to infinite games. 
%\end{remark}

\subsection{Stochastic Games (i.e. with Nature)}
There are fewer determinacy results on stochastic games, especially with deterministic strategies. Let us translate some of then into locally determined concurrent games. We consider parity objectives and the more general case of tail-objectives (a.k.a. prefix-independent).

\subparagraph{Parity Objectives} As already mentioned in Section~\ref{sec:gameForm}, parity objectives are defined as follows.
%\begin{definition}[Parity Objective]
%Consider 
For a set of colors $\colSet = \llbracket m,n \rrbracket$ for some $m,n \in \mathbb{N}$, a parity objective on $\colSet$ is the winning set $W = \{ \rho \in \colSet^\omega \mid \max (\mathsf{n}_\infty(\rho)) \text{ is even } \}$ where $\mathsf{n}_\infty(\rho)$ is the set of colors seen infinitely often in $\rho$.
%\end{definition}
%Informally, $W$ corresponds to the infinite sequence of colors (which are natural numbers in this case) such that the maximum over colors seen infinitely often is even. %Note that parity objectives are Borel sets.
{\label{ref:thm_parity}} A result from
\cite{DBLP:conf/soda/ChatterjeeJH04,DBLP:conf/fossacs/Zielonka04}
gives us that any finite turn-based parity game is positionally
determined (see Theorem~\ref{thm:positional_determinacy_parity_turn}). This result can be directly transferred to locally determined concurrent games thanks to %Corollary~\ref{coro:determinacy_equivalence}.
Theorem~\ref{thm:main}. Note that, as in the two previous cases, the local determinacy assumption is somewhat necessary. (See the equivalence in Theorem~\ref{thm:parity_positionaly_determined_conc}.
{\label{ref:parity_equiv}}

\begin{theorem}[Proof in Appendix~\ref{proof:coro71}]
	Consider a (stochastic) locally determined  finite concurrent
        graph arena $\Aconc = \AcoloredConc$ with $\colSet =
        \llbracket m,n \rrbracket$ for some $m,n \in \mathbb{N}$ and finite action sets. For all %Borel winning set 
	winning set $W \in \Borel(\colSet)$ that is a parity objective on $\colSet$, the concurrent game $\Games{\Aconc}{W}$ is positionally determined.
	\label{coro:positional_determinacy_parity_conc}
\end{theorem}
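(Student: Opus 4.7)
The plan is to reduce to the turn-based setting via Theorem~\ref{thm:main} and then invoke the known positional determinacy of finite turn-based parity games (Theorem~\ref{thm:positional_determinacy_parity_turn}). Concretely, I would form the sequential version $\langle \Seq(\Aconc),\Seq(W) \rangle$; since $\Aconc$ is finite and the action sets $\setA,\setB$ are finite, $\Seq(\Aconc)$ is itself a finite turn-based arena.

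Next I would verify that $\Seq(W)$ is, up to the naming of $k_\Aconc$, a parity objective on the enlarged color set $\colSetSeq = \colSet \cup \{k_\Aconc\}$. I would assign $k_\Aconc$ an integer priority strictly smaller than $m$, first shifting all priorities of $\colSet = \llbracket m,n \rrbracket$ up by $2$ if needed (an even shift preserves the parity of the maximum). By construction of $\Seq(\Aconc)$, the color sequence of any infinite run alternates between $k_\Aconc$ and a $\colSet$-color, so $\colSet$-colors appear infinitely often; since $k_\Aconc$ now has the smallest priority, the maximum priority seen infinitely often in the full run coincides with that of its projection onto $\colSet$. Hence $\Seq(W) = (\projec{\colSetSeq}{\colSet})^{-1}[W]$ equals the parity objective on $\colSetSeq$ induced by this extended priority assignment.

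With this identification, Theorem~\ref{thm:positional_determinacy_parity_turn} yields that $\langle \Seq(\Aconc),\Seq(W) \rangle$ is positionally determined; since $\Aconc$ is locally determined, the positional case of Theorem~\ref{thm:main} then transfers positional determinacy back to $\langle \Aconc,W \rangle$, as required.

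The only real subtlety is the bookkeeping around the fresh color $k_\Aconc$: one must assign it a low enough priority to make it harmless for the max-parity condition, and check that along any run $\colSet$-colors are seen infinitely often. Both points follow directly from the construction of $\Seq(\Aconc)$, so the argument is essentially a clean plug-in of the sequentialization framework of Section~\ref{sec:SeqPar} into the classical parity-determinacy result.
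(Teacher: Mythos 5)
Your proof is correct and follows essentially the same route as the paper's: sequentialize, observe that $\Seq(W)$ becomes a parity objective once $k_\Aconc$ is given the lowest priority, invoke the positional determinacy of finite turn-based parity games, and transfer back via the sequentialization theorem. Your treatment is in fact slightly more careful than the paper's (which simply sets $k_\Aconc = m-1$), since your even shift handles the case $m=0$ and you explicitly justify why the fresh color is harmless along actual runs.
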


\subparagraph{Tail Objectives}
Let us now consider more general objectives than the parity objectives. In particular, positional determinacy does not hold in the general case for these objectives (consider, for instance, the Muller objectives). A tail objective is a winning set that is closed by adding and removing finite prefixes, that is, for a set of colors $\colSet$, a winning set $W \in \Borel(\colSet)$ is a tail-objective if, for all $\rho \in \colSet^\omega$ and $\pi \in \colSet^*$, we have $\rho \in W \Leftrightarrow \pi \cdot \rho \in W$.
%\begin{definition}[Tail Objective]
	%Consider a set of colors $\colSet$. Then, winning set $W \in \Borel(\colSet)$ is a \emph{tail-objective} if, for all $\rho \in \colSet^\omega$ and $\pi \in \colSet^*$, we have $\rho \in W \Leftrightarrow \pi \cdot \rho \in W$.
%\end{definition}
%Informally, $W$ corresponds to the infinite sequence of colors (which natural in this case) such that the maximum over colors seen infinitely often is even.
In particular, a parity objective is a tail objective. In fact, we
have that every  finite turn-based game that is limit-determined with
value $0$ or $1$ is determined (see Theorem~\ref{thm:prefix_independent_determinacy_parity_turn}). {\label{ref:thm_tail}}

%Note that, as previously, it is crucial that the game is finite, otherwise the theorem does not hold anymore

%\begin{remark}
	%Note that, by Remark~\ref{rmq:inequality_value_game}, if $\val{\Aconc}{\A}[W] = 1$, then $1 = \val{\Aconc}{\A}[W] \leq \val{\Aconc}{\B}[W] = 1$, and similarly if $\val{\Aconc}{\B}[W] = 0$, then $0 = \val{\Aconc}{\A}[W] \leq \val{\Aconc}{\B}[W] = 0$. Hence, if $\val{\Aconc}{\A}[W] = 1$ or $\val{\Aconc}{\B}[W] = 0$, the game is limit-determined.
%\end{remark}

This result can be directly transferred to locally determined
concurrent games. % by Theorem~\ref{thm:equal_value} and Corollary~\ref{coro:determinacy_equivalence}
As usual, the local determinacy is a somewhat necessary condition.

\begin{theorem}[Proof in Appendix~\ref{proof:coro77}]
	Consider a locally determined (stochastic) finite concurrent graph arena $\Aconc% = \AcoloredConc
	$ with finite action sets. Then, for all Borel winning set $W \subseteq \Borel(\colSet)$ that is a tail objective, on $\colSet$, if $\val{\Aconc}{\A}[W] = 1$ or $\val{\Aconc}{\B}[W] = 0$, then the game is determined.
	\label{coro:prefix_independent_determinacy_parity_conc}
\end{theorem}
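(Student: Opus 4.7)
The plan is to pull the result back from the corresponding turn-based determinacy result (Theorem~\ref{thm:prefix_independent_determinacy_parity_turn}) via the main transfer theorem (Theorem~\ref{thm:all_prop}). Consider the sequential version $\langle \Seq(\Aconc), \Seq(W) \rangle$. The sequentialization $\Seq(\Aconc)$ is a finite turn-based arena: since $Q$ and $A$ are finite, so is $V = Q \uplus (Q \times A)$, and $\distribSet_\A \cup \distribSet_\B$ is finite as well. By Theorem~\ref{thm:all_prop}, the values are preserved, so $\val{\Seq(\Aconc)}{\A}[\Seq(W)] = \val{\Aconc}{\A}[W]$ and likewise for Player $\B$. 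Combined with the general inequality $\val{\A} \leq \val{\B}$ and the hypothesis, the sequential game is limit-determined with value either $0$ or $1$.

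The main technical step is verifying that $\Seq(W)$ is itself a tail objective on $\colSetSeq$, so that Theorem~\ref{thm:prefix_independent_determinacy_parity_turn} is applicable. Let $\rho' \in \colSetSeq^\omega$ and $\pi' \in \colSetSeq^*$, and set $\rho = \projec{\colSetSeq}{\colSet}(\rho')$ and $\pi = \projec{\colSetSeq}{\colSet}(\pi')$, so that $\projec{\colSetSeq}{\colSet}(\pi' \cdot \rho') = \pi \cdot \rho$. If $\rho \in \colSet^\omega$, the tail property of $W$ gives $\rho \in W \Leftrightarrow \pi \cdot \rho \in W$, so $\rho' \in \Seq(W) \Leftrightarrow \pi' \cdot \rho' \in \Seq(W)$. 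If instead $\rho \in \colSet^*$, then $\pi \cdot \rho \in \colSet^*$ as well, and neither lies in $W \subseteq \colSet^\omega$, so both $\rho'$ and $\pi' \cdot \rho'$ fall outside $\Seq(W)$. In both cases the equivalence holds, and $\Seq(W)$ is a tail objective.

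Now Theorem~\ref{thm:prefix_independent_determinacy_parity_turn} applies to $\langle \Seq(\Aconc), \Seq(W) \rangle$: since it is a finite turn-based game that is limit-determined with value $0$ or $1$ on a tail objective, it is determined, with optimal color strategies $\sigma_\A, \sigma_\B$ for both players. Invoking the fourth item of Theorem~\ref{thm:all_prop}, the parallel versions $\Par(\sigma_\A), \Par(\sigma_\B)$ are then optimal color strategies in $\langle \Aconc, W \rangle$. This exhibits optimal strategies witnessing that the concurrent game is determined.

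The only nontrivial step is the tail-preservation check for $\Seq(W)$ described above; the rest is a direct application of the transfer machinery already assembled in Section~\ref{sec:SeqPar}, so I do not expect any real obstacle beyond carefully handling the case where $\projec{\colSetSeq}{\colSet}(\rho')$ happens to be finite (which cannot occur along any actual play of $\Seq(\Aconc)$, but must still be treated in the purely set-theoretic verification of the tail property on all of $\colSetSeq^\omega$).
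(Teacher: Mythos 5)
Your proof is correct and follows essentially the same route as the paper's: sequentialize, note that $\Seq(\Aconc)$ is finite and $\Seq(W)$ is a tail objective, apply the turn-based result, and transfer back via Theorem~\ref{thm:all_prop}. The only difference is that you spell out the tail-preservation check (including the degenerate case where the projection of a suffix is finite), which the paper dismisses as ``straightforward''; your verification of it is accurate.
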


	%\section{Future Work}
	%\label{sec:conclusion}
	%\input{Tex/Conclusion}
	
	%\bibliographystyle{plainurl}
	%\bibliographystyle{plain}
	\bibliography{lipics-v2021-sample-article}
	%\bibliography{papers}
	
	\appendix
	\section{Extended preliminaries}
\label{sec:appendix_preliminaries}
%In this paper, we consider two-player games with Player $\A$ and Player $\B$. Hence, we omit the \emph{two-player} adjective before every game denomination. 
We assume that the sets of actions $A$ and $B$ for both players are provided with well-founded order $<_A$ and $<_B$ that can implement a choice function on these sets.

Consider a non-empty set $D$. The notations $D^*$, $D^+$, and $D^\omega$ 
respectively refer to the set of finite sequences, of non-empty finite 
sequences and of infinite sequences of elements of $D$. Recall that we denote $\starom{D} := D^* \cup D^\omega$. For a sequence $\pi = \pi_0 \pi_1 \ldots \in 
\starom{D}$, the length of $\pi$ is denoted $|\pi|$, and is equal to $\infty$ if $\pi \in D^\omega$. For $i < |\pi|$, $\pi_{\leq i}$ refers to the finite sequence $\pi_0 \ldots \pi_i$ and if $\pi \in D^+$, we have $\head(\pi) = \pi_{|\pi| - 1}$ and $\tail(\pi) = \pi_{\leq |\pi|-2}$. For instance, $\head(a \cdot b \cdot c) = c$ and $\tail(a \cdot b \cdot c) = a \cdot b$.

For a function $\mathsf{f}: E \rightarrow F$ between two arbitrary sets $E$ and 
$F$, and a subset $E' \subseteq E$, the notation $\mathsf{f}[E']$ refers to the 
set $\{ \mathsf{f}(e) \mid e \in E' \}$% of $E'$ by the function $\mathsf{f}$
. %In addition, for $F' \subseteq F$, the notation $\mathsf{f}^{-1}[F']$ refers to the preimage $\{ e \in E \mid \mathsf{f}(e) \in F' \}$ of $F'$ by the function $\mathsf{f}$. 
For a function $\mathsf{f}: E_1 \times E_2 \rightarrow 
F$ for two arbitrary sets $E_1$ and $E_2$, for $e \in E_1$, $\mathsf{f}(e,E_2)$ 
refers to the set $\{ \mathsf{f}(e,e') \mid e' \in E_2 \}$. %A function $\mathsf{f}: E \rightarrow F$ can be lifted into a function $\extendFunc{\mathsf{f}}: \starom{E} \rightarrow \starom{F}$ defined by: $\extendFunc{\mathsf{f}}(\varepsilon) = \varepsilon$, $\extendFunc{\mathsf{f}}(e) = \mathsf{f}(e)$ for all $e \in E$, and $\extendFunc{\mathsf{f}}(\pi \cdot \pi') = \extendFunc{\mathsf{f}}(\pi) \cdot \extendFunc{\mathsf{f}}(\pi')$ for all $\pi \in E^*$ and $\pi' \in \starom{E}$. For a set $E' \subseteq E$, we define the projection function $\projec{E}{E'}: \starom{E} \rightarrow \starom{E'}$ such that $\projec{E}{E'}(e) = e$ if $e \in E'$, $\projec{E}{E'}(e) = \epsilon$ otherwise and $\projec{E}{E'}(\pi \cdot \pi') = \projec{E}{E'}(\pi) \cdot \projec{E}{E'}(\pi')$ for all $\pi \in E^*$ and $\pi' \in \starom{E}$.

%Consider a set $Q$ and a function %set $Q$ and a subset $E \subseteq Q \times 
%Q$ of pairs of elements of $Q$. Then, for some distinguished element $q_0 \in 
%Q$ and $\vartriangle \in \{ +,\omega \}$, we denote by $q_0 \cdot 
%E^\vartriangle$ the set $\{ \pi \in Q^\vartriangle \mid \pi_0 = q_0 \wedge 
%\forall 0 < i < |\pi|,\; (\pi_{i-1},\pi_{i}) \in E \}$. Furthermore, for a 
%function 
%$\mathsf{f}: Q \times Q \rightarrow T$, we denote by $\tr{\mathsf{f}}: Q^+ \rightarrow T^* \times Q$ the function that associates to a sequence $\pi \in Q^+$, its trace $\tr{\mathsf{f}}(\pi) = (\extendFunc{\mathsf{f}}(\pi),\head(\pi))$. For instance, $\tr{\mathsf{f}}(a \cdot b \cdot c) = (f(a,b) \cdot f(b,c),c)$.

%\paragraph{Preliminaries on probabilities}
A \emph{discrete probabilistic distribution} over a non-empty set $D$ is a function $\mu: D \rightarrow [0,1]$ such that $\sum_{x \in D} \mu(x) = 1$. The set of all distribution over the set $D$ is denoted $\Dist(D)$. The \emph{support} $\Supp(\mu)$ of a probability distribution $\mu \in \Dist(D)$ is the set of value whose image is nonzero: $\Supp(\mu) = \mu^{-1}[\; (0,1] \;]$. By definition, the support $\Supp(\mu)$ is finite or countable. A distribution $\mu$ is \emph{Dirac} if $|\Supp(\mu)| = 1$. In the following, a element $d \in D$ will be seen a the Dirac distribution $\mu: D \rightarrow [0,1]$ such that $\mu(d) = 1$. 

%Let us consider now the set $D^\omega$. 
A $\sigma$-algebra $\mathcal{D}$ on a set $D^\omega$ is such that $\mathcal{D} \subseteq \mathcal{P}(D^\omega)$, $D \in \mathcal{D}$, and is closed under complementation and countable union. A probability measure on a $\sigma$-algebra $\mathcal{D}$ is a function $\upsilon: \mathcal{D} \rightarrow [0,1]$ such that $\upsilon(\emptyset) = 0$, $\upsilon(D) = 1$, and $\upsilon(\cup_{i \in \Nat} A_i) = \sum_{i \in \Nat} A_i$ for all $(A_i)_{i \in \Nat} \in \mathcal{D}^{\Nat}$ pairwise disjoint. %When the set $D$ considered is equal to $D = Q^\omega$ for some non-empty set $Q$, 
%A useful way to define a measure on $D = Q^\omega$ is by giving the probability of to \emph{cylinder} sets. 
%Let us now recall the definition of cylinder sets. For all $\pi \in Q^*$, the cylinder set $\cyl(\pi)$ generated by $\pi$ is the set $\cyl(\pi) = \{ \pi \cdot \rho \in Q^\omega \mid \rho \in Q^\omega \}$. Let us denote by $\cyl_Q$ the set of all cylinder sets on $Q^\omega$. The open sets of $Q^\omega$ are usually all the sets that are equal to an arbitrary union of cylinder sets. The set of Borel sets on $Q^\omega$, denoted $\Borel(Q)$, is then equal to the smallest set containing all open sets that is closed under complementation and countable union. 
%Note that the set all Borel sets %Alternatively, it can be defined is the smallest $\sigma$-algebra containing all open sets. 
As noted in the preliminaries, two probability measures that coincide on cylinder sets are equal, as stated below.
%We can then get a probability measure on all Borel sets by extending a probability measure on cylinder sets by using the following theorem 
%:
\begin{theorem}
	Consider two probability measures $\nu,\nu': \Borel(Q) \rightarrow [0,1]$ such that, for all $C \in \cyl_Q$, we have $\nu(C) = \nu'(C)$. Then, $\nu = \nu$.
	%and $\nu: \Borel(Q) \rightarrow [0,1]$on the cylinder sets of $Q^\omega$. There exists a unique probability measure $\Upsilon: \Borel(Q) \rightarrow [0,1]$ such that $\Upsilon(\cyl(\pi)) = \upsilon(\cyl(\pi))$ for all $\pi \in Q^+$.
	\label{thm:probability_unique_Borel}
\end{theorem}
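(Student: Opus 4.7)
The plan is to apply the standard $\pi$-$\lambda$ (Dynkin) uniqueness argument for probability measures. First I would check that $\cyl_Q$ is a $\pi$-system: given two cylinder sets $\cyl(\pi_1)$ and $\cyl(\pi_2)$, either one of $\pi_1,\pi_2$ is a prefix of the other, in which case $\cyl(\pi_1)\cap\cyl(\pi_2)$ equals the cylinder generated by the longer sequence (still in $\cyl_Q$), or neither is a prefix of the other, in which case the intersection is empty. Either way, $\cyl_Q \cup \{\emptyset\}$ is closed under finite intersection. Note also that $Q^\omega = \cyl(\varepsilon) \in \cyl_Q$, so in particular $\nu(Q^\omega) = \nu'(Q^\omega) = 1$.

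Next I would consider the collection
\[
\mathcal{L} = \{\, A \in \Borel(Q) \mid \nu(A) = \nu'(A) \,\}.
\]
The plan is to show that $\mathcal{L}$ is a $\lambda$-system: it contains $Q^\omega$ because both measures are probability measures; it is closed under proper differences since if $A \subseteq B$ with $A, B \in \mathcal{L}$, then $\nu(B \setminus A) = \nu(B) - \nu(A) = \nu'(B) - \nu'(A) = \nu'(B \setminus A)$; and it is closed under countable increasing unions by continuity from below of $\nu$ and $\nu'$. By the assumption, $\cyl_Q \subseteq \mathcal{L}$.

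Then I would invoke Dynkin's $\pi$-$\lambda$ theorem to conclude $\sigma(\cyl_Q) \subseteq \mathcal{L}$. The final step is to observe that $\sigma(\cyl_Q) = \Borel(Q)$: by definition, every open set of $Q^\omega$ is a union of cylinder sets, and since $\cyl_Q$ is countable (when $Q$ is countable) — or more generally, the open sets are generated by the countable base of cylinders in the product topology — every open set lies in $\sigma(\cyl_Q)$, and therefore $\Borel(Q)$, being the smallest $\sigma$-algebra containing the opens, is contained in $\sigma(\cyl_Q)$. The reverse inclusion is immediate since each cylinder is open. Hence $\mathcal{L} = \Borel(Q)$, which yields $\nu = \nu'$.

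The main obstacle, such as it is, lies in invoking Dynkin's theorem cleanly (verifying the $\lambda$-system axioms from the measure-theoretic definitions) and in pinning down the precise identification $\sigma(\cyl_Q) = \Borel(Q)$ in the generality in which $Q$ might be an arbitrary, possibly uncountable, set. In the latter case one should either restrict attention to cylinders generated by finite prefixes (which by construction generate the product Borel $\sigma$-algebra used throughout the paper) or explicitly note that the paper's $\Borel(Q)$ is defined exactly as the $\sigma$-algebra generated by $\cyl_Q$, in which case the identification is by definition and the proof reduces purely to the $\pi$-$\lambda$ step.
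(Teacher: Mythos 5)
The paper does not actually prove this statement: it is recorded in the appendix as a recalled standard fact (uniqueness of measures agreeing on a generating $\pi$-system), so there is no in-paper argument to compare yours against. Your $\pi$--$\lambda$ proof is the standard and correct route: the verification that $\cyl_Q \cup \{\emptyset\}$ is a $\pi$-system (two cylinders intersect in a cylinder or in $\emptyset$, according to whether one generating word is a prefix of the other), the check that $\{A \in \Borel(Q) \mid \nu(A)=\nu'(A)\}$ is a $\lambda$-system, and the appeal to Dynkin's theorem are all sound. The only genuinely delicate point is the identification $\sigma(\cyl_Q) = \Borel(Q)$, and you correctly isolate it: with the paper's definition, open sets are \emph{arbitrary} unions of cylinders, so for uncountable $Q$ an open set need not lie in $\sigma(\cyl_Q)$ (the latter is the product of countable--cocountable $\sigma$-algebras), and the theorem as literally stated requires either countability of $Q$, reading $\Borel(Q)$ as $\sigma(\cyl_Q)$, or exploiting the fact that the measures arising in the paper are concentrated on countably many cylinders at each depth (since all the distributions involved have countable support). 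Flagging that caveat rather than sweeping it under the rug is the right call; with any one of those resolutions the argument is complete.
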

%v We can then get the probability of any Borel set.

%By Carathéodory theorem [ref?], we have a probability measure over $Q^\omega$ Then, if a probability measure 

For a non-empty set $D$, we say that $\pi \in D^+$ is a prefix of $\pi' \in 
\starom{D}$, denoted $\pi \sqsubseteq \pi'$, if $\pi = \pi'_{\leq |\pi|-1}$. An 
infinite set of prefixes 
$(\pi^n)_{n \geq 0} \in (D^+)^\mathbb{N}$ of increasing length ensuring $\pi^n 
\sqsubseteq \pi^{n+1}$ for all $n \geq 0$ uniquely defines an infinite sequence 
$\pi \in D^\omega$: for all $k \geq 0$, $\pi_{\leq k} = \pi^{n}_{\leq k}$ for 
all $n \geq 0$ such that $|\pi^n|-1 \geq k$.

\section{Complement on Section~\ref{sec:gameForm}}
\label{sec:determinedVStree}

A tree game form is like a finite turn-based game on a tree, with outcomes at the leaves, but without preferences, see Figure~\ref{fig:simple_tree}. A tree game form can be seen as (or trivially translated into) a game form: the players, strategies, and outcomes all remain the same. Figure~\ref{fig:game_from_tree} shows the translation of Figure~\ref{fig:simple_tree}.

\begin{figure*}[htb]
	\begin{minipage}[t]{0.45\linewidth}
		\centering
		\includegraphics[scale=1]{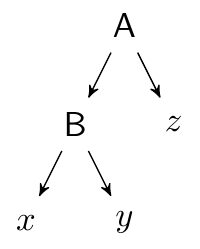}
		\caption{A turn-based tree where Player $\A$ plays at the root and %either the outcome is $z$ or 
		Player $\B$ plays at the other internal node%can then play to get either outcome $x$ or $y$
		.}
		\label{fig:simple_tree}
	\end{minipage} \hfill
	\begin{minipage}[t]{0.45\linewidth}
		\centering
		\includegraphics[scale=2]{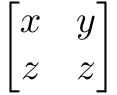}
		\caption{A representation of the game form derived from the simple tree on the left. Player $\A$ chooses the lines whereas Player $\B$ chooses the columns.}
		\label{fig:game_from_tree}
	\end{minipage}
\end{figure*}

To find out whether the determined game forms are nothing but tree game forms in disguise, we need to define ``disguise''. Below we define three equivalence relations over game forms, each inducing a notion of similarity.
\begin{itemize}
	\item Let $G \sim_d G'$ if $G$ and $G'$ are equal up to duplication of rows or columns.
	
	\item Let $G \sim G'$ if for every subset $O'$ of outcomes, there exists a row (resp. column) of $G$ involving exactly the outcomes in $O'$ iff there is one in $G'$. 
	
	\item Let $G \sim_w G'$ if for every Boolean valuation $v$ over the outcomes, for every player, she has a winning strategy in $G(v)$ iff she has one in $G'(v)$.
\end{itemize}
It is straightforward to show that $G \sim_w G'$ iff the following holds: for every subset $O'$ of outcomes, there exists a row (resp. column) of $G$ involving only outcomes in $O'$ iff there is one in $G'$. Using this, it is clear that $\sim_d \subseteq \sim \subseteq \sim_w$.

We argue below that $\sim_d$ is the natural  relation to express similarity to a tree game form, and show that some determined game forms are not even $\sim$-similar to any tree game form.

In many settings, the similarity $\sim_d$ may be considered the natural one. One reason pertains to tree games, where there are two natural notions of strategy: in \emph{complete} strategies, each player says which subgame she would choose at each node that she controls; in \emph{minimalistic} strategies, choices are not specified in subgames that the strategy discards. See Figures ~\ref{fig:tree_example}, \ref{fig:minimalist} and \ref{fig:complete}, where $G$ and $G'$ are the minimalist and complete trivial translations of the tree game form. It is straightforward to prove that $G \sim_d G'$ always holds. Also, informally, $\sim_d$ is the smallest ``simple'' equivalence relation that equates the two translated of a given tree game form.

\begin{figure*}[htb]
	\begin{minipage}[t]{0.3\linewidth}
		\centering
		\includegraphics[scale=0.75]{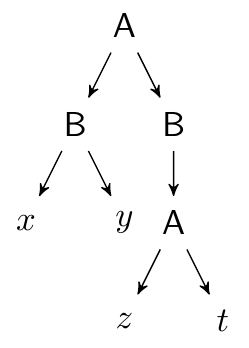}
		\caption{A turn-based tree between Player $\A$ and $\B$.}
		\label{fig:tree_example}
	\end{minipage} 
	\begin{minipage}[t]{0.3\linewidth}
		\centering
		\includegraphics[scale=1.5]{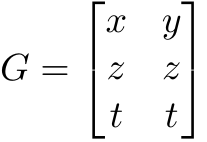}
		\caption{A representation of the game form derived from the tree o the left %derived from the simple tree on the left. Player $\A$ chooses the lines whereas Player $\B$ chooses the columns.
		with minimalist strategies: if Player $\A$ plays left at the root, it does not matter what is played if she had played right.}
		\label{fig:minimalist}
	\end{minipage}
	\begin{minipage}[t]{0.3\linewidth}
		\centering
		\includegraphics[scale=1.5]{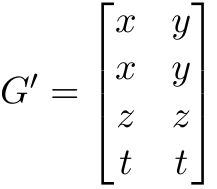}
		\caption{A representation of the game form derived from the tree on the left with complete strategies: an action is specified at each node Played $\A$ owns, regardless of what is played on the root.}
		\label{fig:complete}
	\end{minipage}
\end{figure*}

In many settings, $\sim_w$ is too generous, as suggested by the following two examples. First, consider a determined game form $G$. We can derive two two-step tree game forms from $G$: by letting Player A (resp. B)  choose her strategy first, then by letting the opponent do it. Then, $G$, $T_A$, and $T_B$ are $\sim_w$-similar, although they do not have much in common. This is especially striking if $G$ itself is a tree game form: $G$ and, say, $T_A$ differ wrt other solution concepts such as subgame perfect equilibrium. Second example, the game form $I_4$ in Figure~\ref{fig:examples_game_forms} is determined, but if we instantiate $(x,y,z)$ into $(1,0,\frac{1}{2})$ and players into maximizer/minimizer, from every strategy profile but one, the better-response dynamics (i.e. one player at a time change strategies to improve) cannot lead to a Nash equilibrium: the players will cycle between $0$ and $1$, while possibly also visiting $\frac{1}{2}$. To the contrary \cite{DBLP:journals/geb/Kukushkin02}  
have shown that better-response dynamics in tree games is weakly terminating (actually something stronger). Thus, $\sim_w$ equates game forms whose dynamics are fundamentally different. 

Therefore, in general $\sim_w$ is too generous. Of course, it is sometimes exactly what is needed: our determinacy transfer and its maximality rely on a game form being determined iff it is $\sim_w$-similar to its derived two-step tree game forms.

The similarity $\sim$ lies between the natural $\sim_d$ and the generous $\sim_w$. Proposition~\ref{prop:gf-sim} below shows that some determined game forms are not $\sim$-similar to any tree game form. (Then, of course, neither $\sim_d$-similar.)

%A game form is said to be tree-like if it is similar to the translation of a tree game form.

\begin{proposition}\label{prop:gf-sim}
	The game form (I4) in Figure~\ref{fig:examples_game_forms} is not $\sim$-similar to any tree game form.
	
	\begin{proof}
		Towards a contradiction, let $T$ be a tree game form where each of the players can offer either $\{z\}$ or $\{x,y,z\}$. (We say that the row/column player can offer some subset $O'$ of outcomes if some row/column contains exactly the outcomes in $O'$.)
		
		First, let us squeeze $T$ as follows: among the subgames of $T$ that involve only one outcome, collapse each largest ones into a leaf with that unique outcome. This yields $T_0$, satisfying $T_0 \sim T$. Second, let us prune $T_0$ by removing all the leaves with outcome $z$. This yields $T_1$, which is a tree game form since pruning was performed after squeezing/collapsing. 
		
		Let us make a case disjunction. First case, Player B can offer $\{x\}$ in $T_1$. So, using the same strategy, Player B can offer $\{x,z\}$  in $T$ (not just $\{x\}$ since Player B can offer $\{z\}$), contradiction. Second case, Player B cannot offer $\{x\}$ in $T_1$. By determinacy of $T_1$ (since it is a tree game form) and since it involves outcomes $x,y$ only, Player A can offer $\{y\}$. Contradiction as in the first case.
	\end{proof}
\end{proposition}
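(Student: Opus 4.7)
My plan is to argue by contradiction: suppose a tree game form $T$ satisfies $T \sim I_4$, i.e.\ $T$ and $I_4$ share the same collections of row-offer sets and column-offer sets. The first task is to pin down the relevant features of this common signature. What I need are the facts that (i) both players can offer the singleton $\{z\}$ (each has a monochromatic row or column of $z$'s); (ii) both players can offer the full set $\{x,y,z\}$; and (iii) neither player can offer $\{x\}$, $\{y\}$, $\{x,z\}$, or $\{y,z\}$. A direct inspection of $I_4$ in Figure~\ref{fig:examples_game_forms} confirms these items.

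The next step is to normalize $T$ in a way that preserves $\sim$-equivalence. First I would \emph{collapse} each maximal subtree whose leaves all bear the same outcome into a single leaf carrying that outcome; this leaves every offer set unchanged. Then, using item (i), I would \emph{prune} every $z$-leaf of the collapsed tree, producing a smaller tree game form $T_1$ on outcomes in $\{x,y\}$ only. Because the prior collapse makes each $z$-leaf isolated, $T_1$ remains a well-formed tree game form, and every strategy profile of $T$ decomposes as a profile of $T_1$ together with the option for either player to deviate into a pruned $z$-branch.

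The decisive step is to apply the trivial Zermelo-style determinacy of the two-outcome tree game form $T_1$: some player, say $\B$, can force a singleton in $T_1$, say $\{x\}$. Lifting this strategy back to $T$ gives $\B$ an offer of the form $\{x\} \cup S$ with $S \subseteq \{z\}$, depending on whether $\A$ can still steer into some pruned $z$-branch under $\B$'s lifted play. Since item (iii) forbids both $\{x\}$ and $\{x,z\}$, we obtain a contradiction; the symmetric cases ($\B$ forces $y$, or $\A$ forces $x$ or $y$) are handled identically. The main obstacle I foresee is making the lift-back rigorous: I must argue that the decomposition of $T$-strategies over $T_1$-strategies plus pruned-branch deviations faithfully tracks the offer sets, which hinges on the collapsing step ensuring that $z$-branches are not inadvertently re-routable by the player whose lifted strategy I am analyzing.
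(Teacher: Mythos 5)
Your proposal is correct and follows essentially the same route as the paper's proof: collapse (squeeze) monochromatic subtrees, prune the $z$-leaves to get a two-outcome tree game form $T_1$, invoke determinacy of $T_1$ to conclude that some player can offer a singleton, and lift that strategy back to $T$ to obtain an offer set $\{x\}$, $\{x,z\}$, $\{y\}$, or $\{y,z\}$, none of which occurs in $I_4$. The only cosmetic difference is that you treat all four symmetric cases uniformly where the paper spells out the case disjunction for Player $\B$ offering $\{x\}$ versus Player $\A$ offering $\{y\}$.
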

It is also possible, but more difficult, to prove that the game form $I_3$ in Figure~\ref{fig:examples_game_forms} is not $\sim_d$-similar to any tree game form. Although this game form is $\sim$-similar to the tree game form in Figure~\ref{fig:new_tree}, it suggests that $\sim$ is too generous. Indeed, on the one hand in the game form $I_3$ in Figure~\ref{fig:examples_game_forms}, every valuation of the issues with ordered pairs of real-valued payoffs yields a potential game, i.e. better-response dynamics terminates; on the other hand in Figure~\ref{fig:new_game_from_tree}, setting $x := 1$ and $y,z := 0$ yields an antagonistic game with a better-response cycle involving the four corners. So, $\sim$ equates two game forms whose dynamics are fundamentally different.  

\begin{figure*}[htb]
	\begin{minipage}[t]{0.45\linewidth}
		\centering
		\includegraphics[scale=1]{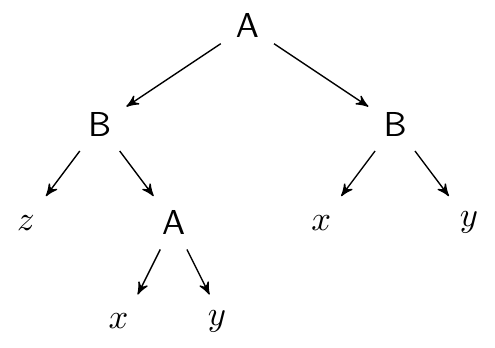}
		\caption{A turn-based tree where Player $\A$.}
		\label{fig:new_tree}
	\end{minipage} \hfill
	\begin{minipage}[t]{0.45\linewidth}
		\centering
		\includegraphics[scale=2]{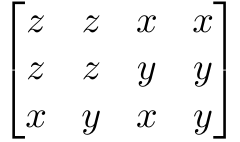}
		\caption{The game form induced by the game tree on the left.}
		\label{fig:new_game_from_tree}
	\end{minipage}
\end{figure*}

\section{Complement on Section~\ref{sec:arenas}}
%\subsection{Proof of Remark~\ref{rmq:inequality_value_game}}
%\label{proof:remarkIneqVal}
%\begin{proof}
%	For $\s_\A \in \SetColStrat{\Aconc}{\A}$ and $\s_\B \in \SetColStrat{\Aconc}{\B}$, we have:
%	$$\val{\Aconc}{\s_\A}[W] = \inf_{\kappa_\B \in \SetStrat{\Aconc}{\B}} \prob{\Aconc}{\stratMod{\s_\A}}{\kappa_\B}[U_W] \leq \prob{\Aconc}{\stratMod{\s_\A}}{\stratMod{\s_\B}}[U_W] \leq \sup_{\kappa_\A \in \SetStrat{\Aconc}{\A}} \prob{\Aconc}{\kappa_\A}{\stratMod{\s_\B}}[U_W] = \val{\Aconc}{\s_\B}[W]$$
%	As this holds for all $\s_\A \in \SetColStrat{\Aconc}{\A}$, we have: $$\val{\Aconc}{\A}[W] %= \sup_{\s_\A \in \SetSt{\Aconc}{\A}} \val{\Aconc}{\s_\A}[W] 
%	\leq \val{\Aconc}{\s_\B}[W]$$
%	 This also holds for all $\s_\B \in \SetColStrat{\Aconc}{\B}$, hence: $$\val{\Aconc}{\A}[W] \leq %\inf_{\s_\B \in \SetSt{\Aconc}{\B}} \val{\Aconc}{\s_\B}[W] = 
%	 \val{\Aconc}{\B}[W]$$
%\end{proof}

\subsection{Probability distribution induced by a pair of strategies}
\label{subsec:strategies_probability_distribution}
\begin{definition}[Mentioned~\ref{ref:pair_strat_proba}]
	Let us consider an arena $\Aconc% = \AcoloredConc
	$ and $\s_\A,\s_\B \in \SetStrat{\Aconc}{\A} \times \SetStrat{\Aconc}{\B}$ two state strategies for Player $\A$ and $\B$. We denote by $\deltaDistrib^{\s_\A,\s_\B}: Q^+ \rightarrow \Dist(Q)$ the function giving the probabilistic distribution over the next state of the arena given the sequence of states already seen. For all finite path $\pi \in Q^+$ and $q \in Q$, we have:
	\begin{displaymath}
	\deltaDistrib^{\s_\A,\s_\B}(\pi)[q] = \sum_{a \in \Supp(\s_\A(\pi))} \sum_{b \in \Supp(\s_\B(\pi))} \s_\A(\pi)[a] \cdot \s_\B(\pi)[b] \cdot \deltaDistrib(\head(\pi),a,b)[q]
	\end{displaymath}
	
	Then, the probability of occurrence of a finite path $\pi = \pi_0 \cdots \pi_n \in Q^+$ with the pair of strategies $(\s_\A,\s_\B)$ is equal to $\prob{\Aconc}{\s_\A}{\s_\B}(\pi) = \Pi_{i = 0}^{n-1} \deltaDistrib^{\s_\A,\s_\B}(\pi_{\leq i})[\pi_{i+1}]$ if $\pi_0 = q_0$ and $0$ otherwise. The set of finite or infinite paths that can occur on a game $\Aconc$ is denoted $\OutStrat{\Aconc}^\vartriangle$ for $\vartriangle \in \{ +,\omega \}$ with $\OutStrat{\Aconc}^\vartriangle = \{ \pi \in Q^\vartriangle \mid \exists \s_\A,\s_\B \in \SetStrat{\Aconc}{\A} \times \SetStrat{\Aconc}{\B},\; \prob{\Aconc}{\s_\A}{\s_\B}(\pi_{\leq |\pi|-1}) > 0 \}$. %In order to measure the probability of sets, we consider 
	The probability of a cylinder set $\cyl(\pi)$ is $\prob{\Aconc}{\s_\A}{\s_\B}[\cyl(\pi)] = \prob{\Aconc}{\s_\A}{\s_\B}(\pi)$ for any finite path $\pi \in Q^*$. This induces the probability of any Borel set in the usual way,%via Theorem~\ref{thm:probability_unique_Borel} 
	we denote by $\prob{\Aconc}{\s_\A}{\s_\B}: \Borel(Q) \rightarrow [0,1]$ the corresponding probability measure.
\end{definition}

\subsection{Definition of turn-based game}
\label{subsec:turn_based}
\begin{definition}[Mentioned~\ref{ref:turn_based}]
	Consider a stochastic concurrent arena $\Aconc% = \AcoloredConc
	$. It is said to be \emph{turn-based} if, for all $q \in Q$, we have that either for all $a \in \setA$, the partial function $\delta(q,a,\cdot): B \rightarrow \distribSet$ is constant -- in which case we say that the state $q$ belongs to Player $\A$ -- or for all $b \in \setB$, the partial function $\delta(q,\cdot,B): A \rightarrow \distribSet$ is constant --  in that case, the state $q$ is said to belong to Player $\B$. In the special case where a state belongs to both players -- which can only happen in states with trivial interaction -- who the state formally belongs to is a matter of convention.
	%A \emph{turn-based graph arena} $\Aturn$ is a tuple $\AcoloredTurn$ where 
	%$V = V_\A \uplus V_\B$ is the set of vertices equal to a disjoint union of 
	%the set of vertices $V_\A$ belonging to Player $\A$ and the set of 
	%vertices 
	%$V_\B$ belonging to Player $\B$, $v_0 \in V$ is the initial vertex, $E 
	%\subseteq V \times V$ is the set of edges ensuring that for all $v \in V$, 
	%there exists $v' \in V$ such that $(v,v') \in E$, $\colSet$ is the 
	%non-empty set of colors, and $\colFunc: E \rightarrow \colSet$ is the 
	%coloring function. A win/lose \emph{turn-based graph game} is a pair 
	%$\Games{\Aturn}{W}$ where $W \subseteq \colSet^\omega$ is the set of 
	%winning sequences of colors (for Player $\A$).
\end{definition}

\section{Complements on Section~\ref{sec:SeqPar}}
%\subsection{Proof of Proposition~\ref{prop:projection_colors_colors_projection}}
%\label{proof:Proposition2}
\subsection{Projection of paths in a concurrent arena and its sequential version}
\begin{proposition}[Mentioned~\ref{ref:prop_projection_colors_colors_projection}]
	Consider a concurrent arena $\Aconc = \AcoloredConc$ and its sequential 
	version $\ATurnConc = \AcoloredSeq$. Then, for $\vartriangle \in \{+,\omega 
	\}$ and for all paths $\pi \in \OutStrat{\ATurnConc}^\vartriangle$, we have:
	\begin{itemize}
		\item Every path that may occur in that sequential version alternate between states $V_\A$ and $V_\B$: $\OutStrat{\ATurnConc}^\vartriangle \subseteq (V_\A \cdot V_\B)^\vartriangle \cdot (\epsilon + V_\A)$
		\item the projection of the path $\pi$ in $\Seq(\Aconc)$ on $Q$ is a path in $\Aconc$: 
		$\projec{V}{Q}(\pi) \in \OutStrat{\Aconc}^\vartriangle$;
		\item the projection of the colors of $\pi$ on $\colSet$ exactly 
		corresponds to the colors of the projection of $\pi$ on $Q$: 
		$\projec{\colSetSeq}{\colSet} \circ \extendFunc{\colFunc_\Aconc}(\pi) = 
		\extendFunc{\colFunc} \circ \projec{V}{Q}(\pi)$. %$\diamond$
	\end{itemize}
	\label{prop:projection_colors_colors_projection}
\end{proposition}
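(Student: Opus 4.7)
The plan is to establish all three claims simultaneously by induction on path length for finite paths, then extend to infinite paths via their finite prefixes. The core observation driving the induction is the structural shape of $\delta_\Aconc$ imposed by the sequentialization: from a state $q \in V_\A$, the only transitions go to some $(q,a) \in V_\B$ with probability $1$ (via a Dirac Nature state in $\distribSet_\A$), and from $(q,a) \in V_\B$, transitions go through $\delta(q,a,b) \in \distribSet_\B = \distribSet$ to states $q' \in V_\A = Q$, where $q'$ has positive probability iff $\deltaDistrib(q,a,b)[q'] > 0$ for some $b \in B$.

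For the base case $\pi = q_0 \in V_\A$, the three claims are immediate: $q_0$ fits the alternation pattern $(V_\A \cdot V_\B)^0 \cdot V_\A$, its projection $\projec{V}{Q}(q_0) = q_0$ is a path in $\Aconc$, and both sides of the color identity reduce to the empty word. For the inductive step, consider $\pi \cdot v \in \OutStrat{\ATurnConc}^+$. The prefix $\pi$ also lies in $\OutStrat{\ATurnConc}^+$ and satisfies the three claims by induction hypothesis. Two cases arise depending on whether $\head(\pi)$ belongs to $V_\A$ or $V_\B$. If $\head(\pi) = q \in V_\A$, then by the shape of $\delta_\Aconc$ we have $v = (q,a) \in V_\B$ for some $a \in A$; alternation is preserved, $\projec{V}{Q}(\pi \cdot v) = \projec{V}{Q}(\pi)$ is unchanged in $\OutStrat{\Aconc}^+$, and the newly appended color $\colFunc_\Aconc(q,(q,a)) = k_\Aconc$ is erased by $\projec{\colSetSeq}{\colSet}$, so the color identity is preserved. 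If $\head(\pi) = (q,a) \in V_\B$, then $v \in V_\A = Q$ with $\deltaDistrib(q,a,b)[v] > 0$ for some $b \in B$; alternation is preserved, the projection is extended by $v$ with $q \to v$ being a valid transition of $\Aconc$, and the appended color $\colFunc_\Aconc((q,a),v) = \colFunc(q,v)$ matches the color appended on the right-hand side of the identity.

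For infinite paths $\pi \in \OutStrat{\ATurnConc}^\omega$, apply the finite case to every finite prefix $\pi_{\leq i}$: the alternation of all prefixes forces the infinite alternation pattern; the coherent family of projected prefixes $(\projec{V}{Q}(\pi_{\leq i}))_{i \geq 0}$ defines a unique infinite sequence $\projec{V}{Q}(\pi) \in \OutStrat{\Aconc}^\omega$; and the color identity, holding on every prefix, extends pointwise to the whole infinite word.

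The main obstacle is purely notational bookkeeping: one must carefully track how $\extendFunc{\colFunc_\Aconc}$, $\extendFunc{\colFunc}$, $\projec{V}{Q}$, and $\projec{\colSetSeq}{\colSet}$ commute with concatenation when passing from $\pi$ to $\pi \cdot v$. All four are homomorphisms with respect to concatenation by their inductive definitions, which suffices to make the inductive step for the color identity a one-line computation. No other subtlety arises, because the alternation structure of the sequentialized arena is rigidly forced by $\delta_\Aconc$.
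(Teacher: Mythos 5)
Your proposal is correct and follows essentially the same route as the paper: both arguments rest on the single structural observation that $\Supp(\deltaDistrib_\Aconc(v,a,b)) \subseteq V_\B$ for $v \in V_\A$ and $\Supp(\deltaDistrib_\Aconc((q,a'),a,b)) = \Supp(\deltaDistrib(q,a',b)) \subseteq V_\A$, from which alternation, the projection claim, and the color identity all follow. The paper simply writes the alternating path explicitly and reads off the color sequence in one display, whereas you package the same computation as an induction on prefixes; the content is identical.
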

\begin{proof}
	For all $a \in A$, $b \in B$, $v \in V_\A$, and $(q,\Sq{q}{a'}) \in V_\B$, we have $\Supp(\deltaDistrib_\Aconc(v,a,b)) = \{ (v,\Sq{v}{a}) \} \subseteq V_\B$ and $\Supp(\deltaDistrib_\Aconc((q,\Sq{q}{a'}),a,b)) = \Supp(\deltaDistrib(q,a',b)) \subseteq Q = V_\A$. Therefore, a path $\pi \in \OutStrat{\ATurnConc}^\vartriangle$ is in $(V_\A \cdot V_\B)^\vartriangle \cdot (\epsilon + V_\A)$ and $\projec{V}{Q}(\pi) \in \OutStrat{\Aconc}^\vartriangle$. 
	
	%Let $\vartriangle \in \{ +,\infty \}$ and $\pi \in \OutStrat{\ATurnConc}^\vartriangle$. Let $\pi' = \projec{V}{Q}(\pi) \in Q^\vartriangle$. Since $\pi \in \OutStrat{\ATurnConc}^\vartriangle$, we have, for all $0 \leq i < |\pi|/2-1$, $\pi_{2i} \in V_\A = Q$ and $\pi_{2i+1} \in V_\B$. Hence, for all $i \geq 0$, we have $\pi_{2i} = \pi'_{i}$. Now, let $i \geq 0$ and let us prove that $\pi'_{i+1} \in \delta(\pi'_i,A,B)$. We have $(\pi_{2i},\pi_{2i+1}) \in E_\A$ hence $\pi_{2i+1} = (\pi_{2i},\Sq{\pi_{2i}}{a})$ for some $a \in A$. Then, $(\pi_{2i+1},\pi_{2i+2}) \in E_\B$, thus $\pi_{2i+2} \in \Sq{\pi_{2i}}{a} = \delta(\pi_{2i},a,B)$. The result follows since $\pi_{2i} = \pi'_{i}$ and $\pi_{2i+2} = \pi'_{i+1}$. Since this holds for all $i \geq 0$, we can conclude that $\pi' \in \OutStrat{\Aconc}^\vartriangle$.
	
	Let us denote that path $\pi$ by $q_0 \cdot (q_0,\Sq{q_0}{a_0}) \cdot q_1 \cdot 
	(q_1,\Sq{q_1}{a_1}) \cdot q_2 \cdots$ for some $q_i \in Q$ and $a_i \in A$ 
	for all $i 
	\geq 0$. We have $\extendFunc{\colFunc_\Aconc}(\pi) = k_\Aconc \cdot 
	\colFunc(q_0,q_1) \cdot k_\Aconc \cdot \colFunc(q_1,q_2) \cdots$. 
	Therefore: 
	$$\projec{\colSetSeq}{\colSet} \circ 
	\extendFunc{\colFunc_\Aconc}(\pi) = \colFunc(q_0,q_1) \cdot 
	\colFunc(q_1,q_2) \cdots = \extendFunc{\colFunc}(q_0 \cdot q_1 \cdot q_2 
	\cdots) = \extendFunc{\colFunc} \circ \projec{V}{Q}(\pi)$$
\end{proof}

\subsection{Preimage of the projection of a finite path}%[Mentionned]
\label{subsec:path_extension}
%\subsection{Proof of Proposition~\ref{lem:sufficient_cond_prob_borel}}
%\label{proof:Lemma_sufficient_Borel}
%Below we define a way to extend a finite path in the concurrent arena $\Aconc$ into its sequential version $\Seq(\Aconc)$ that will ensure a %very useful property regarding the probability of Borel sets. 
%\begin{definition}[Path extension, Mentioned~\ref{ref:path_extension}]
	%\label{def:path_extension}
	%Let $\s \in \SetStrat{\ATurnConc}{\A}$ be a state strategy for Player $\A$ in the concurrent arena $\Aconc$. We define inductively the function $\Ext{\s}: Q^* \rightarrow (V_\A \cdot V_\B)^*$ as follows $\Ext{\s}(\epsilon) = \epsilon$ and for $\pi = \pi' \cdot q \in Q^+$, we set $\Ext{\s}(\pi) = \Ext{\s}(\pi') \cdot q \cdot (q,\Sq{q}{a})$ where $a = \min_{<_\A}	\Supp(\s(\Ext{\s}(\pi') \cdot q)) \in A$.
%\end{definition}
%Note that, for all $\pi \in Q^+$, we have $\projec{V}{Q}(\Ext{\sigma}(\pi)) = \pi$.
%\begin{remark}
	%For all $\pi \in Q^+$, we have $\projec{V}{Q}(\Ext{\sigma}(\pi)) = \pi$.\label{rmq:path_projec_ext}
%\end{remark}
%However, the interesting property ensured by this definition is the following:
\begin{proposition}[Mentioned~\ref{ref:path_extension}]%[Proof~\ref{proof:Lemma_sufficient_Borel}]
	Consider a concurrent arena $\Aconc = \AcoloredConc$ and a pair of state strategies $(\s_\A,\s_\B) \in \SetStrat{\Aconc}{\A} \times \SetStrat{\Aconc}{\B}$ in the concurrent game $\Aconc$ and a pair of state strategies in its sequential version $(\sigma_\A,\sigma_\B) \in \SetStrat{\Seq(\Aconc)}{\A} \times \SetStrat{\Seq(\Aconc)}{\B}$. Assume that, for all finite sequence of states $\pi \in Q^+$, we have $$\prob{\Aconc}{\s_\A}{\s_\B}(\cyl(\pi)) = \prob{\Seq(\Aconc)}{\sigma_\A}{\sigma_\B}(\inv{\projec{V}{Q}}[\cyl(\pi)])$$
	
	Then, for all Borel set $B \in \Borel(Q)$, we have:
	$$\prob{\Aconc}{\s_\A}{\s_\B}[B] = \prob{\Seq(\Aconc)}{\sigma_\A}{\sigma_\B}(\inv{\projec{V}{Q}}[B])$$
	\label{lem:sufficient_cond_prob_borel}
\end{proposition}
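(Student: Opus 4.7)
The plan is to reduce the statement to the uniqueness theorem for probability measures on Borel sets (Theorem~\ref{thm:probability_unique_Borel}). I would introduce two candidate probability measures on $\Borel(Q)$,
\[
\mu(B) \;:=\; \prob{\Aconc}{\s_\A}{\s_\B}[B]
\qquad\text{and}\qquad
\nu(B) \;:=\; \prob{\Seq(\Aconc)}{\sigma_\A}{\sigma_\B}\!\bigl(\inv{\projec{V}{Q}}[B]\bigr),
\]
and show that they coincide. The hypothesis of the proposition is exactly that $\mu$ and $\nu$ agree on the generator $\cyl_Q$, so once both are shown to be well-defined probability measures on $\Borel(Q)$, Theorem~\ref{thm:probability_unique_Borel} concludes.

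First I would check that $\nu$ is well-defined, i.e. that $\inv{\projec{V}{Q}}$ maps $\Borel(Q)$ into $\Borel(V)$. Since $\projec{V}{Q}$ is defined letter by letter (keeping $V_\A = Q$-letters and deleting $V_\B$-letters) it is continuous on the subset of $V^\omega$ consisting of sequences with infinitely many $V_\A$-occurrences, and in any case the class $\{B \subseteq Q^\omega \mid \inv{\projec{V}{Q}}[B] \in \Borel(V)\}$ is a $\sigma$-algebra containing every cylinder $\cyl(\pi)$ (whose preimage is, modulo a measure-zero set of badly-typed sequences, a countable union of cylinders in $V^\omega$); hence it contains all Borel sets.

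Next I would verify that $\nu$ is a probability measure on $\Borel(Q)$. Countable additivity and the value on complements are inherited from $\prob{\Seq(\Aconc)}{\sigma_\A}{\sigma_\B}$ because $\inv{\projec{V}{Q}}$ commutes with countable unions and complementation. The only nontrivial point is $\nu(Q^\omega) = 1$: by Proposition~\ref{prop:projection_colors_colors_projection}, every path generated with positive probability in $\Seq(\Aconc)$ alternates between $V_\A$ and $V_\B$, hence contains infinitely many $V_\A = Q$-occurrences, so its projection lies in $Q^\omega$. Consequently $\inv{\projec{V}{Q}}[Q^\omega]$ has full measure under $\prob{\Seq(\Aconc)}{\sigma_\A}{\sigma_\B}$.

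Since $\mu$ is already a probability measure on $\Borel(Q)$ (by definition of $\prob{\Aconc}{\s_\A}{\s_\B}$) and the assumption gives $\mu(C) = \nu(C)$ for every $C \in \cyl_Q$, Theorem~\ref{thm:probability_unique_Borel} yields $\mu = \nu$, which is the claim. The main subtlety will be the measurability argument for $\inv{\projec{V}{Q}}$, but it is standard and localized to the preliminary observation that every positively-weighted path in $\Seq(\Aconc)$ alternates correctly; the core of the proof is just the $\pi$-$\lambda$/uniqueness argument applied to cylinder sets.
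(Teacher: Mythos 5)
Your proposal is correct and follows essentially the same route as the paper: define $\nu(B) = \prob{\Seq(\Aconc)}{\sigma_\A}{\sigma_\B}(\inv{\projec{V}{Q}}[B])$, show it is a probability measure on $\Borel(Q)$ (with the only nontrivial point being $\nu(Q^\omega)=1$, obtained from the alternation property of Proposition~\ref{prop:projection_colors_colors_projection}), and conclude by the uniqueness theorem on cylinder sets. The only difference is that you additionally spell out the measurability of $\inv{\projec{V}{Q}}$, which the paper leaves implicit.
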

\vspace*{-7mm}
%Proposition~\ref{lem:sufficient_cond_prob_borel} 
This proposition is particularly useful as it relates the probability of a winning set $W \in \Borel(Q)$ in the concurrent game $\Aconc$ with the probability of its sequential version $\Seq(W) \in \Borel(V)$ in the sequential version $\Seq(\Aconc)$ of the concurrent arena $\Aconc$. Let us proceed to the proof of this proposition.
\begin{proof}
	In the following we denote $\prob{\Aconc}{\s_\A}{\s_\B}$ by $p_\s$ and $\prob{\Seq(\Aconc)}{\sigma_\A}{\sigma_\B}$ by $p_\sigma$. Now, consider the function $p: \Borel(Q) \rightarrow [0,1]$ such that, for all Borel set $B \subseteq Q^\omega$, we have $p(B) = p_\sigma(\inv{\projec{V}{Q}}[B])$. We want to show that $p = p_\s$.
	
	Let us first show that $p$ is a probability measure on $\Borel(Q)$. Straightforwardly, $p(\emptyset) = p_\sigma(\emptyset) = 0$. Furthermore, $(V_\A \cdot V_\B)^\omega \subseteq \inv{\projec{V}{Q}}(Q^\omega)$ and $p_\sigma(V^\omega \setminus (V_\A \cdot V_\B)^\omega) = 0$ as by Proposition~\ref{prop:projection_colors_colors_projection}, we have $\OutStrat{\ATurnConc}^\omega \subseteq (V_\A \cdot V_\B)^\omega$ and $p_\sigma(V^\omega \setminus \OutStrat{\ATurnConc}^\omega) = 0$. Thus, $1 = p_\sigma((V_\A \cdot V_\B)^\omega) \leq p_\sigma(\inv{\projec{V}{Q}}[Q^\omega]) = p(Q^\omega)$ since $p_\sigma(V^\omega \setminus (V_\A \cdot V_\B)^\omega) = 0$% as by Proposition~\ref{prop:projection_colors_colors_projection}, we have $\OutStrat{\ATurnConc}^\omega \subseteq (V_\A \cdot V_\B)^\omega$ and $p_\sigma(V^\omega \setminus \OutStrat{\ATurnConc}^\omega) = 0$
	. %for all $v_\A \in V_\A$, $d_\A \in \distribSet_\A$ we have $\delta_\Aconc(v_\A,A,B) \subseteq \distribSet_\A$ and $\Supp(\distribFunc(d_\A)) \subseteq V_\B$, and for all $v_\B \in V_\B$, $d_\B \in \distribSet_\B$ we have $\delta_\Aconc(v_\B,A,B) \subseteq \distribSet_\B$ and $\Supp(\distribFunc(d_\B)) \subseteq V_\A$.
	Furthermore, for $(A_n)_{n \in \mathbb{N}} \in (\Borel(Q))^\mathbb{N}$ pairwise disjoint, we have $\inv{\projec{V}{Q}}[\cup_{n \in \mathbb{N}} A_n] = \cup_{n \in \mathbb{N}} \inv{\projec{V}{Q}}[A_n]$ and $(\inv{\projec{V}{Q}}[A_n])_{n \in \mathbb{N}}$ are also pairwise disjoint. Henceforth, we have $$p(\bigcup_{n \in \mathbb{N}} A_n) = p_\sigma(\bigcup_{n \in \mathbb{N}} \inv{\projec{V}{Q}}[A_n]) = \sum_{n \in \mathbb{N}} p_\sigma (\inv{\projec{V}{Q}}[A_n]) = \sum_{n \in \mathbb{N}} p(A_n)$$
	
	We conclude that $p$ is a probability measure on $\Borel(Q)$.
	
	Let us now show that $p$ and $p_\s$ coincide on cylinder sets. Let $c = \cyl(\pi) \in \cyl_Q$ for some $\pi \in Q^*$. We have:
	$$p(c) = p_\sigma(\inv{\projec{V}{Q}}[c]) = p_\s(c)$$
	Therefore, $p$ and $p_\s$ coincide on cylinder sets and it follows that $p$ and $p_\s$ are equal by Theorem~\ref{thm:probability_unique_Borel}. That is, for all Borel set $B \in \Borel(Q)$, we have $p_\s(B) = p(B) = p_\sigma(\inv{\projec{V}{Q}}(B))$.
\end{proof}
\begin{remark}
	\label{rmq:reverse_projec_cylinder}
	For all finite path $\pi \in Q^+$, we have:
	$$\inv{\projec{V}{Q}}[\cyl(\pi)] = \bigcup_{\substack{\rho \in \inv{\projec{V}{Q}}[\pi] \\ \head(\rho) = \head(\pi)}} \cyl(\rho)$$
\end{remark}

We define how to extend a sequence of colors that could occur in $\Aconc$ into the corresponding sequence of colors that could occur in $\Seq(\Aconc)$.
\begin{definition}[Sequence of colors extension]
	\label{def:rho_kc}
	In a concurrent arena $\Aconc = \AcoloredConc$, for a finite sequence of colors $\rho \in \colSet^*$, we denote by $\rho_{k_\Aconc} \in \colSetSeq^+$ the sequence of colors where the colors $k_\Aconc$ is added at every other index: if $\rho = k_1 \cdot k_2 \cdots k_n$, we have $\rho_{k_\Aconc} = k_{\Aconc} \cdot k_1 \cdot k_\Aconc \cdot k_2 \cdots k_\Aconc \cdot k_n$.
\end{definition} 
\begin{remark}
	\label{rmq:sequence_col_path_ext}
	For all finite path %$\pi \in Q^+$ and 
	$\rho \in (V_\A \cdot V_\B)^* \cdot V_\A$, % such that $\projec{V}{Q}(\rho) = \pi$
	%and $r \in V_\B$, 
	we have $\extendFunc{\colFunc}(\projec{V}{Q}(\rho))_{k_\Aconc} = \extendFunc{\colFunc_\Aconc}(\rho)$.
	%Note that for all $\pi \in Q^+$ and strategy $\sigma_\A \in \SetSt{\ATurnConc}{\A}$, we have $\extendFunc{\colFunc_\Aconc}(\Ext{\sigma_\A}(\pi)) = \extendFunc{\colFunc}(\pi)_{k_\Aconc}$.
\end{remark}

\subsection{Definition of sequentialization}
\label{subsec:def_sequen}
\begin{definition}[Sequentialization of memory skeletons][Mentioned~\ref{ref:sequentialization_strat}]
	\label{def:sms}
	Consider a memory skeleton $\mathcal{M} = \langle M,\minit,\mu \rangle$ on 
	the set of colors $\colSet$. Then, the \emph{sequential version} of that 
	memory skeleton is the memory skeleton $\Seq(\mathcal{M}) = \langle 
	M,\minit,\Seq(\mu) \rangle$ on $\colSetSeq = \colSet \cup \{ k_\Aconc \}$ where $\Seq(\mu): M \times 
	\colSetSeq \rightarrow M$ is such that, for all $m \in M$ and $k \in 
	\colSetSeq$, we have $\Seq(\mu)(m,k) = \mu(m,k)$ if $k 
	\neq k_\Aconc$ and $\Seq(\mu)(m,k_\Aconc) = m$ otherwise.
	\label{def:sequentialization_memory_skeleton}
\end{definition}

\begin{definition}[Sequentialization of action maps][Mentioned~\ref{ref:sequentialization_strat}]
	Consider an action function for Player $\A$ (resp. $\B$) $\lambda: M \times Q \rightarrow A$ (resp. $M \times Q \rightarrow B$). Then, its 
	\emph{sequential version} is the action map $\Seq(\lambda): M \times 
	V \rightarrow A$ (resp. $M \times V \rightarrow B$) where, for all $m \in M$ and $v \in V_\A$, we have $\Seq(\lambda)(m,v) = \lambda(m,v)$ (resp. for $m \in M$ and $(v,a) \in V_\B$, we have $\Seq(\lambda)(m,(v,a)) = \lambda(m,v)$) and for $v' \in V_\B$, we have $\Seq(\lambda)(m,v') = \min_{<_\A} A$ (resp. for $v' \in V_\A$, we have $\Seq(\lambda)(m,v') = \min_{<_\B} B$).
	\label{def:sequen_action_function}
\end{definition}

{\label{ref:sequen_strat}}
\begin{definition}[Sequentialization of strategies][Justification~\ref{proof:remark5}][Mentioned~\ref{ref:sequentialization_strat}]
	\label{def:seq-strat}
	Consider a color strategy $\s$ in $\Aconc$ implemented by a memory skeleton $\mathcal{M}$ and an action map $\lambda$. The sequential version of $\s$ is the strategy $\Seq(s)$ implemented by the sequential versions of the memory skeleton $\Seq(\mathcal{M})$ and of the action map $\Seq(\lambda)$.
	\label{def:sequentialization_strat}
\end{definition}
%{\label{ref:sequentialization_strat}}

\subsection{Proposition regarding the sequentialization of the update function}
\label{proof:Proposition3}
The definition of the update function $\Seq(\mu)$ from the update function $\mu$ ensures the following:
\begin{proposition}[Mentioned~\ref{ref:projection_memory}]
	For all non-empty set $M$, $m \in M$, $\mu: M \times \colSet \rightarrow M$, and $\rho \in \colSetSeq^*$, the state of memory w.r.t. the 
	update function $\Seq(\mu)$ after the colors in $\rho$ have been seen is the same  as after the colors of the projection of $\rho$ on $\colSet$  are 
	seen w.r.t. the update function $\mu$: $\extendFunc{\Seq(\mu)}(m,\rho) = 
	\extendFunc{\mu}(m,\projec{\colSetSeq}{\colSet}(\rho))$.
	\label{prop:projection_memory}
\end{proposition}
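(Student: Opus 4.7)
The plan is to prove the proposition by straightforward induction on the length of $\rho \in \colSetSeq^*$. The key observation is that the definition of $\Seq(\mu)$ is rigged precisely so that reading the fresh color $k_\Aconc$ acts as a no-op on the memory, while reading any color $k \in \colSet$ mimics the original update $\mu$. Meanwhile, the projection $\projec{\colSetSeq}{\colSet}$ erases exactly the occurrences of $k_\Aconc$ and keeps every other color. So the two sides of the claimed equality evolve in lockstep.

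For the base case, $\rho = \epsilon$, both $\extendFunc{\Seq(\mu)}(m,\epsilon)$ and $\extendFunc{\mu}(m,\projec{\colSetSeq}{\colSet}(\epsilon)) = \extendFunc{\mu}(m,\epsilon)$ equal $m$. For the inductive step, write $\rho' = \rho \cdot k$ with $k \in \colSetSeq$, and consider two cases. If $k = k_\Aconc$, then by the definition of $\Seq(\mu)$ the last update is the identity, so $\extendFunc{\Seq(\mu)}(m,\rho \cdot k_\Aconc) = \extendFunc{\Seq(\mu)}(m,\rho)$; and on the other side $\projec{\colSetSeq}{\colSet}(\rho \cdot k_\Aconc) = \projec{\colSetSeq}{\colSet}(\rho)$ since $k_\Aconc \notin \colSet$, so the induction hypothesis gives the equality directly. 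If instead $k \in \colSet$, then $\Seq(\mu)$ agrees with $\mu$ on $k$, and the projection appends $k$, so both sides apply $\mu(\cdot,k)$ to the respective inductively-equal memory states.

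No obstacle is expected: the argument is a clean induction that only uses the case split in the definition of $\Seq(\mu)$ and the fact that $\projec{\colSetSeq}{\colSet}$ is defined letter-by-letter with $k_\Aconc \mapsto \epsilon$ and $k \mapsto k$ for $k \in \colSet$. The only mild subtlety is making sure that the lifting $\extendFunc{\mu}$ used here is the iterated-update convention ($\extendFunc{\mu}(m,\epsilon) = m$ and $\extendFunc{\mu}(m,\rho \cdot k) = \mu(\extendFunc{\mu}(m,\rho),k)$), which is the standard reading of a memory skeleton's behavior on a color sequence and is consistent with the earlier uses of $\extendFunc{\mu}$ in the paper.
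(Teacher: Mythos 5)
Your proof is correct and follows essentially the same route as the paper's: an induction on the length of $\rho$ with the base case $\rho = \epsilon$ and a case split in the inductive step on whether the appended letter is $k_\Aconc$ (where $\Seq(\mu)$ is the identity and the projection erases the letter) or a color in $\colSet$ (where $\Seq(\mu)$ agrees with $\mu$ and the projection keeps the letter). Your remark about the iterated-update convention for $\extendFunc{\mu}$ matches the paper's usage, so there is nothing to add.
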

\begin{proof}
	Consider a non-empty set $M$, $\mu: M \times \colSet \rightarrow M$, and $m 
	\in M$. We proceed by induction on $\rho \in \colSetSeq^*$. We have 
	$\extendFunc{\Seq(\mu)}(m,\epsilon) = m =
	\extendFunc{\mu}(m,\projec{\colSetSeq}{\colSet}(\epsilon))$. Now, let $\rho 
	= \rho' \cdot k \in \colSetSeq^+$ and assume that 
	$\extendFunc{\Seq(\mu)}(m,\rho') =	\extendFunc{\mu}(m,\projec{\colSetSeq}{\colSet}(\rho'))$. 
	
	If $k = k_\Aconc$, we have: 
	\begin{align*}
		\extendFunc{\Seq(\mu)}(m,\rho) & = 	\Seq(\mu)(\extendFunc{\Seq(\mu)}(m,\rho'),k_\Aconc) & \text{ by definition of } \extendFunc{\Seq(\mu)}\\
		&= 	\extendFunc{\Seq(\mu)}(m,\rho') & \text{ by definition of } \Seq(\mu)\\
		&= \extendFunc{\mu}(m,\projec{\colSetSeq}{\colSet}(\rho')) & \text{ by assumption }\\
		&= 	\extendFunc{\mu}(m,\projec{\colSetSeq}{\colSet}(\rho)) & \text{ since } \rho' = \rho \cdot k_\Aconc \text{ and } k_\Aconc \notin \colSet
	\end{align*}
	
	Otherwise:
	\begin{align*}
		\extendFunc{\Seq(\mu)}(m,\rho) & = 
		\Seq(\mu)(\extendFunc{\Seq(\mu)}(m,\rho'),k) & \text{ by definition of } \extendFunc{\Seq(\mu)} \\ 
		& = \mu(\extendFunc{\Seq(\mu)}(m,\rho'),k) & \text{ by definition of } \Seq(\mu) \\ 
		& =	\mu(\extendFunc{\mu}(m,\projec{\colSetSeq}{\colSet}(\rho')),k) & \text{ by assumption } \\ 
		& =	\extendFunc{\mu}(m,\projec{\colSetSeq}{\colSet}(\rho') \cdot k) & \text{ by definition of } \extendFunc{\mu}\\
		& =	\extendFunc{\mu}(m,\projec{\colSetSeq}{\colSet}(\rho)) & \text{ since } \rho' = \rho \cdot k \text{ and } k_\Aconc \in \colSet
	\end{align*}
	In any case, we have $\extendFunc{\Seq(\mu)}(m,\rho) = \extendFunc{\mu}(m,\projec{\colSetSeq}{\colSet}(\rho))$.
\end{proof}

\subsection{Proof that the sequentialization of strategies is well defined}
\label{proof:remark5}

As it is proved below, the strategy $\Seq(\s)$ is well defined in the sense that if two different pairs of memory skeleton and action map implement the strategy $\s$, then the strategies implemented by the sequential versions of both of these pairs are also the same.
\begin{proof}[Mentioned~\ref{ref:sequen_strat}]
	We consider the case of Player $\A$. The case of Player $\B$ is analogous. 
	
	Consider a strategy $\s: \colSet^* \times Q \rightarrow A$ in the 
	concurrent game $\Aconc$ implemented by the memory skeleton 
	$\mathcal{M} = \langle M,\minit,\mu \rangle$ and the action map
	$\lambda: M \times Q \rightarrow A$. Let $\sigma: \colSetSeq^* \times V 
	\rightarrow A$ be the strategy implemented the memory 
	skeleton $\Seq(\mathcal{M})$ and the action map $\Seq(\lambda)$. 
	
	Let $\rho \in \colSetSeq^*$ and $v \in V$. We have $\sigma(\rho,v) = 
	\Seq(\lambda)(\extendFunc{\Seq(\mu)}(\minit,\rho),v)$. If $v \in V_\B$, then we have $\sigma(\rho,v) = \min_{<_\A} A$. Otherwise, if $v \in V_\A$, we have: $\sigma(\rho,v) = \lambda(\extendFunc{\Seq(\mu)}(\minit,\rho),v) = 
\lambda(\extendFunc{\mu}(\minit,\projec{\colSetSeq}{\colSet}(\rho)),v)$
	by Proposition~\ref{prop:projection_memory}. Furthermore, we have 
	$\lambda(\extendFunc{\mu}(\minit,\projec{\colSetSeq}{\colSet}(\rho)),v) = 
	\s(\projec{\colSetSeq}{\colSet}(\rho),v)$. That is, $\sigma(\rho,v) = 	
	\s(\projec{\colSetSeq}{\colSet}(\rho),v)$. Hence, in both cases, the value of $\sigma$ is independent of $\mathcal{M}$ and $\lambda$. Therefore, the value of the sequential strategy $\sigma = \Seq(\s)$ does not depend on the pair of memory skeleton and action map considered that implements the strategy $\s$.
\end{proof}

\subsection{Proof of Lemma~\ref{lem:sequential_win}}
\label{proof:lemma1}
	Before proceeding to the proof of this lemma, we introduce a definition and state and prove a useful lemma that will be used several times later on. 
\begin{definition}
	Consider a concurrent arena $\Aconc = \AcoloredConc$ and a pair of strategies $(\s_\A,\s_\B) \in \SetStrat{\Aconc}{\A} \times  \SetSt{\Aconc}{\B}$. We denote by $\nabla^{\s_\A,\s_\B}_\Aconc: Q^+ \rightarrow \Dist(\distribSet)$ the function ensuring, for all $\pi \in Q^+$ and $d \in \distribSet$:
	$$\nabla^{\s_\A,\s_\B}_\Aconc(\pi)[d] = \sum_{\substack{(a,b) \in A \times B \\ \delta(\head(\pi),a,b) = d}} \s_\A(\pi)[a] \cdot \s_\B(\pi)[b]$$
	
	If we now consider its sequential version $\ATurnConc = \AcoloredSeq$ and a pair of state strategies $(\sigma_\A,\sigma_\B) \in \SetStrat{\ATurnConc}{\A} \times \SetStrat{\ATurnConc}{\B}$ in $\ATurnConc$, we denote by $\nabla^{\sigma_\A,\sigma_\B}_{\ATurnConc}: (V_\A \cdot V_\B)^* \cdot V_\A \rightarrow \Dist(\distribSet_\B)$ the function ensuring, for all $\rho \in (V_\A \cdot V_\B)^* \cdot V_\A$ with $\head(\rho) = t \in V_\A = Q$ and $d \in \distribSet_\B$:
	$$\nabla^{\sigma_\A,\sigma_\B}_{\ATurnConc}(\rho)[d] = \sum_{\substack{(a,b) \in A \times B \\ \delta(t,a,b) = d}} \sigma_\A(\rho)[a] \cdot \sigma_\B(\rho \cdot (t,a))[b]$$
\end{definition}
That is, given two state strategies and a finite path $\pi \in Q^+$, $\nabla^{\s_\A,\s_\B}(\pi)$ gives the probability distribution over the set of Nature states. Note that the function $\deltaDistrib^{\s_\A,\s_\B}$ can be expressed with $\nabla^{\s_\A,\s_\B}$.
\begin{remark}
	\label{rmq:nabla_delta}
	For all $\pi \in Q^+$ and $q \in Q$, we have $$\deltaDistrib^{\s_\A,\s_\B}(\pi)[q] = \sum_{d \in \distribSet} \nabla^{\s_\A,\s_\B}(\pi)[d] \cdot \distribFunc(d)[q]$$
\end{remark}

Now, we have the following proposition:
\begin{proposition}
	Let $\Aconc = \AcoloredConc$ be a concurrent arena and $\ATurnConc = \AcoloredSeq$ be its sequential version. Consider a pair of state strategies $(\s_\A,\s_\B) \in \SetStrat{\Aconc}{\A} \times  \SetStrat{\Aconc}{\B}$ in $\Aconc$ and a pair of state strategies $(\sigma_\A,\sigma_\B) \in \SetStrat{\ATurnConc}{\A} \times  \SetStrat{\ATurnConc}{\B}$ in $\ATurnConc$. Assume that for all $\pi \in Q^+$% and $d \in \distribSet = \distribSet_\B$
	, we have: 
	\begin{displaymath}
		\sum_{\substack{\rho \in \inv{\projec{V}{Q}}[\pi] \\ \rho \in (V_\A \cdot V_\B)^* \cdot V_\A}} \prob{\Seq(\Aconc)}{\sigma_\A}{\sigma_\B}(\rho) \cdot \nabla^{\sigma_\A,\sigma_\B}_{\ATurnConc}(\rho) = 	\prob{\Seq(\Aconc)}{\sigma_\A}{\sigma_\B}(\inv{\projec{V}{Q}}[\cyl(\pi)]) \cdot  \nabla^{\s_\A,\s_\B}_\Aconc(\pi)
	\end{displaymath}
	
	Then, for all Borel winning set $W$, we have that $\prob{\Aconc}{\s_\A}{\s_\B}[U_W] = \prob{\Seq(\Aconc)}{\sigma_\A}{\sigma_\B}[U_{\Seq(W)}]$. 
	\label{prop:eq_nabla_eq_prob}
\end{proposition}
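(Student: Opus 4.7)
The plan is to derive the Borel equality from a cylinder-set equality, and prove the latter by induction on path length using the displayed hypothesis. First I would invoke Proposition~\ref{lem:sufficient_cond_prob_borel}, which reduces the claim to showing, for every $\pi \in Q^+$, that $\prob{\Aconc}{\s_\A}{\s_\B}(\cyl(\pi)) = \prob{\Seq(\Aconc)}{\sigma_\A}{\sigma_\B}(\inv{\projec{V}{Q}}[\cyl(\pi)])$. Applying that proposition yields equality of probabilities on all Borel sets of $Q^\omega$ with their preimages in $V^\omega$, which is the desired statement, because by Proposition~\ref{prop:projection_colors_colors_projection} one has $\extendFunc{\colFunc_\Aconc} = \extendFunc{\colFunc} \circ \projec{V}{Q}$ on runs of $\Seq(\Aconc)$, so $U_{\Seq(W)}$ agrees with $\inv{\projec{V}{Q}}[U_W]$ modulo a set of runs not occurring in $\Seq(\Aconc)$, which carries probability $0$.

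The induction base $|\pi| = 1$ is immediate: by Remark~\ref{rmq:reverse_projec_cylinder} the only preimage $\rho$ of a single state $q$ with $\head(\rho) = q$ lying in $(V_\A \cdot V_\B)^* \cdot V_\A$ is $\rho = q$ itself, so both sides evaluate to $1$ if $q = q_0$ and to $0$ otherwise. For the inductive step, write $\pi = \pi' \cdot q$ with $q' = \head(\pi')$, and decompose any relevant preimage as $\rho = \rho' \cdot (q', a) \cdot q$ where $\rho' \in \inv{\projec{V}{Q}}[\pi']$, $\head(\rho') = q'$, and $a \in A$. Exploiting the specific structure of $\Seq(\Aconc)$, the transition from $q' \in V_\A$ under a pair of actions $(a', b')$ deterministically lands in $(q', a')$, so the probability of reaching the intermediate state $(q', a)$ reduces to $\sigma_\A(\rho')[a]$; and the transition from $(q', a) \in V_\B$ under $(a'', b)$ yields $\delta(q', a, b)$ independently of $a''$. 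Summing over $a \in A$ and matching $b \in B$, one recognises exactly the expression $\sum_d \distribFunc(d)[q] \cdot \nabla^{\sigma_\A,\sigma_\B}_{\ATurnConc}(\rho)[d]$ appearing on the left-hand side of the hypothesis.

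Applying the hypothesis then collapses the sum over $\rho'$ into $\prob{\Seq(\Aconc)}{\sigma_\A}{\sigma_\B}(\inv{\projec{V}{Q}}[\cyl(\pi')]) \cdot \nabla^{\s_\A,\s_\B}_\Aconc(\pi')[d]$ for each $d$; invoking the inductive hypothesis on $\pi'$ replaces the first factor by $\prob{\Aconc}{\s_\A}{\s_\B}(\cyl(\pi'))$, and Remark~\ref{rmq:nabla_delta} turns the remaining $d$-sum into $\deltaDistrib^{\s_\A,\s_\B}(\pi')[q]$, giving exactly $\prob{\Aconc}{\s_\A}{\s_\B}(\cyl(\pi))$. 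The main obstacle is the bookkeeping in this expansion: one must carefully marginalise out Player $\B$'s action at the $V_\A$-step and Player $\A$'s action at the $V_\B$-step so that the two-step transitions in $\Seq(\Aconc)$ aggregate into the shape prescribed by the hypothesis. This is precisely why the hypothesis is phrased in terms of $\nabla$ rather than $\deltaDistrib$: it is the invariant that the sequentialization preserves step-by-step.
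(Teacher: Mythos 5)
Your proposal is correct and follows essentially the same route as the paper's proof: an induction on path length establishing $\prob{\Aconc}{\s_\A}{\s_\B}(\cyl(\pi)) = \prob{\Seq(\Aconc)}{\sigma_\A}{\sigma_\B}(\inv{\projec{V}{Q}}[\cyl(\pi)])$ via the two-step decomposition of preimage paths and the $\nabla$-hypothesis, followed by Proposition~\ref{lem:sufficient_cond_prob_borel} and the identification of $U_{\Seq(W)}$ with $\inv{\projec{V}{Q}}[U_W]$ up to a probability-zero set of non-occurring runs. The only quibble is a notational slip where you write $\nabla^{\sigma_\A,\sigma_\B}_{\ATurnConc}(\rho)$ for what should be $\nabla^{\sigma_\A,\sigma_\B}_{\ATurnConc}(\rho')$ (the function is defined on paths ending in $V_\A$), but your subsequent application of the hypothesis at $\pi'$ makes the intended meaning clear.
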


\begin{proof}
	%First, if we have that for all $\pi \cdot q \in Q^+$, $\stratMod{\sigma_\A}(\Ext{\sigma_\A}(\pi) \cdot q) = \stratMod{\s_\A}(\pi \cdot q)$ and for all $\pi \in Q^+$, $\stratMod{\s_\B}(\pi) = \stratMod{\sigma_\B}(\Ext{\sigma_\A}(\pi))$, then for $\pi \cdot q \in Q^+$, we have $\tail(\Ext{\sigma_\A}(\pi \cdot q)) = \Ext{\sigma_\A}(\pi) \cdot q$ therefore, $\delta(\head(\pi \cdot q),\stratMod{\sigma_\A}(\tail(\Ext{\sigma_\A}(\pi \cdot q))),\stratMod{\sigma_\B}(\Ext{\sigma_\A}(\pi \cdot q))) = \delta(\head(\pi \cdot q),\stratMod{\s_\A}(\pi \cdot q),\stratMod{\s_\B}(\pi \cdot q))$.
	
	In the following, $\prob{\Aconc}{\s_\A}{\s_\B}$ and 	$\prob{\Seq(\Aconc)}{\sigma_\A}{\sigma_\B}$ will be denoted $p_\s$ and $p_\sigma$ respectively. Let us show by induction that, for all $\pi \in Q^+$, we have
	$p_\s(\cyl(\pi)) = p_\sigma(\inv{\projec{V}{Q}}[\cyl(\pi)])$. This holds trivially when $\pi = q_0$. Assume now that the property holds for some $\pi' \in Q^+$, that is: $$p_\s(\cyl(\pi')) = p_\sigma(\inv{\projec{V}{Q}}[\cyl(\pi')])$$
	Let $\pi = \pi' \cdot q \in Q^+$ and $\rho \in \inv{\projec{V}{Q}}[\pi]$ such that $\head(\rho) = \head(\pi) = q \in Q = V_\A$. Assume that $p_\sigma(\rho) > 0$. By Proposition~\ref{prop:projection_colors_colors_projection}, it follows that $\rho = \rho' \cdot r \cdot q$ with %$\head(\rho') = \head(\pi') \in V_\A$, 
	$r = (\head(\rho'),a_r) \in V_\B$ for some $a_r \in A$ and $\rho' \in (V_\A \cdot V_\B)^* \cdot V_\A$. We have that:
	\begin{align*}
	p_\sigma(\inv{\projec{V}{Q}}[\cyl(\pi)]) & = \sum_{\substack{\rho \in \inv{\projec{V}{Q}}[\pi] \\ \head(\rho) = \head(\pi)}} p_\sigma(\cyl(\rho)) %& \text{ by Remark~\ref{rmq:reverse_projec_cylinder}} 
	\\
	& = \sum_{\substack{\rho = \rho' \cdot r \cdot q \in \inv{\projec{V}{Q}}[\pi] \\ \rho \in (V_\A \cdot V_\B)^+ \cdot V_\A}} p_\sigma(\rho') \cdot \deltaDistrib^{\sigma_\A,\sigma_\B}(\rho')[r] \cdot \deltaDistrib^{\sigma_\A,\sigma_\B}(\rho' \cdot r)[q] %& \text{ since } p_\sigma(\cyl(\rho)) = p_\sigma(\rho)
	\\
	%& = \deltaDistrib^{\s_\A,\s_\B}(\pi')[q] \cdot \left( \sum_{\rho = \rho' \cdot r \cdot q \in \inv{\projec{V}{Q}}[\pi]} p_\sigma(\rho') \cdot \deltaDistrib^{\sigma_\A,\sigma_\B}(\rho')[r] \right) & \text{ by (1)} \\
	& = \sum_{\substack{\rho' \in \inv{\projec{V}{Q}}[\pi'] \\ \rho' \in (V_\A \cdot V_\B)^* \cdot V_\A}} p_\sigma(\rho') \cdot \left(\sum_{r \in V_\B} \deltaDistrib^{\sigma_\A,\sigma_\B}(\rho')[r] \cdot \deltaDistrib^{\sigma_\A,\sigma_\B}(\rho' \cdot r)[q] \right) \\
	& = \sum_{\substack{\rho' \in \inv{\projec{V}{Q}}[\pi'] \\ \rho' \in (V_\A \cdot V_\B)^* \cdot V_\A}} p_\sigma(\rho') \cdot \left(\sum_{a_r \in A} \deltaDistrib^{\sigma_\A,\sigma_\B}(\rho')[(\head(\rho'),a_r)] \cdot \deltaDistrib^{\sigma_\A,\sigma_\B}(\rho' \cdot (\head(\rho'),a_r))[q] \right)
	\end{align*}
	
	In addition, for $\rho' \in (V_\A \cdot V_\B)^* \cdot V_\A$ and $a_r \in A$, by definition of $\deltaDistrib_\Aconc$ and since $\head(\rho') \in V_\A$, we have:
	\begin{align*}
	\deltaDistrib^{\sigma_\A,\sigma_\B}(\rho')[(\head(\rho'),a_r)] & = \sum_{a \in \Supp(\sigma_\A(\rho'))} \sum_{b \in \Supp(\sigma_\B(\rho'))} \sigma_\A(\rho')[a] \cdot \sigma_\B(\rho')[b] \cdot \deltaDistrib_\Aconc(\head(\rho'),a,b)[(\head(\rho'),a_r)]\\
	& = \sigma_\A(\rho')[a_r]
	\end{align*}
	
	Furthermore, for $r = (\head(\rho),a_r)$ we also have:
	\begin{align*}
	\deltaDistrib^{\sigma_\A,\sigma_\B}(\rho' \cdot r)[q] & = \sum_{a \in \Supp(\sigma_\A(\rho' \cdot r))} \sum_{b \in \Supp(\sigma_\B(\rho' \cdot r))} \sigma_\A(\rho' \cdot r)[a] \cdot \sigma_\B(\rho' \cdot r)[b] \cdot \deltaDistrib_\Aconc(r,a,b)[q]\\
	& = \sum_{a \in \Supp(\sigma_\A(\rho' \cdot r))} \sum_{b \in \Supp(\sigma_\B(\rho' \cdot r))} \sigma_\A(\rho' \cdot r)[a] \cdot \sigma_\B(\rho' \cdot r)[b] \cdot \deltaDistrib(\head(\rho'),a_r,b)[q]\\
	& = \sum_{b \in \Supp(\sigma_\B(\rho' \cdot r))}\sigma_\B(\rho' \cdot r)[b] \cdot \deltaDistrib(\head(\rho'),a_r,b)[q]\\
	& = \sum_{d \in \distribSet} \sum_{\substack{b \in \Supp(\sigma_\B(\rho' \cdot r)) \\ \delta(\head(\rho',a_r,b)) = d}} \sigma_\B(\rho' \cdot r)[b] \cdot \distribFunc(d)[q]\\
	& = \sum_{d \in \distribSet} \distribFunc(d)[q] \cdot \left( \sum_{\substack{b \in \Supp(\sigma_\B(\rho' \cdot r)) \\ \delta(\head(\rho',a_r,b)) = d}} \sigma_\B(\rho' \cdot r)[b]\right)
	\end{align*}
	
	By combining these three results, it follows that:
	\begin{align*}
	p_\sigma(\inv{\projec{V}{Q}}[\cyl(\pi)]) & = \sum_{\substack{\rho' \in \inv{\projec{V}{Q}}[\pi'] \\ \rho' \in (V_\A \cdot V_\B)^* \cdot V_\A}} p_\sigma(\rho') \cdot \left(\sum_{a_r \in A} \deltaDistrib^{\sigma_\A,\sigma_\B}(\rho')[(\head(\rho'),a_r)] \cdot \deltaDistrib^{\sigma_\A,\sigma_\B}(\rho' \cdot (\head(\rho'),a_r))[q] \right) \\
	& = \sum_{\substack{\rho' \in \inv{\projec{V}{Q}}[\pi'] \\ \rho' \in (V_\A \cdot V_\B)^* \cdot V_\A}} p_\sigma(\rho') \cdot \left(\sum_{a_r \in A} \sigma_\A(\rho')[a_r] \cdot \left( \sum_{d \in \distribSet} \distribFunc(d)[q] \cdot \left( \sum_{\substack{b \in \Supp(\sigma_\B(\rho' \cdot r)) \\ \delta(\head(\rho'),a_r,b) = d}} \sigma_\B(\rho' \cdot r)[b]\right) \right) \right) \\
	& = \sum_{d \in \distribSet} \distribFunc(d)[q] \cdot \left(\sum_{\substack{\rho' \in \inv{\projec{V}{Q}}[\pi'] \\ \rho' \in (V_\A \cdot V_\B)^* \cdot V_\A}} p_\sigma(\rho') \cdot \left(\sum_{a_r \in A} \sigma_\A(\rho')[a_r] \cdot \sum_{\substack{b \in \Supp(\sigma_\B(\rho' \cdot r)) \\ \delta(\head(\rho'),a_r,b) = d}} \sigma_\B(\rho' \cdot r)[b]\right) \right) \\
	& = \sum_{d \in \distribSet} \distribFunc(d)[q] \cdot \left(\sum_{\substack{\rho' \in \inv{\projec{V}{Q}}[\pi'] \\ \rho' \in (V_\A \cdot V_\B)^* \cdot V_\A}} p_\sigma(\rho') \cdot \left(\sum_{\substack{a_r \in \\ \Supp(\sigma_\A(\rho'))}} \sum_{\substack{b \in \Supp(\sigma_\B(\rho' \cdot r)) \\ \delta(\head(\rho'),a_r,b) = d}} \sigma_\A(\rho')[a_r] \cdot \sigma_\B(\rho' \cdot r)[b]\right) \right) \\
	& = \sum_{d \in \distribSet} \distribFunc(d)[q] \cdot \left(\sum_{\substack{\rho' \in \inv{\projec{V}{Q}}[\pi'] \\ \rho' \in (V_\A \cdot V_\B)^* \cdot V_\A}} p_\sigma(\rho') \cdot \nabla^{\sigma_\A,\sigma_\B}_{\ATurnConc}(\rho')[d] \right) 
	\; \; \; \; \; \text{ by definition of } \nabla^{\sigma_\A,\sigma_\B}_{\ATurnConc}\\
	& = \sum_{d \in \distribSet} \distribFunc(d)[q] \cdot \left( p_\sigma(\inv{\projec{V}{Q}}[\cyl(\pi')]) \cdot  \nabla^{\s_\A,\s_\B}_\Aconc(\pi')[d] \right)  
	\; \; \; \; \; \; \; \; \text{ by hypothesis of the proposition }\\
	& = p_\s(\cyl(\pi')) \cdot \left( \sum_{d \in \distribSet} \distribFunc(d)[q] \cdot \nabla^{\s_\A,\s_\B}_\Aconc(\pi')[d] \right)  
	\; \; \; \; \; \; \; \; \; \; \; \; \; \; \; \; \; \; \; \text{ by the induction hypothesis } \\
	& = p_\s(\cyl(\pi')) \cdot 	\deltaDistrib^{\s_\A,\s_\B}(\pi')[q] = p_\s(\pi') \cdot \deltaDistrib^{\s_\A,\s_\B}(\pi')[q]  
	\; \; \; \; \; \; \; \; \; \; \; \; \; \text{ by Remark~\ref{rmq:nabla_delta} } \\
	& = p_\s(\pi) = p_\s(\cyl(\pi))
	\end{align*}
	
	Overall, we have shown that, for all $\pi \in Q^+$, we have $p_\s(\cyl(\pi)) = p_\sigma(\inv{\projec{V}{Q}}[\cyl(\pi)])$. Then, for all Borel winning set $W$, by Proposition~\ref{lem:sufficient_cond_prob_borel} (since $U_W$ is Borel as the preimage of a Borel set by a continuous function), we have: $$p_\s[U_W] = p_\sigma[\inv{\projec{V}{Q}}[U_W]]$$
	
	We want to show that $p_\s[U_W] = p_\sigma[U_{\Seq(W)}]$. For all Borel set $B \subseteq V^\omega$, we have $p_\sigma[B] = p_\sigma[B \cap \OutStrat{\ATurnConc}^\omega]$ (since $p_\sigma[V^\omega \setminus  \OutStrat{\ATurnConc}^\omega] = p_\sigma[\cup_{\pi \in V^+ \setminus \OutStrat{\ATurnConc}^+} \cyl(\pi)] = 0$).
	%We have, $\inv{\projec{V}{Q}}[U_W] = \inv{\projec{V}{Q}}[\inv{\extendFunc{\colFunc}}[W]] = \inv{(\extendFunc{\colFunc} \circ \projec{V}{Q})}[W]$. 
	In addition:%, by Proposition~\ref{prop:projection_colors_colors_projection}: $$\inv{(\extendFunc{\colFunc} \circ \projec{V}{Q})}[W] \cap \OutStrat{\ATurnConc}^\omega = \inv{(\projec{\colSet_\Aconc}{\colSet} \circ \extendFunc{\colFunc_\Aconc})}[W] \cap \OutStrat{\ATurnConc}^\omega$$ Furthermore, we have:
	\begin{align*}
	\inv{\projec{V}{Q}}[U_W] \cap \OutStrat{\ATurnConc}^\omega &= \inv{\projec{V}{Q}}[\inv{\extendFunc{\colFunc}}[W]] \cap \OutStrat{\ATurnConc}^\omega & \text{ by definition of }U_{W}\\
	&= \inv{(\extendFunc{\colFunc} \circ \projec{V}{Q})}[W] \cap \OutStrat{\ATurnConc}^\omega & \\
	& = \inv{(\projec{\colSet_\Aconc}{\colSet} \circ \extendFunc{\colFunc_\Aconc})}[W] \cap \OutStrat{\ATurnConc}^\omega & \text{ by Proposition~\ref{prop:projection_colors_colors_projection}}\\ 
	& = \inv{\extendFunc{\colFunc_\Aconc}}[\inv{\projec{\colSet_\Aconc}{\colSet}}[W]] \cap \OutStrat{\ATurnConc}^\omega & \\ & = \inv{\extendFunc{\colFunc_\Aconc}}[\Seq(W)] \cap \OutStrat{\ATurnConc}^\omega & \text{ by definition of }\Seq(W) \\ & = U_{\Seq(W)} \cap \OutStrat{\ATurnConc}^\omega & \text{ by definition of }U_{\Seq(W)}
	\end{align*}
	%Furthermore, for all Borel set $B \subseteq V^\omega$, we have $p_\sigma[B] = p_\sigma[B \cap \OutStrat{\ATurnConc}^\omega]$ (since $p_\sigma[V^\omega \setminus  \OutStrat{\ATurnConc}^\omega] = p_\sigma[\cup_{\pi \in V^+ \setminus \OutStrat{\ATurnConc}^\omega} \cyl(\pi)] = 0$). It follows that (as every set considered here are Borel), $p_\sigma[\inv{\projec{V}{Q}}[U_W]] = p_\sigma[U_{\Seq(W)}]$.
	
	We can deduce that %Overall, we have 
	$p_\s[U_W] = p_\sigma[\inv{\projec{V}{Q}}[U_W]] = p_\sigma[U_{\Seq(W)}]$.
\end{proof}

We can proceed to the proof of Lemma~\ref{lem:sequential_win}.
\begin{proof}[Of Lemma~\ref{lem:sequential_win} for Player $\A$]
	%We prove the result for strategies for Player $\A$, the case of Player $\B$ is analogous. 
	%We prove the result for strategies of Player $\B$, the case of Player $\A$ is analogous. 
	Consider the color strategy $\s_\A: \colSet^* \times Q \rightarrow A \in \SetColStrat{\Aconc}{\A}$ for Player $\A$ in the concurrent arena $\Aconc$. In the rest of the proof, we denote the color strategy $\Seq(\s_\A) \in \SetColStrat{\ATurnConc}{\A}$ by $\sigma_\A$. For a state strategy $\sigma_\B: V^+ \rightarrow \Dist(B) \in \SetStrat{\ATurnConc}{\B}$ for Player $\B$ in the turn-based game $\Seq(\Aconc)$, we exhibit a state strategy $\s_\B: Q^+ \rightarrow \Dist(B) \in \SetStrat{\Aconc}{\B}$ in the concurrent game $\Aconc$ such that $\prob{\Aconc}{\stratMod{\s_\A}}{\s_\B}[U_W] = \prob{\Seq(\Aconc)}{\stratMod{\sigma_\A}}{\sigma_\B}[U_{\Seq(W)}]$. This will show that: 
	$$\forall \sigma_\B \in \SetStrat{\Seq(\Aconc)}{\B},\; \exists \s_\A \in \SetStrat{\Aconc}{\A},\; \prob{\Aconc}{\stratMod{\s_\A}}{\s_\B}[U_W] = \prob{\Seq(\Aconc)}{\stratMod{\sigma_\A}}{\sigma_\B}[U_{\Seq(W)}]$$
	That is, 
	$$\val{\Seq(\Aconc)}{\sigma_\A}[\Seq(W)] = \inf_{\sigma_\B \in \SetStrat{\Seq(\Aconc)}{\B}}  \prob{\Seq(\Aconc)}{\stratMod{\sigma_\A}}{\sigma_\B}[U_{\Seq(W)}] \geq \inf_{\s_\B \in \SetStrat{\Aconc}{\B}} \prob{\Aconc}{\stratMod{\s_\A}}{\s_\B}[U_W] = \val{\Aconc}{\s_\A}[W]$$
	
	%	Note that, if we prove an analogous property in the case of Player $\A$, we would have $\val{\Seq(\Aconc)}{\Seq(\s_\A)}[\Seq(W)] \geq \val{\Aconc}{\s_\A}[W]$ since we have an infimum instead of a supremum.
	
	First, consider some $\rho \in (V_\A \cdot V_\B)^* \cdot V_\A$, then we have:% with $\head(\pi) = q' \in Q$, we have:
	\begin{align*}
	\stratMod{\sigma_\A}(\rho) & = \sigma_\A(\extendFunc{\colFunc_\Aconc}(\rho),\head(\rho)) & \text{ by definition of } \stratMod{\sigma_\B}\\	
	&= \Seq(\lambda)(\extendFunc{\Seq(\mu)}(\minit,\extendFunc{\colFunc_\Aconc}(\rho),\head(\rho)) & \text{ by definition of } \sigma_\B\\
	&=  \Seq(\lambda)(\extendFunc{\Seq(\mu)}(\minit,\extendFunc{\colFunc}(\projec{V}{Q}(\rho))_{k_\Aconc}),\head(\rho)) & \text{ by Remark~\ref{rmq:sequence_col_path_ext}}\\
	&= \Seq(\lambda)(\extendFunc{\mu}(\minit,\extendFunc{\colFunc}(\projec{V}{Q}(\rho))),\head(\rho)) & \text{ by Proposition~\ref{prop:projection_memory}}\\
	&= \lambda(\extendFunc{\mu}(\minit,\extendFunc{\colFunc}(\projec{V}{Q}(\rho))),\head(\rho)) & \text{ by definition of }\Seq(\lambda)\\
	%&= \Seq(\lambda)(\extendFunc{\mu}(\minit,\extendFunc{\colFunc}(\pi \cdot q)),q) & \text{ since } \head(\pi) = q'\\
	%&= \lambda(\extendFunc{\mu}(\minit,\extendFunc{\colFunc}(\pi \cdot q)),q) & \text{ by definition of } \Seq(\lambda) \\
	&= \s_\A(\extendFunc{\colFunc}(\projec{V}{Q}(\rho)),\head(\rho)) & \text{ by definition of } \s_\B\\
	&= \stratMod{\s_\A}(\projec{V}{Q}(\rho)) & \text{ since } \head(\projec{V}{Q}(\rho)) = \head(\rho)
	\end{align*}
	
	Consider now a state strategy $\sigma_\B: V^+ \rightarrow \Dist(B) \in \SetStrat{\ATurnConc}{\B}$ for Player $\B$. Let us denote by $p_\sigma$ the probability distribution $\prob{\Seq(\Aconc)}{\stratMod{\sigma_\A}}{\sigma_\B}$.	For all $\pi = \pi_0 \cdot \pi_n \in Q^+$, we define the path $\rho^\pi \in (V_\A \cdot V_\B)^* \cdot V_\A$ in the following way:
	\begin{displaymath}
		\rho^\pi = \pi_0 \cdot (\pi_0,\stratMod{\sigma_\A}(\rho^\pi_{\leq 1})) \cdots \pi_{n-1} \cdot (\pi_{n-1},\stratMod{\sigma_\A}(\rho^\pi_{\leq 2n-1})) \cdot \pi_n
	\end{displaymath}
	Since the strategy $\sigma_\A$ is deterministic, for all $\pi \in Q^+$, the path $\rho^\pi \in (V_\A \cdot V_\B)^* \cdot V_\A$ is the only path that can ensure $\projec{V}{Q}(\rho^\pi) = \pi$ and $p_\sigma(\rho^\pi) > 0$. In particular, we have $p_\sigma(\rho^\pi) = p_\sigma(\inv{\projec{V}{Q}}[\cyl(\pi)])$. Now, we define a strategy $\s_\B: Q^+ \rightarrow \Dist(B)$ by, for all $\pi \in Q^+$, we have:
	\begin{displaymath}
		\s_\B(\pi) = \sigma_\B(\rho^\pi \cdot (\head(\rho^\pi),\stratMod{\sigma_\A}(\rho^\pi)))
	\end{displaymath}	
	%Note that the support of the strategy $\s_\A$ is countable since the set of paths $\rho \in \inv{\projec{V}{Q}}[\pi]$ such that $p_\sigma(\rho) > 0$ is countable. We also denote by $p_\s$ the probability distribution $\prob{\Aconc}{\s_\A}{\stratMod{\s_\B}}$. 
	Now, let $\pi \in Q^+$ and $d \in \distribSet_\B = \distribSet$, we have: 
	\begin{align*}
	\sum_{\substack{\rho \in \inv{\projec{V}{Q}}[\pi] \\ \rho \in (V_\A \cdot V_\B)^* \cdot V_\A}} p_\sigma(\rho) \cdot \nabla^{\sigma_\A,\stratMod{\sigma_\B}}_{\ATurnConc}(\rho)[d] & =
	p_\sigma(\rho^\pi) \cdot \left( \sum_{\substack{(a,b) \in A \times B \\ \delta(\head(\rho),a,b) = d}} \sigma_\A(\rho)[a] \cdot \stratMod{\sigma_\B}(\rho^\pi \cdot (\head(\rho),a))[b] \right) \\ 
	& = p_\sigma(\rho^\pi) \cdot \left( \sum_{\substack{b \in B \\ \delta(\head(\rho),\stratMod{\sigma_\A}(\rho^\pi),b) = d}} \sigma_\B(\rho^\pi \cdot (\head(\rho),\stratMod{\sigma_\A}(\rho^\pi)))[b] \right) \\ 
	& = p_\sigma(\rho^\pi) \cdot \sum_{\substack{b \in B \\ \delta(\head(\pi),\stratMod{\s_\A}(\pi),b) = d}} \s_\B(\pi)[b]  \\ 
	& = p_\sigma(\inv{\projec{V}{Q}}[\cyl(\pi)]) \cdot  \nabla^{\s_\A,\s_\B}_\Aconc(\pi)[d]
	\end{align*}
	As this holds for all $\pi \in Q^+$, we conclude by applying Proposition~\ref{prop:eq_nabla_eq_prob}. That is, we obtain $	\val{\Seq(\Aconc)}{\sigma_\A}[\Seq(W)] \geq \val{\Aconc}{\s_\A}[W]$, from which we get $\val{\Seq(\Aconc)}{\A}[\Seq(W)] \geq \val{\Aconc}{\A}[W]$.
\end{proof}

\begin{proof}[Of Lemma~\ref{lem:sequential_win} for Player $\B$]
	%We prove the result for strategies for Player $\A$, the case of Player $\B$ is analogous. 
	%We prove the result for strategies of Player $\B$, the case of Player $\A$ is analogous. 
	Consider the color strategy $\s_\B: \colSet^* \times Q \rightarrow B \in \SetColStrat{\Aconc}{\B}$ for Player $\B$ in the concurrent arena $\Aconc$. In the rest of the proof, we denote the color strategy $\Seq(\s_\B) \in \SetColStrat{\ATurnConc}{\B}$ by $\sigma_\B$. For a state strategy $\sigma_\A: V^+ \rightarrow \Dist(A) \in \SetStrat{\ATurnConc}{\A}$ for Player $\A$ in the turn-based game $\Seq(\Aconc)$, we exhibit a state strategy $\s_\A: Q^+ \rightarrow \Dist(A) \in \SetStrat{\Aconc}{\A}$ in the concurrent game $\Aconc$ such that $\prob{\Aconc}{\s_\A}{\stratMod{\s_\B}}[U_W] = \prob{\Seq(\Aconc)}{\sigma_\A}{\stratMod{\sigma_\B}}[U_{\Seq(W)}]$. This will show that: 
	$$\forall \sigma_\A \in \SetStrat{\Seq(\Aconc)}{\A},\; \exists \s_\A \in \SetStrat{\Aconc}{\A},\; \prob{\Aconc}{\s_\A}{\stratMod{\s_\B}}[U_W] = \prob{\Seq(\Aconc)}{\sigma_\A}{\stratMod{\sigma_\B}}[U_{\Seq(W)}]$$
	That is, 
	$$\val{\Seq(\Aconc)}{\sigma_\B}[\Seq(W)] = \sup_{\sigma_\A \in \SetStrat{\Seq(\Aconc)}{\A}}  \prob{\Seq(\Aconc)}{\sigma_\A}{\stratMod{\sigma_\B}}[U_{\Seq(W)}] \leq \sup_{\s_\A \in \SetStrat{\Aconc}{\A}} \prob{\Aconc}{\s_\A}{\stratMod{\s_\B}}[U_W] = \val{\Aconc}{\s_\B}[W]$$
	
%	Note that, if we prove an analogous property in the case of Player $\A$, we would have $\val{\Seq(\Aconc)}{\Seq(\s_\A)}[\Seq(W)] \geq \val{\Aconc}{\s_\A}[W]$ since we have an infimum instead of a supremum.
	
	First, consider some $\rho \in (V_\A \cdot V_\B)^* \cdot V_\A$ and $r = (\head(\rho),a) \in V_\B$, we have:% with $\head(\pi) = q' \in Q$, we have:
	\begin{align*}
	\stratMod{\sigma_\B}(\rho \cdot r) & = \sigma_\B(\extendFunc{\colFunc_\Aconc}(\rho \cdot r),r) & \text{ by definition of } \stratMod{\sigma_\B}\\	
	&= \Seq(\lambda)(\extendFunc{\Seq(\mu)}(\minit,\extendFunc{\colFunc_\Aconc}(\rho \cdot r),r) & \text{ by definition of } \sigma_\B\\
	&=  \Seq(\lambda)(\extendFunc{\Seq(\mu)}(\minit,\extendFunc{\colFunc}(\projec{V}{Q}(\rho))_{k_\Aconc} \cdot k_\Aconc),r) & \text{ by Remark~\ref{rmq:sequence_col_path_ext}}\\
	&= \Seq(\lambda)(\extendFunc{\mu}(\minit,\extendFunc{\colFunc}(\projec{V}{Q}(\rho))),r) & \text{ by Proposition~\ref{prop:projection_memory}}\\
	&= \lambda(\extendFunc{\mu}(\minit,\extendFunc{\colFunc}(\projec{V}{Q}(\rho))),\head(\rho)) & \text{ by definition of }\Seq(\lambda)\\
	%&= \Seq(\lambda)(\extendFunc{\mu}(\minit,\extendFunc{\colFunc}(\pi \cdot q)),q) & \text{ since } \head(\pi) = q'\\
	%&= \lambda(\extendFunc{\mu}(\minit,\extendFunc{\colFunc}(\pi \cdot q)),q) & \text{ by definition of } \Seq(\lambda) \\
	&= \s_\B(\extendFunc{\colFunc}(\projec{V}{Q}(\rho)),\head(\rho)) & \text{ by definition of } \s_\B\\
	&= \stratMod{\s_\B}(\projec{V}{Q}(\rho)) & \text{ since } \head(\projec{V}{Q}(\rho)) = \head(\rho)
	\end{align*}
	
	Consider now a state strategy $\sigma_\A: V^+ \rightarrow \Dist(A) \in \SetStrat{\ATurnConc}{\A}$ for Player $\A$. Let us denote by $p_\sigma$ the probability distribution $\prob{\Seq(\Aconc)}{\sigma_\A}{\stratMod{\sigma_\B}}$.	We define a strategy $\s_\A: Q^+ \rightarrow \Dist(A)$ by, for all $\pi \in Q^+$, and $a \in A$, we have:
	\begin{displaymath}
		\s_\A(\pi)[a] = \frac{1}{p_\sigma(\inv{\projec{V}{Q}}[\cyl(\pi)])} \cdot \sum_{\substack{\rho \in \inv{\projec{V}{Q}}[\pi] \\ \rho \in (V_\A \cdot V_\B)^* \cdot V_\A}} p_\sigma(\rho) \cdot \sigma_a(\rho)[a]
	\end{displaymath}	
	Note that the support of the strategy $\s_\A$ is countable since the set of paths $\rho \in \inv{\projec{V}{Q}}[\pi]$ such that $p_\sigma(\rho) > 0$ is countable. We also denote by $p_\s$ the probability distribution $\prob{\Aconc}{\s_\A}{\stratMod{\s_\B}}$. Now, let $\pi \in Q^+$ and $d \in \distribSet_\B = \distribSet$, we have: 
	\begin{align*}
		\sum_{\substack{\rho \in \inv{\projec{V}{Q}}[\pi] \\ \rho \in (V_\A \cdot V_\B)^* \cdot V_\A}} p_\sigma(\rho) \cdot \nabla^{\sigma_\A,\stratMod{\sigma_\B}}_{\ATurnConc}(\rho)[d] & =
		\sum_{\substack{\rho \in \inv{\projec{V}{Q}}[\pi] \\ \rho \in (V_\A \cdot V_\B)^* \cdot V_\A}} p_\sigma(\rho) \cdot \left( \sum_{\substack{(a,b) \in A \times B \\ \delta(\head(\rho),a,b) = d}} \sigma_\A(\rho)[a] \cdot \stratMod{\sigma_\B}(\rho \cdot (\head(\rho),a))[b] \right) \\ 
		& =
		\sum_{\substack{\rho \in \inv{\projec{V}{Q}}[\pi] \\ \rho \in (V_\A \cdot V_\B)^* \cdot V_\A}} p_\sigma(\rho) \cdot \left( \sum_{\substack{(a,b) \in A \times B \\ \delta(\head(\rho),a,b) = d}} \sigma_\A(\rho)[a] \cdot \stratMod{\s_\B}(\pi)[b] \right) \\ 
		& = \sum_{\substack{(a,b) \in A \times B \\ \delta(\head(\rho),a,b) = d}} \stratMod{\s_\B}(\pi)[b] \cdot \left( \sum_{\substack{\rho \in \inv{\projec{V}{Q}}[\pi] \\ \rho \in (V_\A \cdot V_\B)^* \cdot V_\A}} p_\sigma(\rho) \cdot \sigma_\A(\rho)[a] \right) \\ 
		& = \sum_{\substack{(a,b) \in A \times B \\ \delta(\head(\rho),a,b) = d}} \stratMod{\s_\B}(\pi)[b] \cdot  p_\sigma(\inv{\projec{V}{Q}}[\cyl(\pi)]) \cdot \s_\A(\pi)[a] \\ 
		%& = p_\sigma(\inv{\projec{V}{Q}}[\cyl(\pi)]) \cdot \sum_{\substack{(a,b) \in A \times B \\ \delta(\head(\rho),a,b) = d}} \s_\A(\pi)[a] \cdot \s_\B(\pi)[b] \\ 
		& = p_\sigma(\inv{\projec{V}{Q}}[\cyl(\pi)]) \cdot  \nabla^{\s_\A,\s_\B}_\Aconc(\pi)[d]
	\end{align*}
	As this holds for all $\pi \in Q^+$, we conclude by applying Proposition~\ref{prop:eq_nabla_eq_prob}. That is, we obtain $	\val{\Seq(\Aconc)}{\sigma_\B}[\Seq(W)] \leq \val{\Aconc}{\s_\B}[W]$, from which we get $\val{\Seq(\Aconc)}{\B}[\Seq(W)] \leq \val{\Aconc}{\B}[W]$.
\end{proof}

\subsection{Proposition regarding the sequentialization of the update function}
\label{proof:Proposition5}
The definition of the parallelization of the update function ensures the following proposition.
\begin{proposition}[Mentioned~\ref{ref:parallelization_update}]
	For all non-empty set $M$, $\mu: M \times \colSetSeq \rightarrow M$, $m \in 
	M$, and $\rho \in \colSet^*$, the state of memory w.r.t. the update function $\Par(\mu)$ after the colors in $\rho$ have been seen is the 
	same as after the colors $\rho_{k_\Aconc}$ have been seen w.r.t. the update function $\mu$: $\extendFunc{\Par(\mu)}(m,\rho) = 
	\extendFunc{\mu}(m,\rho_{k_\Aconc})$. (Recall that %$\tail(\pi \cdot q) = \pi$ for all $\pi \cdot q \in Q^+$ and 
	$\rho_{k_\Aconc}$ is defined in Definition~\ref{def:rho_kc}).
	\label{prop:projection_memory_parallel}
\end{proposition}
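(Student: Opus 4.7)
The statement is a purely syntactic fact about how the update function behaves, so my plan is to prove it by straightforward induction on the length of $\rho \in \colSet^*$. The only ingredients needed are the definition $\Par(\mu)(m,k) = \mu(\mu(m,k_\Aconc),k)$, the inductive definition of the lift $\extendFunc{(\cdot)}$ given in Section~\ref{sec:preliminaries}, and the explicit description $\rho_{k_\Aconc} = k_\Aconc \cdot k_1 \cdot k_\Aconc \cdot k_2 \cdots k_\Aconc \cdot k_n$ from Definition~\ref{def:rho_kc}.

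For the base case $\rho = \varepsilon$, both sides reduce to $m$ immediately, since $\varepsilon_{k_\Aconc} = \varepsilon$ and $\extendFunc{\mathsf{f}}(m,\varepsilon) = m$ for any update function $\mathsf{f}$.

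For the inductive step, I would write $\rho = \rho' \cdot k$ with $\rho' \in \colSet^*$ and $k \in \colSet$, and observe from Definition~\ref{def:rho_kc} that $\rho_{k_\Aconc} = \rho'_{k_\Aconc} \cdot k_\Aconc \cdot k$. Then I would chain the following equalities:
\begin{align*}
\extendFunc{\Par(\mu)}(m,\rho' \cdot k)
 &= \Par(\mu)\bigl(\extendFunc{\Par(\mu)}(m,\rho'),\, k\bigr) && \text{(lift)} \\
 &= \Par(\mu)\bigl(\extendFunc{\mu}(m,\rho'_{k_\Aconc}),\, k\bigr) && \text{(induction hypothesis)} \\
 &= \mu\bigl(\mu(\extendFunc{\mu}(m,\rho'_{k_\Aconc}), k_\Aconc),\, k\bigr) && \text{(definition of } \Par(\mu)\text{)} \\
 &= \extendFunc{\mu}(m,\, \rho'_{k_\Aconc} \cdot k_\Aconc \cdot k) && \text{(lift)} \\
 &= \extendFunc{\mu}(m, \rho_{k_\Aconc}).
\end{align*}

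There is no real obstacle here: the definition of $\Par(\mu)$ as a ``two-step'' update (inserting a $k_\Aconc$ step before each real-color step) was engineered precisely to make this lemma go through. The only mild care needed is to ensure the case split in Definition~\ref{def:rho_kc} for $\rho_{k_\Aconc}$ is handled correctly at the empty word; everything else is a mechanical unfolding of definitions.
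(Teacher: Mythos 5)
Your proof is correct and follows essentially the same route as the paper's: an induction on the length of $\rho$, with the base case $\rho = \varepsilon$ and the inductive step unfolding $\rho_{k_\Aconc} = \rho'_{k_\Aconc} \cdot k_\Aconc \cdot k$ together with the definitions of the lift and of $\Par(\mu)$ (the paper merely writes the same chain of equalities in the opposite direction). No differences worth noting.
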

\begin{proof}
	Consider a non-empty set $M$, $\mu: M \times \colSetSeq \rightarrow M$, and $m \in M$. We proceed by induction on $\rho \in \colSet^*$. We have 
	$\extendFunc{\Par(\mu)}(m,\epsilon) = m =
	\extendFunc{\mu}(m,\epsilon) = \extendFunc{\mu}(m,\epsilon_{k_\Aconc})$. Now, let $\rho = \rho' \cdot k \in \colSet^+$ and assume that 
	$\extendFunc{\Par(\mu)}(m,\rho') = m' =
	\extendFunc{\mu}(m,\rho'_{k_\Aconc})$ for some $m' \in M$. Then, we have:
	\begin{align*}
		\extendFunc{\mu}(m,\rho_{k_\Aconc}) & = \extendFunc{\mu}(m,\rho'_{k_\Aconc} \cdot k_{\Aconc} \cdot k) & \text{ by definition of } \rho_{k_\Aconc}\\
		%& = \extendFunc{\mu}(m,\tail(\rho'_{k_\Aconc}) \cdot k_{\Aconc} \cdot k) & \text{ by definition of } \head\\
		& = \extendFunc{\mu}(m',k_{\Aconc} \cdot k) & \text{ since } \extendFunc{\mu}(m,\rho'_{k_\Aconc}) = m'\\
		& = \mu(\mu(m',k_\Aconc),k) & \text{ by definition of } \extendFunc{\mu}\\
		& = \Par(\mu)(m',k) & \text{ by definition of } \Par(\mu) \\
		& = \Par(\mu)(\extendFunc{\Par(\mu)}(m,\rho'),k) & \text{ since } \extendFunc{\Par(\mu)}(m,\rho') = m' \\
		& = \extendFunc{\Par(\mu)}(m,\rho' \cdot k)&  \text{ by definition of }\extendFunc{\Par(\mu)}\\
		& = \extendFunc{\Par(\mu)}(m,\rho) & \text{ since } \rho = \rho' \cdot k
	\end{align*}
\end{proof}

\subsection{Proof that the parallelization of strategies is well defined}
\label{proof:remark6}
As for the case of sequentialization, let us show that the parallelization of strategies is well defined.
\begin{proof}[In the case of Player $\A$][Mentioned~\ref{ref:parallelization}]
	Consider a strategy $\sigma_\A: \colSet^* \times V \rightarrow A$ in the 
	turn-based game $\ATurnConc$ implemented by the memory skeleton 
	$\mathcal{M} = \langle M,\minit,\mu \rangle$ and the action function 
	$\lambda: M \times V \rightarrow A$. Let $\s: \colSet^* \times Q 
	\rightarrow A$ be the strategy implemented the memory 
	skeleton $\Par(\mathcal{M})$ and the action map $\Par(\lambda)$. 
	
	Let $\rho \in \colSet^*$ and $q \in Q = V_\A \in V$. We have:
	\begin{align*}
		\s(\rho,q) & = \Par(\lambda)(\extendFunc{\Par(\mu)}(\minit,\rho),q) & \text{ by definition of } \s\\
		&= \lambda(\extendFunc{\Par(\mu)}(\minit,\rho),q) & \text{ by definition of } \Par(\lambda) \\
		&= \lambda(\extendFunc{\mu}(\minit,\rho_{k_\Aconc}),q) & \text{ by Proposition~\ref{prop:projection_memory_parallel} }\\
		&= \sigma_\A(\rho_{k_\Aconc},q) & \text{ by definition of } \sigma_\A
	\end{align*}
	%$\s(\rho,q) = \Par(\lambda)(\extendFunc{\Par(\mu)}(\minit,\rho),q) = \lambda(\extendFunc{\Par(\mu)}(\minit,\rho),q)$% for $m = \extendFunc{\Par(\mu)}(\minit,\rho)$. Furthermore, we have $\extendFunc{\Par(\mu)}(\minit,\rho) = \extendFunc{\mu}(\minit,\tail(\rho_{k_\Aconc}))$ by Proposition~\ref{prop:projection_memory_parallel}. Overall, we get $\s(\rho,q) = \lambda(\extendFunc{\mu}(\minit,\tail(\rho_{k_\Aconc})),q) = \sigma_\A(\tail(\rho_{k_\Aconc}),q)$. 
	Henceforth, the value of $\s$ does not depend on the pair of memory skeleton and action function considered that implements the strategy $\sigma_\A$.
	%we have $\lambda(\extendFunc{\Par(\mu)}(\minit,\rho),q) = \lambda(\extendFunc{\mu}(\minit,\rho_{k_\Aconc}),q) = \s(\rho_{k_\Aconc},q)$ by Proposition~\ref{prop:projection_memory_parallel}. In addition, we have $A_{m,q}^{\lambda} = \{ a \in A \mid (q,\Sq{q}{a}) = \lambda(\extendFunc{\Par(\mu)}(\minit,\rho),q) \} = \{ a \in A \mid (q,\Sq{q}{a}) = \s(\rho_{k_\Aconc},q) \}$. Hence, $A_{m,q}^{\lambda}$ does not depend on $\lambda$ or $\mu$. That is, the value of the parallel strategy $\s = \Par(\sigma)$ does not depend on the pair of memory skeleton and action function considered that implements the strategy $\sigma$.
\end{proof}
\begin{proof}[In the case of Player $\B$][Mentioned~\ref{ref:parallelization}]
	Consider a strategy $\sigma: \colSet^* \times V \rightarrow B$ in the 
	turn-based game $\ATurnConc$ implemented by the memory skeleton 
	$\mathcal{M} = \langle M,\minit,\mu \rangle$ and the action function 
	$\lambda: M \times  V \rightarrow B$. Let $\s: \colSet^* \times Q 
	\rightarrow B$ be the strategy implemented the memory 
	skeleton $\Par(\mathcal{M})$ and the action map $\Par(\lambda)$. 
	
	Let $\rho \in \colSet^*$ and $q \in Q$. We have: $$\s(\rho,q) = 
	\Par(\lambda)(\extendFunc{\Par(\mu)}(\minit,\rho),q) = \min_{<_\B} 	\reachStrat{\B}(\formNF_q,\Rech{\mu}{\lambda}{m}{q})$$ for $m = \extendFunc{\Par(\mu)}(\minit,\rho)$.
	Furthermore, for an action $a \in A$, we have:
	$$\rech{\mu}{\lambda}{m}{q}(a) = \delta(q,a,\lambda(\mu(m,k_\Aconc),(q,\Sq{q}{a})))$$
	
	In addition, we have:
	\begin{align*}
		\lambda(\mu(m,k_\Aconc),(q,\Sq{q}{a})) &= \lambda(\mu(\extendFunc{\Par(\mu)}(\minit,\rho),k_\Aconc),(q,\Sq{q}{a}))& \text{ by definition of }m\\
		&= \lambda(\mu(\extendFunc{\mu}(\minit,\rho_{k_\Aconc}),k_\Aconc),(q,\Sq{q}{a})) & \text{ by Proposition~\ref{prop:projection_memory_parallel} }\\ & = \lambda(\extendFunc{\mu}(\minit,\rho_{k_\Aconc} \cdot k_\Aconc),(q,\Sq{q}{a})) & \text{ by definition of }\extendFunc{\mu}\\
		&= \sigma(\rho_{k_\Aconc} \cdot k_\Aconc,(q,\Sq{q}{a})) & \text{ by definition of } \sigma
	\end{align*}
	It follows that: $$\rech{\mu}{\lambda}{m}{q}(a) = \delta(q,a,\sigma(\rho_{k_\Aconc} \cdot k_\Aconc,(q,\Sq{q}{a})))$$
	
	Note that, this does not depend on $\lambda$ nor $\mu$. Hence, this is also the case for $\Rech{\mu}{\lambda}{m}{q}$ and $\reachStrat{\B}(\formNF_q,\Rech{\mu}{\lambda}{m}{q})$. Therefore $\s$ does not depend on the pair of memory skeleton and action function implementing $\sigma$.
	%$\rech{\mu}{\lambda}{m}{q}(a) = \delta(q,a,\lambda(m',(q,\Sq{q}{a})))$ for $m' = \mu(m,k_\Aconc)$. Since $\extendFunc{\Par(\mu)}(\minit,\rho) = \extendFunc{\mu}(\minit,\tail(\rho_{k_\Aconc}))$ by Proposition~\ref{prop:projection_memory_parallel}, we have $m' = \mu(m,k_\Aconc) = \mu(\extendFunc{\Par(\mu)}(\minit,\rho),k_\Aconc) = \mu(\extendFunc{\mu}(\minit,\tail(\rho_{k_\Aconc})),k_\Aconc) = \extendFunc{\mu}(\minit,\rho_{k_\Aconc})$. That is, $\lambda(m',(q,\Sq{q}{a})) = \lambda(\extendFunc{\mu}(\minit,\rho_{k_\Aconc}),(q,\Sq{q}{a})) = \sigma(\rho_{k_\Aconc},(q,\Sq{q}{a}))$. Hence, $\rech{\mu}{\lambda}{m}{q}(a) = \delta(q,a,\sigma(\rho_{k_\Aconc},(q,\Sq{q}{a})))$ which does not depend on 
\end{proof}

\subsection{Proof of Lemma~\ref{lem:parallel_win_A}}
First, let us state a corollary to Proposition~\ref{prop:eq_nabla_eq_prob}.
\begin{corollary}
	\label{coro:equal_nabla}
	Let $\Aconc = \AcoloredConc$ be a concurrent arena and $\ATurnConc = \AcoloredSeq$ be its sequential version. Consider a pair of state strategies $(\s_\A,\s_\B) \in \SetStrat{\Aconc}{\A} \times  \SetStrat{\Aconc}{\B}$ in $\Aconc$ and a pair of state strategies $(\sigma_\A,\sigma_\B) \in \SetStrat{\ATurnConc}{\A} \times  \SetStrat{\ATurnConc}{\B}$ in $\ATurnConc$. Assume that for all $\rho \in (V_\A \cdot V_\B)^* \cdot V_\A$% and $d \in \distribSet = \distribSet_\B$
	, we have: 
	\begin{displaymath}
	\nabla^{\sigma_\A,\sigma_\B}_{\ATurnConc}(\rho) = \nabla^{\s_\A,\s_\B}_\Aconc(\projec{V}{Q}(\rho))
	\end{displaymath}
	
	Then, for all Borel winning set $W$, we have that $\prob{\Aconc}{\s_\A}{\s_\B}[U_W] = \prob{\Seq(\Aconc)}{\sigma_\A}{\sigma_\B}[U_{\Seq(W)}]$. 
\end{corollary}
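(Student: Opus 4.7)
The plan is to apply Proposition~\ref{prop:eq_nabla_eq_prob} directly: it suffices to verify, for every finite path $\pi \in Q^+$, the premise
\begin{displaymath}
\sum_{\substack{\rho \in \inv{\projec{V}{Q}}[\pi] \\ \rho \in (V_\A \cdot V_\B)^* \cdot V_\A}} \prob{\Seq(\Aconc)}{\sigma_\A}{\sigma_\B}(\rho) \cdot \nabla^{\sigma_\A,\sigma_\B}_{\ATurnConc}(\rho) = \prob{\Seq(\Aconc)}{\sigma_\A}{\sigma_\B}\bigl(\inv{\projec{V}{Q}}[\cyl(\pi)]\bigr) \cdot \nabla^{\s_\A,\s_\B}_\Aconc(\pi),
\end{displaymath}
after which its conclusion is exactly the conclusion of the corollary.

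First I would fix $\pi \in Q^+$ and exploit the hypothesis of the corollary: for every $\rho$ in the index set of the left-hand side we have $\projec{V}{Q}(\rho) = \pi$, and hence $\nabla^{\sigma_\A,\sigma_\B}_{\ATurnConc}(\rho) = \nabla^{\s_\A,\s_\B}_\Aconc(\pi)$. This value does not depend on $\rho$, so it can be factored out of the sum, reducing the premise to the scalar identity
\begin{displaymath}
\sum_{\substack{\rho \in \inv{\projec{V}{Q}}[\pi] \\ \rho \in (V_\A \cdot V_\B)^* \cdot V_\A}} \prob{\Seq(\Aconc)}{\sigma_\A}{\sigma_\B}(\rho) = \prob{\Seq(\Aconc)}{\sigma_\A}{\sigma_\B}\bigl(\inv{\projec{V}{Q}}[\cyl(\pi)]\bigr).
\end{displaymath}

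For this second identity, I would invoke Remark~\ref{rmq:reverse_projec_cylinder}, which writes $\inv{\projec{V}{Q}}[\cyl(\pi)]$ as a pairwise disjoint union of cylinders $\cyl(\rho)$ indexed by $\rho \in \inv{\projec{V}{Q}}[\pi]$ with $\head(\rho) = \head(\pi)$. Since $\head(\pi) \in Q = V_\A$, each such $\rho$ already ends in $V_\A$; moreover, Proposition~\ref{prop:projection_colors_colors_projection} guarantees that any finite path of nonzero probability in $\Seq(\Aconc)$ lies in $(V_\A \cdot V_\B)^* \cdot (\varepsilon + V_\A)$, so paths $\rho$ outside $(V_\A \cdot V_\B)^* \cdot V_\A$ contribute zero and can be discarded from the index set without changing the sum. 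Countable additivity together with the definitional equality $\prob{\Seq(\Aconc)}{\sigma_\A}{\sigma_\B}(\cyl(\rho)) = \prob{\Seq(\Aconc)}{\sigma_\A}{\sigma_\B}(\rho)$ then yields the required equality.

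Having verified the premise of Proposition~\ref{prop:eq_nabla_eq_prob} for every $\pi \in Q^+$, that proposition delivers $\prob{\Aconc}{\s_\A}{\s_\B}[U_W] = \prob{\Seq(\Aconc)}{\sigma_\A}{\sigma_\B}[U_{\Seq(W)}]$ for every Borel winning set $W$, which is the statement of the corollary. I do not foresee a real obstacle here: the corollary is essentially a convenient packaging of Proposition~\ref{prop:eq_nabla_eq_prob}, with the only bookkeeping being the (essentially immediate) check that, under the pointwise assumption on the $\nabla$-functions, the $\nabla$-factor is constant on the summation and the residual sum of path probabilities collapses to the probability of the preimage cylinder.
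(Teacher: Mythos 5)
Your proposal is correct and follows essentially the same route as the paper's own proof: factor the constant $\nabla^{\s_\A,\s_\B}_\Aconc(\pi)$ out of the sum using the hypothesis, collapse the residual sum of path probabilities to $\prob{\Seq(\Aconc)}{\sigma_\A}{\sigma_\B}(\inv{\projec{V}{Q}}[\cyl(\pi)])$ via Remark~\ref{rmq:reverse_projec_cylinder} and Proposition~\ref{prop:projection_colors_colors_projection}, and then invoke Proposition~\ref{prop:eq_nabla_eq_prob}. The paper's version is just a more compressed rendering of the same two-step calculation.
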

\begin{proof}
	Let $\pi \in Q^+$, we have: 
	\begin{align*}
		\sum_{\substack{\rho \in \inv{\projec{V}{Q}}[\pi] \\ \rho \in (V_\A \cdot V_\B)^* \cdot V_\A}} \prob{\Seq(\Aconc)}{\sigma_\A}{\sigma_\B}(\rho) \cdot \nabla^{\sigma_\A,\sigma_\B}_{\ATurnConc}(\rho) & = \sum_{\substack{\rho \in \inv{\projec{V}{Q}}[\pi] \\ \rho \in (V_\A \cdot V_\B)^* \cdot V_\A}} \prob{\Seq(\Aconc)}{\sigma_\A}{\sigma_\B}(\rho) \cdot \nabla^{\s_\A,\s_\B}_\Aconc(\pi) & \text{ by hypothesis }\\
		& = 	\prob{\Seq(\Aconc)}{\sigma_\A}{\sigma_\B}(\inv{\projec{V}{Q}}[\cyl(\pi)]) \cdot  \nabla^{\s_\A,\s_\B}_\Aconc(\pi) & \text{ by Remark~\ref{rmq:reverse_projec_cylinder} and Prposition~\ref{prop:projection_colors_colors_projection}}\\
	\end{align*}
	We conclude by applying Proposition~\ref{prop:eq_nabla_eq_prob}.
\end{proof}

We can now proceed to the proof of Lemma~\ref{lem:parallel_win_A}.
\label{proof:lemma2}
\begin{proof}[Of Lemma~\ref{lem:parallel_win_A}]
	Consider the color strategy $\sigma_\A: \colSet^* \times V \rightarrow A$ for Player $\A$ in the turn-based arena $\ATurnConc$. In the following of the proof, we denote $\Par(\sigma_\A)$ by $\s_\A$. %We proceed similarly to the proof of Lemma~\ref{lem:sequential_win}: 
	For a state strategy $\s_\B \in \SetStrat{\Aconc}{B}$ for Player $\B$ in the concurrent game $\Aconc$, we exhibit a strategy $\sigma_\B \in \SetStrat{\Seq(\Aconc)}{B}$ in its sequential version $\Seq(\Aconc)$ such that $\prob{\Aconc}{\s_\A}{\s_\B}[U_W] = \prob{\Seq(\Aconc)}{\sigma_\A}{\sigma_\B}[U_{\Seq(W)}]$. This will show that: 
	$$\forall \s_\B \in \SetStrat{\Aconc}{\B},\; \exists \sigma_\B \in \SetStrat{\Seq(\Aconc)}{\B},\; \prob{\Aconc}{\s_\A}{\s_\B}[U_W] = \prob{\Seq(\Aconc)}{\sigma_\A}{\sigma_\B}[U_{\Seq(W)}]$$
	From which we can conclude that:
	$$\val{\Aconc}{\s_\A}[W] = \inf_{\s_\B \in \SetStrat{\Aconc}{\B}} \prob{\Aconc}{\s_\A}{\s_\B}[U_W] \geq \inf_{\sigma_\B \in \SetStrat{\Seq(\Aconc)}{\B}}  \prob{\Seq(\Aconc)}{\sigma_\A}{\sigma_\B}[U_{\Seq(W)}] = \val{\Seq(\Aconc)}{\sigma_\A}[\Seq(W)]$$
	%$\val{\Aconc}{\s_\A}[W] \geq \val{\Seq(\Aconc)}{\sigma_\A}[\Seq(W)]$.
	%$\val{\Aconc}{\s_\A}[W] \leq \val{\Seq(\Aconc)}{\sigma_\A}[\Seq(W)]$ (note that, if we prove an analogous property in the case of Player $\B$, we would have $\val{\Aconc}{\s_\B}[W] \geq \val{\Seq(\Aconc)}{\Seq(\s_\B)}[\Seq(W)]$ since we have a supremum instead of an infimum). 
	
	We want to apply Corollary~\ref{coro:equal_nabla}. %We try to apply Lemma~\ref{lem:eq_strat_eq_prob}. 
	Let us assume that the strategy $\sigma_\A$ is implemented by a memory skeleton $\mathcal{M} = \langle M,\minit,\mu \rangle$ and an action map $\lambda$. The strategy $\s_\A = \Par(\sigma_\A)$ is implemented by the memory skeleton $\Par(\mathcal{M})$ and the action map $\Par(\lambda)$. For $\rho \in (V_\A \cdot V_\B)^* \cdot V_\A$, we have the following $(1)$:
	\begin{align*}
		\stratMod{\sigma_\A}(\rho) & = \sigma_\A(\extendFunc{\colFunc_\Aconc}(\rho),q) & \text{ by definition of } \stratMod{\sigma_\A} \\
		& = \lambda(\extendFunc{\mu}(\minit,\extendFunc{\colFunc_\Aconc}(\rho)),q) & \text{ by definition of } \sigma_\A \\
		& = \lambda(\extendFunc{\mu}(\minit,\extendFunc{\colFunc}(\projec{V}{Q}(\rho)_{k_\Aconc})),q) & \text{ by Remark~\ref{rmq:sequence_col_path_ext}} \\
		& = \Par(\lambda)(\extendFunc{\Par(\mu)}(\minit,\extendFunc{\colFunc}(\projec{V}{Q}(\rho))),q) &  \text{ by Proposition~\ref{prop:projection_memory_parallel}} \\
		& = \stratMod{\s_\A}(\projec{V}{Q}(\rho)) & \text{ by definition of } \stratMod{\s_\A}
	\end{align*}
	
	%$\stratMod{\s_\A}(\pi \cdot q) = \Par(\lambda)(\extendFunc{\Par(\mu)}(\minit,\extendFunc{\colFunc}(\pi \cdot q)),q) = \lambda(\extendFunc{\mu}(\minit,\tail(\extendFunc{\colFunc}(\pi \cdot q)_{k_\Aconc})),q)$ by Proposition~\ref{prop:projection_memory_parallel}. Furthermore, we have $\extendFunc{\colFunc}(\pi)_{k_\Aconc} = \extendFunc{\colFunc_\Aconc}(\Ext{\s_\A}(\pi))$ by Remark~\ref{rmq:sequence_col_path_ext}. Thus, we have $\tail(\extendFunc{\colFunc}(\pi \cdot q)_{k_\Aconc}) = \extendFunc{\colFunc_\Aconc}(\Ext{\sigma_\A}(\pi) \cdot q)$. That is, $\lambda(\extendFunc{\mu}(\minit,\tail(\extendFunc{\colFunc}(\pi \cdot q)_{k_\Aconc})),q) = \lambda(\extendFunc{\mu}(\minit,\extendFunc{\colFunc_\Aconc}(\Ext{\sigma_\A}(\pi) \cdot q)),q) = \sigma(\extendFunc{\colFunc_\Aconc}(\Ext{\sigma_\A}(\pi) \cdot q),q) = \stratMod{\sigma_\A}(\Ext{\sigma_\A}(\pi) \cdot q)$. 
	That is, for all $\rho \in (V_\A \cdot V_\B)^* \cdot V_\A$, we have $\stratMod{\s_\A}(\projec{V}{Q}(\rho)) = \stratMod{\sigma_\A}(\rho)$. Consider now a state strategy $\s_\B: Q^+ \rightarrow \Dist(B)$ for Player $\B$ in the concurrent game $\Aconc$. We consider a strategy $\sigma_\B: V^+ \rightarrow \Dist(B)$ for Player $\B$ in the turn-based game $\ATurnConc$ such that: for all $\rho \in (V_\A \cdot V_\B)^* \cdot V_\A$ and $r \in V_\B$, we have $\sigma_\B(\rho \cdot r) = \s_\B(\projec{V}{Q}(\rho))$. Now, consider $\rho \in (V_\A \cdot V_\B)^* \cdot V_\A$ and $d \in \distribSet = \distribSet_\B$, we have:
	\begin{align*}
		\nabla^{\stratMod{\sigma_\A},\sigma_\B}_{\ATurnConc}(\rho)[d] & = \sum_{\substack{(a,b) \in A \times B \\ \delta(t,a,b) = d}} \stratMod{\sigma_\A}(\rho)[a] \cdot \sigma_\B(\rho \cdot (t,\Sq{t}{a}))[b] & \text{ by definition of }\nabla^{\stratMod{\sigma_\A},\sigma_\B}_{\ATurnConc}\\
		& = \sum_{\substack{(a,b) \in A \times B \\ \delta(t,a,b) = d}} \stratMod{\s_\A}(\projec{V}{Q}(\rho))[a] \cdot \sigma_\B(\rho \cdot (t,\Sq{t}{a}))[b] & \text{ by }(1)\\
		& = \sum_{\substack{(a,b) \in A \times B \\ \delta(t,a,b) = d}} \stratMod{\s_\A}(\projec{V}{Q}(\rho))[a] \cdot \s_\B(\projec{V}{Q}(\rho))[b] & \text{ by definition of }\sigma_\B\\
		& = \nabla^{\stratMod{\s_\A},\s_\B}_{\ATurnConc}(\projec{V}{Q}(\rho))[d] &\text{ by definition of }\nabla^{\stratMod{\s_\A},\s_\B}_{\ATurnConc}
	\end{align*}

	Overall, we get $\nabla^{\stratMod{\sigma_\A},\sigma_\B}_{\ATurnConc}(\rho) = \nabla^{\stratMod{\s_\A},\s_\B}_{\ATurnConc}(\projec{V}{Q}(\rho))$ for all $\rho \in (V_\A \cdot V_\B)^* \cdot V_\A$. We conclude with Corollary~\ref{coro:equal_nabla}. As for the proof of Lemma~\ref{lem:sequential_win}, we then obtain that $\val{\Aconc}{\A}[W] \geq \val{\Seq(\Aconc)}{\A}[\Seq(W)]$.
\end{proof}

\subsection{Proof of Lemma~\ref{lem:parallel_win_B}}
\label{proof:lemma3}
\begin{proof}
	Consider the color strategy $\sigma_\B: \colSet^* \times V \rightarrow B$ for Player $\B$ in the turn-based game $\ATurnConc$. In the following of the proof, we denote $\Par(\sigma_\B)$ by $\s_\B$. We proceed similarly to the proof of Lemma~\ref{lem:parallel_win_A}: for a state strategy $\s_\A \in \SetStrat{\Aconc}{\A}$ for Player $\A$ in the concurrent game $\Aconc$, we exhibit a strategy $\sigma_\A \in \SetStrat{\Seq(\Aconc)}{\A}$ in the turn-based game $\Seq(\Aconc)$ such that $\prob{\Aconc}{\s_\A}{\s_\B}[U_W] = \prob{\Seq(\Aconc)}{\sigma_\A}{\sigma_\B}[U_{\Seq(W)}]$. This will show that: 
	$$\forall \s_\A \in \SetStrat{\Aconc}{\A},\; \exists \sigma_\A \in \SetStrat{\Seq(\Aconc)}{\A},\; \prob{\Aconc}{\s_\A}{\s_\B}[U_W] = \prob{\Seq(\Aconc)}{\sigma_\A}{\sigma_\B}[U_{\Seq(W)}]$$
	Henceforth:
	$$\val{\Aconc}{\s_\B}[W] = \sup_{\s_\A \in \SetStrat{\Aconc}{\A}} \prob{\Aconc}{\s_\A}{\s_\B}[U_W] \leq \sup_{\sigma_\A \in \SetStrat{\Seq(\Aconc)}{\A}}  \prob{\Seq(\Aconc)}{\sigma_\A}{\sigma_\B}[U_{\Seq(W)}] = \val{\Seq(\Aconc)}{\sigma_\B}[\Seq(W)]$$
	
	%That is, $\val{\Aconc}{\s_\B}[W] \leq \val{\Seq(\Aconc)}{\sigma_\B}[\Seq(W)]$.
	Let us assume that the strategy $\sigma_\B$ is implemented by a memory skeleton $\mathcal{M} = \langle M,\minit,\mu \rangle$ and an action map $\lambda$. The strategy $\s_\B = \Par(\sigma_\B)$ is implemented by the memory skeleton $\Par(\mathcal{M})$ and the action map $\Par(\lambda)$. Consider now a state strategy $\s_\A: Q^+ \rightarrow \Dist(A) \in \SetStrat{\Aconc}{\A}$ for Player $\A$ in the concurrent game $\Aconc$. We want to define a strategy $\sigma_\A: V^+ \rightarrow \Dist(A)$ for Player $\A$ in the turn-based game $\ATurnConc$. Consider some $\rho \in (V_\A \cdot V_\B)^*\cdot V_\A$ with $\head(\rho) = t \in Q = V_\A$.
	%It is defined arbitrarily on states $v \in V_\B$ (since these states belong to Player $\B$ in the turn-based game $\ATurnConc$). Consider now $\rho \in \colSetSeq^*$ and $v \in V_\A = Q$. 
	By definition of $\s_\B$ and $\Par(\lambda)$, we have:
	$$\stratMod{\s_\B}(\projec{V}{Q}(\rho)) = \s_\B(\extendFunc{\colFunc_\Aconc}(\projec{V}{Q}(\rho)),t) = \Par(\lambda)(m,t) \in \reachStrat{\B}(\formNF_t,\Rech{\mu}{\lambda}{m}{t})$$ for $m = \extendFunc{\Par(\mu)}(\minit,\extendFunc{\colFunc_\Aconc}(\projec{V}{Q}(\rho))) = \extendFunc{\mu}(\minit,\extendFunc{\colFunc_\Aconc}(\rho))$ % by definition of $\Par(\lambda)$
	by Remark~\ref{rmq:sequence_col_path_ext} and Proposition~\ref{prop:projection_memory_parallel}.
	%Furthermore, for all $a \in A$, we have $\lambda(\mu(m,k_\Aconc),(t,\Sq{t}{a})) = \stratMod{\sigma_\B}(\rho \cdot (t,\Sq{t}{a}))$. Hence, 
	By definition of $\reachStrat{\B}$%, for all action $a \in A$% of Player $\A$
	, we have: $$\delta(t,a,\stratMod{\s_\B}(\projec{V}{Q})) \in \Rech{\mu}{\lambda}{m}{t} = \rech{\mu}{\lambda}{m}{t}[A]$$
	This allows us to define a function $f_{\rho}: A \rightarrow A$ as follows, for all $a \in A$:
	\begin{displaymath}
		f_\rho(a) = \min_{<_A} \inv{\rech{\mu}{\lambda}{m}{t}}[\delta(t,a,\stratMod{\s_\B}(\rho))]
	\end{displaymath}
	Note that, for all $a \in A$, we have:
	\begin{equation}
		\label{eqn:local_det}
		\delta(t,a,\stratMod{\s_\B}(\rho)) = \rech{\mu}{\lambda}{m}{t}(f_\rho(a)) = \delta(t,f_\rho(a),\lambda(\mu(m,k_\Aconc),(t,f_\rho(a)))) = \delta(t,f_\rho(a),\stratMod{\sigma_\B}(\rho \cdot (t,f_\rho(a))))
	\end{equation}
	since $m = \extendFunc{\mu}(\minit,\extendFunc{\colFunc_\Aconc}(\rho))$ and $\colFunc_\Aconc(t,f_\rho(a)) = k_\Aconc$. Then, for all $a \in A$, we set:
	$$\sigma_\A(\rho)[a] = %\sum_{\substack{\alpha \in A \\ f_\rho(\alpha) = a}} 
	\sum_{\alpha \in \inv{f_\rho}[a]} \s_\A(\projec{V}{Q}(\rho))[\alpha]$$
	The strategy $\sigma_\A$ is defined arbitrarily on other sequence of states $\rho$. Now, consider some $\rho \in (V_\A \cdot V_\B)^* \cdot V_\A$ with $t = \head(\rho)$ and $d \in \distribSet = \distribSet_\B$, we have:
	\begin{displaymath}
		\nabla^{\sigma_\A,\stratMod{\sigma_\B}}_{\ATurnConc}(\rho)[d] = \sum_{\substack{(a,b) \in A \times B \\ \delta(t,a,b) = d}} \sigma_\A(\rho)[a] \cdot \stratMod{\sigma_\B}(\rho \cdot (t,a))[b] = \sum_{\substack{a \in A \\ \delta(t,a,\stratMod{\sigma_\B}(\rho \cdot (t,a))) = d}} \sigma_\A(\rho)[a]
	\end{displaymath}
	since the color strategy $\sigma_\B: \colSet^* \times V \rightarrow B$ is deterministic. Note that we have the following disjoint union $A = \uplus_{a \in A} \inv{f_\rho}[a]$ (denoted $(u)$). It follows that:	
	\begin{align*}
		\nabla^{\s_\A,\stratMod{\s_\B}}_{\Aconc}(\projec{V}{Q}(\rho))[d] & = \sum_{\substack{\alpha \in A \\ \delta(t,\alpha,\stratMod{\s_\B}(\projec{V}{Q}(\rho))) = d}} \s_\A(\projec{V}{Q}(\rho))[\alpha] & \text{ as previously } \\
		& = \sum_{a \in A} \sum_{\substack{\alpha \in \inv{f_\rho}[a]\\ \delta(t,\alpha,\stratMod{\s_\B}(\projec{V}{Q}(\rho))) = d}} \s_\A(\projec{V}{Q}(\rho))[\alpha] & \text{ by (u) }\\
		& = \sum_{a \in A} \sum_{\substack{\alpha \in \inv{f_\rho}[a]\\ \delta(t,f_\rho(\alpha),\stratMod{\sigma_\B}(\rho \cdot (t,f_\rho(\alpha)))) = d}} \s_\A(\projec{V}{Q}(\rho))[\alpha] & \text{ by (\ref{eqn:local_det})} \\
		& = \sum_{\substack{a \in A \\ \delta(t,a,\stratMod{\sigma_\B}(\rho \cdot (t,a))) = d}} \left( \sum_{\alpha \in \inv{f_\rho}[a]} \s_\A(\projec{V}{Q}(\rho))[\alpha] \right) & \text{ since }f_{\rho}(\alpha) = a\\
		& = \sum_{\substack{a \in A \\ \delta(t,a,\stratMod{\sigma_\B}(\rho \cdot (t,a))) = d}} \sigma_\A(\rho)[a] & \text{ by definition of  } \sigma_\A(\rho)\\
		& = \nabla^{\sigma_\A,\stratMod{\sigma_\B}}_{\ATurnConc}(\rho)[d] & \text{ by definition of }\nabla^{\sigma_\A,\stratMod{\sigma_\B}}_{\ATurnConc}
	\end{align*}	
	
	Overall, for all $\rho \in (V_\A \cdot V_\B)^* \cdot V_\A$, we have $\nabla^{\s_\A,\stratMod{\s_\B}}_{\Aconc}(\projec{V}{Q}(\rho))[d] = \nabla^{\sigma_\A,\stratMod{\sigma_\B}}_{\ATurnConc}(\rho)[d]$. We conclude by applying Corollary~\ref{coro:equal_nabla}. Finally, since $\val{\Aconc}{\B}[W] = \inf_{\s_\B \in \SetSt{\Aconc}{\B}} \val{\Aconc}{\s_\B}[W]$ and $\val{\Seq(\Aconc)}{\B}[\Seq(W)] = \inf_{\sigma_\B \in \SetSt{\Seq(\Aconc)}{\B}} \val{\Seq(\Aconc)}{\sigma_\B}[\Seq(W)]$, it follows that $\val{\Aconc}{\B}[W] \leq \val{\Seq(\Aconc)}{\B}[\Seq(W)]$.
\end{proof}

\section{Complement on Section~\ref{sec:applications}}
\subsection{Definition of a deterministic concurrent arena}
\begin{definition}[Mentioned~\ref{ref:deterministic}]
	A concurrent arena $\Aconc = \AcoloredConc$ is \emph{deterministic} if, for all $d \in \distribSet$, there exists $q \in Q$ such that $\distribFunc(d)[q] = 1$.	
	the probability distribution $\distribFunc(d)$ is Dirac.
	\label{def:deterministic_game}
\end{definition}

In the deterministic case of Definition~\ref{def:deterministic_game}, we define the function $\deltaDirac: Q \times A \times B \rightarrow Q$ defined by $\deltaDirac(q,a,b) = q'$ where $q' \in Q$ is the state ensuring $\deltaDistrib(q,a,b)(q') = 1$. Furthermore, in the specification of $\Aconc$, the components $\distribSet$ and $\distribFunc$ are omitted. 

\subsection{Definition of compatible paths and winning strategies}
\begin{definition}[Mentioned~\ref{ref:compatible}]
	\label{def:compatible_paths}
	Consider a deterministic concurrent arena $\Aconc = \AdetConc$. Let $\s_\A \in \SetColStrat{\Aconc}{\A}$ be a color strategy for Player $\A$. %, and a strategy $\s_\B \in \SetSt{\Aconc}{\B}$ for Player $\B$. 
	An infinite path $\rho \in Q^\omega$ is \emph{compatible} with the strategy $\s_\A$ if there exists a state strategy $\s_\B \in \SetStrat{\Aconc}{\B}$ such that, for all $i \geq 0$, we have $\prob{\Aconc}{\s_\A}{\s_\B}(\pi_{\leq i}) > 0$. We denote by $\OutStrat{\Aconc}^{\s_\A} \subseteq Q^\omega$ the set of infinite paths compatible with the strategy $\s_\A$. The definition is analogous for a strategy for Player $\B$.
	%The \emph{state sequence} $\outCome{\Aconc}{\s_\A,\s_\B}$ induced by strategies $\s_\A$ and $\s_\B$ is the infinite path $\gamma \in Q^\omega$ defined by $\gamma_0 = q_0$ and, for all $i \geq 0$, we set $\gamma_{i+1} = \deltaDirac(\gamma_i,\stratMod{\s_\A}(\gamma_{\leq i}),\stratMod{\s_\B}(\gamma_{\leq i}))$.
	
	Then, a strategy $\s_\A \in \SetSt{\Aconc}{\A}$ (resp. $\s_\B \in \SetSt{\Aconc}{\B}$) for Player $\A$ (resp. Player $\B$) is \emph{winning} for Player $\A$ (resp. $\B$) if $\OutStrat{\Aconc}^{\s_\A} \subseteq U_W$ (resp. $\OutStrat{\Aconc}^{\s_\A} \subseteq Q^\omega \setminus U_W$).
\end{definition}

\subsection{Proposition regarding winning strategies in a deterministic setting}
\label{proof:lemma_winning_proba_one}
\begin{proposition}[Mentioned~\ref{ref:prop_det_value}]
	\label{lem:winning_proba_one}
	Consider a deterministic concurrent game $\langle \Aconc,W \rangle$, a strategy $\s$ for Player $\A$ (resp. $\B$). Then, the following assertions are equivalent:
	\begin{itemize}
		\item the strategy $\s$ is winning for Player $\A$ (resp. $\B$);
		\item $\val{\Aconc}{\s}[W] = 1$ (resp. $\val{\Aconc}{\s}[W] = 0$).
		\item $\val{\Aconc}{\s}[W] > 0$ (resp. $\val{\Aconc}{\s}[W] < 1$). %$\diamond$
	\end{itemize}
\end{proposition}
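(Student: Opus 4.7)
The plan is to establish the three-way equivalence via the cyclic chain $(1)\Rightarrow(2)\Rightarrow(3)\Rightarrow(1)$, treating only Player $\A$; the Player $\B$ case is obtained by symmetric substitutions ($U_W$ replaced with $Q^\omega\setminus U_W$, and the $\inf$/$\sup$ in the value swapped).

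For $(1)\Rightarrow(2)$: suppose $\s$ is winning, and fix an arbitrary opponent state strategy $\s_\B\in\SetStrat{\Aconc}{\B}$. The key observation is that the set of infinite paths on which $\prob{\Aconc}{\stratMod{\s}}{\s_\B}$ puts positive mass is exactly the set of paths every finite prefix of which has positive probability, and any such path is by definition compatible with $\stratMod{\s}$, hence with $\s$. Since probabilistic distributions are discrete (countable support), the set of finite paths $\pi\in Q^+$ with $\prob{\Aconc}{\stratMod{\s}}{\s_\B}(\pi)=0$ that arise as ``bad'' prefixes forms a countable family, so the union $N$ of the corresponding cylinders has measure $0$. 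Its complement is contained in $\OutStrat{\Aconc}^{\s}\subseteq U_W$, giving $\prob{\Aconc}{\stratMod{\s}}{\s_\B}[U_W]=1$. Taking the infimum over $\s_\B$ yields $\val{\Aconc}{\s}[W]=1$.

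The implication $(2)\Rightarrow(3)$ is immediate. For $(3)\Rightarrow(1)$, I would argue by contraposition: assume $\s$ is not winning, so there exists $\pi\in\OutStrat{\Aconc}^{\s}\setminus U_W$. By definition of compatibility, there is some witness state strategy $\s_\B'\in\SetStrat{\Aconc}{\B}$ such that every finite prefix $\pi_{\leq i}$ has positive probability under $(\stratMod{\s},\s_\B')$. The crucial ingredients are determinism of the arena and the fact that $\stratMod{\s}$, being derived from a color strategy, is Dirac: at each $\pi_{\leq i}$, the action $a_i:=\stratMod{\s}(\pi_{\leq i})$ is fixed, and positivity of the next-step probability forces the existence of some $b_i\in\Supp(\s_\B'(\pi_{\leq i}))$ with $\deltaDirac(\pi_i,a_i,b_i)=\pi_{i+1}$. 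Define the deterministic Dirac strategy $\s_\B\in\SetStrat{\Aconc}{\B}$ by $\s_\B(\pi_{\leq i})=b_i$ on prefixes of $\pi$ and arbitrarily elsewhere. Then $\pi$ is the unique infinite path generated with positive probability by $(\stratMod{\s},\s_\B)$, and since $\pi\notin U_W$ we get $\prob{\Aconc}{\stratMod{\s}}{\s_\B}[U_W]=0$, hence $\val{\Aconc}{\s}[W]=0$, contradicting~(3).

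The main obstacle is the measure-theoretic housekeeping in $(1)\Rightarrow(2)$: one must justify that the ``bad'' prefixes index only countably many zero-measure cylinders, which hinges on the discreteness assumption on distributions so that the support of each $\deltaDistrib^{\stratMod{\s},\s_\B}(\pi)$ is countable and the tree of positive-probability prefixes is countably branching. Everything else reduces to routine bookkeeping, and the Player $\B$ case follows the same three steps with the obvious dualisations.
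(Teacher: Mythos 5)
Your proposal is correct and follows essentially the same route as the paper's proof: the same cyclic chain of implications, the same key construction of a Dirac opponent strategy that concentrates all the probability mass on a single chosen compatible path (using determinism of the arena and the fact that $\stratMod{\s}$ is Dirac), and the same measure-zero covering of non-compatible paths by zero-probability cylinders for the direction from winning to value $1$. The countability caveat you flag is handled no more carefully in the paper's own argument, so nothing is missing relative to it.
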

\begin{proof}
	We prove the lemma for Player $\A$, the case of Player $\B$ is analogous. Consider a deterministic concurrent game $\langle \Aconc,W \rangle$ with $\Aconc = \AdetConc$ and consider a color strategy $\s_\A \in \SetColStrat{\Aconc}{\A}$ for Player $\A$. Consider an infinite path $\rho \in \OutStrat{\Aconc}^{\s_\A}$ that is compatible with strategy $\s_\A$. Let us show that there exists a strategy $\s_\B: Q^+ \rightarrow \Dist(B) \in \SetStrat{\Aconc}{\B}$ for Player $\B$ such that $\prob{\Aconc}{\stratMod{\s_\A}}{\s_\B}[{ \rho }] = 1$. Consider the state strategy $\s_\B' \in \SetStrat{\Aconc}{\B}$ such that, for all $i \geq 0$, we have $\prob{\Aconc}{\stratMod{\s_\A}}{\s_\B}(\rho_{\leq i}) > 0$, which exists by definition of $\OutStrat{\Aconc}^{\s_\A}$. Let $i \geq 0$. Since the strategy $\s_\A$ is deterministic, we have:
	\begin{displaymath}
		\deltaDistrib^{\stratMod{\s_\A},\s'_\B}(\rho_{\leq i})[\rho_{i+1}] = \sum_{b \in \Supp(\s'_\B(\pi_{\leq i}))} \s'_\B(\pi)[b] \cdot \deltaDistrib(\head(\pi),\stratMod{\s_\A}(\rho_{\leq i}),b)[\rho_{i+1}] > 0
	\end{displaymath}
	Since the game $\Aconc$ is deterministic, for all $b \in \Supp(\s'_\B(\pi_{\leq i}))$, we have $\deltaDistrib(\head(\pi),\stratMod{\s_\A}(\rho_{\leq i}),b)[\rho_{i+1}] \in \{ 0,1 \}$. It follows that there exists $b_i \in \Supp(\s'_\B(\pi_{\leq i}))$ such that $\deltaDistrib(\head(\pi),\stratMod{\s_\A}(\rho_{\leq i}),b_i)[\rho_{i+1}] = 1$. 
	
	Then, for all $i \geq 0$, we set $\s_\B(\rho_i) = b_i$ with $\s_\B(\rho_i)$ Dirac. The strategy $\s_\B$ is defined arbitrarily on other sequence of states. With this definition, for all $i \geq 0$, we have $\deltaDistrib^{\stratMod{\s_\A},\s_\B}(\rho_{\leq i})[\rho_{i+1}] = 1$. It follows that, for all $i \geq 0$, we have $\prob{\Aconc}{\stratMod{\s_\A}}{\s_\B}(\rho_{\leq i}) = 1$. %two strategies $(\s_\A,\s_\B) \in \SetSt{\Aconc}{\A} \times \SetSt{\Aconc}{\B}$. Let us denote by $\gamma$ the state sequence $\outCome{\Aconc}{\s_\A,\s_\B}$ induced by the strategies $\s_\A$ and $\s_\B$. For all $n \geq 0$, we have $\prob{\Aconc}{\s_\A}{\s_\B}(\gamma_{\leq n}) = \Pi_{i = 0}^{n-1} \deltaDistrib(\gamma_i,\stratMod{\s_\A}(\gamma_{\leq i}),\stratMod{\s_\B}(\gamma_{\leq i}))(\gamma_{i+1}) = 1$ by definition of $\gamma$. 
	Furthermore, we have $\cap_{n \in \mathbb{N}} \cyl(\rho_{\leq n}) = \{ \rho \}$. By the continuity of probability measure, we have: $$\prob{\Aconc}{\stratMod{\s_\A}}{\s_\B}[\{ \rho \}] = \prob{\Aconc}{\stratMod{\s_\A}}{\s_\B}[\cap_{n \in \mathbb{N}} \cyl(\rho_{\leq n})] = \lim_{n \rightarrow \infty} \prob{\Aconc}{\stratMod{\s_\A}}{\s_\B}[\cyl(\rho_{\leq n})] = \lim_{n \rightarrow \infty} \prob{\Aconc}{\stratMod{\s_\A}}{\s_\B}(\rho_{\leq n}) = 1$$
	
	For all paths $\rho \in \OutStrat{\Aconc}^{\s_\A}$ compatible with the strategy $\s_\A$, we denote by $\s_\B^\rho \in \SetStrat{\Aconc}{\B}$ a state strategy for Player $\B$ ensuring $\prob{\Aconc}{\stratMod{\s_\A}}{\s_\B}[\{ \rho \}] = 1$.
	
	Now, assume that the strategy $\s_\A$ is winning for Player $\A$. Let $\s_\B \in \SetStrat{\Aconc}{\B}$ be a strategy for Player $\B$. For all $\rho \notin \OutStrat{\Aconc}^{\s_\A} \subseteq U_W$, let us denote by $\rho_{\leq i_\rho} \in Q^+$ a prefix  of $\rho$ such that $\prob{\Aconc}{\stratMod{\s_\A}}{\s_\B}[\cyl(\rho_{\leq i_\rho})] = \prob{\Aconc}{\stratMod{\s_\A}}{\s_\B}[\rho_{\leq i_\rho}] = 0$. In that case, we have: 
	\begin{displaymath}
		\prob{\Aconc}{\stratMod{\s_\A}}{\s_\B}[Q^\omega \setminus U_W] \leq \prob{\Aconc}{\stratMod{\s_\A}}{\s_\B}[\bigcup_{\rho \notin \OutStrat{\Aconc}^{\s_\A}} \cyl(\rho_{\leq i_\rho})] \leq \sum_{\rho \notin \OutStrat{\Aconc}^{\s_\A}} \prob{\Aconc}{\stratMod{\s_\A}}{\s_\B}[\cyl(\rho_{\leq i_\rho})] = 0
	\end{displaymath}
	That is, $\prob{\Aconc}{\stratMod{\s_\A}}{\s_\B}[U_W] = 1$. As this holds for all strategy $\s_\B \in \SetStrat{\Aconc}{\B}$, it follows that $\val{\Aconc}{\s_\A}[W] = 1$ and $\val{\Aconc}{\s_\A}[W] > 0$.
	
	Conversely, let us denote by $\mathbf{1}_{U_W}: Q^\omega \rightarrow \{ 0,1 \}$ the function ensuring, for all $\rho \in Q^\omega$, $\mathbf{1}_{U_W}(\rho) = 1$ if and only if $\rho \in U_W$. Then, we have:
	\begin{align*}
	\val{\Aconc}{\s_\A}[W] > 0 & \Rightarrow \forall \s_\B \in \SetStrat{\Aconc}{\B},\; \prob{\Aconc}{\stratMod{\s_\A}}{\s_\B}[U_W] > 0 \\
	& \Rightarrow \forall \rho \in \OutStrat{\Aconc}^{\s_\A},\; \prob{\Aconc}{\stratMod{\s_\A}}{\s_\B^\rho}[U_W] > 0 \\
	& \Leftrightarrow \forall \rho \in \OutStrat{\Aconc}^{\s_\A},\; \prob{\Aconc}{\stratMod{\s_\A}}{\s_\B^\rho}[{\rho} \cap U_W] > 0 \\
	& \Leftrightarrow \forall \rho \in \OutStrat{\Aconc}^{\s_\A},\; \mathbf{1}_{U_W}(\rho) > 0\\
	& \Leftrightarrow \forall \rho \in \OutStrat{\Aconc}^{\s_\A},\; \rho \in U_W \\
	& \Leftrightarrow \text{ the strategy }\s_\A\text{ is winning for player }\A
	\end{align*}
\end{proof}

\subsection{Proof of Theorem~\ref{thm:turn_based_borel_determined}}
\label{proof:theorem3}
\begin{theorem}[Martin~\cite{martin1975borel,martin1985purely,martin1998determinacy}, Mentioned~\ref{ref:martin_borel}]
	Consider a deterministic turn-based graph arena $\Aconc% = \AdetTurn
	$. For all Borel winning set $W \subseteq \Borel(\colSet)$, % \in \Borel(V)$
	the game $\Games{\Aconc}{W}$ is determined.
	\label{thm:turn_based_borel_determined}
\end{theorem}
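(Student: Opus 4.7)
The plan is to reduce the theorem to the classical Borel determinacy of Martin, and then upgrade from state-history winning strategies to color-history winning strategies. As a first step, I would apply Martin's classical theorem~\cite{martin1975borel,martin1985purely} to the deterministic turn-based graph arena $\Aconc$ viewed as a graph game with winning set $U_W = \inv{\extendFunc{\colFunc}}[W] \subseteq Q^\omega$. Since $\extendFunc{\colFunc}$ is continuous and $W$ is Borel, the set $U_W$ is Borel as well. Martin's theorem then guarantees that one of the two players (say Player $\A$, the other case being symmetric) has a state-history winning strategy $\s: Q^+ \rightarrow \setA$, that is, a strategy that ensures $\OutStrat{\Aconc}^{\s} \subseteq U_W$ but which in principle may distinguish two distinct finite plays that produce the same sequence of colors.

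The second step is to convert this state-history winning strategy into a color-history winning strategy $\s' \in \SetColStrat{\Aconc}{\A}$, as required by our definition of determinacy. The key structural property enabling this conversion is that the winning set $U_W$ is itself \emph{chromatic}: by construction $U_W = \inv{\extendFunc{\colFunc}}[W]$, so membership of an infinite play in $U_W$ depends only on its sequence of colors, not on the sequence of visited states. This is exactly the situation addressed in~\cite{DBLP:conf/cie/Roux20}: given a state-history winning strategy for a chromatic Borel objective on a turn-based graph arena, one can produce a color-history winning strategy achieving the same guarantee. Applying that result yields the desired $\s' \in \SetColStrat{\Aconc}{\A}$ with $\OutStrat{\Aconc}^{\s'} \subseteq U_W$.

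Finally, having a color-history winning strategy for Player $\A$ means, in the deterministic setting and by Proposition~\ref{lem:winning_proba_one}, that $\val{\Aconc}{\s'}[W] = 1$. Combined with Corollary~\ref{thm:deterministic_determined}, this implies that $\Games{\Aconc}{W}$ is determined in the sense of Definition~\ref{def:determinacy} (via the trivially optimal strategy of Player $\B$ whose existence follows from the value being $1$, pointwise optimized arbitrarily).

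The routine part is Martin's theorem plus the equivalence between winning and having value $1$ in the deterministic case, which is already packaged in Proposition~\ref{lem:winning_proba_one}. The \emph{main obstacle} is the second step: the passage from state-history to color-history winning strategies. It is non-trivial because the color history carries strictly less information than the state history, so color strategies are a priori weaker; overcoming this requires exploiting the chromaticity of the objective, which is the content of the auxiliary result from~\cite{DBLP:conf/cie/Roux20} that I would invoke as a black box here.
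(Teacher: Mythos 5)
Your proposal follows essentially the same route as the paper's own proof: apply Martin's Borel determinacy to the Borel set $U_W = \inv{\extendFunc{\colFunc}}[W]$ (Borel as the continuous preimage of a Borel set) to obtain a state-history winning strategy, then invoke the result of~\cite{DBLP:conf/cie/Roux20} as a black box to convert it into a color-history winning strategy, which is legitimate precisely because the objective depends only on the sequence of colors. The only cosmetic difference is that you additionally spell out the bridge to the value-based notion of determinacy via Proposition~\ref{lem:winning_proba_one} and Corollary~\ref{thm:deterministic_determined}, which the paper handles separately.
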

\begin{proof}
	Consider a Borel winning set $W \subseteq \colSet^\omega$ in a deterministic turn-based arena $\Aconc = \AdetTurn$. The corresponding subset of infinite paths that is winning for Player $\A$ is the set $U_W = \inv{\extendFunc{\colFunc}}[W]$ is Borel as the preimage of a Borel set by the continuous function $\extendFunc{\colFunc}: Q \times  \starom{(Q \times Q)} \rightarrow \starom{\colSet}$. Then, the result by Martin~\cite{martin1975borel,martin1985purely,martin1998determinacy} gives that the game $\langle \Aconc,W \rangle$ is determined for strategies of the type $\sigma_\A: Q^+ \rightarrow A$ for Player $\A$ and $\sigma_\B: Q^+ \rightarrow B$ for Player $\B$. In our case, the strategies can be seen as such functions $\s$ that, in addition, ensures that if $\tr{\colFunc}(\pi) = \tr{\colFunc}(\pi')$ for two paths $\pi,\pi' \in Q \times (Q \times Q)^*$, then $\stratMod{\s}(\pi) = \stratMod{\s}(\pi')$. Therefore, we want to apply Corollary 4 from~\cite{DBLP:conf/cie/Roux20} which is possible since the winner of the game only depends on the sequence of colors seen, regardless of the states visited. %In that paper, the author considers a concurrent setting whereas we have here a turn-based one. However, it can easily be seen as a concurrent setting if in a state, only the action of the player who owns the state is relevant to decide what is the next state. Then, 
	%Corollary 4 gives us that if there is a winning strategy $\sigma$, there is also one ensuring that for all $\pi,\pi'$ on which $\sigma$ is defined, if $\head(\pi) = \head(\pi')$ (that is given by the function $\delta^+$ in~\cite{DBLP:conf/cie/Roux20}) and $\extendFunc{\colFunc}(\pi) = \extendFunc{\colFunc}(\pi')$ (that is given by the function $\colFunc^{++}$ in~\cite{DBLP:conf/cie/Roux20}), then $\sigma(\pi) = \sigma(\pi')$. That is, if $\tr{\colFunc}(\pi) = \tr{\colFunc}(\pi')$ then $\sigma(\pi) = \sigma(\pi')$.
\end{proof}

\subsection{Proof of Theorem~\ref{coro:conc_borel_determined}}
\label{proof:coro62}
\begin{proof}
	Consider a deterministic concurrent graph game $\Games{\Aconc}{W}$ (recall that $W \in \Borel(\colSet)$). Note that its sequential version, the turn-based game $\Games{\ATurnConc}{\Seq(W)}$ is deterministic. Hence, by Theorem~\ref{thm:turn_based_borel_determined}, it is determined, since%, by Remark~\ref{rmq:borel_projec}
	, the winning set $\Seq(W)$ is Borel. Hence, by Theorem~\ref{thm:main}, the concurrent game $\Games{\Aconc}{W}$ is determined. %This corresponds to, by Theorem~\ref{thm:deterministic_determined}, the existence of a winning strategy for either of the player in the deterministic concurrent graph game $\Games{\Aconc}{W}$
	Conversely, consider a Borel set
	$\emptyset \subsetneq W \subsetneq \colSet^\omega$ and a non-determined game form $\formNF$. There exists a subset of outcomes $\subVal$
	such that the win/lose game $\gameNF = \langle \formNF,\subVal \rangle$ is not	determined. Furthermore, there exists some $\rho^1 \in W$ and some $\rho^2 \in	\colSet^\omega \setminus W$. We build the concurrent game where the local interaction in $q_0$ corresponds to $\formNF$ and two parts of the game are reachable from there: one is reached if the outcome of the game form would be
	in $\subVal$, and the other otherwise. Then, in the first part, we have a 
	unique infinite sequences of edges whose sequence of colors corresponds to  
	$\rho_1$ whereas the second part is similar, but it induces the sequence 
	$\rho_2$%(by considering an infinite sequence of states with exactly one 
	%successor)
	. The game constructed in this manner is not determined, since the local
	interaction in $q_0$ is not.
\end{proof}

\subsection{Complements on the determinacy of deterministic locally determined concurrent game with Borel objective}
\label{details:rmq64}
%In fact, this equivalence can be expanded even further. In the proof of the implication $(2') \Rightarrow (1)$, we constructed a game where the local interaction in the initial is not determined. However, it would also be a problem for determinacy if a local interaction in a state that is reachable from the initial state (in a specific sense defined below) is not determined.

As soon as a concurrent game uses a local interaction that is not determined, it may also not enjoy determinacy. In turn, we obtain the following equivalence:
%In turn, we can deduce an equivalence between a set of local interaction containing only determined the following theorem:
\begin{theorem}[Mentioned~\ref{ref:equivalence_borel}]
	Consider a set of game forms $\mathcal{I}$, and a set of colors
	$\colSet$ with $|\colSet| > 1$. Then the following are equivalent:
	\begin{itemize}
		\item[1.] $\mathcal{I} \subseteq \allDet$; 
		\item[2.] For all Borel set $W \in \Borel(\colSet)$, all concurrent 
		games $\langle \Aconc,W \rangle$ built on $\mathcal{I}$ are determined;
		\item[2'.] There exists a Borel set $\emptyset \subsetneq W \subsetneq 
		\colSet^\omega$ such that all concurrent games $\langle \Aconc,W \rangle$  built on $\mathcal{I}$ are determined. %$\diamond$
	\end{itemize}
\end{theorem}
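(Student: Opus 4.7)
The plan is to prove the cycle $(1) \Rightarrow (2) \Rightarrow (2') \Rightarrow (1)$, with the bulk of the work done by Theorem~\ref{coro:conc_borel_determined} together with a small strengthening of its converse part. The only subtlety is that $(2')$ quantifies existentially over $W$ while the converse in Theorem~\ref{coro:conc_borel_determined} is stated for every non-trivial $W$, so a single bad game form in $\mathcal{I}$ must be exploitable against \emph{any} prescribed non-trivial Borel objective.

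For $(1) \Rightarrow (2)$, I would simply observe that if $\mathcal{I} \subseteq \allDet$, then every arena built on $\mathcal{I}$ is locally determined, and the first half of Theorem~\ref{coro:conc_borel_determined} directly yields determinacy for every Borel $W$. For $(2) \Rightarrow (2')$, it suffices to use the hypothesis $|\colSet| > 1$ to exhibit any non-trivial Borel set, e.g.\ fix $c \in \colSet$ and take $W = \cyl(c) \subseteq \colSet^\omega$, which is open hence Borel, non-empty, and proper (it misses $c' \cdot c' \cdots$ for any $c' \neq c$).

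For $(2') \Rightarrow (1)$ I would argue by contraposition. Suppose some $\formNF = \langle \St_\A,\St_\B,\outComeNF,\outCNF \rangle \in \mathcal{I}$ is not determined, witnessed by $\subVal \subseteq \outComeNF$ such that neither $\reachStrat{\A}(\formNF,\subVal)$ nor $\reachStrat{\B}(\formNF,\outComeNF \setminus \subVal)$ is non-empty. Pick the $W$ given by $(2')$ along with $\rho^1 = k_1^1 k_2^1 \cdots \in W$ and $\rho^2 = k_1^2 k_2^2 \cdots \in \colSet^\omega \setminus W$. I would construct a deterministic concurrent arena $\Aconc$ whose every local interaction is (up to renaming of outcomes) $\formNF$ itself, so that it is built on $\{\formNF\} \subseteq \mathcal{I}$. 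The construction: the local interaction at $q_0$ is $\formNF$, and each Nature state is indexed by an outcome $o \in \outComeNF$; Nature states with $o \in \subVal$ route deterministically to the head of a chain $(q_i^1)_{i\geq 1}$, the others to the head of a chain $(q_i^2)_{i\geq 1}$. At every chain state $q_i^j$, I keep $\formNF$ as the local interaction by letting all $|\outComeNF|$ outputs be distinct Nature states that all Dirac-collapse to $q_{i+1}^j$, and I set $\colFunc(q_i^j, q_{i+1}^j) = k_{i+1}^j$. Then, for any color strategy $\s_\A$ of Player $\A$, her first move at $q_0$ is some $s_\A \in \St_\A$; the absence of a winning strategy for $\A$ in $\langle\formNF,\subVal\rangle$ yields $s_\B$ with $\outCNF(s_\A,s_\B) \notin \subVal$, so Player $\B$ can force $\rho^2 \notin W$, giving $\val{\Aconc}{\s_\A}[W] = 0$ and thus $\val{\Aconc}{\A}[W] = 0$. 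Symmetrically $\val{\Aconc}{\B}[W] = 1$. Hence $\langle \Aconc,W \rangle$ is not even limit-determined, contradicting $(2')$.

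The main obstacle, and what makes this theorem more than a direct rephrasing of Theorem~\ref{coro:conc_borel_determined}, is precisely ensuring that the counter-example arena for $(2') \Rightarrow (1)$ uses \emph{only} game forms from $\mathcal{I}$, whereas the sketch behind Theorem~\ref{coro:conc_borel_determined} tacitly uses trivial game forms along the $\rho^j$-chains. The fix is conceptually small but must be made explicit: one replays $\formNF$ at every chain state, using the freedom to have several Nature states with identical successor and color. Once this is in place, the semantic analysis of the $q_0$-gadget is immediate because determinacy of the game form was characterized (by definition) in terms of one-shot reachability.
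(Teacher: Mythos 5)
Your proof is correct and follows essentially the same route as the paper's: $(1)\Rightarrow(2)$ via Theorem~\ref{coro:conc_borel_determined}, $(2)\Rightarrow(2')$ trivially from $|\colSet|>1$, and $(2')\Rightarrow(1)$ by contraposition using the $q_0$-gadget whose non-determined local interaction routes to two color chains realizing $\rho^1 \in W$ and $\rho^2 \notin W$, so that the values for the two players are $0$ and $1$ respectively. The one point where you go beyond the paper is in insisting that the chain states also carry local interactions from $\mathcal{I}$ (by replaying $\formNF$ with all outcomes sent to distinct Nature states that Dirac-collapse onto the same successor, which is harmless since colors are read on pairs of player states); the paper's sketch leaves the chains as a ``unique infinite sequence of edges'' and thus glosses over whether the resulting arena is built on $\mathcal{I}$, so your version is the more careful one.
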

This dis very similar to the proof of Theorem~\ref{coro:conc_borel_determined}. 
\begin{proof}
	Implication $(1) \Rightarrow (2)$ comes from
	Corollary~\ref{coro:conc_borel_determined}, implication $(2)
	\Rightarrow (2')$ is straightforward since $|\colSet| > 1$. Consider now implication $(2') \Rightarrow (1)$. Assume that $\mathcal{I} \not \subseteq
	\allDet$. Let $\formNF$ be a game form in $\mathcal{I}$ that is not
	determined. That is, there exists a subset of outcomes $\subVal$
	such that the win/lose game $\gameNF = \langle \formNF,\subVal \rangle$ is not
	determined. %We build a concurrent game on $\mathcal{I}$ that is not determined for $W$.
	Assume towards a contradiction that there is a Borel set
	$\emptyset \subsetneq W \subsetneq \colSet^\omega$ such that all
	concurrent games built on $\mathcal{I}$ are determined for
	$W$. There exists some $\rho^1 \in W$ and some $\rho^2 \in
	\colSet^\omega \setminus W$% such that $\rho^1_0 = \rho^2_0$
	. We build the concurrent game where the local interaction in $q_0$
	corresponds to $\formNF$ and two parts of the game are reachable
	from there: one is reached if the outcome of the game form would be
	in $\subVal$, and the other otherwise. Then, in the first part, we have a 
	unique infinite sequences of edges whose sequence of colors corresponds to  
	$\rho_1$ whereas the second part is similar, but it induces the sequence 
	$\rho_2$%(by considering an infinite sequence of states with exactly one 
	%successor)
	. The game constructed in this manner is not determined, since the local
	interaction in $q_0$ is not. Hence the contradiction.
	%is given by 
	%Theorem~\ref{thm:locel_det_necessary_arbitrary_winning}.
	%\slcomment{On peut importer la preuve du, thm 1 dans cette preuve et effacer le thm 1}
\end{proof}

%\begin{remark}%[Details~\ref{details:rmq64}]
In the proof of the above theorem, we established that as soon as the local interaction in the initial state is not determined, we can exhibit a winning condition for which the whole game is not determined. We can strengthen that result with the following: as soon as a state whose local interaction is not determined is reachable (in a specific sense) from the initial state, then we can exhibit a winning condition for which the whole game is not determined. This highlights how unsafe (i.e. breaking global determinacy) non-determined local interactions are.
	%\label{rmq:strongly_reachable}
%\end{remark}

We consider the notion of strong reachability. Informally, we say that a state $q$ is strongly reachable from a state $q'$ if, from $q'$ until $q$ is reached, either of the player has an action to ensure getting closer to $q$. Formally:
\begin{definition}[Strongly reachable]
	Let $\Aconc = \AdetConc$ be a deterministic concurrent arena. Consider a state $q \in Q$. We define $R_0^q = \{ q \}$ and, for all $i \geq 0$, $R_{i+1}^q = R_i^q \cup (P_{i+1}^q(\A) \cup P_{i+1}^q(\B))$ with $P_{i+1}^q(p) = \{q' \in Q \mid \reachStrat{p}(\formNF_{q'},R_{i}^q) \neq \emptyset \}$ for both players $p \in \{ \A,\B \}$. We define $R^q = \cup_{i \in \mathbb{N}} R_i^q$. Then, we say that the state $q \in Q$ is \emph{strongly reachable} from a state $q' \in Q$ if $q' \in R^q$.
\end{definition}
A state $q' \in P_{i+1}^q(p)$ is said to be at \emph{level} $i+1$ for Player $p$. Note that we could have $P_{i+1}^q(\A) \cap P_{i+1}^q(\B) \neq \emptyset$. In addition, the definition above could be extended to ordinals, but we consider this one to facilitate the explanations.

For a state $q \in Q$, we define the function $p^q$ that associates to a state $q' \neq q \in Q$ its level and player if $q$ is strongly reachable from $q'$ as follows:
\[   
p^q(q')  = 
\begin{cases}
c^k_\A &\quad\text{if } k = \min_{i \geq 1} \{ q' \in P_i^q(\A) \} \text{ if it exists; }\\
c^k_\B &\quad\text{otherwise, if } k = \min_{i \geq 1} \{ q' \in P_i^q(\B) \} \text{ if it exists; }\\
c_n&\quad\text{otherwise, if q is not strongly reachable from q'.}\\ 
\end{cases}
\]
%In the first two cases, the number $k$ is called the \emph{level} of the state $q'$.
%For two states $q \neq q' \in Q$, we denote by $p^q(q') = c^k_\A$ for $k$ the smallest $i$ such that $q' \in P_i^q(\A)$. If there is no such $i$, then  $p^q(q') = c^k_\B$ for $k$ the smallest $i$ such that $q' \in P_i^q(\A)$

Now, we consider an arena with a state strongly reachable from the initial state whose local interaction is not determined%that is 
. We want to exhibit a winning condition such that the corresponding game is not determined. However, that winning condition is expressed on colors, and the coloring function of the arena may not be as accurate as it may need to be (for instance, if the coloring function is constant) for us to properly express that winning condition. Hence, we consider the notion of \emph{colorless arena}, that is a concurrent arena $\Aconc = \Acolorless$ without coloring set and coloring function. Then, we have the following proposition. 
%, that informally is state that there exists a set of colors and a coloring function for which %consider a different set of colors and coloring function to 
%we can exhibit a winning condition where the game is not determined.
\begin{proposition}
	Consider a colorless deterministic concurrent arena $\Acolorless$ and assume that a state $q \in Q$ is strongly reachable from $q_0$ and that the local interaction $\formNF_q$ is not determined. Then, there exists a set of colors $\colSet$, a coloring function $\colFunc: Q \times Q \rightarrow \colSet$, and a winning condition $W \subseteq (\colSet)^\omega$ that is an open set such that the game $\Games{\Aconc}{W}$ is not determined, with $\Aconc = \AdetConc$.
\end{proposition}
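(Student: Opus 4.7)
My plan is to exploit the non-determinacy of $\formNF_q$ together with the strong reachability of $q$ by designing a coloring that penalizes each player for failing to drive the play toward $q$, and that turns the local interaction at $q$ into the decisive, essentially one-shot, matching-pennies-like moment for the whole game.

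First, I choose $\subVal \subseteq Q$ witnessing that $\formNF_q$ is not determined: no action of $\A$ forces the outcome into $\subVal$, and no action of $\B$ forces it into $Q \setminus \subVal$. For each $q' \in R^q \setminus \{q\}$, I extract from $p^q(q')$ the level $l(q')$ (the smallest $k$ such that $q' \in R_k^q$) and the responsible player in $\{\A, \B\}$, for whom an action in $\formNF_{q'}$ forces the next state into $R_{l(q')-1}^q$. I set $\colSet = \{c_\A, c_\B, c_0\}$ and define $\colFunc$ as follows: edges leaving $q$ receive $c_\A$ into $\subVal$ and $c_\B$ into $Q \setminus \subVal$; edges from an $\A$-responsible $q'$ receive $c_0$ when the target lies in $R_{l(q')-1}^q$ and $c_\B$ otherwise, penalizing $\A$ for failing to progress; edges from a $\B$-responsible $q'$ receive $c_0$ into $R_{l(q')-1}^q$ and $c_\A$ otherwise; all remaining edges receive $c_0$. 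Finally, let $W$ be the set of color sequences whose first letter distinct from $c_0$ equals $c_\A$; this is a countable union of cylinders, hence open.

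To show that $\Games{\Aconc}{W}$ is not determined, I prove that neither player has a winning strategy. Given any $\sigma_\A$, I construct $\sigma_\B$ that simulates $\sigma_\A$: at $\B$-responsible states $\B$ plays the progress action, emitting $c_0$; at $\A$-responsible states $\B$ plays a response making $\A$'s current action, if not a progress one, leave $R_{l(q')-1}^q$, emitting $c_\B$; at $q$, $\B$ plays $b$ with $\deltaDirac(q, \sigma_\A(h), b) \notin \subVal$, which exists by non-determinacy of $\formNF_q$. Along the resulting play, either $\A$ plays a non-progress action at some $\A$-responsible state (immediately producing $c_\B$) or $\A$ always progresses, in which case levels strictly decrease at every visited state of $R^q$, so $q$ is reached within $l(q_0)$ steps and $c_\B$ is emitted there. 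In both cases no $c_\A$ appears beforehand, so $\B$ wins. A symmetric construction defeats every $\sigma_\B$, showing $\A$ has no winning strategy either.

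The main obstacle will be the color bookkeeping along the play produced by the constructed strategy: verifying that every edge traversed before the decisive $c_\B$ (resp.\ $c_\A$) carries $c_0$, which relies on the responsible player genuinely progressing at each visited state of $R^q$ and on the non-determinacy of $\formNF_q$ supplying the counter-move at $q$. Once this is checked, non-determinacy of the full game follows directly from that of $\formNF_q$, echoing the pattern already used in the proof of Theorem~\ref{coro:conc_borel_determined}.
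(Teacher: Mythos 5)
Your proposal is correct and follows the same skeleton as the paper's proof: stratify $R^q$ into levels via strong reachability, penalise whichever player first fails to force progress towards $q$, let the non-determined interaction $\formNF_q$ decide the game once $q$ is reached, and build the counter-strategy by evaluating the opponent's deterministic color strategy on the shared history. Where you genuinely diverge is in the encoding of the winning condition. The paper uses an infinite color alphabet $\{c_{\wa},c_{\wb},c_n\}\cup\{c^i_\A,c^i_\B\mid i\ge 1\}$ that records the level and responsible player of the source state on every edge, and then reconstructs "failure to progress'' from the color sequence via the prefix sets $\mathsf{Bad}_\A,\mathsf{Bad}_\B$. You instead observe that the coloring function $\colFunc:Q\times Q\to\colSet$ already sees both endpoints of an edge, so the progress/failure judgment can be baked directly into the edge color; this lets you get away with three colors and the much simpler open set "the first letter different from $c_0$ is $c_\A$''. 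That is a real simplification of the bookkeeping (finite alphabet, no $\mathsf{Bad}$ machinery), at no loss of correctness: your level-decrease argument shows a non-neutral color must eventually appear, and neither direction can emit the opponent's color first. Two small points to tighten in a full write-up: make the definition of $l(q')$ and of the responsible player mutually consistent (your parenthetical "smallest $k$ with $q'\in R^q_k$'' does not quite match the paper's $p^q$, which prioritises $\A$'s minimal level; either convention works, but the responsible player must be one who can actually force the play into $R^q_{l(q')-1}$), and state explicitly that the "response'' at $\A$-responsible states and at $q$ is functional rather than temporal, i.e. it is legitimate only because the given opponent strategy is a deterministic color strategy that $\sigma_\B$ can evaluate on the common history.
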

\begin{proof}
	Consider the colorless deterministic concurrent arena $\Acolorless$ and a state $q \in Q$ such that the local interaction $\formNF_q$ is not determined. We define the set of colors $\colSet = \{ c_{\wa},c_{\wb},c_n \} \cup \{ c_\A^i,c_\B^i \mid i \geq 1 \}$. Since the local interaction $\formNF_q$ is not determined, we can consider the winning set  $\subVal \subseteq Q$ for which there is no winning strategy (for either of the players) in $\formNF_q$. We define the coloring function $\colFunc: Q \times Q \rightarrow \colSet$ by: for all $q' \in Q$, $\colFunc(q,q') = c_\wa$ if $q' \in \subVal$, and $\colFunc(q,q') = c_\wb$ otherwise. Furthermore, for $q'' \neq q \in Q$, $\colFunc(q'',q') = p^q(q'')$.
	
	Let us now define the set of prefixes that are bad for either of the player. Let $p \in \{ \A,\B \}$ denote a player and $\bar{p}$ the other. Then, the set of prefixes of colors that are bad for Player $p$ are the ones where she could have ensured getting closer to $q$ but it did not happen: $\mathsf{Bad}_p = \{ \rho \cdot c^{i+1}_p \cdot c \mid \rho \in (\colSet)^*,\; i \geq 0,\; c \notin \{ c_\wa,c_\wb,c^{k}_p,c^{k}_{\bar{p}} \mid 1 \leq k \leq i \} \} \subseteq (\colSet)^{\omega}$. We can now define the winning set $W \subseteq (\colSet)^\omega$ for Player $\A$:
	$$W = \bigcup_{\rho \in \mathsf{Bad}_\B \setminus \mathsf{Bad}_\A} \cyl(\rho) \cup \bigcup_{\rho \in (\colSet)^* \setminus \mathsf{Bad}_\A \wedge c_\wb \notin \rho} \cyl(\rho \cdot c_\wa)$$
	
	We have that $W$ is open. Furthermore, for $\Aconc = \AdetConc$, the game $\Games{\Aconc}{W}$ is not determined. Indeed, consider a strategy $\s_\B: (\colSet)^* \times Q \rightarrow B$ for Player $\B$ and let us show that it is not winning for Player $\B$ (it is simpler to show for Player $\A$, as we do not need the hypothesis that $q$ is strongly reachable from $q_0$). We define the strategy $\s_\A: (\colSet)^* \times Q \rightarrow B$ in the following way, for all $\rho,q' \in (\colSet)^* \times Q$:
	\[   
	\s_\A(\rho,q')  = 
	\begin{cases}
	\min_{<_\A} \reachStrat{\A}(\formNF_{q'},R_{i}^q) & 
\text{if } p^q(q') = c^{i+1}_\A\\
	\min_{<_\A} \inv{\deltaDirac}(q',\s_\B(\rho,q'))[Q \setminus R_{i}^q] &
\text{if } p^q(q') = c^{i+1}_\B \\
	 & \quad\wedge \s_\B(\rho,q') \notin \reachStrat{\B}(\formNF_{q'},R_{i}^q) \\
	 \min_{<_\A} \inv{\deltaDirac}(q,\s_\B(\rho,q))[\subVal] &
\text{if } q' = q \\
	 \text{arbitrary}& \text{otherwise}\\
	\end{cases}
	\]
	
	First, note that, in the third case, the strategy $\s_\A$ is well defined since there is no winning strategy for Player $\B$ in the local interaction $\formNF_q$ for the winning set $\subVal$, and therefore $\s_\B(\rho,q') \notin \reachStrat{\B}(\formNF_{q},Q \setminus \subVal)$. Now, consider the state sequence $\gamma = \outCome{\Aconc}{\s_\A,\s_\B} \in Q^\omega$ induced by the pair of strategies $(\s_\A,\s_\B)$. The strategy $\s_\A$ ensures that if a state $q' \in Q$ is such that $p^q(q') = c^{i+1}_\A$ %in $R_{i+1}^q$ 
	for some $i \geq 0$, then the next state seen is in $R_{i}^q$ (in particular, it may be $q$, in which case $c_\wa$ or $c_\wb$ will be seen afterwards). Therefore, it implies that no prefix of $\extendFunc{\colFunc}(\gamma) \in (\colSet)^\omega$ is in $\mathsf{Bad}_\A$. 
	
	First, assume that the state $q$ is not reached. Then, since $q$ is strongly reachable from $q_0$, it implies that $q_0 \in R_i^q$ for some $i \geq 0$. Hence, it follows that the level of states has not strictly decreased. Thus, a prefix of $\extendFunc{\colFunc}(\gamma)$ is in $\mathsf{Bad}_\B$. Therefore, $\extendFunc{\colFunc}(\gamma) \in W$. 
	
	Second, assume that the state $q$ is seen. Consider the first time it appears in $\gamma \in Q^\omega$. The next state seen in $\gamma$ is in $\subVal$, ensuring that the color $c_\wa$ is seen, while the color $c_\wb$ has not been seen yet. Therefore, we have $\extendFunc{\colFunc}(\gamma) \in W$. In any case, we have $\outCome{\Aconc}{\s_\A,\s_\B} \in U_W = \inv{\extendFunc{\colFunc}}[W]$, i.e. the strategy $\s_\B$ is not winning for Player $\B$.
\end{proof}

\subsection{Definitions from \cite{DBLP:conf/concur/Bouyer0ORV20}}
\label{definition:monotony_selectivity}
We recall here the condition for the existence of finite memory strategy in turn-based games established in \cite{DBLP:conf/concur/Bouyer0ORV20}. In the following, we will be considering a deterministic concurrent arena $\Aconc = \AdetConc$.

First, in \cite{DBLP:conf/concur/Bouyer0ORV20}, the authors do not consider a winning set $W \subseteq \colSet^\omega$ but rather a preference relation $\preceq: \colSet^\omega \times \colSet^\omega$ for Player $\A$ and the antagonistic preference $\preceq^{-1}$ for Player $\B$. Hence, we need to translate a winning set into a %n equivalent
preference relation. For $W \subseteq \colSet^\omega$, we consider the preference relation $\preceq_W: \colSet^\omega \times \colSet^\omega$ defined by $\rho \prec_W \rho'$ for all $\rho \not \in W$ and $\rho' \in W$.

Furthermore, the authors do not consider winning strategies, but optimal ones. However, it has to be noted that a winning strategy with a winning set $W \subseteq \colSet^\omega$ is an optimal strategy with preference $\preceq_W$ whereas the converse does not hold: if a player has no winning strategy, every one of his strategy is optimal, whereas if a player has a winning strategy, optimal and winning strategies coincide.

Let us now focus more closely on the condition stated in 
\cite{DBLP:conf/concur/Bouyer0ORV20} for the existence of optimal finite-memory 
strategies. Let us first recall a few definitions. Consider a language $L \subseteq \colSet^*$. The language $[L] := 
\{ \rho \in \colSet^\omega \mid \forall n \in \Nat,\; \exists \pi \in L,\; 
\rho_{\leq n} \sqsubset \pi \}$ refers to the set of infinite words whose 
prefixes are also the prefixes of a word in $L$. Furthermore, the notation 
$\rec(\colSet)$ refers to the regular languages on a finite subset of 
$\colSet$. In addition, for a preference $\preceq \subseteq \colSet^\omega 
\times \colSet^\omega$, and two languages $L,L' \subseteq \colSet^\omega$, $L 
\preceq L'$ refers to $\forall \rho \in L,\; \exists \rho' \in L',\; \rho 
\preceq \rho'$ and $L \prec L'$ refers to $\exists \rho' \in L',\; \forall \rho 
\in L,\; \rho \prec \rho'$. Note that $L \prec L' \Leftrightarrow \lnot (L' 
\preceq L)$. Finally, for a memory skeleton $\langle M,m_{init},\mu \rangle$ on 
$\colSet$, and two states of the memory $m,m' \in M$, we denote 
$L^\mathcal{M}_{m,m'} := \{ \rho \in \colSet^* \mid 
\extendFunc{\mu}(m,\rho) = m' \}$. Let us consider the 
definitions of $\mathcal{M}$-monotony and $\mathcal{M}$-selectivity from 
\cite{DBLP:conf/concur/Bouyer0ORV20}:

\begin{definition}
	Let $\mathcal{M} = \langle M,m_{init},\mu \rangle$ be a memory skeleton. A preferences $\preceq \subseteq \colSet^\omega \times \colSet^\omega$ is $\mathcal{M}$-monotone if, for all $m \in M$ and $L_1,L_2 \in \rec(\colSet)$: $(\exists \rho \in L^\mathcal{M}_{m_{init},m},\; [\rho \cdot L_1] \prec [\rho \cdot L_2]) \Rightarrow (\forall \rho' \in L^\mathcal{M}_{m_{init},m}, [\rho' \cdot L_1] \preceq [\rho' \cdot L_2])$.
\end{definition}

\begin{definition}
	Let $\mathcal{M} = \langle M,m_{init},\mu \rangle$ be a memory skeleton. A preference $\preceq \subseteq \colSet^\omega \times \colSet^\omega$ is $\mathcal{M}$-selective if, for all $\rho \in \colSet^*,\; m = \extendFunc{\mu}(m_{init},\rho) \in M$, for all $L_1,L_2 \in \rec(\colSet)$ such that $L_1,L_2 \subseteq L^\mathcal{M}_{m,m}$, for all $L_3 \in \rec(\colSet)$, $[\rho \cdot (L_1 \cup L_2)^* \cdot L_3] \preceq [\rho \cdot L_1^*] \cup [\rho \cdot L_2^*] \cup [\rho \cdot L_3]$.
\end{definition}

Then the authors proved the following result (Theorem 9 in \cite{DBLP:conf/concur/Bouyer0ORV20}, and Theorem~\ref{thm:turn_based_memory_winning} in the main part of this paper):
\begin{theorem}
	Let $\preceq \subseteq \colSet^\omega \times \colSet^\omega$ be a preference relation and $\mathcal{M}$ be a memory skeleton. Then, both players have  finite-memory strategies implemented with memory skeleton $\mathcal{M}$ in all finite deterministic turn-based games if and only if $\preceq$ and $\preceq^{-1}$ are $\mathcal{M}$-monotone and $\mathcal{M}$-selective.
	\label{thm:turn_equiv_memory_winning}
\end{theorem}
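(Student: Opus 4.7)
The plan is to prove each direction of the equivalence separately. For the necessity direction ($\Rightarrow$), I would build small witness games. If $\mathcal{M}$-monotony fails, there exist a memory state $m$, two prefixes $\rho, \rho' \in L^\mathcal{M}_{\minit,m}$ and two regular languages $L_1, L_2$ such that $[\rho \cdot L_1] \prec [\rho \cdot L_2]$ while $[\rho' \cdot L_2] \prec [\rho' \cdot L_1]$. Build a single finite turn-based game in which Player $\A$ starts by choosing, with a short color-free gadget, between reaching a "left branch" (that forces any infinite continuation into $[\cdot \, L_1]$) or a "right branch" (forcing $[\cdot \, L_2]$), and place this choice after a gadget that produces either $\rho$ or $\rho'$ (a preliminary choice made by Player $\B$). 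Then no $\mathcal{M}$-implemented strategy of Player $\A$ can be optimal against both of $\B$'s prefix-choices, because the action at the Player-$\A$ state depends only on the current memory, which equals $m$ in both cases. A symmetric construction handles failure of $\mathcal{M}$-monotony of $\preceq^{-1}$. If $\mathcal{M}$-selectivity fails, one builds a one-loop game where a single state has two self-loops producing $L_1$ and $L_2$ and one exit producing $L_3$: the negation of selectivity provides an infinite play in $[\rho \cdot (L_1 \cup L_2)^* \cdot L_3]$ that beats everything in $[\rho \cdot L_1^*] \cup [\rho \cdot L_2^*] \cup [\rho \cdot L_3]$, hence every $\mathcal{M}$-implemented strategy (which deterministically selects one loop at each memory state) fails to be optimal.

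For the sufficiency direction ($\Leftarrow$), the natural move is to take the synchronous product $\Aconc \otimes \mathcal{M}$ of the arena with the memory skeleton and observe that, since the winning condition depends only on the sequence of colors, the product is again a finite deterministic turn-based game whose colors and preference are inherited from $\Aconc$. A positional strategy on $\Aconc \otimes \mathcal{M}$ corresponds bijectively to an $\mathcal{M}$-implemented strategy on $\Aconc$, so it is enough to show that on every such product both players have positional optimal strategies. For this I would proceed by an edge-elimination / induction argument in the Gimbert--Zielonka style: pick a state $(q,m)$ owned by, say, Player $\A$ with several outgoing actions. Within the sub-game of plays that stay around $(q,m)$, the action choices at $(q,m)$ are, after collapsing behaviour between two revisits, exactly the kind of union-of-loops-and-exits that $\mathcal{M}$-selectivity addresses; selectivity at $m$ then yields one action whose fixation does not worsen the value at $(q,m)$. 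To propagate the fixation to every other history visiting $(q,m)$, apply $\mathcal{M}$-monotony: it guarantees that once the ranking is decided for one $\mathcal{M}$-compatible prefix $\rho$, it is consistent for every other $\rho' \in L^\mathcal{M}_{\minit,m}$. Fix this action, reducing the number of outgoing edges, and iterate until a positional strategy for $\A$ remains; then do the same for $\B$ using $\mathcal{M}$-monotony and $\mathcal{M}$-selectivity of $\preceq^{-1}$.

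The main obstacle will be the sufficiency argument, and inside it the delicate interplay between selectivity and monotony. Selectivity is a \emph{local} statement about a fixed memory-preserving loop, while the induction needs it at every memory-recurrent state in the product, and furthermore the languages $L_1, L_2, L_3$ that arise are not arbitrary regular languages but languages of plays of the remaining sub-game — so I would have to argue that these are regular and that the selectivity hypothesis applied to them gives the required non-increase of value. Monotony then has to be used to lift a local choice to a uniformly good action. Setting up the induction invariant carefully (preserving both $\mathcal{M}$-monotony and $\mathcal{M}$-selectivity of the inherited preference through edge elimination, and maintaining finiteness and determinism of the reduced product) is where the real work lies; the product construction and the translation between positional strategies on $\Aconc \otimes \mathcal{M}$ and $\mathcal{M}$-implemented strategies on $\Aconc$ are routine but notation-heavy, while the necessity direction is mostly a matter of packaging the definitional failures as small games.
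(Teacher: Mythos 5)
The paper does not actually prove this statement: it is imported verbatim as Theorem~9 of \cite{DBLP:conf/concur/Bouyer0ORV20}, and the present paper only uses it as a black box (via Theorem~\ref{thm:turn_based_memory_winning}). So there is no in-paper proof to compare your attempt against; the relevant comparison is with the proof in that reference, whose overall architecture your plan does reproduce: necessity by packaging a failure of $\mathcal{M}$-monotony or $\mathcal{M}$-selectivity into a small finite turn-based counterexample game, and sufficiency by passing to the product of the arena with $\mathcal{M}$ and running a Gimbert--Zielonka-style edge-elimination induction to extract positional (hence, after projection, $\mathcal{M}$-implemented) optimal strategies.

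That said, what you have written is a proof plan rather than a proof, and the gap sits exactly where the theorem's content lies. For sufficiency, you assert but do not establish the induction invariant: that the languages $L_1,L_2,L_3$ arising from the residual subgame around a product state $(q,m)$ are regular and contained in $L^{\mathcal{M}}_{m,m}$ so that $\mathcal{M}$-selectivity applies to them; that fixing the selected action does not decrease the (preference-theoretic) value; and that $\mathcal{M}$-monotony really does lift the locally good choice uniformly over all prefixes in $L^{\mathcal{M}}_{\minit,m}$ rather than just the one at which selectivity was invoked. You acknowledge this is ``where the real work lies,'' but that is precisely the theorem. In the necessity direction there is also a point to nail down: the gadgets realizing $[\rho\cdot L_1]$ and $[\rho\cdot L_2]$ must specify which player controls the nondeterminism inside the automata for $L_1,L_2$ (the comparison $[\rho L_1]\prec[\rho L_2]$ is an $\exists/\forall$ statement over the two sets, so the branch values only match these sets if the opponent resolves the choices), and ``optimal'' must be handled with care since the preference relation need not admit maxima; the cited reference resolves both via explicit constructions. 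As it stands, the proposal identifies the correct route but cannot be accepted as a proof of the equivalence.
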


Note that when $M$ is a singleton, this characterization coincide with the earlier one proved in \cite{DBLP:conf/concur/GimbertZ05}.

\subsection{Theorem of finite-memory determinacy}
\begin{theorem}[Theorem 9 in \cite{DBLP:conf/concur/Bouyer0ORV20}][Mentioned~\ref{ref:finite_mem}]
	Let $\mathcal{M}$ be a memory skeleton and $W \subseteq \colSet^\omega$% be a winning set and $\mathcal{M}$ be a memory skeleton
	. %Then t
	The two following assertions are equivalent:
	\begin{enumerate}%[(i)]
		\item every finite deterministic turn-based game with $W$ as winning set is
		determined with winning strategies for both players that can be
		found among strategies implemented with memory skeleton
		$\mathcal{M}$;
		\item $W$ and $\colSet^\omega \setminus W$ are
		$\mathcal{M}$-monotone and $\mathcal{M}$-selective. %$\diamond$
	\end{enumerate}
	% every finite turn-based games with $W$ as
	%   winning set is determined with winning strategies for both players
	%   that can be found among strategies implemented with memory skeleton
	%   $\mathcal{M}$ if and only if $W$ and $\colSet^\omega \setminus W$ are
	%   $\mathcal{M}$-monotone and $\mathcal{M}$-selective.
	\label{thm:turn_based_memory_winning}
\end{theorem}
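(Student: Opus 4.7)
The statement is an equivalence between a uniform finite-memory determinacy property on all finite deterministic turn-based arenas and two combinatorial closure conditions on $W$ and its complement. I would treat the two implications separately, as they are of quite different natures: $(1) \Rightarrow (2)$ is a contrapositive argument producing explicit witness arenas, whereas $(2) \Rightarrow (1)$ requires a global argument via a product construction combined with a Gimbert--Zielonka-style principle lifting one-player positionality to full positional determinacy.

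\textbf{Direction $(1) \Rightarrow (2)$.} For each condition and each of the objectives $W$ and $\colSet^\omega \setminus W$, I would contrapose, exhibiting a small finite deterministic turn-based arena in which no $\mathcal{M}$-implemented strategy suffices. If $\mathcal{M}$-monotony fails at a memory state $m$ with $\rho,\rho' \in L^{\mathcal{M}}_{\minit,m}$ and languages $L_1,L_2$ such that $[\rho \cdot L_1] \prec [\rho \cdot L_2]$ but $[\rho' \cdot L_1] \not\preceq [\rho' \cdot L_2]$, I build a two-tier arena in which the opponent first forces an initial prefix whose colouring is $\rho$ or $\rho'$, and then the target player must pick between gadgets generating plays in $L_1$ or $L_2$. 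The two branches leave the memory in identical state $m$, so any $\mathcal{M}$-implementation commits to the same second-tier choice and is suboptimal on at least one branch. For a failure of $\mathcal{M}$-selectivity at some $\rho$, I construct a single-vertex gadget offering cycles whose colourings generate $L_1^*$ and $L_2^*$ together with an escape producing $L_3$; the strict improvement of an interleaved play over any committed one cannot be realized positionally once the memory returns to its value after $\rho$.

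\textbf{Direction $(2) \Rightarrow (1)$.} I would form the product arena $G \otimes \mathcal{M}$ whose states are pairs (vertex, memory), with the memory updated via $\mu$ along each transition, and with the colouring inherited unchanged from $G$. Positional strategies in this product are in bijection with $\mathcal{M}$-implemented strategies in $G$, and their generated coloured plays coincide, so it suffices to show that $G \otimes \mathcal{M}$ is positionally determined for $W$. To this end, I would adapt the Gimbert--Zielonka argument: the $\mathcal{M}$-monotony of $W$ and of its complement ensures that the optimal continuation value depends only on the current memory cell (so the value function on the product is well-defined across incoming histories), while $\mathcal{M}$-selectivity ensures that whenever two profitable cycles are available at a product state, committing to a single one is not worse than interleaving them. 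The proof then proceeds by induction on the number of product states with non-trivial outgoing choices, pruning one edge per step and using the two conditions to preserve optimal values, until a positional strategy remains on each side.

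\textbf{Main obstacle.} The hard step is the induction in $(2) \Rightarrow (1)$: after pruning, one must verify that the residual product game still satisfies whichever invariant is used to apply the induction hypothesis. The delicate point is bridging the graph-theoretic pruning with the language-theoretic hypotheses, i.e.\ associating to each partial strategy the regular language of its compatible coloured plays and checking that the strict versus lax clauses of monotony and selectivity translate into the right inequalities on values after the modification. The requirement that the conditions hold for both $W$ and $\colSet^\omega \setminus W$ is indispensable here, because the induction alternates between the two players and demands simultaneous $\mathcal{M}$-implementability for each side, which is exactly what makes the criterion tight.
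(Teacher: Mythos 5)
You should first be aware that the paper does not prove this statement at all: it is imported verbatim as Theorem~9 of \cite{DBLP:conf/concur/Bouyer0ORV20} (the theorem header says as much), and the appendix only restates the definitions of $\mathcal{M}$-monotony and $\mathcal{M}$-selectivity before citing the result. So there is no in-paper proof to compare against; the honest ``proof'' here is the citation.

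Measured against the actual proof in the cited reference, your outline reproduces its broad strategy correctly: the direction $(1)\Rightarrow(2)$ is indeed established by contraposition, building small finite arenas that separate the branches $L_1$ versus $L_2$ after prefixes $\rho,\rho'$ landing in the same memory state (for monotony) or offering competing cycles plus an escape (for selectivity); and the direction $(2)\Rightarrow(1)$ does go through the product of the arena with the skeleton $\mathcal{M}$, reducing $\mathcal{M}$-implementability to positionality in the product, followed by a Gimbert--Zielonka-style argument. The one structural point where your sketch drifts from the reference is in how that last step is organised: \cite{DBLP:conf/concur/Bouyer0ORV20} first proves the statement for \emph{one-player} arenas (where selectivity and monotony directly control the comparison between committing to a single cycle/choice and mixing), and then invokes a one-to-two-player lift --- both players being positionally optimal in their respective one-player product games implies positional determinacy of the two-player product game. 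Your description folds these into a single edge-pruning induction on the two-player product; that is closer to the original Gimbert--Zielonka 2005 scheme and can be made to work, but the invariant you would need to maintain under pruning (that the residual game still admits the language-theoretic hypotheses in a usable form) is exactly the part you flag as the ``main obstacle'' and leave unexecuted. As it stands your text is a plausible roadmap rather than a proof: the witness-arena constructions are only described, and the key induction/lift is named but not carried out, so it should be read as a reconstruction of the cited argument, not a self-contained replacement for it.
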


\subsection{Theorem stating that monotony and selectivty are preserved by sequentialization}
\label{proof:theorem5}
\begin{theorem}[Mentioned~\ref{ref:thm_mono_sequen}]
	Let $\langle \Aconc,W \rangle$ be a deterministic concurrent game on the concurrent arena $\Aconc = \AdetConc$, let $\langle \ATurnConc,\Seq(W) \rangle$ be its sequential version, and let $\mathcal{M}$ be a memory skeleton on $\colSet$. Then, $W$ is $\mathcal{M}$-monotone and $\mathcal{M}$-selective if and only if $\Seq(W)$ is $\Seq(\mathcal{M})$-monotone and $\Seq(\mathcal{M})$-selective.
	\label{thm:winning_condition_monotony_selectivity}
\end{theorem}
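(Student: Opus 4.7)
The plan is to transfer $\mathcal{M}$-monotony/selectivity of $W$ to $\Seq(\mathcal{M})$-monotony/selectivity of $\Seq(W)$ (and conversely) by exploiting the algebraic compatibility between the projection $\projec{\colSetSeq}{\colSet}$, the memory updates and the two preferences $\preceq_W$ and $\preceq_{\Seq(W)}$. The central ingredient is Proposition~\ref{prop:projection_memory}, which gives $\extendFunc{\Seq(\mu)}(\minit,\rho)=\extendFunc{\mu}(\minit,\projec{\colSetSeq}{\colSet}(\rho))$ for every $\rho\in\colSetSeq^*$; as an immediate corollary one gets $L^{\Seq(\mathcal{M})}_{m,m'}=\projec{\colSetSeq}{\colSet}^{-1}[L^{\mathcal{M}}_{m,m'}]$ and $\projec{\colSetSeq}{\colSet}[L^{\Seq(\mathcal{M})}_{m,m'}]=L^{\mathcal{M}}_{m,m'}$, so the memory-indexed languages on the two sides correspond faithfully through the projection and its pullback.

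First, I would collect a short toolbox of algebraic identities for $\projec{\colSetSeq}{\colSet}$, to be used freely afterwards: it is a monoid morphism on $\starom{\colSetSeq}$, it distributes over concatenation of finite words and of a finite and an infinite word, its preimage commutes with union, concatenation and Kleene star, its image commutes with union and concatenation, and both preimage and image preserve regular languages. In particular, for any $L\in\rec(\colSet)$ the set $\projec{\colSetSeq}{\colSet}^{-1}[L]$ lies in $\rec(\colSetSeq)$, and for any $L'\in\rec(\colSetSeq)$ the set $\projec{\colSetSeq}{\colSet}(L')$ lies in $\rec(\colSet)$. I would then relate the closure operator $[\cdot]$ across the projection, using $\rho'\in\Seq(W)\Leftrightarrow\projec{\colSetSeq}{\colSet}(\rho')\in W$ and the fact that words in which $k_\Aconc$ is ultimately the only letter are automatically outside $\Seq(W)$, so that for every $\rho\in\colSetSeq^*$ and $L\in\rec(\colSetSeq)$ one has $\projec{\colSetSeq}{\colSet}([\rho\cdot L])\subseteq[\projec{\colSetSeq}{\colSet}(\rho)\cdot\projec{\colSetSeq}{\colSet}(L)]$, with a complementary inclusion on infinite words whose projection is itself infinite.

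Second, armed with these identities, the definitions of monotony and selectivity unfold in parallel on the two sides. For monotony, given regular languages $L_1',L_2'\in\rec(\colSetSeq)$ and a witness $\rho'\in L^{\Seq(\mathcal{M})}_{\minit,m}$ with $[\rho'\cdot L_1']\prec_{\Seq(W)}[\rho'\cdot L_2']$, I project to $L_i:=\projec{\colSetSeq}{\colSet}(L_i')\in\rec(\colSet)$ and $\rho:=\projec{\colSetSeq}{\colSet}(\rho')\in L^{\mathcal{M}}_{\minit,m}$, deduce $[\rho\cdot L_1]\prec_W[\rho\cdot L_2]$, apply $\mathcal{M}$-monotony of $W$, and pull back to any $\rho''\in L^{\Seq(\mathcal{M})}_{\minit,m}$; the converse direction uses the symmetric lift $L_i\mapsto\projec{\colSetSeq}{\colSet}^{-1}[L_i]$. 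Selectivity is transferred the same way, with the extra point that Kleene stars and unions of languages $L_1,L_2\subseteq L^{\mathcal{M}}_{m,m}$ correspond under $\projec{\colSetSeq}{\colSet}^{-1}$ to languages inside $L^{\Seq(\mathcal{M})}_{m,m}$, and dually under $\projec{\colSetSeq}{\colSet}$ in the converse direction. The complementary set $\colSet^\omega\setminus W$ is handled identically, since $\Seq(\colSet^\omega\setminus W)=\colSetSeq^\omega\setminus\Seq(W)$ up to the $k_\Aconc$-tail words, which are uniformly outside $\Seq(W)$ and do not disturb the preference comparison.

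The main obstacle is to nail down the exact interaction between $\projec{\colSetSeq}{\colSet}$ and the closure operator $[\cdot]$, since the fresh color $k_\Aconc$ may be inserted anywhere inside a word rather than at fixed positions. This forces a careful check that the identities for concatenation, union and Kleene star lift correctly through $[\cdot]$, modulo the ``degenerate'' infinite words in $\colSetSeq^\omega$ whose projection is finite, which we show are always losing for Player $\A$ and hence do not affect either $\preceq_{\Seq(W)}$ or the $\preceq_W$ side after projection. Once these commutation identities are established, the two directions of the equivalence are immediate by unfolding definitions; it is the routine but delicate verification of those identities that makes the proof substantially longer than the other transfer results obtained via Theorem~\ref{thm:main}.
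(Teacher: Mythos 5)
Your proposal follows essentially the same route as the paper's proof: the same toolbox of algebraic lemmas about $\projec{\colSetSeq}{\colSet}$ (morphism properties, preservation of regularity in both directions, the correspondence $L^{\mathcal{M}}_{m,m'} = \projec{\colSetSeq}{\colSet}[L^{\Seq(\mathcal{M})}_{m,m'}]$ derived from Proposition~\ref{prop:projection_memory}, and the compatibility of $\prec_W$ and $\prec_{\Seq(W)}$ modulo the degenerate $k_\Aconc$-tail words), followed by the same unfolding of the monotony and selectivity definitions in both directions. The one point you leave as a ``careful check'' --- the inclusion $[\projec{\colSetSeq}{\colSet}[L]] \subseteq \colSet^\omega \cap \projec{\colSetSeq}{\colSet}[\,[L]\,]$ --- is precisely where the paper's proof does real work: it uses the regularity of $L$ to pick witnesses in $L$ whose runs of $k_\Aconc$ are bounded by the size of an automaton recognizing $L$, then applies K\"onig's lemma to extract a limit word and the bound to guarantee that its projection is still infinite; without that quantitative control the limit could degenerate to a word ending in $k_\Aconc^\omega$, so this step deserves the full argument rather than a remark.
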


Before proving this theorem, we state and prove a few simpler lemmas. Recall 
that $\Seq(W) = \inv{\projec{\colSetSeq}{\colSet}}[W]$ where $\projec{\colSetSeq}{\colSet}: \starom{\colSetSeq} \rightarrow \starom{\colSet}$ is the projection function from $\colSetSeq$ to $\colSet$.
\begin{lemma}
	For all $L \subseteq \starom{\colSet}$, we have $L = \projec{\colSet_\Aconc}{\colSet}[\inv{\projec{\colSet_\Aconc}{\colSet}}[L]]$.
	\label{lem:projection_equality_surjectivity}
\end{lemma}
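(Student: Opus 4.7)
The plan is to observe that this is the standard set-theoretic fact $f[f^{-1}[L]] = L$, which holds for any function $f: X \to Y$ and subset $L \subseteq Y$ \emph{provided} $L$ is contained in the image of $f$. Here $f = \projec{\colSet_\Aconc}{\colSet}$ maps $\starom{\colSetSeq}$ to $\starom{\colSet}$, so it suffices to show this projection is surjective onto $\starom{\colSet}$.

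First I would dispatch the inclusion $\projec{\colSet_\Aconc}{\colSet}[\inv{\projec{\colSet_\Aconc}{\colSet}}[L]] \subseteq L$, which is a routine set-theoretic fact: if $y = \projec{\colSet_\Aconc}{\colSet}(x)$ for some $x \in \inv{\projec{\colSet_\Aconc}{\colSet}}[L]$, then by definition of the preimage $\projec{\colSet_\Aconc}{\colSet}(x) \in L$, hence $y \in L$.

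For the reverse inclusion $L \subseteq \projec{\colSet_\Aconc}{\colSet}[\inv{\projec{\colSet_\Aconc}{\colSet}}[L]]$, I would pick an arbitrary $\rho \in L \subseteq \starom{\colSet}$ and produce a preimage. Since $\colSet \subseteq \colSetSeq = \colSet \cup \{k_\Aconc\}$, we have $\rho \in \starom{\colSetSeq}$. Unfolding the inductive definition of $\projec{\colSet_\Aconc}{\colSet}$ (which acts as the identity on symbols of $\colSet$ and erases $k_\Aconc$) on a sequence containing only symbols in $\colSet$, a straightforward induction on finite prefixes gives $\projec{\colSet_\Aconc}{\colSet}(\rho) = \rho$. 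Thus $\rho \in \inv{\projec{\colSet_\Aconc}{\colSet}}[L]$ and $\rho = \projec{\colSet_\Aconc}{\colSet}(\rho) \in \projec{\colSet_\Aconc}{\colSet}[\inv{\projec{\colSet_\Aconc}{\colSet}}[L]]$.

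There is no real obstacle: the only subtlety is making sure the argument covers both the finite and infinite cases uniformly, which the pointwise inductive definition of $\extendFunc{\projec{\colSet_\Aconc}{\colSet}}$ handles cleanly since no symbol of a sequence in $\starom{\colSet}$ is ever erased.
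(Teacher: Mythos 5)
Your proof is correct and follows exactly the same route as the paper, which simply invokes surjectivity of $\projec{\colSetSeq}{\colSet}$; you merely make the surjectivity explicit by noting that any $\rho \in \starom{\colSet} \subseteq \starom{\colSetSeq}$ is its own preimage. No issues.
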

\begin{proof}
	This is straightforward, by surjectivity of the projection 
	$\projec{\colSet_\Aconc}{\colSet}: \starom{\colSetSeq} \rightarrow 
	\starom{\colSet}$.
\end{proof}

\begin{lemma}
	For all $L_1,L_2 \subseteq \starom{\colSetSeq}$, we have $\projec{\colSetSeq}{\colSet}[L_1 \cup L_2] = \projec{\colSetSeq}{\colSet}[L_1] \; \cup \; \projec{\colSetSeq}{\colSet}[L_2]$. Furthermore, if $L_1 \subseteq \colSetSeq^*$, $\projec{\colSetSeq}{\colSet}[L_1 \cdot L_2] = \projec{\colSetSeq}{\colSet}[L_1] \; \cdot \; \projec{\colSetSeq}{\colSet}[L_2]$, and $\projec{\colSetSeq}{\colSet}[L_1^*] = (\projec{\colSetSeq}{\colSet}[L_1])^*$.
	\label{lem:algebraic_properties_projection}
\end{lemma}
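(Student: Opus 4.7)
All three identities are standard consequences of the fact that the projection $\projec{\colSetSeq}{\colSet}$ is a monoid homomorphism on $\colSetSeq^*$ that extends continuously to $\starom{\colSetSeq}$, as recalled in Section~\ref{sec:preliminaries}. My plan is to prove them in the order stated, each reducing to the preceding one.

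The first identity, $\projec{\colSetSeq}{\colSet}[L_1 \cup L_2] = \projec{\colSetSeq}{\colSet}[L_1] \cup \projec{\colSetSeq}{\colSet}[L_2]$, holds for the direct image of any function whatsoever, by a direct set-inclusion argument in both directions: $\rho \in \projec{\colSetSeq}{\colSet}[L_1 \cup L_2]$ iff there exists $\pi \in L_1 \cup L_2$ with $\projec{\colSetSeq}{\colSet}(\pi) = \rho$, which splits into the two cases $\pi \in L_1$ and $\pi \in L_2$.

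For the concatenation identity, I rely on the defining equation $\projec{\colSetSeq}{\colSet}(\pi \cdot \pi') = \projec{\colSetSeq}{\colSet}(\pi) \cdot \projec{\colSetSeq}{\colSet}(\pi')$ from the preliminaries (valid as soon as $\pi \in \colSetSeq^*$, which is ensured by the assumption $L_1 \subseteq \colSetSeq^*$). Given $\pi \in L_1$ and $\pi' \in L_2$, concatenating yields an element of $L_1 \cdot L_2$ whose projection is the concatenation of the projections, which gives the inclusion $\supseteq$. Conversely, any $\pi \cdot \pi' \in L_1 \cdot L_2$ with $\pi \in L_1$ and $\pi' \in L_2$ projects to $\projec{\colSetSeq}{\colSet}(\pi) \cdot \projec{\colSetSeq}{\colSet}(\pi')$, witnessing membership in the right-hand side.

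The Kleene-star identity then follows by a straightforward induction: since $\epsilon \in L_1^*$ and $\projec{\colSetSeq}{\colSet}(\epsilon) = \epsilon$, the base case holds; the inductive step uses $L_1^{n+1} = L_1^n \cdot L_1$ together with the concatenation identity (valid since $L_1^n \subseteq \colSetSeq^*$). Taking the union over $n \in \Nat$ and invoking the first identity (extended to arbitrary unions by the same pointwise argument) gives $\projec{\colSetSeq}{\colSet}[L_1^*] = (\projec{\colSetSeq}{\colSet}[L_1])^*$. No step is genuinely hard; the only point of care is to keep track of the assumption $L_1 \subseteq \colSetSeq^*$ so that concatenation is well defined and the homomorphism equation applies.
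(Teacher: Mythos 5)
Your proof is correct and takes the same route as the paper, which simply asserts that the lemma is straightforward from the definition of $\projec{\colSetSeq}{\colSet}$ as a homomorphism; you have merely spelled out the details (direct images commute with unions, the homomorphism equation handles concatenation, and induction plus arbitrary unions handles the Kleene star). Nothing to correct.
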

\begin{proof}
	This is straightforward, by definition of the function $\projec{\colSet_\Aconc}{\colSet}$.
\end{proof}

\begin{lemma}
	For $L \subseteq \colSetSeq^*$, we have $L \in \rec(\colSetSeq) \Rightarrow \projec{\colSet_\Aconc}{\colSet}[L] \in \rec(\colSet)$. For $L \subseteq \colSet^*$, we have $L \in \rec(\colSet) \Leftrightarrow \projec{\colSet_\Aconc}{\colSet}^{-1}[L] \in \rec(\colSetSeq)$ 
	\label{lem:rationnal_stability_projection}
\end{lemma}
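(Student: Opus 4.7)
The plan is to treat the projection $\projec{\colSetSeq}{\colSet}$ as a monoid homomorphism from $\colSetSeq^*$ to $\colSet^*$: the function defined on $\starom{\colSetSeq}$ in the preliminaries restricts to a map on $\colSetSeq^*$ that sends $k_\Aconc \mapsto \varepsilon$ and $c \mapsto c$ for every $c \in \colSet$, and satisfies $\projec{\colSetSeq}{\colSet}(\pi\cdot\pi') = \projec{\colSetSeq}{\colSet}(\pi)\cdot\projec{\colSetSeq}{\colSet}(\pi')$ by the very definition of the extension of $\projec{\colSetSeq}{\colSet}$ to concatenations. The lemma will then be a direct consequence of the fact that the class of regular languages is closed under homomorphic image and under inverse homomorphic image, two standard results of automata theory.

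First I would dispose of the forward implication of the first statement. Given a DFA recognizing $L \subseteq \colSetSeq^*$, I would replace every transition labeled $k_\Aconc$ with an $\varepsilon$-transition, leaving the other transitions untouched; the resulting $\varepsilon$-NFA recognizes $\projec{\colSetSeq}{\colSet}[L]$, which is therefore regular in $\colSet$ (finitely many symbols actually occur, since only finitely many symbols of $\colSetSeq$ labeled transitions of the original DFA). Next, for the $(\Rightarrow)$ direction of the second statement, from a DFA recognizing $L \subseteq \colSet^*$ I would build a DFA on $\colSetSeq$ recognizing $\inv{\projec{\colSetSeq}{\colSet}}[L]$ by adding, at each state $q$, a self-loop labeled $k_\Aconc$ and keeping all existing transitions on colors in $\colSet$ unchanged; a word $\rho \in \colSetSeq^*$ is accepted iff its $k_\Aconc$-free projection is accepted by the original automaton, i.e. iff $\projec{\colSetSeq}{\colSet}(\rho) \in L$.

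For the $(\Leftarrow)$ direction of the second statement, I would not re-engineer anything: by Lemma~\ref{lem:projection_equality_surjectivity} we have $L = \projec{\colSetSeq}{\colSet}[\inv{\projec{\colSetSeq}{\colSet}}[L]]$, so assuming $\inv{\projec{\colSetSeq}{\colSet}}[L] \in \rec(\colSetSeq)$, the first part of the lemma yields $L \in \rec(\colSet)$.

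No real obstacle is expected here: the whole content is the observation that $\projec{\colSetSeq}{\colSet}$ is an (erasing) homomorphism, after which the three statements are applications of standard closure properties of regular languages combined with Lemma~\ref{lem:projection_equality_surjectivity}. The only mild subtlety is making sure that ``regular over $\colSet$'' is understood in the sense of $\rec(\colSet)$ as defined in the paper (regular over some finite subset of $\colSet$), which is automatic since a DFA uses only finitely many letters of its alphabet.
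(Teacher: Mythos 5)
Your proposal is correct and follows essentially the same route as the paper's proof: replacing $k_\Aconc$-transitions by $\varepsilon$-transitions for the image, adding $k_\Aconc$-self-loops for the preimage, and deriving the converse direction from Lemma~\ref{lem:projection_equality_surjectivity} together with the first statement. The only difference is cosmetic: you explicitly frame the constructions as closure of regular languages under (erasing) homomorphisms and inverse homomorphisms, which the paper leaves implicit.
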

\begin{proof}
	Assume that $L \in \rec(\colSetSeq)$. Then, we can modify a finite automaton recognizing $L$ with a finite alphabet of $\colSetSeq$ by replacing every transition where $k_\Aconc$ appears by an $\epsilon$-transition. Then, the obtained automaton is finite and recognizes the language $\projec{\colSet_\Aconc}{\colSet}[L]$.
	
	Furthermore, a finite automaton recognizing the language $L$ with a finite alphabet on $\colSet$ can be modified by adding self loops labeled by $k_\Aconc$ on every state. The obtained automaton is finite and recognizes the language $\projec{\colSet_\Aconc}{\colSet}^{-1}[L]$. Finally, if $\projec{\colSet_\Aconc}{\colSet}^{-1}[L] \in \rec(\colSetSeq)$, then, with Lemma~\ref{lem:projection_equality_surjectivity},  $\projec{\colSet_\Aconc}{\colSet}[\projec{\colSet_\Aconc}{\colSet}^{-1}[L]] = L \in \rec(\colSet)$.
\end{proof}

\begin{lemma}
	For all $m,m' \in M$, we have: $L^\mathcal{M}_{m,m'} = \projec{\colSet_\Aconc}{\colSet}[L^{\Seq(\mathcal{M})}_{m,m'}]$.
	\label{lem:language_equality_projection}
\end{lemma}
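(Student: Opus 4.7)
The plan is to prove both inclusions by directly invoking Proposition~\ref{prop:projection_memory}, which states that $\extendFunc{\Seq(\mu)}(m,\rho) = \extendFunc{\mu}(m,\projec{\colSetSeq}{\colSet}(\rho))$ for every $\rho \in \colSetSeq^*$. Essentially this lemma is a reformulation of that proposition at the level of membership in the memory-equivalence languages $L^{\bullet}_{m,m'}$.

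For the inclusion $L^\mathcal{M}_{m,m'} \subseteq \projec{\colSetSeq}{\colSet}[L^{\Seq(\mathcal{M})}_{m,m'}]$, I would take an arbitrary $\rho \in L^\mathcal{M}_{m,m'}$, so $\rho \in \colSet^*$ and $\extendFunc{\mu}(m,\rho)=m'$. Viewing $\rho$ as a word of $\colSetSeq^*$ (since $\colSet \subseteq \colSetSeq$), the projection $\projec{\colSetSeq}{\colSet}(\rho)$ is just $\rho$ itself because $\rho$ contains no occurrence of the fresh color $k_\Aconc$. Applying Proposition~\ref{prop:projection_memory} then yields $\extendFunc{\Seq(\mu)}(m,\rho) = \extendFunc{\mu}(m,\rho) = m'$, so $\rho \in L^{\Seq(\mathcal{M})}_{m,m'}$, hence $\rho = \projec{\colSetSeq}{\colSet}(\rho) \in \projec{\colSetSeq}{\colSet}[L^{\Seq(\mathcal{M})}_{m,m'}]$.

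For the reverse inclusion, I would pick any $\rho$ in the right-hand side: there exists $\rho' \in \colSetSeq^*$ with $\extendFunc{\Seq(\mu)}(m,\rho')=m'$ and $\projec{\colSetSeq}{\colSet}(\rho')=\rho$. By definition of $\projec{\colSetSeq}{\colSet}$, the word $\rho$ lies in $\colSet^*$. Proposition~\ref{prop:projection_memory} then gives $\extendFunc{\mu}(m,\rho) = \extendFunc{\mu}(m,\projec{\colSetSeq}{\colSet}(\rho')) = \extendFunc{\Seq(\mu)}(m,\rho') = m'$, so $\rho \in L^\mathcal{M}_{m,m'}$.

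There is no real obstacle here: the whole statement amounts to the fact that $\Seq(\mu)$ is defined to absorb $k_\Aconc$-transitions while mimicking $\mu$ on $\colSet$, and Proposition~\ref{prop:projection_memory} has already isolated this behavior. The only mild care needed is to observe that $\colSet \subseteq \colSetSeq$, so every $\mathcal{M}$-witness word can be lifted unchanged to a $\Seq(\mathcal{M})$-witness word, giving a one-line argument in each direction.
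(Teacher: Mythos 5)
Your proof is correct and follows essentially the same route as the paper, which also reduces the statement to a direct application of Proposition~\ref{prop:projection_memory} (the paper phrases it as the single equivalence $\rho \in L^{\Seq(\mathcal{M})}_{m,m'} \Leftrightarrow \projec{\colSetSeq}{\colSet}(\rho) \in L^{\mathcal{M}}_{m,m'}$ and leaves the surjectivity step implicit, whereas you spell out the two inclusions). No gap.
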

\begin{proof}
	Let $\rho \in \colSetSeq^*$. By Proposition~\ref{prop:projection_memory}, we have
	$\extendFunc{\Seq(\mu)}(m,\rho) = 
	\extendFunc{\mu}(m,\projec{\colSetSeq}{\colSet}(\rho))$%$\extendFunc{\trT(\mu)}(m,\extendFunc{\colFunc_\Aconc}(\rho)) = \extendFunc{\mu}(m,\extendFunc{\colFunc}(\projec{\colSet_\Aconc}{\colSet}(\rho)))$
	. Hence, $\rho \in L^{\Seq(\mathcal{M})}_{m,m'} \Leftrightarrow \projec{\colSet_\Aconc}{\colSet}(\rho) \in L^\mathcal{M}_{m,m'}$.
\end{proof}

\begin{lemma}
	For all $N_1,N_2 \subseteq \colSet^\omega_\Aconc$, we have $N_1 \prec_{\Seq(W)} N_2 \Leftrightarrow \colSet^\omega \cap \projec{\colSet_\Aconc}{\colSet}[N_1] \prec_{W} \colSet^\omega \cap \projec{\colSet_\Aconc}{\colSet}[N_2]$.
	\label{lem:inequality_stability_projection}
\end{lemma}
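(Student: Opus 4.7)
My plan is to unfold both $\prec_{\Seq(W)}$ and $\prec_W$ into their ``witness'' form using the earlier definition: for any winning set $W \subseteq \colSet^\omega$ and $L, L' \subseteq \starom{\colSet}$, we have $L \prec_W L'$ iff there exists $\rho' \in L' \cap W$ and for every $\rho \in L$, $\rho \notin W$. Combined with $\Seq(W) = \inv{\projec{\colSetSeq}{\colSet}}[W]$, this reduces both sides of the claimed equivalence to conditions on preimages and images of the projection $\projec{\colSetSeq}{\colSet}$.

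For the direction $\Rightarrow$, I would take a witness $\rho' \in N_2 \cap \Seq(W)$ and set $\bar{\rho}' := \projec{\colSetSeq}{\colSet}(\rho')$. Since $\rho' \in \Seq(W)$, we get $\bar{\rho}' \in W \subseteq \colSet^\omega$, so $\bar{\rho}'$ is a witness in $\colSet^\omega \cap \projec{\colSetSeq}{\colSet}[N_2] \cap W$. For the ``losing'' half, any $\bar{\rho} \in \colSet^\omega \cap \projec{\colSetSeq}{\colSet}[N_1]$ has a preimage $\rho \in N_1$; from $\rho \notin \Seq(W)$ we deduce $\bar{\rho} = \projec{\colSetSeq}{\colSet}(\rho) \notin W$, as required.

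For $\Leftarrow$, I would lift a witness $\bar{\rho}' \in \colSet^\omega \cap \projec{\colSetSeq}{\colSet}[N_2] \cap W$ to some $\rho' \in N_2$ with $\projec{\colSetSeq}{\colSet}(\rho') = \bar{\rho}' \in W$, so that $\rho' \in \Seq(W)$. For the universal half, I take any $\rho \in N_1$ and split on whether $\projec{\colSetSeq}{\colSet}(\rho)$ is finite or infinite: if finite, it cannot lie in $W \subseteq \colSet^\omega$, so $\rho \notin \Seq(W)$; if infinite, it belongs to $\colSet^\omega \cap \projec{\colSetSeq}{\colSet}[N_1]$, hence by hypothesis it is not in $W$, and again $\rho \notin \Seq(W)$.

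The only subtle point—and arguably the main obstacle to a wholly symmetric argument—is that the projection can produce a finite word (whenever only finitely many letters of $\colSet$ occur in $\rho$), which is why the intersection with $\colSet^\omega$ appears on the right-hand side; the case split in the reverse direction handles this cleanly. Otherwise, the proof is a direct unfolding of the definitions and requires no further machinery.
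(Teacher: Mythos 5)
Your proof is correct and follows essentially the same route as the paper's: both unfold $N_1 \prec_{\Seq(W)} N_2$ into the conditions $N_1 \cap \Seq(W) = \emptyset$ and $N_2 \cap \Seq(W) \neq \emptyset$, substitute $\Seq(W) = \inv{\projec{\colSetSeq}{\colSet}}[W]$, and pass between preimages and images of the projection, with the intersection with $\colSet^\omega$ harmless because $W \subseteq \colSet^\omega$. The paper compresses this into a four-line chain of equivalences where your witness-chasing and finite/infinite case split are left implicit, but the content is the same.
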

\begin{proof}
	By definition of the preference relation $\prec_{\Seq(W)}$, we have:
	\begin{align*}
		N_1 \prec_{\Seq(W)} N_2 & \Leftrightarrow N_1 \cap \Seq(W) = \emptyset \wedge N_2 \cap \Seq(W) \neq \emptyset & \text{ by definition of }\prec_{\Seq(W)}\\ & \Leftrightarrow N_1 \cap \projec{\colSetSeq}{\colSet}^{-1}[W] = \emptyset \wedge N_2 \cap \projec{\colSetSeq}{\colSet}^{-1}[W] \neq \emptyset & \text{ by definition of }\Seq(W)\\ & \Leftrightarrow (\colSet^\omega \cap \projec{\colSetSeq}{\colSet}[N_1]) \cap W = \emptyset \wedge (\colSet^\omega \cap \projec{\colSetSeq}{\colSet}[N_2]) \cap W \neq \emptyset & \text{ by definition of }\projec{\colSetSeq}{\colSet}\\ & \Leftrightarrow (\colSet^\omega \cap \projec{\colSetSeq}{\colSet}[N_1]) \prec_{W} (\colSet^\omega \cap \projec{\colSetSeq}{\colSet}[N_2]) & \text{ by definition of }\prec_{W}
	\end{align*}
\end{proof}

\begin{lemma}
	For all $L \in \rec(\colSetSeq)$, we have $\colSet^\omega \cap \projec{\colSet_\Aconc}{\colSet}[\; [L] \;] = [\projec{\colSet_\Aconc}{\colSet}[L]]$.
	\label{lem:projection_infinite_language}
\end{lemma}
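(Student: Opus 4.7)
The plan is to prove the equality by double inclusion, separating cleanly the easy direction from the hard one, and using the regularity of $L$ only where it is genuinely needed.

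For the inclusion $\subseteq$, let $\rho \in \colSet^\omega$ and $\sigma \in [L]$ with $\projec{\colSetSeq}{\colSet}(\sigma) = \rho$. Given any $n \geq 0$, since $\projec{\colSetSeq}{\colSet}(\sigma) = \rho$ is infinite, $\sigma$ contains infinitely many letters from $\colSet$, so there is $k$ such that $\projec{\colSetSeq}{\colSet}(\sigma_{\leq k}) = \rho_{\leq n}$. Because $\sigma \in [L]$, the prefix $\sigma_{\leq k}$ extends to some $\pi \in L$; projecting yields $\projec{\colSetSeq}{\colSet}(\pi) \in \projec{\colSetSeq}{\colSet}[L]$ with $\rho_{\leq n}$ as prefix. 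Thus $\rho \in [\projec{\colSetSeq}{\colSet}[L]]$.

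For the inclusion $\supseteq$, let $\rho \in [\projec{\colSetSeq}{\colSet}[L]]$, so for each $n$ there is $\pi_n \in L$ with $\rho_{\leq n}$ a prefix of $\projec{\colSetSeq}{\colSet}(\pi_n)$. The obvious attempt -- apply K\"onig's lemma to the finitely branching tree of $\tau \in \Sigma^*$ ($\Sigma$ being the finite alphabet of $L$) whose projection is a prefix of $\rho$ and which extend to some word of $L$ -- is insufficient: the branch produced could stabilize into $k_\Aconc^\omega$, giving a projection that is only a finite prefix of $\rho$. To avoid this, I would instead fix a DFA $\mathcal{A} = (S, s_0, \delta, F)$ for $L$, let $\text{Live} \subseteq S$ denote the states from which $F$ is reachable, and for each $n$ introduce the set $Q_n$ of states reached by reading some $\tau \in \Sigma^*$ with $\projec{\colSetSeq}{\colSet}(\tau) = \rho_{\leq n}$ and landing in $\text{Live}$, and the set $E_n \subseteq Q_n$ of states from which, for every $N \geq n$, some continuation projecting to $\rho_{n+1}\cdots\rho_N$ keeps the run in $\text{Live}$.

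I would then prove $E_n \neq \emptyset$ using a finite-intersection argument: the hypothesis on $\rho$ provides, for each $N$, some state $s_N^{(n)} \in Q_n$ extendable to level $N$, and since $Q_n \subseteq S$ is finite and the ``extendable to level $N$'' sets decrease with $N$, their intersection $E_n$ is non-empty. From this I would build $\sigma$ stage by stage, maintaining the invariant that $\sigma^{(n)} \in \Sigma^*$ satisfies $\projec{\colSetSeq}{\colSet}(\sigma^{(n)}) = \rho_{\leq n}$ and reaches a state in $E_n$. The step from stage $n$ to $n{+}1$ relies on a pigeonhole on the finite set $Q_{n+1}$ applied to the first-return states of the witnesses $\pi_N$ after projecting one more letter, which yields a single continuation landing in $E_{n+1}$. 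The limit $\sigma = \lim_n \sigma^{(n)}$ projects to $\rho$ and has every prefix extendable to a word of $L$, so $\sigma \in [L]$ and $\rho \in \colSet^\omega \cap \projec{\colSetSeq}{\colSet}[[L]]$.

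The main obstacle is the second inclusion: ensuring that the constructed $\sigma$ actually projects to an infinite word equal to $\rho$, rather than getting trapped reading $k_\Aconc$'s forever. The regularity of $L$ is essential here, as it is precisely the finiteness of the DFA's state set that makes the nested pigeonhole/finite-intersection argument go through and forces the choice of a ``live'' extension at each stage.
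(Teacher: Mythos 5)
Your proof is correct, and your argument for the inclusion $\subseteq$ is essentially the paper's (pull back a prefix of $\rho$ to a prefix of the preimage word, then extend inside $L$ and project). For the hard inclusion $[\projec{\colSetSeq}{\colSet}[L]] \subseteq \colSet^\omega \cap \projec{\colSetSeq}{\colSet}[\,[L]\,]$ you take a genuinely different route. The paper first \emph{normalizes} the witnesses: from each $v^n \in L$ projecting onto an extension of $\rho_{\leq n}$ it extracts, by cutting loops of the run inside blocks of $k_\Aconc$'s, a witness with no $k_\Aconc$-block longer than the number $i$ of states of an automaton for $L$; this yields the length bound $|l^n_{\leq k}| \leq i\,(|\projec{\colSetSeq}{\colSet}(l^n_{\leq k})|+1)$ on the truncated witnesses $l^n$, and a single application of K\"onig's lemma to their finitely branching, infinite prefix tree produces a limit word $l \in [L]$ whose projection is forced to be infinite, hence equal to $\rho$. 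You instead keep the witnesses as they are and run the compactness argument inside the automaton's finite state space: co-accessible states, the decreasing family of ``extendable to level $N$'' subsets of the finite sets $Q_n$ whose intersection gives $E_n \neq \emptyset$, and a pigeonhole on first-return states to step from $E_n$ to $E_{n+1}$ while appending a block ending with the letter $\rho_{n+1}$. Both arguments correctly identify, and use regularity to defeat, the same obstruction that you name explicitly (a limit branch degenerating into $k_\Aconc^\omega$): the paper does it by bounding block lengths up front so that any K\"onig branch is automatically non-degenerate, you do it by construction, since each stage consumes exactly one letter of $\rho$ so the projection of the limit is all of $\rho$. The paper's version is shorter once the normalization is granted but hides some bookkeeping in the length estimate; yours avoids K\"onig's lemma in favour of pigeonhole on a finite state set and makes the role of regularity fully transparent, at the cost of a longer setup that you have only sketched --- though each step of the sketch (non-emptiness of $E_n$ by finite intersection, the pigeonhole step, membership of the limit in $[L]$ via co-accessibility of every visited state) does go through.
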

\begin{proof}
	Let $L \subseteq \colSet^{*}_\Aconc$. Let us prove $\colSet^\omega \cap \projec{\colSet_\Aconc}{\colSet}[\; [L] \;] \subseteq [\projec{\colSet_\Aconc}{\colSet}[L]]$. Let $\rho = \projec{\colSet_\Aconc}{\colSet}(\rho') \in \colSet^\omega \cap \projec{\colSet_\Aconc}{\colSet}[\; [L] \;]$ with $\rho' \in [L]$. Let $\varphi: \Nat \rightarrow \Nat$ such that $\rho_{\leq n} = \projec{\colSet_\Aconc}{\colSet}(\rho'_{\leq \varphi(n)})$ for all $n \geq 0$. Then, for all $n \geq 0$, there exists $l^n \in L$ such that $\rho'_{\leq n} \sqsubseteq l^n$. It follows that, for all $n \geq 0$, we have $\rho_{\leq n} = \projec{\colSet_\Aconc}{\colSet}(\rho'_{\varphi(n)}) \sqsubseteq \projec{\colSet_\Aconc}{\colSet}(l^{\varphi(n)}) \in \projec{\colSet_\Aconc}{\colSet}[L]$. Hence, $\rho \in [\; \projec{\colSet_\Aconc}{\colSet}[L] \; ]$.
	
	Let us now prove that $\projec{\colSet_\Aconc}{\colSet}[\; [L] \;] 
	\supseteq [\projec{\colSet_\Aconc}{\colSet}[L]]$. Since $L \in 
	\rec(\colSetSeq)$, there exists a finite automaton $\mathcal{A}$ 
	recognizing $L$. Let us denote by $i$ the number of states of 
	$\mathcal{A}$. Let $\rho \in [\projec{\colSet_\Aconc}{\colSet}[L]]$. Then, 
	for all $n \in \Nat$, there exists $u^n \in 
	\projec{\colSet_\Aconc}{\colSet}[L]$ such that $\rho_{\leq n} \sqsubseteq 
	u^n = \projec{\colSet_\Aconc}{\colSet}(v^n)$ for some $v^n \in L$ that we 
	choose so that there is no sequence in $(k_\Aconc)^*$ longer than $i$ 
	appearing in $v^n$. For all $n \geq 0$, let $l^n \sqsubseteq v^n$ be a 
	prefix of $v_n$ such that $\rho_{\leq n} = 
	\projec{\colSet_\Aconc}{\colSet}(l^n)$. In turn, with our choice for $v^n$, 
	we have that for all $k \leq |l^n|-1$, $|l^n_{\leq k}| \leq i \times 
	(|\projec{\colSetSeq}{\colSet}(l^n_{\leq k})| + 1)$. Then, the set of 
	prefixes $\{ l^n \mid n \geq 0 \}$ is infinite, with finitely many colors, 
	since $L \in \rec(\colSetSeq)$. Hence, by Koenig's lemma, we have $[ \{ l^n 
	\mid n \geq 0 \} ] \neq \emptyset$. Let $l \in [ \{ l^n \mid n \geq 0 \} 
	]$. Then, for all $n \geq 0$, we have $l_{\leq n} \sqsubseteq l^k 
	\sqsubseteq v^k \in L$ for some $k \in \Nat$. Hence, $l \in [L]$. Let us 
	now show that $\projec{\colSetSeq}{\colSet}(l) \in \colSet^\omega$. To do 
	so, we prove that $|\projec{\colSetSeq}{\colSet}(l_{\leq n})| 
	\xrightarrow[n \to \infty]{} \infty$. Let $N \geq 0$. We set $n_0 = (N + 1) 
	\cdot i \in \Nat$. Let $n \geq n_0$. There exists $n' \in \Nat$ such that 
	$l_{\leq n} \sqsubseteq l^{n'}$. Hence, we have 
	$|\projec{\colSetSeq}{\colSet}(l_{\leq n})| = 
	|\projec{\colSetSeq}{\colSet}(l^{n'}_{\leq n})| \geq |l^{n'}_{\leq n}|/i - 
	1 = (n+1)/i - 1 \geq N$ since $n \geq (N+1) \cdot i$. It follows that 
	$\projec{\colSetSeq}{\colSet}(l) \in \colSet^\omega$. Then, for all $n \in 
	\Nat$, we have $\projec{\colSet_\Aconc}{\colSet}(l)_{\leq n} \sqsubseteq 
	\projec{\colSet_\Aconc}{\colSet}(l^k_{\leq n'}) \sqsubseteq \rho_{\leq k} 
	\sqsubseteq \rho$ for some $k \in \Nat$ and $n' \leq |l^k|-1$. It follows 
	that $\rho = \projec{\colSet_\Aconc}{\colSet}(l)$.
\end{proof}

We can now proceed to the proof of Theorem~\ref{thm:winning_condition_monotony_selectivity}.
\begin{proof}
	We prove the equivalence of monotony. The case of selectivity is similar. In the following, $\projec{\colSet_\Aconc}{\colSet}$ will be abbreviated $\proj$. Let $m \in M$, $L_1,L_2 \in \rec(\colSetSeq)$, $L'_1,L'_2 \in \rec(\colSet)$, and $\rho \in L^{\Seq(\mathcal{M})}_{m_{init},m}$. Assume that $L'_1 = \proj(L_1)$ and $L'_2 = \proj(L_2)$. In that case:
	\begin{align*}
	[\rho \cdot L_1] \prec_{\Seq(W)} [\rho \cdot L_2] & \Leftrightarrow \colSet^\omega \cap \proj[ \; [\rho \cdot L_1] \; ] \prec_{W} \colSet^\omega \cap \proj[ \; [\rho \cdot L_2] \; ] &\text{ by Lemma~\ref{lem:inequality_stability_projection}} \\
	& \Leftrightarrow [ \; \proj[\rho \cdot L_1] \; ] \prec_{W} [ \; \proj[\rho \cdot L_2] \; ] & \text{by Lemma~\ref{lem:projection_infinite_language}}\\
	& \Leftrightarrow [ \; \proj(\rho) \cdot L'_1 \; ] \prec_{W} [ \; \proj(\rho) \cdot L'_2 \; ] & \text{by Lemma~\ref{lem:algebraic_properties_projection}}
	\end{align*}
	%with $\projec{\colSet_\Aconc}{\colSet}[L_1],\projec{\colSet_\Aconc}{\colSet}[L_2] \in \rec(\colSet)$ by Lemma~\ref{lem:rationnal_stability_projection}. 
	The two following equivalences are then given by Lemma~\ref{lem:language_equality_projection}:
	\begin{align*}
	(\exists \rho \in L^{\Seq(\mathcal{M})}_{m_{init},m},\; & [\rho \cdot L_1]  \prec_{\Seq(W)} [\rho \cdot L_2]) \\
	& \Leftrightarrow (\exists \rho' \in L^\mathcal{M}_{m_{init},m},\; [\rho' \cdot L'_1] \prec_W [\rho' \cdot L'_2])
	\end{align*}
	
	Similarly, with a universal quantifier:
	\begin{align*}
	(\forall \rho \in L^{\Seq(\mathcal{M})}_{m_{init},m},\; & [\rho \cdot L_1] \preceq_{\Seq(W)} [\rho \cdot L_2]) \\
	&\Leftrightarrow (\forall \rho' \in L^\mathcal{M}_{m_{init},m},\; [\rho' \cdot L'_1] \preceq_W [\rho' \cdot L'_2])
	\end{align*}
	By Lemma~\ref{lem:rationnal_stability_projection}, this holds for two arbitrary sets $L_1,L_2 \in \rec(\colSetSeq)$ and $\proj(L_1),\proj(L_2) \in \rec(\colSet)$. It also does with two arbitrary sets $L'_1,L'_2 \in \rec(\colSet)$ and $\proj^{-1}(L'_1),\proj^{-1}(L'_2) \in \rec(\colSetSeq)$, by Lemma~\ref{lem:projection_equality_surjectivity}. It follows that $\preceq_W$ is $\mathcal{M}$-monotone if and only if $\preceq_{\Seq(W)}$ is $\Seq(\mathcal{M})$-monotone.	
\end{proof}

\subsection{Proof of Corollary~\ref{thm:conc_equiv_memory_winning}}
\label{proof:thm66}
\begin{proof}%[Proof sketch]
	If either $W$ or $\colSet^\omega \setminus W$ is not $\mathcal{M}$-monotone 
	and $\mathcal{M}$-selective, then 
	Theorem~\ref{thm:turn_based_memory_winning} 
	gives us an example of a finite turn-based game, which is a special case of 
	finite locally determined concurrent games, where the players do not have 
	winning strategies implemented with memory skeleton $\mathcal{M}$. Now, 
	assume that both $W$ and $\colSet^\omega \setminus W$ are 
	$\mathcal{M}$-monotone and $\mathcal{M}$-selective for some $\mathcal{M} = 
	\langle M,\minit,\mu \rangle$. It 
	follows, by Theorem~\ref{thm:winning_condition_monotony_selectivity}, that 
	$\Seq(W)$ and $(\colSetSeq)^\omega \setminus \Seq(W)$ are 
	$\Seq(\mathcal{M})$-monotone and $\Seq(\mathcal{M})$-selective. Furthermore, since the concurrent game $\Aconc$ is finite, its sequential version $\ATurnConc$ also is, since $A$ is finite. Thus, by Theorem~\ref{thm:turn_based_memory_winning}, %Lemma~\ref{lem:sequential_win}, 
	the determinisitc turn-based game $\Games{\ATurnConc}{\Seq(W)}$ is determined and there exists a winning strategy implemented with the memory skeleton $\Seq(\mathcal{M})$. Then, with Theorem~\ref{thm:all_prop} %Corollary~\ref{coro:preserves_optimal_strategies} %Lemmas~\ref{lem:parallel_win_A} and~\ref{lem:parallel_win_B}
	, we have that there exists a winning strategy $\s$ in $\Games{\Aconc}{W}$ implemented with the memory skeleton $\Par(\Seq(\mathcal{M}))$. Finally, note that for all $m \in M$ and $k \in \colSet$, we have $\Par(\Seq(\mu))(m,k) =	\Seq(\mu)(\Seq(\mu)(m,k_\Aconc),k) = \Seq(\mu)(m,k) = \mu(m,k)$. That is, $\Par(\Seq(\mu)) = \mu$. In fact, the strategy $\s$ can be implemented with the memory skeleton $\mathcal{M}$.
\end{proof}

\subsection{Proof of Theorem~\ref{thm:memory_conc}}
\label{proof:thm_memory_conc}
Overall, we obtain in locally determined concurrent games the same
equivalence as the authors of \cite{DBLP:conf/concur/Bouyer0ORV20} proved in turn-based games: every finite game is determined for $W \subseteq \colSet^\omega$ with memory skeleton $\mathcal{M}$ if and only if $W$ and $\colSet^\omega \setminus W$ are $\mathcal{M}$-monotone and $\mathcal{M}$-selective. In addition, we have that the local determinacy assumption is somehow a necessary condition in the sense that as soon as concurrent games are built on at least one local interactions that is not determined, the aforementioned equivalence does not hold anymore. More specifically, we have the following theorem with nested equivalence:

%\slcomment{À plusieurs endroits, je lis ``for all'' + noms et verbes au 
%singulier, alors que ça doit être au pluriel. Tu peux aussi utiliser for every 
%au singulier}
\begin{theorem}[Mentioned~\ref{ref:equiv_memory}]
	Consider a set of game forms $\mathcal{I}$, and a set of colors $\colSet$ with $|\colSet| > 1$ and, for $W \subseteq \colSet^\omega$, set $E(W)$ as the equivalence:
	\begin{itemize}
		\item $W$ and $\colSet^\omega \setminus W$ are 
		$\mathcal{M}$-monotone and $\mathcal{M}$-selective;
		\item every finite concurrent game built on $\mathcal{I}$ is 
		determined for $W$ and there exists a winning strategy implemented 
		with memory skeleton $\mathcal{M}$;
	\end{itemize}
	Then the following assertions are equivalent:
	\begin{itemize}
		\item[1.] $\mathcal{I} \subseteq \allDet$; 
		\item[2.] For all memory skeleton $\mathcal{M}$ and $W 
		\subseteq \colSet^\omega$, $E(W)$ holds;
		\item[2'.] There exists a memory skeleton $\mathcal{M}$ such that for 
		all $W \subseteq \colSet^\omega$, $E(W)$ holds. %$\diamond$
	\end{itemize}
	\label{thm:memory_conc}
\end{theorem}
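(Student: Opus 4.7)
The plan is to prove the three implications separately. The implication $(1) \Rightarrow (2)$ is a direct consequence of Theorem~\ref{thm:conc_equiv_memory_winning}: if $\mathcal{I} \subseteq \allDet$, then every concurrent arena built on $\mathcal{I}$ is locally determined, so $E(W)$ holds for every memory skeleton $\mathcal{M}$ and every $W$. The implication $(2) \Rightarrow (2')$ is immediate by picking any memory skeleton (e.g.\ the trivial one).

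For $(2') \Rightarrow (1)$ I will prove the contrapositive. Assume $\mathcal{I} \not\subseteq \allDet$ and fix a non-determined $\formNF \in \mathcal{I}$ witnessed by a partition of its outcomes $\outComeNF = O_\A \sqcup O_\B$ such that neither player has a winning strategy in $\langle \formNF, O_\A \rangle$. Let $\mathcal{M}$ be the memory skeleton given by $(2')$, fix two distinct colors $c_1, c_2 \in \colSet$ (using $|\colSet| > 1$), and set $W := c_1 \cdot \colSet^\omega$. I will show that $E(W)$ fails, contradicting $(2')$.

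First, verify that $W$ and $\colSet^\omega \setminus W$ are $\mathcal{M}$-monotone and $\mathcal{M}$-selective. Because $W$ is determined by the first color, the relation $[L] \prec_W [L']$ reduces to ``$[L]$ contains no $c_1$-starting word while $[L']$ does''. A short case analysis on whether $\rho$ is empty and, when nonempty, on whether $\rho_0 = c_1$, reduces both conditions to elementary statements about regular languages: for any nonempty $\rho$ one has $[\rho \cdot L] = \rho \cdot [L]$, and for $\rho = \epsilon$ the key observation is that if some infinite word in $[(L_1 \cup L_2)^* L_3]$ starts with $c_1$ then either $L_1 \cup L_2$ contains a $c_1$-starting word (whence $[L_i^*]$ contains one for the suitable $i$), or all such prefix-$c_1$ words come from $L_3$ (whence $[L_3]$ itself contains one). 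A symmetric argument handles $\colSet^\omega \setminus W$, treating ``first color not equal to $c_1$'' in place of ``first color equal to $c_1$''. This establishes the left-hand side of $E(W)$.

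Second, build a finite concurrent arena $\Aconc$ on $\{\formNF\} \subseteq \mathcal{I}$, mirroring the construction in the proof of Theorem~\ref{coro:conc_borel_determined}. The initial state $q_0$ carries the local interaction $\formNF$, and outcomes in $O_\A$ (resp.\ $O_\B$) are routed to a sink state $s_\A$ (resp.\ $s_\B$). Each sink also carries an interaction isomorphic to $\formNF$ but with all outcomes collapsed to a single Nature state that loops back to the sink. Edges $q_0 \to s_\A$ are colored $c_1$ and edges $q_0 \to s_\B$ are colored $c_2$; self-loops at sinks receive any color. Then Player $\A$ has a winning strategy in $\Games{\Aconc}{W}$ iff she has one in $\langle \formNF, O_\A \rangle$, and symmetrically for Player $\B$ with $\langle \formNF, O_\B \rangle$; by the choice of partition, neither has one, so $\Aconc$ is not determined and a fortiori has no winning strategy implemented by $\mathcal{M}$. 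Hence the right-hand side of $E(W)$ fails, contradicting $(2')$. The main obstacle is the first step: a patient case analysis verifying $\mathcal{M}$-monotony and $\mathcal{M}$-selectivity of $W$ and $\colSet^\omega \setminus W$ for an \emph{arbitrary} memory skeleton, requiring in particular a careful parse of prefixes inside $[(L_1 \cup L_2)^* L_3]$.
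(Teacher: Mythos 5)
Your proof is correct and follows the paper's overall architecture — $(1)\Rightarrow(2)$ via Theorem~\ref{thm:conc_equiv_memory_winning}, $(2)\Rightarrow(2')$ trivially, and $(2')\Rightarrow(1)$ by exhibiting a simple objective for which the left-hand side of $E(W)$ holds while a one-non-determined-interaction arena kills the right-hand side — but it diverges on the one step that carries real content. To show that your chosen $W$ and its complement are $\mathcal{M}$-monotone and $\mathcal{M}$-selective for the given skeleton, you verify the two definitions by hand for the clopen objective $W = c_1\cdot\colSet^\omega$, which forces the case analysis on $\rho$ and the parsing of $c_1$-prefixes inside $[(L_1\cup L_2)^*L_3]$ that you rightly flag as the delicate part (note that the ``$[L_3]$ contains one'' subcase needs the observation that \emph{every} prefix of the witnessing infinite word must then factor entirely through $L_3$, not just its first letter). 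The paper instead picks a reachability objective with safety complement and obtains $\mathcal{M}$-monotony and $\mathcal{M}$-selectivity \emph{for free} by running Theorem~\ref{thm:turn_based_memory_winning} backwards: reachability and safety are positionally determined on finite turn-based arenas, positional strategies are implementable with any memory skeleton, hence the equivalence yields the two structural properties without ever unfolding their definitions. Your route is more self-contained (it does not lean on positional determinacy of reachability/safety games) but substantially longer and easier to get wrong; the paper's route is a two-line reduction to known results. Both counterexample constructions share the same mild informality about whether the trivial interactions at the sink states are ``built on $\mathcal{I}$'' up to renaming of outcomes, so no complaint there.
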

\begin{proof}
	Implication $(1) \Rightarrow (2)$ comes from
	Theorem~\ref{thm:conc_equiv_memory_winning}, implication $(2)
	\Rightarrow (2')$ is straightforward. As for implication $(2')
	\Rightarrow (1)$, consider a memory skeleton $\mathcal{M}$ and a
	reachability objective $W$, with $\colSet^\omega \setminus W$ a
	safety objective. Then, both of these kinds of objectives can be
	solved with positional strategies. Since positional strategies can be 
	implemented by any memory skeleton, in particular it can be implemented with 
	$\mathcal{M}$.
	Then, by Theorem~\ref{thm:turn_based_memory_winning}, it follows
	that $W$ and $\colSet^\omega \setminus W$ are $\mathcal{M}$-monotone
	and $\mathcal{M}$-selective. Now, if $\mathcal{I} \subsetneq
	\allDet$, a finite concurrent game with an initial interaction that
	is not determined that can either reach the desired color or not
	(which is possible since $|\colSet| > 1$) is built on $\mathcal{I}$
	and is not determined for $W$ which contradicts $(2')$. It follows
	that $(2') \Rightarrow (1)$.
\end{proof}

\subsection{Theorem of positional determinacy of parity stochastic games}
\begin{theorem}[\cite{DBLP:conf/soda/ChatterjeeJH04,DBLP:conf/fossacs/Zielonka04}][Mentioned~\ref{ref:thm_parity}]
	Consider a (stochastic) finite turn-based graph arena $\Aconc = \AdetTurn$ with $\colSet = \llbracket m,n \rrbracket$ for some $m,n \in \mathbb{N}$. For all	%Borel winning set $W \subseteq \Borel(\colSet)$ that is a 
	parity objective $W$ on $\colSet$, % \in \Borel(V)$
	the game $\Games{\Aconc}{W}$ is positionnaly determined.
	\label{thm:positional_determinacy_parity_turn}
\end{theorem}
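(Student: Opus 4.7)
The plan is to prove the theorem by induction on the number of distinct priorities appearing in $\colSet$, adapting the classical Zielonka algorithm to the stochastic setting. Before starting the induction, I would invoke Martin's Borel determinacy (Theorem~\ref{thm:turn_based_borel_determined}) applied to the parity objective $W$ (which is Borel as a countable intersection of unions of cylinder sets) to conclude that the values $\val{\Aconc}{\A}[W]$ and $\val{\Aconc}{\B}[W]$ coincide. Hence the only remaining task is to exhibit \emph{positional} strategies achieving these values for both players.

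For the base case (a single priority), the objective is either $\emptyset$ or $\colSet^\omega$, so any strategy, in particular any positional one, is optimal. For the inductive step, let $d$ be the maximum priority in $\colSet$ and suppose $d$ is even; the case of $d$ odd is symmetric, swapping the roles of the players. I would isolate the set $T$ of states with incoming color $d$ and define $A_\A \subseteq Q$ to be the \emph{almost-sure attractor} of $T$ for Player $\A$, i.e.\ the largest set of states from which Player $\A$ has a positional strategy guaranteeing a visit to $T$ with probability $1$ against every state strategy of Player $\B$. The existence of such a positional witness strategy is a standard fact about finite stochastic turn-based reachability games and would be used as a black-box lemma (its proof, based on computing the attractor by an inflationary fixed-point and extracting a positional choice at each iteration level, is routine). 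On the complement $Q \setminus A_\A$, Player $\B$ has by construction a positional strategy that almost-surely avoids $T$ forever; in the induced sub-arena every realised play uses only priorities strictly less than $d$, so the inductive hypothesis furnishes positional optimal strategies for both players for the restricted parity objective.

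The global strategy for Player $\A$ is then obtained by gluing: play the attractor strategy on $A_\A$, and on $Q \setminus A_\A$ fall back on the positional strategy given by induction. Symmetrically, Player $\B$ combines her positional avoidance strategy on $A_\A$ with her inductive strategy on the complement. Verifying optimality amounts to checking that along any play compatible with the glued strategies, either the play eventually stays in $Q \setminus A_\A$ — in which case the inductive hypothesis guarantees the expected value is met on priorities $< d$ — or it returns infinitely often to $T$, in which case the maximum priority $d$ is seen infinitely often and Player $\A$ wins surely on such trajectories. The main obstacle is precisely this last ``infinitely often'' argument: in a stochastic arena one cannot reason combinatorially about sure reachability, so one has to invoke the fact that in a finite Markov chain (obtained by fixing positional strategies for both players) every infinite play almost surely settles in a bottom strongly connected component, and to argue that any such component either sits entirely in $Q \setminus A_\A$ or intersects $T$ infinitely often. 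Carefully threading this end-component analysis through the inductive construction, and matching the values at the boundary $\partial A_\A$ between the two sub-arenas, is the technical heart of the proof.
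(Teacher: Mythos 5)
The paper does not prove this statement at all: it is imported verbatim from \cite{DBLP:conf/soda/ChatterjeeJH04,DBLP:conf/fossacs/Zielonka04} and used as a black box, so any argument you give is necessarily your own route. Unfortunately the route you sketch has a genuine gap at its very first decomposition step. You define $A_\A$ as the \emph{almost-sure} attractor of $T$ and then claim that on $Q \setminus A_\A$ Player $\B$ has, ``by construction'', a positional strategy that almost-surely avoids $T$ forever. In a stochastic arena this is false: the state space of a reachability game splits into three regions, not two --- the almost-sure winning region for reachability, the almost-sure winning region for the complementary safety objective, and a region where the reachability value is strictly between $0$ and $1$. A Nature state sending the play to $T$ with probability $1/2$ and to a safe sink with probability $1/2$ lies in neither of your two classes. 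Consequently the ``sub-arena'' on $Q \setminus A_\A$ is not closed under the plays that can actually occur, priority $d$ is still seen there with positive probability, and the inductive hypothesis cannot be applied to it.

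This is not a presentational slip but precisely why the quantitative stochastic case is hard: the value of such a parity game is an arbitrary number in $[0,1]$, so no attractor computation partitions $Q$ into a region where $\A$ wins and one where $\B$ wins, and the value-matching step you defer to ``the technical heart'' is exactly where the argument breaks. The proofs in the cited papers take different routes (Zielonka via a permutation/nested-fixpoint characterisation of the values, Chatterjee--Jurdzi\'nski--Henzinger via quantitative $\mu$-calculus and limits of discounted games). Two further remarks: you invoke Theorem~\ref{thm:turn_based_borel_determined} to equate the two values, but that theorem is stated for \emph{deterministic} turn-based arenas; limit-determinacy of the stochastic game requires Martin's determinacy of Blackwell games. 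And your sketch, run with the \emph{positive-probability} attractor instead (whose complement genuinely is a sure-safety region for $\B$), does yield the \emph{qualitative} almost-sure analysis --- but that is a strictly weaker statement than the positional existence of optimal strategies asserted here.
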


\subsection{Equivalence with parity objective}
Like in the deterministic case, we obtain an equivalence:%equivalence than we have in the deterministic case.
\begin{theorem}[Mentioned~\ref{ref:parity_equiv}]
	Consider a set of game forms $\mathcal{I}$, and a set of colors
	$\colSet$ with $|\colSet| > 1$. The following are equivalent:
	\begin{itemize}
		\item[1.] $\mathcal{I} \subseteq \allDet$; 
		\item[2.] For all parity objective $W$ on $\colSet$, all finite concurrent games $\langle \Aconc,W \rangle$ with the set of actions $A$ finite built on $\mathcal{I}$ are positionnaly determined;
		\item[2'.] There exists a parity objective $W$ on $\colSet$ such that all finite concurrent games $\langle \Aconc,W \rangle$ with the set of actions $A$ finite  built on $\mathcal{I}$ are positionnaly determined. %$\diamond$
	\end{itemize}
	\label{thm:parity_positionaly_determined_conc}
\end{theorem}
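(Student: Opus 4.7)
The equivalence is proven via three implications. $(1) \Rightarrow (2)$ is an immediate application of Theorem~\ref{coro:positional_determinacy_parity_conc}: any finite concurrent arena with finite action sets built on $\mathcal{I} \subseteq \allDet$ is locally determined and hence positionally determined for any parity objective. $(2) \Rightarrow (2')$ holds trivially since $\colSet$ with $|\colSet| > 1$ admits the standard even-max parity objective, which suffices to instantiate $(2')$. The substantive direction is $(2') \Rightarrow (1)$, which I establish by contrapositive: assuming some $\formNF \in \mathcal{I}$ fails to be determined, I build, for the fixed parity objective $W$ witnessing $(2')$, a finite concurrent game on $\mathcal{I}$ with finite action sets that is not even limit-determined, hence not positionally determined.

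For the construction, let $\formNF \in \mathcal{I} \setminus \allDet$ and fix a subset $\subVal \subseteq \outComeNF$ such that neither player has a winning strategy in $\langle \formNF, \subVal \rangle$. Writing $\colSet = \llbracket m, n \rrbracket$, choose $k_W \in \colSet$ even and $k_L \in \colSet$ odd, which exist because $|\colSet| > 1$; then $k_W^\omega \in W$ and $k_L^\omega \notin W$ by the definition of the parity objective. Build the arena $\Aconc$ with three game states $q_0, q_W, q_L$ (initial state $q_0$): at each, the local interaction is $\formNF$, realized by giving every cell of $\formNF$ its own distinct Nature state so that the interaction is $\formNF$ up to a bijective renaming of outcomes. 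From $q_0$, a Nature state labeled $o \in \subVal$ is routed deterministically to $q_W$, and a Nature state labeled $o \in \outComeNF \setminus \subVal$ is routed to $q_L$; at $q_W$ (resp.\ $q_L$) every Nature state loops back to $q_W$ (resp.\ $q_L$), and the color function is set to $\colFunc(q_W, q_W) = k_W$ and $\colFunc(q_L, q_L) = k_L$, with the two edges leaving $q_0$ colored arbitrarily. The arena is finite, has finite action sets provided $\formNF$ does, and is built on $\mathcal{I}$ by construction.

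Since parity is tail-independent, membership of a play in $W$ depends solely on which of $q_W, q_L$ becomes the absorbing state. For any color strategy $\s_\A$ of Player $\A$, the action $a_0 = \s_\A(\varepsilon, q_0)$ selects a row of $\formNF$, and non-determinacy yields a column $b_0$ with $\outCNF(a_0, b_0) \in \outComeNF \setminus \subVal$. Against the Dirac strategy playing $b_0$ at $q_0$, the game reaches $q_L$ with probability $1$ and produces a sequence whose tail is $k_L^\omega \notin W$; hence $\val{\Aconc}{\s_\A}[W] = 0$, and therefore $\val{\Aconc}{\A}[W] = 0$. By a dual argument using non-determinacy for Player $\B$, $\val{\Aconc}{\B}[W] = 1$. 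The game is thus not limit-determined and a fortiori not positionally determined, contradicting $(2')$.

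The main obstacle is ensuring the counterexample arena is genuinely built on $\mathcal{I}$, which forces reuse of $\formNF$ at the absorbing states $q_W, q_L$ rather than plugging in trivial game forms possibly not in $\mathcal{I}$; this requires instantiating pairwise distinct Nature states at each cell so that the local interaction at each state is $\formNF$ up to a bijective renaming of outcomes, rather than a surjective collapse that would alter the game form. A secondary subtlety is the finite-action restriction inside $(2)$ and $(2')$: the counterexample genuinely refutes $(2')$ only when a finite-action non-determined $\formNF$ can be found in $\mathcal{I}$, which is the intended regime of the theorem.
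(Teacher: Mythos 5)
Your proposal is correct and follows essentially the same route as the paper, whose own argument for this theorem is just ``analogous to the deterministic case'': $(1)\Rightarrow(2)$ via Theorem~\ref{coro:positional_determinacy_parity_conc}, $(2)\Rightarrow(2')$ trivially, and $(2')\Rightarrow(1)$ by building a game with the non-determined $\formNF$ at $q_0$ branching into one part whose color sequence lies in $W$ and one whose sequence does not --- your $k_W^\omega$/$k_L^\omega$ absorbing states instantiate exactly the paper's $\rho^1/\rho^2$ construction, and you supply details the paper omits (reusing $\formNF$ at the absorbing states so the arena stays built on $\mathcal{I}$, and the even/odd color choice). Two small remarks: to get a \emph{bijective} renaming you should allocate one Nature state per \emph{outcome} of $\formNF$ rather than per cell (cells sharing an outcome must share a Nature state), and the finite-action caveat you flag is a genuine limitation of the theorem as stated that the paper's own sketch also leaves unaddressed, so it does not count against your proof.
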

The proof is analogous to the deterministic case.

\subsection{Proof of Theorem~\ref{coro:positional_determinacy_parity_conc}}
\label{proof:coro71}
\begin{proof}
	Consider such a finite arena $\Aconc = \AcoloredConc$. First, its sequential version $\Seq(\Aconc)$ is also finite since $A$ is finite. Furthermore, if we set $k_\Aconc = m-1$, the winning objective $\Seq(W)$ is a parity objective on $\colSet_\Aconc = \llbracket m-1,n \rrbracket$, therefore the turn-based game $\Games{\Seq(\Aconc)}{\Seq(W)}$ is positionnaly determined by Theorem~\ref{thm:positional_determinacy_parity_turn}. We can then conclude with Theorem~\ref{thm:all_prop}.
\end{proof}

\subsection{Theorem regarding tail objective}
\begin{theorem}[\cite{DBLP:conf/soda/GimbertH10}][Mentioned~\ref{ref:thm_tail}]
	Consider a (stochastic) finite turn-based arena $\Aconc$. For all winning set $W \subseteq \colSet^\omega$ that is a tail objective, % \in \Borel(V)$
	if the value of the game is extreme, that is if $\val{\Aconc}{\A}[W] = 1$ or $\val{\Aconc}{\B}[W] = 0$, then the game is determined.
	\label{thm:prefix_independent_determinacy_parity_turn}
\end{theorem}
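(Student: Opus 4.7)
The plan is to invoke the determinacy-transfer theorem (Theorem~\ref{thm:all_prop}) to reduce to the already-stated turn-based result (Theorem~\ref{thm:prefix_independent_determinacy_parity_turn}), after checking that the sequentialization preserves the hypotheses (finiteness, tail objective, extreme value). Concretely, I would first let $\Aconc$ be the given locally determined finite concurrent arena with finite action sets $A,B$ and form its sequential version $\Seq(\Aconc)$, then show that all the hypotheses of Theorem~\ref{thm:prefix_independent_determinacy_parity_turn} hold for $\langle \Seq(\Aconc),\Seq(W)\rangle$.

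First I would verify that $\Seq(\Aconc)$ is a finite turn-based arena: since $Q$, $\distribSet$, and $A$ are all finite, the sets $V_\A = Q$, $V_\B = Q\times A$, $\distribSet_\A = V_\B$, and $\distribSet_\B = \distribSet$ are all finite, so $\Seq(\Aconc)$ is finite. Next, I would check that the value is extreme on the sequential side. By Theorem~\ref{thm:all_prop} applied to the locally determined game $\langle\Aconc,W\rangle$, we have $\val{\Aconc}{\A}[W]=\val{\Seq(\Aconc)}{\A}[\Seq(W)]$ and $\val{\Aconc}{\B}[W]=\val{\Seq(\Aconc)}{\B}[\Seq(W)]$, so the hypothesis $\val{\Aconc}{\A}[W]=1$ (resp. $\val{\Aconc}{\B}[W]=0$) transfers verbatim to $\Seq(\Aconc)$.

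The one step that needs a small verification is that $\Seq(W)$ is itself a tail objective on $\colSetSeq$. Here I would argue directly from the definition $\Seq(W)=\inv{\projec{\colSetSeq}{\colSet}}[W]$. Take any $\rho\in\colSetSeq^\omega$ and any $\pi\in\colSetSeq^*$. If $\rho$ contains infinitely many colors in $\colSet$, then $\projec{\colSetSeq}{\colSet}(\rho)\in\colSet^\omega$ and
\[
\projec{\colSetSeq}{\colSet}(\pi\cdot\rho)=\projec{\colSetSeq}{\colSet}(\pi)\cdot\projec{\colSetSeq}{\colSet}(\rho),
\]
with $\projec{\colSetSeq}{\colSet}(\pi)\in\colSet^*$; since $W$ is a tail objective, prepending this finite word to $\projec{\colSetSeq}{\colSet}(\rho)$ preserves membership in $W$, so $\rho\in\Seq(W)\iff\pi\cdot\rho\in\Seq(W)$. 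If $\rho$ contains only finitely many colors in $\colSet$, then $\projec{\colSetSeq}{\colSet}(\rho)\in\colSet^*$ lies outside $\colSet^\omega\supseteq W$, and the same holds for $\pi\cdot\rho$; both sides are false, so equivalence holds trivially.

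Having established these three facts, Theorem~\ref{thm:prefix_independent_determinacy_parity_turn} applies to $\langle\Seq(\Aconc),\Seq(W)\rangle$ and yields its determinacy. Finally, the third bullet of Theorem~\ref{thm:all_prop}, applied in the reverse direction via the $\Par$ operator on optimal color strategies, transports the optimal strategies back from $\Seq(\Aconc)$ to $\Aconc$, giving determinacy of $\langle\Aconc,W\rangle$. There is no real obstacle: the only non-trivial bookkeeping is the tail-objective check for $\Seq(W)$, whose subtlety is handling infinite words containing only finitely many original colors, which is resolved by observing they lie outside $W$ because $W\subseteq\colSet^\omega$.
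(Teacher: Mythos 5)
There is a mismatch between what you proved and what was asked. The statement to be established is Theorem~\ref{thm:prefix_independent_determinacy_parity_turn} itself: the result about \emph{finite turn-based} arenas with tail objectives and extreme value, which the paper imports from \cite{DBLP:conf/soda/GimbertH10} and does not prove (it is a citation, not a derived result). Your argument instead proves the \emph{concurrent} corollary, Theorem~\ref{coro:prefix_independent_determinacy_parity_conc}: you start from ``the given locally determined finite concurrent arena with finite action sets,'' sequentialize it, and then \emph{invoke} Theorem~\ref{thm:prefix_independent_determinacy_parity_turn} on the sequential version. As a proof of the stated theorem this is circular --- you assume the very statement you are supposed to prove --- and it also adds hypotheses (local determinacy, concurrency) that the statement does not have. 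A genuine proof of the stated theorem would have to reproduce the Gimbert--Horn argument for stochastic turn-based games with prefix-independent objectives, or explicitly note that it is taken as a black box from the literature.

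That said, read as a proof of Theorem~\ref{coro:prefix_independent_determinacy_parity_conc}, your argument is correct and follows exactly the paper's route (Appendix~\ref{proof:coro77}): finiteness of $\Seq(\Aconc)$, transfer of the extreme value via Theorem~\ref{thm:all_prop}, the fact that $\Seq(W) = \inv{\projec{\colSetSeq}{\colSet}}[W]$ is again a tail objective, and transport of optimal strategies back through $\Par$. You are in fact more careful than the paper on the one point it dismisses as ``straightforward'': your case analysis for the tail-objective property of $\Seq(W)$, distinguishing sequences with infinitely many colors of $\colSet$ from those with finitely many (the latter lying outside $W \subseteq \colSet^\omega$), is a worthwhile detail. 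But it answers a different question than the one posed.
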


\subsection{Proof of Corollary~\ref{coro:prefix_independent_determinacy_parity_conc}}
\label{proof:coro77}
\begin{proof}
	Consider such a finite arena $\Aconc = \AcoloredConc$. Its sequential version $\Seq(\Aconc)$ is also finite since $A$ is finite, straightforwardly, the sequential version $\Seq(W)$ of a winning condition $W \in \Borel(\colSet)$ that is a tail objective is also a tail objective. We conclude with Theorem~\ref{thm:all_prop}.
\end{proof}

\subsection{Nash equilibrium}
\label{appendix:nash}
Another application of the results we have established in this paper
lies in the existence of Nash equilibria. In
\cite{DBLP:journals/mlq/Roux14}, the author proved that the existence
of winning strategies could be transferred to the existence of Nash
equilibria. That is, consider a deterministic concurrent graph arena $\Aconc
= \AdetConc$ and a set $\mathcal{D} \subseteq 2^{\colSet^\omega}$
of subsets of outcomes that forms a partition of the set
$\colSet^\omega$: $\colSet^\omega = \uplus_{D \in \mathcal{D}} D$
along with two preference relations ${\prec_\A} \subseteq \mathcal{D}
\times \mathcal{D}$ and ${\prec_\B} \subseteq \mathcal{D} \times
\mathcal{D}$ respectively for Player $\A$ and Player
$\B$\footnote{Note that this a generalization of win/lose game where
	we consider the special case where $\mathcal{D} = \{
	W,\colSet^\omega \setminus W \}$ for some $W \subseteq
	\colSet^\omega$ and $\prec_\A^{-1} = \prec_\B$.}. We have a concurrent 
	graph game $\langle \Aconc,\mathcal{D},\prec_A,\prec_B \rangle$. 
	Informally, a Nash equilibrium roughly consists of a pair of strategies 
	where no player has an interest (with regard to his preference relation) in 
	unilaterally
changing his strategy. That is, if for a pair of strategies
$(\s_\A,\s_\B) \in \SetSt{\Aconc}{\A} \times
\SetSt{\Aconc}{\B}$, we denote by $\delta(\s_\A,\s_\B) \in \mathcal{D}$ the 
subset of $\colSet^\omega$ in which lies the outcome $\outCome{\Aconc}{\s_\A,\s_\B}$ induced by the strategies $\s_\A$ and $\s_\B$, we have the following definition:
%\slcomment{Utilise $(C,\prec_A,\prec_B)$ comme je le fais a la fin}
\begin{definition}
	A pair of strategies $(\s_\A,\s_\B) \in \SetSt{\Aconc}{\A} \times 
	\SetSt{\Aconc}{\B}$ is a \emph{Nash equilibrium} in $\langle 
	\Aconc,\mathcal{D},\prec_A,\prec_B \rangle$ if, for all strategy 
	$\s \in \SetSt{\Aconc}{\A}$ (resp. $\SetSt{\Aconc}{\B}$), we have 
	that $\delta(\s_\A,\s_\B) \prec_\A \delta(\s,\s_\B)$ (resp. 
	$\delta(\s_\A,\s_\B) \prec_\B \delta(\s_\A,\s)$) does not hold.
\end{definition}

Then, the following theorem is directly derived from Theorem 1.7 in 
\cite{DBLP:journals/mlq/Roux14}.
\begin{theorem}
	Consider a deterministic concurrent arena $\Aconc = \AdetConc$, a partition 
	$\mathcal{D} \subseteq 2^{\colSet^\omega}$ of $\colSet^\omega$, two 
	preference relations $\prec_\A,\prec_\B \subseteq \mathcal{D} \times 
	\mathcal{D}$, and two subsets of strategies $R_\A \subseteq 
	\SetSt{\Aconc}{\A}$ and $R_\B \subseteq \SetSt{\Aconc}{\B}$
	and assume the following:
	\begin{itemize}
		\item for all $\mathcal{H} \subseteq \mathcal{D}$, the win/lose game 
		$\langle \mathcal{C},W \rangle$ is determined for $W = \uplus_{H \in 
			\mathcal{H}} H$ with winning strategies in $R_\A$ or $R_\B$;
		\item the preference $\prec_\A$ (resp. $\prec_\B$) has finite height, 
		that is there exists $n \in \mathbb{N}$ such that there is no $D_1,D_2, 
		\ldots, D_n \in \mathcal{D}$ such that $D_1 \prec_\A D_2 \prec_\A 
		\ldots \prec_\A D_n$.
	\end{itemize}
	Then, there exists a Nash equilibrium in $\mathcal{C}$ with the preference 
	relations $\prec_\A$ and $\prec_\B$ that can be found in $R_\A \times R_\B$.
	\label{thm:nash_existence}
\end{theorem}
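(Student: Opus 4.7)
The plan is to show that Theorem~\ref{thm:nash_existence} is essentially a direct reformulation, in the formalism of this paper, of Theorem~1.7 of \cite{DBLP:journals/mlq/Roux14}; so the only real work is to check that the objects, the hypotheses, and the restricted-strategy classes match. First I would identify the correspondence: a deterministic concurrent arena $\Aconc = \AdetConc$ together with a partition $\mathcal{D}$ of $\colSet^\omega$ and preferences $\prec_\A,\prec_\B$ plays exactly the role of the ``two-player game with preferences over a partition of plays'' considered in \cite{DBLP:journals/mlq/Roux14}. Our function $\delta(\s_\A,\s_\B)$, which selects the class of the outcome induced by a strategy profile, and hence our definition of Nash equilibrium, coincide with the definitions used in the reference. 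The subsets $R_\A \subseteq \SetSt{\Aconc}{\A}$ and $R_\B \subseteq \SetSt{\Aconc}{\B}$ correspond to the ``restricted'' strategy classes used there, whose purpose is to let the equilibrium inherit structural properties (positional, finite-memory, chromatic, etc.) from the winning strategies supplied by determinacy.

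Second, I would verify the two hypotheses. Finite height of $\prec_\A$ and $\prec_\B$ is precisely the well-foundedness-with-uniform-bound condition required to drive the induction in the reference. The determinacy clause---that every win/lose game $\langle \Aconc, \uplus_{H \in \mathcal{H}} H \rangle$ is determined with a winning strategy in $R_\A$ or $R_\B$---matches the ``determinacy transfer'' assumption of \cite{DBLP:journals/mlq/Roux14}; in fact that reference only needs the assumption for those $\mathcal{H}$ that are upward-closed with respect to $\prec_\A$ or $\prec_\B$, whereas we state it for all $\mathcal{H} \subseteq \mathcal{D}$, so ours is a fortiori satisfied.

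Once the translation is done, Theorem~1.7 of \cite{DBLP:journals/mlq/Roux14} applies verbatim and furnishes a Nash equilibrium whose two strategies lie in $R_\A \times R_\B$. For intuition, the argument in the reference proceeds by induction on the (finite) height of the preferences: given a candidate outcome class $D \in \mathcal{D}$, determinacy of the win/lose game for the upward-closure $\{ D' \in \mathcal{D} \mid D \prec_\A D' \}$ either yields a strategy for Player $\A$ in $R_\A$ that strictly improves the candidate---an operation that can be iterated only finitely often, by finite height---or yields a strategy for Player $\B$ in $R_\B$ that prevents it, which serves as $\B$'s ``punishment'' against deviations; a symmetric step for $\B$ closes the construction. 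The main (and essentially only) obstacle is bookkeeping: confirming that no hidden hypothesis of \cite{DBLP:journals/mlq/Roux14} (e.g.\ on the cardinality of $\mathcal{D}$, or measurability of its classes) creeps in. Since the cited result is designed as a black-box transfer from determinacy of union-of-classes win/lose games to equilibrium existence in the multi-outcome setting, no further technical obstruction is expected.
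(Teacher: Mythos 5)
Your proposal matches the paper's own treatment: the paper offers no independent argument for Theorem~\ref{thm:nash_existence} and simply states that it is ``directly derived from Theorem 1.7 in \cite{DBLP:journals/mlq/Roux14}'', which is exactly the translation-and-citation route you take. Your additional bookkeeping (matching the partition, the preferences, the restricted strategy classes $R_\A, R_\B$, and noting that assuming determinacy for all $\mathcal{H} \subseteq \mathcal{D}$ is a fortiori sufficient) is sound and, if anything, more explicit than what the paper records.
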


In \cite{DBLP:journals/mlq/Roux14} three applications of the main 
result were provided: the existence of Nash equilibrium in Borel games, of 
positional Nash equilibrium in priority games (multi-outcome games on parity 
arenas), and of finite-memory Nash equilibrium in generalized Muller games. 
Here we could transfer the three results to concurrent games with determined 
local interactions, but we only give an idea of what theorem regarding the
results for Borel games and Nash equilibrium in priority games could look like 
in the form of remarks as the proofs are not written yet.

\paragraph*{Application: Nash equilibrium with Borel outcomes} This result 
could be obtained by combining Theorem~\ref{thm:nash_existence} and 
Corollary~\ref{coro:conc_borel_determined}:
\begin{remark}
	Consider a deterministic concurrent arena $\Aconc = \AdetConc$, a partition 
	$\mathcal{D} \subseteq 2^{\colSet^\omega}$ of $\colSet^\omega$, two 
	preference relations $\prec_\A,\prec_\B \subseteq \mathcal{D} \times 
	\mathcal{D}$ and assume the following:
	\begin{itemize}
		\item $\mathcal{D}$ is countable and all sets of $\mathcal{D}$ are 
		Borel;
		\item the preference $\prec_\A$ (resp. $\prec_\B$) has finite height.
	\end{itemize}
	Then, there exists a Nash equilibrium in $\mathcal{C}$ with the preference 
	relations $\prec_\A$ and $\prec_\B$.
	\label{thm:nash_existence_conc_borel}
\end{remark}
%\bbcomment{This requires a proof}

\paragraph*{Application: Nash equilibrium in a priority game}
This result may be derived from Theorem~\ref{thm:nash_existence}, 
Theorem~\ref{thm:conc_equiv_memory_winning}, and the fact that turn-based 
priority games are positionally determined, we obtain the following theorem:
\begin{remark}
	Consider a finite deterministic concurrent arena $\Aconc = \AdetConc$ where $\colSet 
	= \llbracket 0,n \rrbracket$ for some $n \in \mathbb{N}$. For all $k \in 
	\colSet$, we define $D_k = \{ \rho \in \colSet^\omega \mid \max (\inf \rho) 
	= k \}$% where $\inf \rho$ is the set of colors seen infinitely often in $\rho$
	. Consider $\mathcal{D} = \{ D_k \mid k \in \colSet \}$ and two 
	preference relations $\prec_\A,\prec_\B \subseteq \mathcal{D} \times 
	\mathcal{D}$ and assume that  the preferences $\prec_\A$ and $\prec_\B$ are 
	acyclic.
	
	Then, there exists a Nash equilibrium in $\mathcal{C}$ with the preference 
	relations $\prec_\A$ and $\prec_\B$ among positional strategies.
	\label{thm:nash_existence_conc_priority}
\end{remark}

%\slcomment{
%	Pour la version journal :
%	\begin{definition}[Sequentialization of multi-outcome concurrent games]
%		The \emph{sequential version} of the concurrent game $\langle \Aconc, 
%		\prec_A,\prec_B \rangle$ is the turn-based game 
%		$\Games{\ATurnConc}{\prec_A^\Aconc, \prec_B^\Aconc}$ where $\pi 
%		\prec_X^\Aconc \pi'$ iff $\projec{\colSetSeq}{\colSet}(\pi) \prec_X 
%		\projec{\colSetSeq}{\colSet}(\pi')$ for $X \in \{A,B\}$.
%	\end{definition}
%}

%\slcomment{
%	\begin{theorem}
%		\begin{enumerate}
%			\item $(s_A,s_B)$ is an NE in $\langle \Aconc, \prec_A,\prec_B 
%			\rangle$ iff $(\Seq(s_A),\Seq(s_B))$ is an NE in 
%			$\Games{\ATurnConc}{\prec_A^\Aconc, \prec_B^\Aconc}$.
%			\item Let $(\sigma_A,\sigma_B)$ be a uniform profile (wrt seen 
%			colors and current state). Then $(\sigma_A,\sigma_B)$ is an NE in 
%			$\Games{\ATurnConc}{\prec_A^\Aconc, \prec_B^\Aconc}$ iff  
%			$(\Par(\sigma_A),\Par(\sigma_B))$ is an NE in $\langle \Aconc, 
%			\prec_A,\prec_B \rangle$.
%		\end{enumerate}
%	\end{theorem}
%}%

	%\begin{appendices}	
		%\section{How I became inspired}
	%\end{appendices}
	
	%\begin{claimproof}
		%Fusce eu leo nisi. Cras eget orci neque, eleifend dapibus felis. Duis et leo dui. Nam vulputate, velit et laoreet porttitor, quam arcu facilisis dui, sed malesuada risus massa sit amet neque.
	%\end{claimproof}
\end{document}